\documentclass[11pt,a4paper]{article}
\pdfoutput=1
\usepackage[margin=1.5cm,includefoot,footskip=30pt]{geometry}
\usepackage[ascii]{inputenc}
\usepackage[bookmarksnumbered,linktocpage,hypertexnames=false,colorlinks=true,linkcolor=blue,urlcolor=blue,citecolor=magenta,anchorcolor=green,breaklinks=true,pdfusetitle=true]{hyperref}
\usepackage{bbm,braket,microtype,mathrsfs,amsmath,amssymb,xcolor,amsthm,amsfonts,latexsym,mathtools,graphicx,enumitem,booktabs,bm,xspace,float,mathdots,caption,subcaption,ellipsis,mleftright,crossreftools,dsfont}
\usepackage[T1]{fontenc}
\usepackage{textcomp}
\usepackage[english]{babel}
\usepackage[capitalize,nameinlink]{cleveref}
\newtheorem{theorem}{Theorem}[section]
\newtheorem{lem}[theorem]{Lemma}\crefname{lem}{Lemma}{Lemmas}
\newtheorem*{lem*}{Lemma}
\newtheorem{fact}[theorem]{Fact}
\newtheorem{prop}[theorem]{Proposition}\crefname{prop}{Proposition}{Propositions}
\newtheorem{cor}[theorem]{Corollary}\crefname{cor}{Corollary}{Corollaries}
\theoremstyle{definition}
\newtheorem{defn}[theorem]{Definition}
\numberwithin{equation}{section}
\DeclareMathOperator{\CNOT}{CNOT}
\DeclareMathOperator{\SWAP}{SWAP}
\DeclareMathOperator{\tr}{tr}
\renewcommand{\psi}{\phi}
\newcommand{\Td}{\ensuremath{\mathsf{Td}}}
\newcommand{\Init}{\ensuremath{\mathsf{Init}}\xspace}
\newcommand{\Rec}{\ensuremath{\mathsf{Rec}}\xspace}
\newcommand{\SPO}{\ensuremath{\mathsf{SPO}}\xspace}

\newcommand{\OSPO}{O^{\SPO}}
\newcommand{\OinvSPO}{O^{\SPO,\mathrm{inv}}}

\newcommand{\OSPOz}{{O}^{\SPO,z}}
\newcommand{\OSPOx}{{O}^{\SPO,x}}
\newcommand{\OinvSPOz}{{O}^{\SPO,\mathrm{inv},z}}

\newcommand{\FTSPOsigmatau}{\ensuremath{\mathsf{TSPO}^{\sigma,\tau}}\xspace}
\newcommand{\FTSPOsigmaztauz}{\ensuremath{\mathsf{TSPO}^{\sigma_0,\tau_0}}\xspace}
\newcommand{\OFTSPOsigmatau}{O^{\FTSPOsigmatau}}
\newcommand{\OinvFTSPOsigmatau}{O^{\FTSPOsigmatau,\mathrm{inv}}}

\newcommand{\Z}{\mathbbm{Z}}
\newcommand{\uglyterm}{\zeta}

\newcommand{\C}{\mathbbm{C}}
\newcommand{\E}{\mathop{\bf E\/}}

\newcommand{\ot}{\otimes}
\newcommand{\op}{\oplus}
\newcommand{\proj}[1]{\ensuremath{\lvert#1\rangle \langle #1\rvert}}
\newcommand{\adver}{\ensuremath{\mathcal{A}}\xspace}
\newcommand{\ketbra}[2]{\left|#1\right\rangle\!\!\left\langle #2\right|}

\newcommand{\tp}[2]{\left( #1 \; #2 \right)}

\DeclarePairedDelimiter\abs{\lvert}{\rvert}
\DeclarePairedDelimiter\norm{\lVert}{\rVert}

\DeclarePairedDelimiter\parens{\lparen}{\rparen}
\DeclarePairedDelimiter\braces{\lbrace}{\rbrace}

\allowdisplaybreaks[4]
\urlstyle{same}
\setenumerate{label=(\roman*),noitemsep}
\newcommand{\email}[1]{\emph{Email:} \href{mailto:#1}{#1}}
\definecolor{dgreen}{rgb}{.2,.6,.2}

\begin{document}

%=============================================================================
\title{Permutation Superposition Oracles for Quantum Query Lower Bounds}
\author{Christian Majenz%
\texorpdfstring{\footnote{Department of Applied Mathematics and Computer Science, Technical University of Denmark. \email{chmaj@dtu.dk}}}{}
\and Giulio Malavolta%
\texorpdfstring{\footnote{Department of Computing Sciences, Bocconi University, Italy. \email{giulio.malavolta@unibocconi.it}}}{}
\texorpdfstring{\footnote{Max Planck Institute for Security and Privacy, Germany.}}{}
\and Michael Walter%
\texorpdfstring{\footnote{Faculty of Computer Science, Ruhr-Universit\"at Bochum, Germany. \email{michael.walter@rub.de}}}{}}
\date{}
\maketitle
%=============================================================================
% \tableofcontents
%=============================================================================
\begin{abstract}
We propose a generalization of Zhandry's compressed oracle method to random permutations, where an algorithm can query both the permutation and its inverse.
We show how to use the resulting oracle simulation to bound the success probability of an algorithm for any predicate on input-output pairs, a key feature of Zhandry's technique that had hitherto resisted attempts at generalization to random permutations.
One key technical ingredient is to use \emph{strictly monotone factorizations} to represent the permutation in the oracle's database.
As an application of our framework, we show that the one-round sponge construction is unconditionally preimage resistant in the random permutation model. This proves a conjecture by Unruh.
\end{abstract}

% \bigskip

% \clearpage
\tableofcontents
\clearpage
% !TEX root = compressed-pi.tex
%=============================================================================
\section{Introduction}
%=============================================================================
The random oracle model \cite{BR93} is a popular heuristic in the analysis of cryptographic protocols, that abstracts cryptographic objects as random functions and provides oracle access to other algorithms. From a theoretical standpoint, the random oracle model allows one to prove unconditional statements about cryptographic protocols, in a clean and well-defined model. On the practical side, the random oracle model enables efficient cryptographic schemes and essentially every construction (be it a digital signature or a public-key encryption) used in practice relies, in one way or another, on this heuristic in order to analyze security.\footnote{Note however that the ROM (and QROM) are both fundamentally uninstantiable \cite{Canetti:2004:ROM:1008731.1008734}.} When considering security against quantum algorithms, it is natural to extend this model to allow the algorithms to query the random function on a superposition of inputs.  This is commonly referred to as the quantum random oracle model (QROM) \cite{BDF+2011}. However, many of the techniques (and proofs) developed in the (classical) random oracle model do not immediately carry over to the QROM. To illustrate the difference, it suffices to note that in the classical settings the reduction can read the queries of the algorithm, whereas the same action in the QROM may arbitrarily disturb the state of the algorithm.

To cope with this, the community has developed a series of new techniques to analyze quantum algorithm in these settings.
An important method in these settings is the \emph{compressed oracle} technique \cite{Zhandry2019}. Conceptually, this technique is the quantum analogue of the classical lazy-sampling method, which allows the reduction to define the random function only on the inputs queried by the attacker. At a more technical level, the technique considers a purified version of the random function, that allows the reduction to directly inspect the internal state of the compressed oracle simulation (the so-called \emph{database}), in order to gain partial/approximate knowledge about the queries made by the algorithm. An extremely useful property in this context is that the compressed oracle simulation stores (a superposition of) a list of input-output pairs, so to learn something about the value $H(x)$, and whether $H(x)$ is known to the adversary, it is only necessary to inspect one register. This technique has proven extremely successful in analyzing indifferentiability of cryptographic schemes \cite{Zhandry2019}, security reductions for the Fiat-Shamir transformation \cite{LZ19,DFMS21,DFMS21} and the Fujisaki-Okamoto transformation \cite{BHHHP19,DFMS21,HHM22}, and even new lower bounds on the query complexity of quantum algorithms \cite{Zhandry2019,Liu2019,CFHL21} and space-time trade-offs \cite{HM23}.

\paragraph{The Random Permutation Model.} In the random permutation model, algorithms are given oracle access to a uniformly sampled permutation $\pi \in S_N$, as well as its inverse~$\pi^{-1}$. This variant of the random oracle model is motivated by cryptographic schemes, such as the Feistel construction for pseudorandom permutations \cite{LR88} or the industry-standard SHA-3 hash function \cite{dworkin2015sha}, where an attacker has access to both the permutation and its inverse. When considering quantum attackers, it is therefore equally natural to assume that such a permutation can be implemented on a quantum computer (as it is publicly known), and hence queried in superposition. Accordingly, it is natural to model this situation by a \emph{quantum-accessible random permutation oracle}, where one
considers the unitary%
\footnote{Alternatively, one can consider the \emph{in-place} permutation~$V^\pi$ defined as $V^\pi \ket x = \ket{\pi(x)}$, which is well defined since $\pi$ is a bijection, and give the adversary query access to $V^\pi$ and $V^{\pi^{-1}} = (V^\pi)^\dagger$. For the purposes of query bounds both models are equivalent, since either can be simulated using two queries to the other. We discuss this in more detail in \cref{sec:oracle}.}
\begin{align*}
    U^\pi \ket x \ket y = \ket{x} \ket{y\oplus\pi(x)}
\end{align*}
and one gives the adversary query access to~$U^\pi$ and~$U^{\pi^{-1}}$.
Classically, the random oracle and the random permutation model are essentially equivalent. This is in stark contrast to the quantum setting where no such connection is known. So far, it has proven difficult to repeat the success of Zhandry's compressed oracle technique for analyzing quantum query access to a uniformly random permutation: Despite several attempts to come up with a full-fledged compressed permutation oracle \cite{Czajkowski2019a,Unruh21,Czajkowski21} the problem is still open. On the other hand, few existing results rely on a bare-bones, ``un-compressed'' superposition oracle for permutations~\cite{ABKM21,ABKMS22,alagic2023two-sided}, whereas recent works \cite{rosmanis2021tight,unruh2023towards} have made partial progress on this problem but without being able to apply the formalism towards new query bounds.

%\gm{@Chris: Can you have a look at the comparison with related works above?}

Arguably, the main reason for the lack of progress is, that the compressed oracle technique relies on the statistical independence of the output values of a random function, but the output values of a random permutation are, of course, not independent (more discussion on this later). The purpose of this work is to make progress on this front, and expand our technical toolkit in the analysis of random permutation oracles.

%-----------------------------------------------------------------------------
\subsection{Summary of Contributions}
%-----------------------------------------------------------------------------

In this work we propose a new approach to analyze quantumly-accessible permutation and prove query lower bounds in this setting. Our main ingredient is a new analysis of a representation of a permutation, known as its \emph{strictly monotone factorization}, which may be of independent interest. At a technical level, we prove a series of lemmas that facilitate proofs of query lower bound in the random permutation model. Specifically:
\begin{itemize}
    \item We present a \emph{permutation} oracle (\cref{sec:oracle}), that is the permutation analogue of Zhandry's compressed oracle technique. Our oracle makes crucial use of a particular representation of a permutation, known as its strictly monotone factorization (\cref{sec:permutations}).
    \item We prove a \emph{fundamental lemma} for permutation oracles (\cref{sec:fundamental}), where we describe a procedure to (approximately) determine whether an input was queried by the adversary, and we give a bound on the precision of its output.
    \item We propose a \emph{progress measure} for permutation oracles (\cref{sec:progress}), that bounds the success probability of an algorithm after $q$ queries to find an input/output pair that satisfies a given relation.
    \item We prove, as our \emph{main theorem}, a general bound (\cref{thm:main}) for any adversary to produce an input-output pair satisfying some relation $R$, for any $R$.
    This bound states the following:
    % This bound captures, as a special case, the double-sided zero-search conjecture~\cite{unruh2023towards} and the preimage resistance of one-round sponge (\cref{sec:sponge}).
\end{itemize}

\begin{theorem}[Informal]\label{thm:intro main}
    Let $\mathcal{A}$ be an algorithm with quantum query access to a random permutation $\pi\in S_N$ and its inverse $\pi^{-1}$, and let $R$ be a relation. If $\mathcal{A}$ makes at most $q$ queries and outputs $x$, then
    \[
    \Pr[(x, \pi(x)) \in R]\leq O\left(\frac{q^3 r_{\max} \ln(N)}{N} \right),
    \]
    where $r_{\max} = \max \braces*{ \max_x \, \abs{R_x}, \max_y \, \abs{R^\mathrm{inv}_y} }$, with $R_x = \{ y : (x,y) \in R \}$ and~$R^\mathrm{inv}_y = \{ x : (x,y) \in R \}$.
\end{theorem}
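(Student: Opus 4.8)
The plan is to assemble the three ingredients built in \cref{sec:oracle,sec:fundamental,sec:progress}; once those are in hand the bound follows by a short argument. First I would pass from the two real oracles to the superposition permutation oracle. Since $\SPO$ perfectly simulates the pair $U^\pi,U^{\pi^{-1}}$ for a uniformly random $\pi\in S_N$, the quantity $\Pr[(x,\pi(x))\in R]$ is unchanged if $\adver$ is run against $\OSPO$ and $\OinvSPO$ and, after $\adver$ outputs a register holding $x$, one appends a single \emph{recording query} to $\OSPO$ on that register together with a fresh answer register (coherently writing $\pi(x)$), computes the bit $\bracks{(x,\pi(x))\in R}$ into an auxiliary qubit, and measures it. Writing $\mathsf{Acc}$ for the event that this qubit is $1$, we have $\Pr[(x,\pi(x))\in R]=\Pr[\mathsf{Acc}]$, and the whole experiment makes at most $q+1$ queries.

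Next I would relate $\mathsf{Acc}$ to a property of the oracle's database. Let $\Pi_R$ be the projector onto databases containing some pair $(x,y)\in R$, and let $\ket{\Psi_{q+1}}$ denote the global state just before the final measurement. The fundamental lemma of \cref{sec:fundamental} then gives, up to a precision term $\delta$ that is of lower order than the target bound,
\[
\sqrt{\Pr[\mathsf{Acc}]}\ \le\ \norm{\Pi_R\ket{\Psi_{q+1}}}+\delta .
\]
This is the step where the recording query earns its keep: $\adver$ may output an $x$ it never queried, so on its own the database need not ``contain'' $\pi(x)$ at all; the extra query, together with the precision guarantee of the fundamental lemma, is exactly what rules out $\adver$ guessing a good $x$ for which the database carries no trace.

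Finally I would bound $\norm{\Pi_R\ket{\Psi_{q+1}}}$ with the progress measure of \cref{sec:progress}, which tracks precisely this amplitude across the interaction. The estimate I expect to use is that a single query to $\OSPO$ or $\OinvSPO$ increases $\norm{\Pi_R\ket{\Psi}}$ (for $\ket{\Psi}$ the current global state) by at most $O\parens*{\sqrt{q\,r_{\max}\ln(N)/N}}$: the partial permutation stored in the database is supported on at most $q$ points, each extendable to an element of $R$ in at most $r_{\max}$ ways, a freshly sampled input or output value is essentially uniform over the $\Theta(N)$ still-unused points, the extra factor $\sqrt q$ (absent in the random-function case) reflects that one oracle call touches the whole strictly monotone factorization at once, and the $\sqrt{\ln N}$ comes from the bookkeeping of that representation. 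Since the database starts empty, summing over the $q+1$ queries gives $\norm{\Pi_R\ket{\Psi_{q+1}}}=O\parens*{q^{3/2}\sqrt{r_{\max}\ln(N)/N}}$; plugging this into the displayed inequality and squaring yields $\Pr[(x,\pi(x))\in R]=O\parens*{q^3 r_{\max}\ln(N)/N}$.

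The hard part is not this final assembly but its inputs, and two in particular: building a superposition oracle that both simulates $U^\pi$ faithfully and keeps its database manageable (this is where the strictly monotone factorization of \cref{sec:permutations} enters), and proving the per-query increment bound behind the progress measure. The latter is where one must come to grips with the fact that the values of a random permutation are not independent --- so that a query does not merely append a fresh near-uniform entry but also rearranges the factorization of everything recorded so far --- and it is this that forces both the extra factor of $q$ over the random-function bound and the logarithmic factor.
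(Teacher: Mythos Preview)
Your high-level assembly is in the right spirit---pass to a superposition oracle, use a fundamental lemma, track a progress measure---but the sketch glosses over exactly the obstacles that the paper spends most of its technical effort on, and in two places the outline would not go through as stated.

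First, you treat $\Pi_R$ as ``the projector onto databases containing some pair $(x,y)\in R$'' and analyze its growth. In the compressed-oracle setting for random \emph{functions} such a projector exists because the per-input tests commute. Here they do not: the paper's analogue of ``$x$ was queried and $(x,\pi(x))\in R$'' is the operator $E^{R,x}_D=\Pi^{R,x}_D(I-\proj{+_x}_{D_x})$, and these operators for different $x$ fail to commute (see the discussion at the start of \cref{sec:progress}). There is therefore no single $\Pi_R$ to track. The paper's fix is to check a \emph{single random} $x$ and use as progress measure the expectation $\E_{x,\sigma,\tau}\norm{E^{R^{\sigma,\tau},x}_D\ket{\phi^{\sigma,\tau}}}^2$; \cref{lem:progress vs fundamental} then shows this expectation, multiplied by $N$, upper bounds the right-hand side of the fundamental lemma. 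Your sketch does not contain this idea.

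Second, and more critically, you omit the \emph{twirling}. The bare \SPO is highly asymmetric in $x$: the error in the fundamental-lemma step is $\sqrt{1/x}$ (\cref{lem:help}), and the per-query increment in the progress measure carries factors of $\sqrt{\abs{R_x}/x}$ (\cref{lem:easy}). For small $x$ these are useless. The paper's remedy is to pre- and post-compose the \SPO with uniformly random permutations $\sigma,\tau$, so that the adversary's input $x$ lands on a uniformly random register $\sigma(x)$; only after averaging over $\sigma,\tau$ do the harmonic-sum bounds $\frac1N\sum_x\frac1x\le\frac{\ln N+1}{N}$ appear and give the stated $\ln N/N$ factors (see the end of the proof of \cref{lem:fundamental}, and \cref{prop:hard-database,lem:crucial,cor:weighted-sparsity}). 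Your per-query increment ``$O(\sqrt{q\,r_{\max}\ln N/N})$'' is precisely the post-twirl, post-average bound; it is not available for the untwirled \SPO you work with.

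A smaller point: the database here is \emph{not} ``a partial permutation supported on at most $q$ points''. The \SPO stores the full strictly monotone factorization and is never compressed; sparsity is instead an analytic statement about how many registers $D_x$ have left the state $\ket{+_x}$, quantified via the operator $\Gamma_D$ and bounded by a commutator argument (\cref{lem:gamma exact,lem:gamma growth,cor:weighted-sparsity}). This sparsity analysis is a separate ingredient that feeds into \cref{prop:hard-database}, not a structural fact about the database representation.
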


Illustrating the power of the new approach, we obtain as special cases the pre-image resistance of the sponge construction for the special case of one absorption round (see \cref{fig:sponge}), and the double-sided zero-search conjecture~\cite{unruh2023towards}.
The former is the problem of finding an $x$ such that $\pi(x\|0^c) = y\|0^c$ for some~$y$; it was the original motivation of our work.
Thus (\cref{sec:sponge}):

\begin{figure}
	\centering
	\includegraphics[width=.7\textwidth]{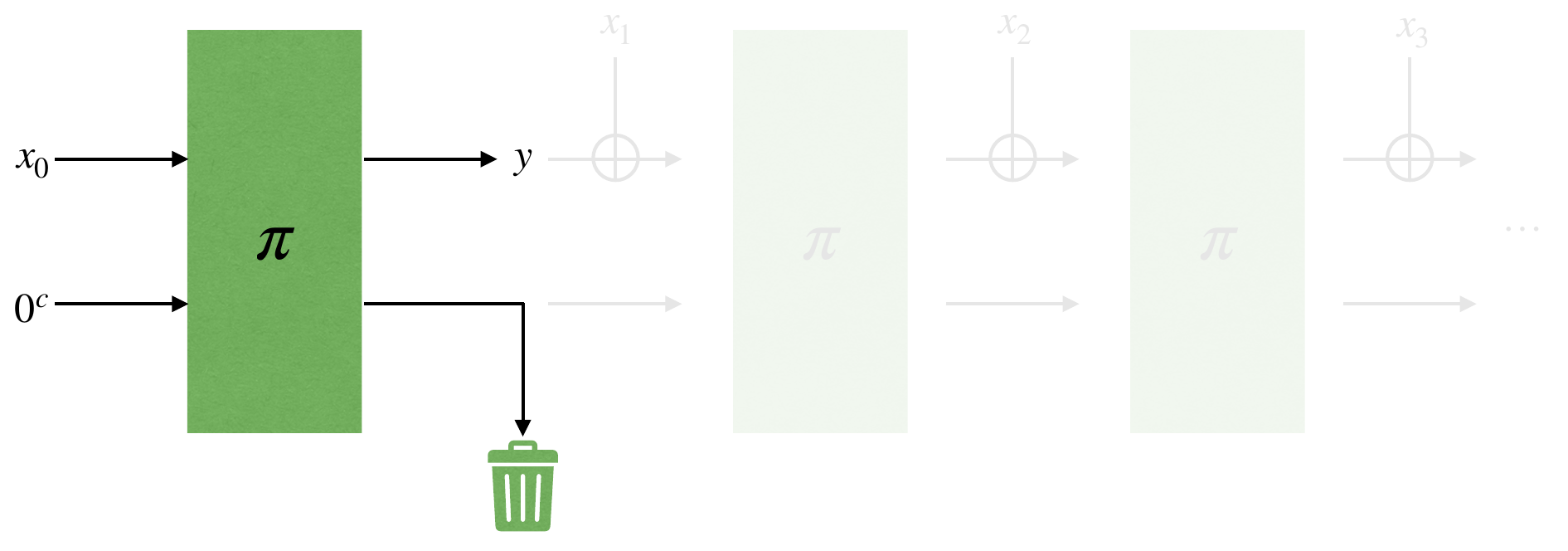}
	\caption{The 1-round sponge.}\label{fig:sponge}
\end{figure}

\begin{cor}[One-Round Sponge, informal]\label{cor:intro sponge}
    Let $\mathcal{A}$ be an algorithm with quantum query access to a random permutation $\pi\in S_{\{0,1\}^n}$ and its inverse $\pi^{-1}$, let $c \in [n]$, and let $y\in\{0,1\}^{n-c}$.  If $\mathcal{A}$ makes at most $q$ queries and outputs~$x$,~then
 \[
        \Pr[\exists y'\in\{0,1\}^c:\pi(x\|0^c)=y\|y'] \leq O\left(\frac{q^3 n}{2^{\min(c,n-c)}}\right).
        %\frac {q^3 (n+2)} {2^{\min(c,n-c)}}.$
\]
\end{cor}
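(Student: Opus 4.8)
The plan is to derive \cref{cor:intro sponge} directly from the main theorem \cref{thm:intro main} by choosing the relation $R$ appropriately. Concretely, set $N = 2^n$ and work over the set $S_{\{0,1\}^n}$, and define
\[
R = \braces*{ (x', y') \in \{0,1\}^n \times \{0,1\}^n \;:\; x' = x'' \| 0^c \text{ for some } x'' \in \{0,1\}^{n-c},\ y' = y \| y''' \text{ for some } y''' \in \{0,1\}^c }.
\]
In words, $R$ consists of all input-output pairs whose input has its last $c$ bits equal to zero and whose output has its first $n-c$ bits equal to the target string $y$. Then the event $\exists y' \in \{0,1\}^c : \pi(x\|0^c) = y\|y'$ is exactly the event $(x\|0^c, \pi(x\|0^c)) \in R$, and since the adversary outputs $x$, feeding $x\|0^c$ as the candidate input is just a syntactic reformatting (one can either pad the output inside the reduction, or observe that outputting $x$ versus $x\|0^c$ is trivially interconvertible). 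So the probability in question is at most $O(q^3 r_{\max} \ln(N)/N)$ with $N=2^n$ and $\ln(N) = n\ln 2 = O(n)$.

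The remaining step is to compute $r_{\max}$ for this $R$. For a fixed input $x'$, the slice $R_{x'} = \{ y' : (x',y') \in R \}$ is empty unless $x'$ ends in $0^c$, in which case $R_{x'} = \{ y \| y''' : y''' \in \{0,1\}^c \}$, which has size exactly $2^c$. Symmetrically, for a fixed output $y'$, the slice $R^{\mathrm{inv}}_{y'} = \{ x' : (x',y') \in R \}$ is empty unless $y'$ starts with $y$, in which case $R^{\mathrm{inv}}_{y'} = \{ x'' \| 0^c : x'' \in \{0,1\}^{n-c} \}$, which has size exactly $2^{n-c}$. Hence
\[
r_{\max} = \max\braces*{ \max_{x'} \abs{R_{x'}}, \ \max_{y'} \abs{R^{\mathrm{inv}}_{y'}} } = \max\{ 2^c, 2^{n-c} \} = 2^{\max(c, n-c)}.
\]
Plugging into \cref{thm:intro main} gives
\[
\Pr[\exists y' : \pi(x\|0^c) = y\|y'] \;\leq\; O\!\parens*{\frac{q^3 \cdot 2^{\max(c,n-c)} \cdot n}{2^n}} \;=\; O\!\parens*{\frac{q^3 n}{2^{\,n - \max(c,n-c)}}} \;=\; O\!\parens*{\frac{q^3 n}{2^{\min(c,n-c)}}},
\]
using $n - \max(c, n-c) = \min(c, n-c)$. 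This is exactly the claimed bound.

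There is essentially no hard step here: the corollary is a plug-and-play instantiation of \cref{thm:main}, and the only things to get right are (i) the precise definition of $R$ so that the sponge preimage event coincides with $(x\|0^c, \pi(x\|0^c)) \in R$, and (ii) the elementary counting of the maximal row and column sizes of $R$. The one minor subtlety worth spelling out in the formal proof is the cosmetic mismatch between "the adversary outputs $x \in \{0,1\}^{n-c}$" and "the relation is on pairs in $\{0,1\}^n \times \{0,1\}^n$": this is handled by a trivial wrapper that appends $0^c$ to the adversary's output before querying, which does not change the query count. If one instead wanted to state a version of the main theorem over a general product-structured domain rather than $\{0,1\}^n$, that reformatting would be unnecessary, but as stated the padding remark suffices.
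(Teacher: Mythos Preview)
Your proposal is correct and follows essentially the same approach as the paper's own proof (\cref{cor:precise sponge}): define the relation $R$ of pairs $(x',y')$ with $x'$ ending in $0^c$ and $y'$ beginning with $y$, compute $r_{\max} = 2^{\max(c,n-c)}$, wrap $\mathcal A$ by appending $0^c$ to its output, and apply the main theorem.
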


\begin{cor}[Double-Sided Zero-Search, informal]\label{cor:intro zero search}
    Let $\mathcal{A}$ be an algorithm with quantum query access to a random permutation $\pi\in S_{\{0,1\}^{n}}$ and its inverse $\pi^{-1}$ and let $c\in[n]$.  If $\mathcal{A}$ makes at most $q$ queries and outputs $x$, then
 \[
        \Pr[\exists y\in\{0,1\}^{n-c}:\pi(x\|0^c)=y\|0^c] \leq O\left(\frac{q^3 n}{2^c}\right).
\]
\end{cor}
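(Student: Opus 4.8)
The plan is to derive \cref{cor:intro zero search} directly from \cref{thm:intro main} by choosing the relation $R$ appropriately, and then checking that the parameters $r_{\max}$ and $N$ specialize to give the claimed bound. Concretely, I would work over the symmetric group $S_N$ with $N = 2^n$, identifying $\{0,1\}^n$ with $\Z_N$ in any fixed way, and set
\[
    R = \braces*{ (x\|0^c, y\|0^c) : x \in \{0,1\}^{n-c},\ y \in \{0,1\}^{n-c} } \subseteq \{0,1\}^n \times \{0,1\}^n.
\]
The event in the corollary, ``$\exists y : \pi(x\|0^c) = y\|0^c$'', is then precisely the event that the input-output pair $(x\|0^c, \pi(x\|0^c))$ lies in $R$. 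The only subtlety is that the adversary in \cref{thm:intro main} outputs an arbitrary $x \in \{0,1\}^n$ and is scored on $(x,\pi(x)) \in R$, whereas here the adversary outputs $x \in \{0,1\}^{n-c}$ which we must turn into the full input $x \| 0^c$; but any adversary for the zero-search problem is trivially an adversary in the sense of \cref{thm:intro main} after post-processing its output $x \mapsto x\|0^c$ (this map is query-free, so the query count $q$ is unchanged), and conversely an output not of the form $x\|0^c$ can only hurt, so no generality is lost.

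Next I would compute $r_{\max}$ for this $R$. For a fixed input of the form $x\|0^c$, the set $R_{x\|0^c} = \{ y\|0^c : y \in \{0,1\}^{n-c} \}$ has size $2^{n-c}$; for any input not ending in $0^c$, the row $R_x$ is empty. Symmetrically, for a fixed output $y\|0^c$, the set $R^{\mathrm{inv}}_{y\|0^c} = \{ x\|0^c : x \in \{0,1\}^{n-c} \}$ also has size $2^{n-c}$, and is empty otherwise. Hence $r_{\max} = 2^{n-c}$. Plugging $N = 2^n$, $r_{\max} = 2^{n-c}$, and $\ln N = n \ln 2 = O(n)$ into the bound of \cref{thm:intro main} gives
\[
    \Pr[(x\|0^c, \pi(x\|0^c)) \in R] \leq O\!\parens*{ \frac{q^3 \cdot 2^{n-c} \cdot n}{2^n} } = O\!\parens*{ \frac{q^3 n}{2^c} },
\]
which is exactly the claimed bound.

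There is essentially no obstacle here: the corollary is a direct instantiation of the main theorem, and the only thing to be careful about is the bookkeeping in the definition of $R$ (making sure both $R_x$ and $R^{\mathrm{inv}}_y$ are bounded by $2^{n-c}$, which is what forces the $2^{-c}$ rather than $2^{-n}$ scaling) and the reduction between the two notions of ``adversary output.'' If anything, the mildly delicate point worth spelling out is \emph{why} one does not instead get the stronger $q^3 n / 2^{\min(c,n-c)}$ bound of \cref{cor:intro sponge}: there, the target output suffix $y'$ ranges only over a single fixed value on the last $c$ bits but the first $n-c$ bits are pinned to a fixed $y$, so one of the two sets $R_x$, $R^{\mathrm{inv}}_y$ has size $2^c$ and the other has size $2^{n-c}$, giving $r_{\max} = 2^{\max(c,n-c)}$ and hence the $2^{-\min(c,n-c)}$ scaling; whereas in the zero-search problem both the input and output are constrained to end in $0^c$, so both sets have size exactly $2^{n-c}$ and the min disappears. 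I would include a one-line remark to this effect so the reader sees the two corollaries are genuinely the same calculation with different relations.
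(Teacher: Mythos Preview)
Your proposal is correct and matches the paper's own proof (\cref{cor:precise zero search}) essentially line for line: define the relation $R$ consisting of pairs whose last $c$ bits are zero on both sides, compute $r_{\max}=2^{n-c}$, and plug into \cref{thm:main}. The only cosmetic difference is that the paper states the precise version over $\{0,1\}^{2n}$ with $c\in[2n]$ rather than $\{0,1\}^n$ with $c\in[n]$, and it folds the query-free post-processing $x\mapsto x\|0^c$ into the phrase ``similarly to the proof of \cref{cor:precise sponge}'' rather than spelling it out as you do.
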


%-----------------------------------------------------------------------------
\subsection{Key Challenges and Techniques}
%-----------------------------------------------------------------------------
To understand the main challenges of extending Zhandry's method to permutations, and how we overcome these to prove our results, it is useful to recall the key properties of the compressed oracle method, and see why they fail for the case of permutation.
The output values of a random function are independent as random variables.
The crucial implication of this fact is that there exists a list of random variables that are
\begin{enumerate}
	\item independent, and
	\item learning one output requires looking at only one of the random variables, and
	\item each variable stores only information about one output.
\end{enumerate}
It is easy to find a representation of a random permutation as a list of random variables that has some of the  properties, but it is manifestly impossible to find one that has all, as the outputs are not independent.
%The first (seemingly obvious) property is that the compressed oracle stores (superpositions of) input-output pairs. This means that, in order to learn something about the value~$H(x)$ of the random function~$H$ at some point~$x\in[N]$, and whether this value is known to the adversary, it is only necessary to measure one register. On the other hand, for the case of permutations there is no such natural encoding strategy: The information about permutation $P$ of an input $x$ is not ``localized'' to $x$, since it affects all other inputs as well.
For instance, by knowing that $\pi(x) = y$ we can also infer that $\pi(x') \neq y$, for all $x' \neq x$.

One can therefore parameterize the set of approaches by which of the properties is given up on. In \cite{Unruh21}, for example, a formalization using lazy sampling of permutations as partial functions is used, giving up on independence. In \cite{rosmanis2021tight}, an analysis via the representation theory of the permutation group is conducted, which yields independent data via the decomposition into irreducible representation and symmetries, but gives up at least partially on locality.
%\MW{Would be nice to give pointers to the precise results in the main part for each of the claims below.}

\paragraph{Random Permutations from Independent Transpositions.} Our main idea is to use the well-known fact that any permutation $\pi \in S_N$ admits a unique decomposition
\begin{equation}\label{eq:strictly-monotone-intro}
	\pi = \tp N {t_N} \tp {N\!-\!1} {t_{N-1}} \cdots \tp 2 {t_2} \tp 1 {t_1}
\end{equation}
where we denote by $\tp{k}{t_k}$ the transposition that sends $k$ to $t_k$ and viceversa. If one leaves out trivial transpositions (i.e., the factors with~$k = t_k$), one obtains a so-called \emph{strictly monotone} factorization of a permutation.
It is well-known that any permutation~$\pi \in S_N$ has a unique strictly monotone factorization, with number of terms equal to the Cayley distance between~$\pi$ and the identity permutation, that is, the minimum number of transpositions (of arbitrary type) in any factorization of~$\pi$.

A useful property of this representation is that the transpositions making up a uniformly random permutation via the strictly monotone factorization are  independent; $t_k$ is uniformly chosen from the set $\{1, \dots, k\}$.  The main, less easy-to-see, property of this decomposition that we are going to use, is the fact that we can ``track'' the set of transpositions that act non-trivially on a given input $x$ (which we refer to as being \emph{active} for $x$). Crucial to our analysis is the fact that, for any given input $x$, the expected number of active transpositions is \emph{small}, i.e., at most about $\ln(N)$. To compute the random permutation, only a small amount of information about non-active transpositions is retrieved, namely that they are not active. On the other hand, the output sensitively depends on the value $t_k$ for an active location $k$, in the sense that changing $t_k$ to any other value changes the output. This gives us a way to quantify the \emph{sparsity} of the (quantum analogue of the) list of transposition that has been read. Although this means we only have an approximate \emph{and} relaxed variant of the second property above, we will be able to show that this quantity is small enough to emulate the functionality of the compressed oracle method for the case of permutations.

\paragraph{The Permutation Oracle.}
With the above discussion in mind, let us now describe (a simplified version of) our permutation oracle. The simulation initializes a \emph{database} of $N$ registers $D = D_1 \dots D_N$ and the $k$-th register $D_k$ is initialized with the state
\[
\ket{+_k} = \frac1{\sqrt k} \sum_{t=1}^k \ket{t}
\]
which will be interpreted as the uniform superposition over all possible values $t_k$ in \cref{eq:strictly-monotone-intro}. Since any given permutation $\pi \in S_N$ uniquely determines a basis state $\ket{\pi} = \ket{t_1, \dots, t_N}$, it is easy to see that the initial state of the database corresponds to a uniform superposition over all possible permutations:
\[
\bigotimes_{k=1}^N \ket{+_k}_{D_k} = \frac1{\sqrt{N!}} \sum_{\pi \in S_N} \ket{\pi}_D.
\]
We can then define the forward query oracle ($\OSPO_{XYD}$) and the inverse query oracle ($\OinvSPO_{XYD}$) provided to the adversary by their actions of the basis states $\ket{x}_X$, $\ket{y}_Y$, and $\ket{\pi}_D$ as:
\begin{align*}
    \OSPO_{XYD} \ket{x,y,\pi}_{XYD} &= \ket{x,y\oplus\pi(x),\pi}_{XYD}, \\
    \OinvSPO_{XYD} \ket{x,y,\pi}_{XYD} &= \ket{x,y\oplus\pi^{-1}(x),\pi}_{XYD}.
\end{align*}
In other words, the oracles apply a series of transpositions controlled on the current state of the database.

\paragraph{A Fundamental Lemma for Permutation Oracles.}
What makes the Zhandry's framework so useful is the ability to read off the database information about the adversary's state.\footnote{Although the analogy with Zahndry's technique is helpful for understanding our framework, the direct comparison is somewhat inaccurate. In particular, contrary to Zhandry's method, we will not attempt to \emph{compress} the database in any way. As a consequence of this, our simulation will not be computationally efficient, which is sufficient to prove query lower bounds. We leave the development for an efficient version of our framework as ground for future work.} Thus our next task is to develop the necessary machinery for doing so. The first step is to prove a \emph{fundamental lemma}, a statement akin to the existing bound for random functions \cite{Zhandry2019,CFHL21}. Fix an input $x$ and an output $y$, simplifying a bit, the content of the fundamental lemma is the approximate equivalence of the following two experiments: %\cm{to make it easier to understand, can we here in the intro write "...distinguishability of the following to experiments", or maybe even better write "...,simplifying a bit, the content of the fundamental lemma is the approximate equivalence of the following two experiments."}\gm{changed}

\begin{itemize}[leftmargin=2cm]
\item[\emph{Exp.~I:}] Read the value $\pi(x)$ off the database state and accept if $\pi(x) = y$, reject otherwise.
\item[\emph{Exp.~II:}] Check if the adversary has queried $x$, and reject if this is not the case. Else proceed as above.
\end{itemize}
The (approximate) equivalence of these two experiments is useful to implement a somewhat ``gentle'' measurement on the database state, for a given input $x$, since if we detect that the adversary never queried $x$, there is no need to disturb the state any further.

Of course at this point it is not clear what we exactly mean by ``the adversary has queried $x$'', and so the next step is to make this notion more precise. We observe that querying the superposition oracle on a basis state $\ket{x}$ must have a non-trivial effect on database location $D_x$ in the computational basis, since it is determining the value of the first transposition, which is always active for $x$. Thus, we can formally define this quantity via a binary-outcome projective measurement
\[
\mathcal{M}_{D_x} := \left\{\proj{+_x}_{D_x}, I - \proj{+_x}_{D_x}\right\}
\]
and conditioning on the second outcome occurring. As a sanity check, note that if the first outcome is observed instead, then the database state is in its initial condition. With some routine calculation, we can then derive a bound
\[
\left|\Pr[\text{Exp.~I accepts}] - \Pr[\text{Exp.~II accepts}] \right| \leq \sqrt{\frac{1}{x}}
\]
where the term $1/x$ comes from the non-commutativity of the measurement $\mathcal{M}_{D_x}$ and the standard basis measurement used to determine $\pi$.

Unfortunately this bound on its own is not very meaningful: To see why, simply take $x=1$ where we obtain a trivial bound. The source of this problem is the \emph{asymmetric} treatment of different registers in the representation of the permutation, where lower registers have a much smaller set of possible transpositions. To deal with this, we introduce our next idea.

\paragraph{Twirling the Oracle.} We overcome the challenge by randomizing the order in which we apply the strictly monotone decomposition. We address this by pre- and post-composing, or ``twirling'', the permutation oracle with two random permutations $\tau$ and $\sigma$. In other words, we define the \emph{twirled} version of the permutation oracle as
\begin{align*}
  \OFTSPOsigmatau_{XYD} \ket{x,y,\pi}_{XYD}
&= \ket{x,y \op \tau^{-1}(\pi(\sigma(x))),\pi}_{XYD} \\
  \OinvFTSPOsigmatau_{XYD} \ket{x,y,\pi}_{XYD}
&= \ket{x,y \op \sigma^{-1}(\pi^{-1}(\tau(x))),\pi}_{XYD}
\end{align*}
for forward and inverse queries, respectively. This yields our final construction, which we call the \emph{twirled superposition permutation oracle}. Note that the permutations $\tau$ and $\sigma$ are treated differently than $\pi$ in the simulation, since we do not require any special property from their representation and their sole purpose is to randomize the view of the adversary in the sense that the adversary does not know which inputs correspond to small values $x$ in the untwirled oracle.  Another way to describe the twirled superposition permutation oracle is that it is constructed based on the strictly monotone factorization \emph{in a random order}. This technique renders the permutation actually stored in a quantum register independent from the view of the algorithm interacting with the twirled superposition permutation oracle.

Equipped with this oracle, our bound obtained in the analysis of the fundamental lemma translates into an \emph{expectation} over the random choice of the register $x$. Considering the \emph{square} of the difference between the success probability of the two experiments, and taking the expectation over $x$, we obtain the bound
\[
\frac1 N \sum_{x=1}^N \frac 1 x
\leq \frac {\ln(N) + 1} N
\]
using a standard bound on the harmonic sum. Note that the final bound is independent of $x$.

\paragraph{The Progress Measure.} Once we have established a procedure to read information off the database, what is left to be decided is \emph{what} we want to read from the database. Due to the challenge described above, we cannot straightforwardly implement a predicate that checks whether there exist an input-output pair $(x, \pi(x))$ that was read by the adversary, such that $(x, \pi(x)) \in R$, for some relation $R$. Instead, for a given input~$x$, our progress measure is defined in terms of the following two-step procedure:
\begin{enumerate}
    \item Apply the projective measurement $\mathcal{M}_x$ defined above, and reject unless the second outcome is obtained (intuitively, this rejects unless the permutation was queried on input~$x$ by the adversary).
    \item Check if $(x,\pi(x))$ indeed satisfies the relation $R$, which can be implemented by the following predicate:
    \[
    \Pi_D^{R,x} := \sum_{\pi \in S_N : (x,\pi(x)) \in R} \proj\pi_D.
    \]
\end{enumerate}
This procedure can be summarized by the following measurement operator:
\[
E^{R,x}_D := \Pi^{R,x}_D (I - \proj{+_x}_{D_x}).
\]
%\MW{I think we still need to define the actual progress measure.}
It is easy to bound the progress measure if the adversary makes no query, since the above projection is acting on a uniform superposition. The challenge, which is the technically most involved part of our work, is to track the bound on the progress measure as the adversary queries the oracles in the forward and inverse direction. At a very high-level, we achieve this by splitting the effect of the action of the oracle $Q^{\SPO}_{XYD} = \{\OSPO_{XYD}, \OinvSPO_{XYD}\}$ in two terms
\begin{align*}
&\norm*{ E^{R,x}_D Q^{\SPO}_{XYD} \ket\phi } - \norm*{ E^{R,x}_D \ket\phi } \\
&\leq \norm*{E^{R,x}_D Q^{\SPO}_{XYD} (I - E^{R,x}_D) \ket\phi}\\
&\leq \norm*{ E^{R,x}_D  Q^{\SPO}_{XYD} (I - \Pi_D^{R,x})(I - \proj{+_x}_{D_x}) \ket\phi }
+ \norm*{ E^{R,x}_D Q^{\SPO}_{XYD} \proj{+_x}_{D_x} \ket\phi }
\end{align*}
for any state $\ket{\phi}$. We bound the two summands separately.

The RHS summand is bounded with a delicate analysis on the effect of the query unitary on the joint database-adversary state. We refer the reader to the technical sections for the calculations and we only mention here that the bound that one obtains with such analysis will not be sufficient for our main theorem. Instead, we will once again use the \emph{randomization} of the register $x$ and the twirling of the permutation $\pi$ to transform the worst-case bound in a much sharper \emph{average-case} bound
\[
\frac{1}{N} \sum_{x=1}^N\norm*{ E^{R,x}_D Q^{\SPO}_{XYD} \proj{+_x}_{D_x} \ket\phi }^2 \leq O\left(\frac{q^2 r_{\max} \ln(N)}{N^2}\right)
\]
where $r_{\max}$ is the maximum number of $y$ such that $(x,y)\in R$, for all $x$. We can then turn this inequality back to a \emph{worst-case} bound over $x$ with a pidgeonhole argument, at the cost of losing a factor $N$ in the bound. Fortunately, the resulting term is still small enough to obtain a good bound. Next, we deal with the LHS summand of the above bound.

\paragraph{Sparsity Analysis.} To bound the LHS of the summand, let us first manipulate the expression
\begin{align*}
     \norm*{ E^{R,x}_D  Q^{\SPO}_{XYD} (I - \Pi_D^{R,x})(I - \proj{+_x}_{D_x}) \ket\phi }
     &= \norm*{ E^{R,x}_D (I - \Pi_D^{R,x}) Q^{\SPO}_{XYD} (I - \proj{+_x}_{D_x}) \ket\phi } \\
     &\leq \norm*{ E^{R,x}_D (I - \Pi_D^{R,x}) Q^{\SPO}_{XYD}} \cdot\norm*{(I - \proj{+_x}_{D_x}) \ket\phi }\\
     &= \norm*{ E^{R,x}_D (I - \Pi_D^{R,x})} \cdot \norm*{(I - \proj{+_x}_{D_x}) \ket\phi }\\
     &\leq \sqrt{\frac{r_{\max}}{x}} \norm*{(I - \proj{+_x}_{D_x}) \ket\phi }
\end{align*}
where the first inequality follows by the submultiplicativity of the operator norm and the second inequality is obtained by observing that
\[
\norm*{E^{R,x}_D (I - \Pi_D^{R,x})} = \norm*{\Pi_D^{R,x} (I-\ketbra{+_x}{+_x}_{D_x}) (I - \Pi_D^{R,x})} = \norm*{\Pi_D^{R,x} \ketbra{+_x}{+_x}_{D_x} (I - \Pi_D^{R,x})} \leq \norm*{\Pi_D^{R,x} \ketbra{+_x}{+_x}_{D_x}}
\]
which can be bound to $\sqrt{r_{\max}/x}$ using the same argument as in the fundamental lemma. Thus, bounding this term boils down to bounding the amount of locations read by the adversary, i.e., the number of \emph{active} registers in the database. Once again, the number of active locations on the initial state of the database is zero, so bounding this term involves analyzing the effect of the query unitary $Q_{XYD}$ on the joint database-adversary state. A delicate analysis leads to a bound of
\begin{align*}
	\frac{1}{N} \sum_{x=1}^N\frac{\abs{R}}x \norm*{ \parens[\big]{ I - \proj{+_x}_{D_x} } \ket{\psi} }^2 \le  O\left(\frac {q^3r_{\max}\ln (N)}{N^2}\right)
\end{align*}
in expectation over $x$, which we can once again turn into a worst-case bound on every $x$ at the cost of an extra $1/N$ factor.
%An interesting aspect of this analysis that we highlight here is that the bound on the forward and inverse queries are obtained using different strategies: For the former, we use more standard analysis on the effect of the unitary on the state, and we bound the norm of the projectors by connecting them to the probability of picking the ``right'' transposition in a random sampling. On the other hand, for inverse queries we use the fact that inverse queries can be simulated using forward queries \emph{plus post-selection}. Although the success probability of this post-selection is quite small (about $1/N$) this factor can be absorbed by slightly increasing the \emph{number of active registers}, plus a standard concentration bound.
We refer the reader to the technical sections for more details.

\paragraph{Putting Things Together.} Overall, the above analysis allow us to bound, via our \emph{progress measure}, how the success probability of the predicate $E^{R,x}$ evolves as the adversary performs more queries.
The final but crucial observation is that if~$x$ is the output of the algorithm $\mathcal A$, the measurement~$E^{R,x}$ accepts precisely if Exp.~II accepts.
%\MW{Is it an equality or are there factors of $N$?}
Applying the fundamental lemma to this bound, we obtain a bound on the acceptance probability of Exp.~I (as defined above), which is the quantity that we are interested in.

This outline ignores many subtle aspects of the proof, but it contains the main ideas.
Putting the bounds on the two main terms together, we obtain \cref{thm:intro main} and hence \cref{cor:intro sponge,cor:intro zero search}.

\subsection{Concurrent Independent Work}
%---------------------------------------------------------------
A recent manuscript by Carolan and Poremba \cite{cryptoeprint:2024/414}, developed concurrently and independently from our work, also shows a proof for the double-sided zero-search conjecture of Unruh and the one-wayness of the one-round sponge.
The techniques used to prove the bound are quite different and their work achieves the result using a worst-case to average-case reduction for random permutation, then appealing to known bounds.
The advantage of their approach is that the bound obtained is \emph{tight}.
% In contrast, our paper proposes a new framework to analyze permutation oracles and theorem that we obtain is more general, since it is a bound applicable to all one-input relations $R$.
In contrast, our paper proposes a new framework to analyze permutation oracles and a theorem that applies to arbitrary relations on (single) input-output pairs, promising opportunities for generalization.

% !TEX root = compressed-pi.tex
%=============================================================================
\section{Preliminaries}
%=============================================================================
We abbreviate $[N] := \{1,2,\dots,N\}$. For convenience, we assume that $N=2^n$, so we can identify $[N] \cong \{0,1\}^n$ and use~$\op$ to mean bitwise addition.
For a set~$S$, a probability distribution~$\mu$ and a (classical or quantum) algorithm~$\adver$, we write $x\leftarrow S$, $x\leftarrow D$, and~$x\leftarrow \adver$ for sampling a uniformly random element~$x$ from~$S$, sampling~$x$ according to the distribution~$D$, or running an algorithm~$\adver$ to produce an output~$x$.
If the output is quantum, we usually use upper-case letters, following our conventions for quantum registers discussed below.

%-----------------------------------------------------------------------------
\paragraph{Combinatorics.}
%-----------------------------------------------------------------------------
We will also use the fact that for the \emph{harmonic numbers}
\begin{align}\label{eq:H_N}
  H_N := \sum_{k=1}^N \frac 1 k,
\end{align}
the quantity~$H_N - \ln N$ is monotonically decreasing with~$N$.
In particular, it holds that
% \begin{align*}
% H_N - \ln(N) \leq H_1 - \ln(1) = 1
\begin{align}\label{eq:harmonic bound}
  H_N \leq \ln(N) + 1
\end{align}
% \end{align*}
for every~$N\geq1$.

%-----------------------------------------------------------------------------
\paragraph{Quantum Information.}%\label{subsec:quantum}
%-----------------------------------------------------------------------------
Here we provide some preliminary background on quantum mechanics and quantum information.
For more in-depth accounts we refer the reader to \cite{NC00,watrous2018theory}.

We will label quantum systems by~$A,B,X,Y,$ etc.
Any quantum system~$A$ is characterized by a Hilbert space~$\mathcal H_A$.
When $\mathcal H_A = \C^{\Sigma_A}$ for some finite set~$\Sigma_A$, we call~$A$ a \emph{quantum register}.
This means that~$\mathcal H_A$ has an orthonormal \emph{standard basis}~$\ket a$ labeled by the elements~$a\in \Sigma_A$.
When $\Sigma = [N]$, we can identify $\mathcal H = \C^N$ with its standard basis~$\ket a$ for $a \in \Z_N$.
If we have quantum system composed of two registers, say $A$ and $B$, then the corresponding Hilbert space is the tensor product of the individual Hilbert spaces~$\mathcal H_{AB} = \mathcal H_A \ot \mathcal H_B \cong \C^{\Sigma_{AB}}$, with~$\Sigma_{AB} = \Sigma_A \times \Sigma_B$ labeling the standard product basis.
Accordingly, we may think of the composite system~$AB=(A,B)$ as a register, and similarly for any collection~$S$ of registers.

States of a quantum system are given by \emph{density operators}, that is, positive semi-definite operators of trace one.
We call a state \emph{pure} if this operator is a rank-one orthogonal projector, i.e., equal to $\proj{\phi}$, where~$\ket\phi$ is a unit vector.
We will often identify pure states with unit vectors.
The \emph{trace distance} between two states~$\rho$ and~$\tau$, denoted by $\Td(\rho, \tau)$ is defined as
\[
\Td(\rho, \tau) = \frac{1}{2}\norm*{\rho -\tau}_1 = \frac{1}{2}\tr\left(\sqrt{(\rho-\tau)^\dagger (\rho-\tau)}\right).
\]
The operational meaning of the trace distance is that $\frac12(1+\Td(\rho,\tau))$ is the maximal probability that two states~$\rho$ and~$\tau$ can be distinguished by any (possibly unbounded) quantum channel, when given one or the other with equal probability.

There are two basic kinds of quantum operations.
The first is to apply unitary operators, or \emph{unitaries}.
If~$U$ is a unitary on~$\mathcal H$ and we apply it to a state~$\rho$, the result is $U \rho U^\dagger$, which is again a state.
We denote by~$\mathcal U(\mathcal H)$ the group of unitary operators on~$\mathcal H$, and abbreviate~$\mathcal U(N) = \mathcal U(\C^N)$.
The second is to measure the quantum state.
We will only require \emph{projective measurements}, which are given by a family of orthogonal projections~$\{P_\omega\}_{\omega\in\Omega}$, labeled by some finite index set~$\Omega$, such that~$\sum_{\omega\in\Omega} P_\omega = I$.
If one applies such a measurement to a system in state~$\rho$, then the probability of seeing outcome~$\omega\in\Omega$ is~$p_\omega = \tr(P_\omega\rho)$,  in which case the state changes to~$P_\omega \rho P_\omega / p_\omega$.
If $\{P_a\} = \{\proj a\}$ consists of the projections onto the standard basis of some register, this is called a \emph{standard basis measurement}.

We will use subscripts to denote the corresponding quantum system or tensor factors, e.g., $\rho_{AB}$ denotes a density operator on~$\mathcal H_{AB} = \mathcal H_A \ot \mathcal H_B$, and~$\rho_{AB} = \proj{\Psi}_{AB}$ in the case of a pure state, with~$\ket\Psi_{AB} \in \mathcal H_{AB}$.
Similarly, we write~$U_A$ in the case of a unitary on~$\mathcal H_A$.
In particular, $I_A$ denotes the identity operator on~$\mathcal H_A$.

For unitaries and measurement operators (but never for states), it will be useful to identify operators on some Hilbert space with operators on some any other Hilbert space which includes the former as a tensor factor, by tensoring with the identity operator.
For example, we will often abbreviate the operator~$U_A \ot I_B$ on $\mathcal H_{AB}$ simply by~$U_A$ if no confusion can arise.
This is useful if we want to quantum operations to a subset of registers.
For example, if $U_A$ is a unitary and $\rho_{AB}$ a state, then $U_A \rho_{AB} U_A^\dagger = (U_A \ot I_B) \rho_{AB} (U_A^\dagger \ot I_B)$ is the result of applying the unitary~$U_A$ to the first register~$A$ when the overall system starts out in state~$\rho_{AB}$, and similarly for measurements.

We also recall the gentle measurement lemma~\cite[Lemma~9.4.2]{wilde2019quantum}.

\begin{fact}[Gentle measurement]\label{lem:gentle}
Let $\rho$ be a quantum state, and let $\{ P, I - P\}$ be any projective measurement with two outcomes.
If $\tr(P \rho) \geq 1-\delta$, the post-measurement state $\rho' := P \rho P / \tr(P \rho)$ satisfies
\[
  \Td(\rho, \rho') \leq \sqrt{\delta}.
  %\text{\MW{$\sqrt\delta$, since we use the normalized trace distance?}\gm{Remove the 2, and fix the sparsity lemma}}
\]
\end{fact}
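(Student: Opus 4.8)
The plan is to reduce to the case of pure states, there invoke the elementary closed form for the trace distance between two pure states, and then lift back to mixed states by purification. The only inputs needed are two standard facts: that $\Td(\proj{u},\proj{v}) = \sqrt{1-\abs{\langle u | v\rangle}^2}$ for unit vectors $\ket{u},\ket{v}$, and that the trace distance is non-increasing under quantum channels (applied here to the partial trace).

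First I would treat the case $\rho = \proj{u}$. Since $P$ is an orthogonal projector, $p := \tr(P\rho) = \norm{P\ket{u}}^2 \geq 1-\delta$, and the post-measurement state is $\rho' = \proj{v}$ with $\ket{v} := P\ket{u}/\sqrt p$. A one-line computation gives $\langle u | v\rangle = \langle u | P | u\rangle/\sqrt p = \sqrt p$, hence $\abs{\langle u | v\rangle}^2 = p \geq 1-\delta$. The pure-state trace-distance identity (itself obtained by diagonalizing the rank-$\leq 2$ operator $\proj{u} - \proj{v}$ on the plane it spans, whose nonzero eigenvalues are $\pm\sqrt{1-\abs{\langle u | v\rangle}^2}$) then yields $\Td(\rho,\rho') = \sqrt{1-p} \leq \sqrt\delta$.

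For a general mixed state $\rho$ on a register $A$, I would fix a purification $\ket{\Phi}_{AB}$, so that $\rho = \tr_B \proj{\Phi}_{AB}$. Taking the partial trace shows $\norm{(P_A \ot I_B)\ket{\Phi}_{AB}}^2 = \langle\Phi | P_A | \Phi\rangle = \tr(P_A \rho) = p \geq 1-\delta$, so $\ket{\Psi}_{AB} := (P_A \ot I_B)\ket{\Phi}_{AB}/\sqrt p$ is a unit vector; moreover $\tr_B \proj{\Psi}_{AB} = P_A \rho P_A / p = \rho'$, i.e.\ $\ket{\Psi}_{AB}$ is a purification of $\rho'$, and exactly as above $\langle\Phi | \Psi\rangle = \sqrt p$. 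Applying the pure-state bound to $\ket{\Phi}_{AB}$ and $\ket{\Psi}_{AB}$ gives $\Td(\proj{\Phi}_{AB},\proj{\Psi}_{AB}) \leq \sqrt\delta$, and monotonicity under the channel $\tr_B$ finishes the argument:
\[ \Td(\rho,\rho') = \Td\bigl(\tr_B \proj{\Phi}_{AB},\, \tr_B \proj{\Psi}_{AB}\bigr) \leq \Td\bigl(\proj{\Phi}_{AB},\proj{\Psi}_{AB}\bigr) \leq \sqrt\delta. \]

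There is no genuine obstacle here, since this is a textbook fact; if a fully self-contained treatment were desired, the one non-trivial ingredient to spell out would be monotonicity of the trace distance under CPTP maps, which follows from the variational formula $\Td(\rho,\sigma) = \max_{0 \leq M \leq I} \tr\bigl(M(\rho-\sigma)\bigr)$, and everything else is immediate.
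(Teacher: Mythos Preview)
Your proof is correct. The paper does not give its own proof of this statement: it is stated as a \texttt{fact} and simply cited from Wilde's textbook \cite[Lemma~9.4.2]{wilde2019quantum}. Your purification argument is exactly the standard textbook proof, so there is nothing to compare against.
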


\paragraph{Quantum Algorithms.}
In this work we consider the query complexity of algorithms with quantum access to oracles.
An oracle is modeled by one or more unitaries~$O$ operating on an input/output register~$Z$ and possibly some internal register~$D$ (which will always be initialized explicitly).
% (for us this will usually be the database register~$D$).
A quantum algorithm $\mathcal A$ making queries to this oracle has,
without loss of generality, two registers -- the oracle's input/output register~$Z$ and an internal work register~$A$. It takes the following form:
First, the algorithm's registers are initialized in the initial state~$\ket0_{AZ}$. 
Then the algorithm alternatingly applies oracle-independent unitaries and query unitaries:
\begin{align}\label{eq:generic oracle algo}
  U^{(q)}_{AZ} O_{ZD} U^{(q-1)}_{AZ} O_{ZD} \ldots U^{(1)}_{AZ} O_{ZD} U^{(0)}_{AZ}
\end{align}
Note that $\mathcal A$ can be given access to any oracle with input/output register $Z$.
Finally, some (sub)registers might be measured or returned directly to obtain the classical and quantum outcomes of the algorithm.
We write $\mathcal A^O$ for such an application of a query algorithm to an oracle~$O$.
In the above situation, we say that the algorithm makes~$q$ queries to the oracle.
Thus we only consider the query complexity of an algorithm, but not the time complexity of the unitaries~$U^{(k)}$.
In other words, these unitaries need not be efficient.
In particular, any advice state can be placed in the adversary's quantum memory by using the first unitary $U^{(0)}$.

% !TEX root = compressed-pi.tex
%=============================================================================
\section{Random Permutations}\label{sec:permutations}
%=============================================================================
We denote by $S_N$ the permutation group on~$N$ elements, that is, the group of bijections of~$[N]$.
We have a chain of subgroups $S_1 \subset S_2 \subset \dots \subset S_N$, where for each $j\in[N\!-\!1]$ we think of~$S_j$ as those permutations in~$S_{j+1}$ that fix the element~$j+1$.
For $k, l \in [N]$, we denote by
$\tau = \tp k l$
the \emph{transposition} that sends~$\tau(k) = l$ and~$\tau(l) = k$.
It will be convenient to allow~$k=l$, in which case~$\tp k l=\tp k k$ is the identity permutation.

%---------------------------------------------------------------
\subsection{Random Permutations from Independent Transpositions}\label{subsec:subgroup-tower}
%---------------------------------------------------------------
The starting point for our work is the following representation of permutations.

\begin{lem}\label{lem:towerdecomposition}
For any $\pi\in S_N$, there exist unique numbers $t_k \in [k]$ for $k\in[N]$ such that
\begin{align}\label{eq:tower-decomposition}
	\pi = \tp N {t_N} \tp {N\!-\!1} {t_{N-1}} \cdots \tp 2 {t_2} \tp 1 {t_1}.
\end{align}
While we always have $t_1 = 1$, it is useful to include this term to obtain simpler formulas below.
\end{lem}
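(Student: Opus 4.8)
The plan is to argue by induction on~$N$, exploiting the fact that in~\eqref{eq:tower-decomposition} the leftmost transposition $\tp N {t_N}$ is the only factor that can move the element~$N$: every other factor $\tp k {t_k}$ has $k\le N-1$ and $t_k\in[k]$, and hence lies in the subgroup~$S_{N-1}$ of permutations fixing~$N$. The base case $N=1$ is immediate, since $S_1$ is trivial and $t_1$ is forced to equal~$1$; more generally the constraint $t_1\in[1]$ makes the claim $t_1=1$ automatic, so this assertion need not be tracked separately.

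For the inductive step, assume the statement for~$S_{N-1}$, realized inside~$S_N$ as the stabilizer of~$N$, consistently with the chain $S_1\subset\cdots\subset S_N$. Given $\pi\in S_N$, note first that any product $\tp{N-1}{t_{N-1}}\cdots\tp 1{t_1}$ with $t_k\in[k]$ fixes~$N$, so evaluating~\eqref{eq:tower-decomposition} at~$N$ forces $t_N=\pi(N)$; this simultaneously shows $t_N$ is determined and tells us how to define it. Accordingly, set $t_N:=\pi(N)\in[N]$ and $\pi':=\tp N{t_N}\,\pi$. A one-line check gives $\pi'(N)=\tp N{t_N}\bigl(\pi(N)\bigr)=\tp N{t_N}(t_N)=N$, so $\pi'\in S_{N-1}$. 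By the inductive hypothesis there are unique $t_k\in[k]$, $k\le N-1$, with $\pi'=\tp{N-1}{t_{N-1}}\cdots\tp 1{t_1}$, and since $\tp N{t_N}$ is an involution we obtain $\pi=\tp N{t_N}\,\pi'=\tp N{t_N}\tp{N-1}{t_{N-1}}\cdots\tp 1{t_1}$, as required.

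Uniqueness follows from the same two observations: in any factorization of the prescribed form, evaluation at~$N$ pins the leftmost factor to $\tp N{\pi(N)}$, and left-multiplying by it reduces to a factorization of $\pi'\in S_{N-1}$, to which the inductive uniqueness applies and yields the remaining~$t_k$. I do not expect a genuine obstacle here: the whole content is the elementary remark that $\pi(N)$ determines the top transposition. The only thing demanding a little care is the bookkeeping---keeping the right-to-left composition order straight, and making sure that ``passing to~$S_{N-1}$'' literally means restricting to permutations that fix~$N$, so that the inductive hypothesis applies verbatim.
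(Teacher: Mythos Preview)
Your proof is correct and follows essentially the same approach as the paper: both argue by induction on~$N$, observing that evaluating at~$N$ forces $t_N=\pi(N)$ and that $\tp N{t_N}\pi$ lies in~$S_{N-1}$, then invoking the inductive hypothesis. The paper's version is slightly more terse but otherwise identical in content.
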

\begin{proof}
Any permutation~$\pi \in S_N$ has a unique decomposition of the form
\begin{align}\label{eq:pi into t and sigma}
	\pi = \tp N t \sigma,
\end{align}
where $t \in [N]$ and $\sigma \in S_{N-1}$.
Indeed, for \eqref{eq:pi into t and sigma} to hold we must have~$t = \pi(N)$ and hence~$\sigma = \tp N t \pi$, but these formulas always define a valid decomposition of the form of \cref{eq:pi into t and sigma}.
The lemma follows by induction.
\end{proof}

If one leaves out trivial transpositions in \cref{eq:tower-decomposition} (i.e., the factors with~$k = t_k$), one obtains a so-called \emph{strictly monotone} factorization.
It is well-known that any permutation~$\pi \in S_N$ has a unique strictly monotone factorization, with number of terms equal to the Cayley distance between~$\pi$ to the identity permutation, that is, the minimum number of transpositions (of arbitrary type) in any factorization of~$\pi$.

\begin{cor}\label{cor:independent}
A random permutation~$\pi \in S_N$ is uniformly random if and only if the numbers~$t_k \in [k]$ for~$k\in[N]$ in \cref{eq:tower-decomposition} are independent and uniformly random.
\end{cor}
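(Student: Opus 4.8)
The plan is to prove the corollary by directly analyzing the bijection $\pi \mapsto (t_1, \dots, t_N)$ established in \cref{lem:towerdecomposition}. The key observation is that this map is a bijection from $S_N$ to the product set $\prod_{k=1}^N [k]$, since \cref{lem:towerdecomposition} asserts both existence and uniqueness of the factorization \eqref{eq:tower-decomposition}. Therefore, counting gives $\abs{S_N} = N! = \prod_{k=1}^N k = \prod_{k=1}^N \abs{[k]}$, which is of course consistent. Under a bijection between finite sets, a random element of the domain is uniformly distributed if and only if its image is uniformly distributed on the codomain. Hence $\pi$ is uniform on $S_N$ iff the tuple $(t_1, \dots, t_N)$ is uniform on $\prod_{k=1}^N [k]$. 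Finally, a random tuple is uniform on a product of finite sets if and only if its coordinates are mutually independent and each coordinate is uniform on the corresponding factor. Chaining these equivalences yields the claim.

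Concretely, first I would note that \cref{lem:towerdecomposition} gives a well-defined map $\Phi: S_N \to \prod_{k=1}^N [k]$ sending $\pi$ to the unique tuple $(t_1, \dots, t_N)$, and that $\Phi$ is injective (uniqueness) and surjective (existence, or again by cardinality), hence a bijection. Second, I would invoke the elementary fact that for any bijection $\Phi$ between finite sets, and any random variable $X$ on the domain, $X$ is uniform iff $\Phi(X)$ is uniform; this is immediate from $\Pr[\Phi(X) = (t_1,\dots,t_N)] = \Pr[X = \Phi^{-1}(t_1,\dots,t_N)]$. Third, I would use the standard characterization that a random tuple $(T_1, \dots, T_N)$ valued in $\prod_{k=1}^N [k]$ is uniformly distributed iff the $T_k$ are independent and each $T_k$ is uniform on $[k]$ — the ``if'' direction follows since the product of the individual uniform densities $\prod_k 1/k = 1/N!$ is the uniform density on the product, and the ``only if'' direction follows by computing marginals and conditionals of the uniform distribution on the product. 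Combining these three steps proves the corollary.

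Alternatively, one could give a self-contained inductive proof mirroring the proof of \cref{lem:towerdecomposition}: recall the decomposition $\pi = \tp N t \, \sigma$ with $t = \pi(N) \in [N]$ and $\sigma \in S_{N-1}$. If $\pi$ is uniform on $S_N$, then by symmetry $t = \pi(N)$ is uniform on $[N]$, and conditioned on $\pi(N) = t$ the permutation $\sigma = \tp N t \pi$ ranges uniformly over $S_{N-1}$ (since $\tp N t$ is a fixed bijection of $S_N$ restricted appropriately), so $t$ is independent of $\sigma$; the inductive hypothesis applied to $\sigma$ then shows $t_1, \dots, t_{N-1}$ are independent uniform, and $t_N = t$ is independent of these and uniform on $[N]$. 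The converse runs the same argument backwards: if the $t_k$ are independent uniform, then $\sigma$ (a function of $t_1,\dots,t_{N-1}$) is uniform on $S_{N-1}$ by induction, $t_N$ is uniform on $[N]$ and independent of $\sigma$, and $\pi = \tp N {t_N} \sigma$ is therefore uniform on $S_N$ because the map $(t, \sigma) \mapsto \tp N t \sigma$ is a bijection $[N] \times S_{N-1} \to S_N$.

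I do not expect a genuine obstacle here — the statement is essentially a repackaging of \cref{lem:towerdecomposition} together with a counting argument — so the only thing to be careful about is stating the elementary probabilistic facts (uniformity transports across bijections; uniformity on a product equals independence plus marginal uniformity) cleanly enough that the proof does not become longer than warranted. I would favor the first, bijection-based approach for brevity, relegating the product-uniformity fact to a one-line justification.
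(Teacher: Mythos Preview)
Your proposal is correct and matches the paper's treatment: the paper states this as a corollary of \cref{lem:towerdecomposition} with no explicit proof, relying precisely on the bijection $S_N \leftrightarrow \prod_{k=1}^N [k]$ together with the elementary fact that uniformity on a finite product equals independence plus marginal uniformity. Your bijection-based argument is exactly the intended justification, and the inductive alternative you sketch is also fine but unnecessary here.
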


Given a permutation in the form of \cref{eq:tower-decomposition}, it is easy to compute the inverse:
\begin{align}\label{eq:inverse tower}
	\pi^{-1} =  \tp 1 {t_1} \tp 2 {t_2} \cdots \tp {N\!-\!1} {t_{N-1}} \tp N {t_N}.
\end{align}
Note however that this decomposition is in general \emph{not} of the form of \cref{eq:tower-decomposition}.
It will also be convenient to introduce the following notation:
\begin{equation}\label{eq:pi gt lt k}
\begin{aligned}
	\pi_{>k} &= \tp N {t_N} \tp {N\!-\!1} {t_{N-1}} \cdots \tp {k+1} {t_{k+1}}, \\
	\pi_{<k} &= \tp {k-1} {t_{k-1}} \cdots \tp 2 {t_2} \tp 1 {t_1}.
\end{aligned}
\end{equation}
Note that $\pi = \pi_{>k} \tp k {t_k} \pi_{<k}$ and $\pi^{-1} = (\pi_{<k})^{-1} \tp k {t_k} (\pi_{>k})^{-1}$.
We use the convention that the subset-of-transpositions subscripts take precedence before other operations modifying a permutation, e.g.\ $\pi_{<k}^{-1}\coloneqq \left(\pi_{<k}\right)^{-1}$.

The following lemma will be useful in \cref{sec:expectation}.

\begin{lem}\label{lem:wrong side}
Let $\pi \in S_N$ be uniformly random and $k \in [N]$.
\begin{enumerate}
\item\label{it:wrong side 1} If $\xi \in S_k$ is uniformly random and independent from~$\pi$, then $\pi_{>k} \xi$ is uniformly random in~$S_N$.
\item\label{it:wrong side 2} If $\xi \in S_k$ is uniformly random and independent from~$\pi$, then $\xi \pi_{>k}$ is uniformly random in~$S_N$.
\item\label{it:wrong side 3} For every $\ell\in\{k+1,\dots,N\}$, it holds that $\Pr\mleft( \pi_{>k}(k) = \ell \mright) = \frac1N$.
\end{enumerate}
\end{lem}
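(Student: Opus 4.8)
The plan is to prove the three parts in the order given, using that the factors $t_1,\dots,t_N$ of a uniformly random $\pi$ are independent with $t_k$ uniform on $[k]$ (\cref{cor:independent}), and exploiting that $\pi_{>k}$ depends only on $t_{k+1},\dots,t_N$ while being, as a function of those, a bijective reparametrization. For part~\ref{it:wrong side 1}, I would observe that $\pi_{>k}$ is obtained from the ``upper'' factors $t_{k+1},\dots,t_N$ and, crucially, that for each fixed choice of these upper factors the map $\sigma\mapsto \pi_{>k}\,\sigma$ is a bijection of $S_k$ onto the coset $\pi_{>k} S_k$. Since $\xi$ is uniform on $S_k$ and independent of $\pi_{>k}$, the product $\pi_{>k}\xi$ is, conditionally on the upper factors, uniform on the coset $\pi_{>k}S_k$; but these cosets over all choices of the upper factors partition $S_N$ (this is exactly the subgroup-tower structure underlying \cref{lem:towerdecomposition}: every $\pi\in S_N$ factors uniquely as $\pi_{>k}$ times an element of $S_k$), and the upper factors are themselves uniform over their range, so averaging gives the uniform distribution on all of $S_N$. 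Concretely I would just write $\pi = \pi_{>k}\,\rho$ with $\rho := \tp k{t_k}\pi_{<k}\in S_k$ uniform on $S_k$ and independent of $\pi_{>k}$ by \cref{cor:independent}; then $\pi_{>k}\xi$ has the same distribution as $\pi_{>k}\rho = \pi$ since $\xi$ and $\rho$ are interchangeable. That is the cleanest route and essentially a one-liner.

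For part~\ref{it:wrong side 2}, the statement is about $\xi\pi_{>k}$ rather than $\pi_{>k}\xi$, so the coset trick is on the wrong side and I cannot directly reuse part~\ref{it:wrong side 1}. The plan is to reduce to part~\ref{it:wrong side 1} by inversion: $\xi\pi_{>k}$ is uniform on $S_N$ iff $(\xi\pi_{>k})^{-1} = \pi_{>k}^{-1}\xi^{-1}$ is uniform on $S_N$ (inversion is a bijection of $S_N$). Now $\xi^{-1}$ is again uniform on $S_k$ and independent of $\pi$, and $\pi_{>k}^{-1}$ is a function of the upper factors only; applying the same coset-partition argument on the other side — for fixed upper factors, $\sigma\mapsto \pi_{>k}^{-1}\sigma$ maps $S_k$ bijectively onto $\pi_{>k}^{-1}S_k$, and as the upper factors vary these right cosets of $S_k$ also partition $S_N$ — yields uniformity. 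Alternatively, and perhaps more transparently, I would note that since $\pi$ is uniform, so is $\pi^{-1}$, and by \cref{eq:inverse tower} together with \cref{lem:towerdecomposition} applied to $\pi^{-1}$ one can again express things as a subgroup times a free $S_k$-part; but the inversion-plus-part~\ref{it:wrong side 1} argument is the shortest.

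For part~\ref{it:wrong side 3}, fix $\ell\in\{k+1,\dots,N\}$ and compute $\Pr[\pi_{>k}(k)=\ell]$. The key observation is a symmetry/marginalization argument: take $\xi\in S_k$ uniform and independent of $\pi$, with $\xi(k) = k'$ for a uniformly random $k'\in[k]$ (indeed $k' = \xi(k)$ is uniform on $[k]$). By part~\ref{it:wrong side 2}, $\xi\pi_{>k}$ is uniformly distributed on $S_N$, hence $(\xi\pi_{>k})(k)$... wait — I want $\pi_{>k}(k)$, not $\pi_{>k}$ applied after $\xi$, so the cleaner choice is to use part~\ref{it:wrong side 1}: $\pi_{>k}\xi$ is uniform on $S_N$, so $(\pi_{>k}\xi)(j)$ is uniform on $[N]$ for any fixed $j\in[k]$. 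Pick $j\in[k]$ arbitrary; then $\xi(j)$ is uniform on $[k]$ and independent of $\pi_{>k}$, so conditioning on the event $\{\xi(j)=k\}$ (which has probability $1/k>0$ and is independent of $\pi_{>k}$) gives that $(\pi_{>k}\xi)(j) = \pi_{>k}(k)$ is distributed as $(\pi_{>k}\xi)(j)$ conditioned on that event — and since $(\pi_{>k}\xi)(j)$ is uniform on $[N]$ independently of whether $\xi(j)=k$, we get $\Pr[\pi_{>k}(k)=\ell \mid \xi(j)=k] = 1/N$ for each $\ell\in[N]$; but $\pi_{>k}(k)=\ell$ forces $\ell>k$ (since $\pi_{>k}$ fixes $1,\dots,k-1$ and... actually $\pi_{>k}$ as defined involves transpositions $\tp m{t_m}$ for $m>k$, each of which fixes everything in $[k]$ except possibly moves it up, so $\pi_{>k}(k)\in\{k,k+1,\dots,N\}$). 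I would need to double-check the edge case $\ell=k$ versus $\ell>k$: the claim is stated only for $\ell\in\{k+1,\dots,N\}$, and indeed $\Pr[\pi_{>k}(k)=k]$ should then be $1-(N-k)/N = k/N$, consistent with the uniform-on-$[N]$-minus-a-bias picture. The main obstacle, and the step to be careful with, is precisely this last conditioning argument — making sure the independence of $\xi(j)$ from $\pi_{>k}$ is used correctly and that the event $\{\xi(j)=k\}$ does not skew the distribution of $(\pi_{>k}\xi)(j)$ — together with verifying $\pi_{>k}$ acts trivially (or upward) on $[k]$ so that the support statement $\ell>k$ is correct. Everything else is bookkeeping with \cref{cor:independent} and the coset structure.
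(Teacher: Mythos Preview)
Your part~\ref{it:wrong side 1} is fine and matches the paper. The issues are in parts~\ref{it:wrong side 2} and~\ref{it:wrong side 3}.

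For part~\ref{it:wrong side 2}, the inversion step does not give what you need. Inverting part~\ref{it:wrong side 1} yields that $\xi^{-1}\pi_{>k}^{-1}$ is uniform, i.e., the \emph{right} cosets $S_k\,\pi_{>k}^{-1}$ partition $S_N$. But to conclude that $\pi_{>k}^{-1}\xi^{-1}$ is uniform you need the \emph{left} cosets $\pi_{>k}^{-1}S_k$ to partition $S_N$, which is a genuinely different statement (equivalently, that the $\pi_{>k}$'s form a right transversal, not just the left transversal you get from \cref{lem:towerdecomposition}). This is true, but it is exactly the nontrivial content of part~\ref{it:wrong side 2}, so you cannot assume it. The paper's route is the conjugation identity
\[
\xi\,\pi_{>k} \;=\; \tp N{\xi(t_N)}\cdots\tp{k+1}{\xi(t_{k+1})}\,\xi,
\]
valid because $\xi\in S_k$ fixes every $m>k$; since the $\xi(t_m)$ are again independent uniform in $[m]$, this shows $\xi\pi_{>k}$ has the same law as $\pi_{>k}\xi$, and then part~\ref{it:wrong side 1} finishes it. Your coset claim can be recovered from this identity (it shows $\{\pi_{>k}\}$ is stable under $S_k$-conjugation, which together with being a left transversal forces it to be a right transversal), but then you are using the paper's key step anyway.

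For part~\ref{it:wrong side 3}, the conditioning argument via part~\ref{it:wrong side 1} is circular. You claim $(\pi_{>k}\xi)(j)$ is uniform on $[N]$ ``independently of whether $\xi(j)=k$'', but $(\pi_{>k}\xi)(j)=\pi_{>k}(\xi(j))$, so conditioning on $\xi(j)=k$ turns the uniformity statement into $\Pr[\pi_{>k}(k)=\ell]=1/N$, which is precisely the claim. Averaging over $\xi(j)$ only gives $\tfrac1k\sum_{m=1}^k\Pr[\pi_{>k}(m)=\ell]=1/N$, not each term separately. Ironically, the approach you abandoned is the correct one: using part~\ref{it:wrong side 2}, $(\xi\pi_{>k})(k)=\xi(\pi_{>k}(k))$, and since $\xi\in S_k$ fixes every $\ell>k$, the events $\{(\xi\pi_{>k})(k)=\ell\}$ and $\{\pi_{>k}(k)=\ell\}$ coincide for $\ell>k$; uniformity of $\xi\pi_{>k}$ then gives $1/N$. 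That is exactly the paper's one-line proof.
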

\begin{proof}
\begin{enumerate}
\item This is clear from \cref{cor:independent}.
\item Using the notation of \cref{eq:pi gt lt k}, we have
\begin{align*}
    \xi \pi_{>k}
&= \xi \tp N {t_N} \xi^{-1} \xi \tp {N\!-\!1} {t_{N-1}} \xi^{-1} \cdots \xi \tp {k+1} {t_{k+1}} \xi^{-1} \xi \\
&= \tp N {\xi(t_N)} \tp {N\!-\!1} {\xi(t_{N-1})} \cdots \tp {k+1} {\xi(t_{k+1})} \xi,
\end{align*}
where we used that $\xi \in S_k$ and hence it fixes~$k+1,\dots,N$,
Now, the~$t_\ell \in [\ell]$ for~$\ell>k$ are uniformly random and independent from~$\xi$, so the same is true for the~$\xi(t_\ell) \in [\ell]$ for~$\ell>k$.
Thus we see that the distribution of~$\xi \pi_{>k}$ is the same as the distribution of~$\pi_{>k} \xi$, and the claim follows from part~\ref{it:wrong side 1}.
\item This follows from the preceding:
\begin{align*}
  \Pr\mleft( \pi_{>k}(k) = \ell \mright)
= \Pr\mleft( (\xi \pi_{>k})(k) = \ell \mright)
= \frac1N.
\end{align*}
with $\xi$ as in part~\ref{it:wrong side 2}.
\end{enumerate}
\end{proof}

%-----------------------------------------------------------------------------
\subsection{Active Transpositions}\label{subsec:active}
%-----------------------------------------------------------------------------
In the following we characterize the transpositions that, in the above decomposition of a permutation, are relevant to determine the action of the permutation on a given element.
The results of this section will not be needed to prove our main theorem, but we include them here for the purpose of building up an intuition.

Given a permutation~$\pi\in S_N$, consider its unique decomposition as in \cref{lem:towerdecomposition}:
\begin{align*}
  	\pi = \tp N {t_N} \tp {N\!-\!1} {t_{N-1}} \cdots \tp 2 {t_2} \tp 1 {t_1},
\end{align*}
When does a given transposition $\tp k {t_k}$ impact the computation of~$\pi(x)$ for some given~$x\in[N]$?
To study this we introduce the following definitions:

\begin{defn}[Active sets]\label{defn:active}
Given a permutation~$\pi \in S_N$ and $x,k \in [N]$, we say~$k$ is \emph{active for~$\pi$ and~$x$} if~$\pi_{<k}(x) \in \{ k, t_k \}$.
Similarly, for $y\in[N]$ we say that $k$ is \emph{inverse-active for~$\pi$ and~$y$} if~$(\pi_{>k})^{-1}(y) \in \{ k, t_k \}$.%
\footnote{Note that this is in general \emph{not} equivalent to saying that $k$ is active for $\pi^{-1}$ and~$y$, as the latter would refer to the decomposition of $\pi^{-1}$ in the form of \cref{eq:tower-decomposition}, rather than to \cref{eq:inverse tower}.}
We denote by $A(\pi,x), A^\mathrm{inv}(\pi,y) \subseteq [N]$ the set of active and inverse-active~$k$, respectively, defined as above.
\end{defn}

If~$k$ is active for $\pi$ and~$x$ then changing~$t_k$ to any other value will lead to a different~$\pi(x)$.
Similarly, if $k$ is inverse-active for~$\pi$ and~$y$ then changing~$t_k$ (in the decomposition of~$\pi$) to any other value will lead to a different~$\pi^{-1}(y)$.
(The converses of these statements are in general not true.)
Note that~$\pi_{<k}(x) = k$ if and only if~$x = k$, since~$\pi_{<k} \in S_{k-1}$.

It is clear that the action of a permutation or its inverse on some element only depends on the corresponding active set:

\begin{lem}\label{lem:active-makes-sense}
Let~$\pi \in S_N$ be a permutation.
For any $x\in [N]$, let $A(\pi,x) = \{ k_1 < \cdots < k_\ell \}$ denote the active locations, sorted in increasing order.
Then we have:
\begin{align*}
  \pi(x) =
  \tp {k_\ell} {t_{k_\ell}}
  \tp {k_{\ell-1}} {t_{k_{\ell-1}}}
  \cdots
  \tp {k_2} {t_{k_2}}
  \tp {k_1} {t_{k_1}} (x).
\end{align*}
Similarly, if $y\in[N]$ and $A^\mathrm{inv}(\pi,y) = \{ k_1 < \cdots < k_\ell \}$ are the corresponding inverse-active locations, then:
\begin{align*}
  \pi^{-1}(y) =
  \tp {k_1} {t_{k_1}}
  \tp {k_2} {t_{k_2}}
  \cdots
  \tp {k_{\ell-1}} {t_{k_{\ell-1}}}
  \tp {k_\ell} {t_{k_\ell}} (y).
\end{align*}
\end{lem}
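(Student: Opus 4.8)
The plan is to prove the two identities of \cref{lem:active-makes-sense} by showing that the transpositions $\tp k {t_k}$ with $k \notin A(\pi,x)$ (respectively $k \notin A^{\mathrm{inv}}(\pi,y)$) can simply be dropped, one at a time, when tracking the orbit of $x$ under $\pi$ (respectively of $y$ under $\pi^{-1}$). Since the two statements are symmetric --- the second is exactly the first applied to the factorization \eqref{eq:inverse tower} of $\pi^{-1}$, reading the product in the opposite order --- I would state a single underlying lemma about partial products and derive both cases from it. Concretely, for the forward direction, define the intermediate points $x_0 = x$ and $x_k = \tp k {t_k}(x_{k-1})$ for $k = 1, \dots, N$, so that $x_k = \pi_{\le k}(x)$ and $x_N = \pi(x)$. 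The key observation is: if $k \notin A(\pi,x)$, then $x_{k-1} = \pi_{<k}(x) \notin \{k, t_k\}$, so the transposition $\tp k {t_k}$ fixes $x_{k-1}$ and hence $x_k = x_{k-1}$.

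The main step is then a monotonicity/stability claim: once we are at a step $k$ that is not active, the value does not change, and moreover the set of remaining relevant steps is unaffected. More precisely, I would prove by induction on $k$ that $\pi_{<k}(x) = \pi_{\le k_j}(x)$ where $k_j$ is the largest active location strictly below $k$ (and $\pi_{<k}(x) = x$ if there is no such location). The inductive step splits on whether $k$ is active: if not, $\pi_{\le k}(x) = \pi_{<k}(x)$ by the fixing observation above; if $k$ is active, it is the next $k_{j+1}$ and the claim trivially updates. Chaining this from $k=1$ to $k=N$, and inserting the active locations $k_1 < \cdots < k_\ell$, yields
\[
  \pi(x) = \pi_{\le N}(x) = \tp{k_\ell}{t_{k_\ell}} \cdots \tp{k_1}{t_{k_1}}(x),
\]
which is the first identity. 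For the inverse direction I would apply the same argument to $\pi^{-1}$, but using the factorization \eqref{eq:inverse tower}: here the partial products are built from the \emph{left}, i.e.\ $y_N = y$, $y_{k-1} = \tp k {t_k}(y_k)$, and $y_k = (\pi_{>k})^{-1}(y)$; the condition $k \notin A^{\mathrm{inv}}(\pi,y)$ says exactly $y_k = (\pi_{>k})^{-1}(y) \notin \{k, t_k\}$, so again the transposition acts trivially and can be dropped. Collapsing the non-inverse-active factors gives the second identity.

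I do not expect a serious obstacle here; the only thing to be careful about is bookkeeping with the direction of composition and the precedence conventions for $\pi_{<k}$, $\pi_{>k}$ established in \eqref{eq:pi gt lt k}, together with the remark (already in the text preceding the lemma) that $\pi_{<k}(x) = k$ iff $x = k$, which ensures that the condition ``$\pi_{<k}(x) \in \{k, t_k\}$'' is really about whether $\tp k {t_k}$ moves the current point. A clean way to avoid index juggling is to phrase the induction as: \emph{for all $k$, $\pi_{\le k}(x) = \tp{k_j}{t_{k_j}} \cdots \tp{k_1}{t_{k_1}}(x)$ where $\{k_1 < \cdots < k_j\} = A(\pi,x) \cap [k]$}, and then set $k = N$. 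This is a one-line induction once the fixing observation is in place, so the whole proof is short; the content is entirely in the observation that inactivity of $k$ means $\tp k {t_k}$ fixes the relevant intermediate value.
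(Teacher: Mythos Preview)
Your proposal is correct. The paper does not actually give a proof of this lemma: it is stated as an immediate consequence of the definitions, prefaced by ``It is clear that the action of a permutation or its inverse on some element only depends on the corresponding active set.'' Your argument---track the intermediate values $x_k = \pi_{\le k}(x)$ and observe that $k \notin A(\pi,x)$ means precisely that $\tp{k}{t_k}$ fixes $x_{k-1}$, so non-active factors can be dropped---is exactly the right way to make this precise, and your inductive formulation ($\pi_{\le k}(x) = \tp{k_j}{t_{k_j}} \cdots \tp{k_1}{t_{k_1}}(x)$ where $\{k_1 < \cdots < k_j\} = A(\pi,x) \cap [k]$) is clean and correct. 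The inverse case is handled symmetrically as you describe.
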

% \begin{proof}
%   We prove this lemma by induction over $N$. For $N=1$ the statement is trivially true. Suppose now the statement holds for $N=k-1$, and let $\pi\in S_k$ and $x\in[k]$. We distinguish two cases.
%     \begin{itemize}
%         \item \emph{Case 1: $(k,t_k)$ is not active for $\pi$ on $x$.} In this case, $j_\ell<k$ and by the induction hypothesis we have
%     \[
%    \sigma_k(x)=(j_\ell t_{j_\ell})(j_{\ell-1} t_{j_{\ell-1}})\ldots (j_1t_{j_1})(x).
%    \]
%    As $(k,t_k)$ is not active for $\pi$ on $x$, we have $\sigma_k(x)\notin\{k, t_k\}$ and thus
%    \begin{align*}
%     \pi(x) &= (kt_k)\Big(\sigma_k(x)\Big)\\
%     &= \sigma_k(x)\\
%     &= (j_\ell t_{j_\ell})(j_{\ell-1} t_{j_{\ell-1}})\ldots (j_1t_{j_1})(x).
%    \end{align*}
%   \item \emph{Case 2: $(k,t_k)$ is active for $\pi$ on $x$.} Again by the induction hypothesis we have
%     \[
%    \sigma_k(x)=(j_{\ell-1} t_{j_{\ell-1}})(j_{\ell-2} t_{j_{\ell-2}})\ldots (j_1t_{j_1})(x)
%    \]
%   for some $j_{\ell-1} < k$.
%     Since $(k,t_k)$ is active for $\pi$ on $x$, by definition we have that $\sigma_k(x)\in\{k, t_k\}$ and therefore
%      \begin{align*}
%     \pi(x)&= (kt_k)\Big(\sigma_k(x)\Big)\\
%     &= (k t_k)(j_{\ell-1} t_{j_{\ell-1}})(j_{\ell-2} t_{j_{\ell-2}})\ldots (j_1t_{j_1})(x)\\
%     &= (j_\ell t_{j_\ell})(j_{\ell-1} t_{j_{\ell-1}})(j_{\ell-2} t_{j_{\ell-2}})\ldots (j_1t_{j_1})(x)
%    \end{align*}
%   setting $(k t_k) = (j_\ell t_{j_\ell})$.
%     \end{itemize}
% \end{proof}

We now show that the event that~$k$ is active for a random permutation (and fixed~$x$) is independent of the numbers~$t_1,\dots,t_{k-1}$, and compute the probability with which this happens:

\begin{lem}\label{lem:active indepe}
Let $x\in[N]$.
Then, for a uniformly random~$\pi \in S_N$, we have:
\begin{align*}
  \Pr\mleft( k \in A(\pi,x) \mid t_1,\dots,t_{k-1} \mright)
= \begin{cases}
  \frac1k & \text{ if $x < k$,} \\
  1 & \text{ if $x = k$,} \\
  0 & \text{ if $x > k$,}
\end{cases}
\end{align*}
hence this is also equal to~$\Pr( k \in A(\pi,x) )$.
In particular, the events $k \in A(\pi,x)$ for $k\in[N]$ are independent.
\end{lem}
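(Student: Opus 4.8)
The plan is to condition on the values $t_1,\dots,t_{k-1}$ and use that this fixes the permutation $\pi_{<k} = \tp{k-1}{t_{k-1}}\cdots\tp 1 {t_1}$, which lies in $S_{k-1}$ and, by \cref{cor:independent}, is independent of $t_k,\dots,t_N$. Since $k$ is active for $\pi$ and $x$ precisely when $\pi_{<k}(x)\in\{k,t_k\}$, I would do a three-way case analysis on the position of $x$ relative to~$k$. If $x>k$, then $\pi_{<k}\in S_{k-1}$ fixes $x$, so $\pi_{<k}(x)=x>k\geq t_k$ and the membership fails for every value of $t_k$; the conditional probability is~$0$. If $x=k$, then $\pi_{<k}(x)=k\in\{k,t_k\}$ unconditionally (as already noted after \cref{defn:active}), so the conditional probability is~$1$. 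The only interesting case is $x<k$: here $\pi_{<k}(x)\in[k-1]$, so it cannot equal~$k$, and $k$ is active iff $t_k=\pi_{<k}(x)$; since $t_k$ is uniform on $[k]$ and independent of $t_1,\dots,t_{k-1}$, and $\pi_{<k}(x)\in[k-1]\subseteq[k]$ is a fixed target, this event has probability exactly~$1/k$.

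The key observation is that in all three cases the conditional probability obtained does not depend on the particular values $t_1,\dots,t_{k-1}$ we conditioned on. Averaging over these therefore yields the same number for $\Pr(k\in A(\pi,x))$, which settles the first part of the lemma.

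For the independence claim I would introduce the indicators $\chi_k := \mathbbm{1}[k\in A(\pi,x)]$ and note that $\chi_1,\dots,\chi_{k-1}$ are functions of $t_1,\dots,t_{k-1}$, since $\chi_j$ depends only on $\pi_{<j}$ and $t_j$. Writing $p_k\in\{0,1,1/k\}$ for the constant from the first part, the computation above reads $\Pr(\chi_k=1\mid t_1,\dots,t_{k-1})=p_k$ identically; applying the tower property with respect to the coarser information carried by $\chi_1,\dots,\chi_{k-1}$ gives $\Pr(\chi_k=1\mid\chi_1,\dots,\chi_{k-1})=p_k=\Pr(\chi_k=1)$. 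Since this holds for every~$k$, the chain-rule characterization of independence shows that the $\chi_k$ — equivalently, the events $\{k\in A(\pi,x)\}$ — form an independent family.

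I expect no serious obstacle; the statement is essentially a direct computation. The one point that warrants care is the independence argument: one must observe that conditioning on the full history $t_1,\dots,t_{k-1}$ is strictly finer than conditioning on the active-indicator history $\chi_1,\dots,\chi_{k-1}$, so that constancy of the conditional probability under the finer conditioning transfers to the coarser one. It is also worth sanity-checking the boundary cases $x=k$ (using $\pi_{<k}(k)=k$) and $k=1$ (where $t_1=1$ forces $x=k=1$, consistent with $1/k=1$).
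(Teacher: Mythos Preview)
Your proposal is correct and follows essentially the same approach as the paper: both condition on $t_1,\dots,t_{k-1}$, use that $\pi_{<k}\in S_{k-1}$ fixes elements $\geq k$, and split into the three cases $x<k$, $x=k$, $x>k$ to compute the conditional probability. Your treatment of the independence claim via the tower property and chain rule is in fact more explicit than the paper's, which simply asserts independence as a consequence of the conditional probabilities being constant.
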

\begin{proof}
Recall that $k \in A(\pi,x)$ means that~$\pi_{<k}(x) \in \{ k, t_k \}$.
For~$x < k$, we have~$\pi_{<k}(x) \in [k-1]$, hence the event~$\pi_{<k}(x) \in \{k, t_k\}$ is equivalent to~$\pi_{<k}(x) = t_k$.
For a uniformly random~$\pi$, the numbers~$t_1,t_2,\dots,t_k$ are independent and uniformly random (by \cref{cor:independent}).
Hence~$t_k\in[k]$ is uniformly random given~$t_1,\dots,t_{k-1}$.
As $\pi_{<k}(x) \in [k-1]$ only depends on the latter, it follows that it coincides with~$t_k$ with probability~$\frac1k$.
For~$x \geq k$ we have~$\pi_{<k}(x) = x$, hence the event~$\pi_{<k}(x) \in \{k, t_k\}$ is equivalent to~$x=k$.
\end{proof}

Next we are going to bound the expected number of active locations for a random choice of the permutation.

\begin{lem}\label{lem:active forward}
Let $x\in[N]$.
Then, for a uniformly random~$\pi \in S_N$, we have:
$\E \abs{A(\pi,x)} \leq 1 + \ln (N/x)$.
\end{lem}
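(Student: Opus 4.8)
The plan is to use \cref{lem:active indepe}, which tells us exactly the probability that each location~$k$ is active. By that lemma, $\E \abs{A(\pi,x)} = \sum_{k=1}^N \Pr(k \in A(\pi,x))$, and we can split this sum according to the three cases. For $k < x$ the probability is~$0$; for $k = x$ it is~$1$; and for $k > x$ it is~$\frac1k$. Hence
\begin{align*}
  \E \abs{A(\pi,x)} = 1 + \sum_{k=x+1}^N \frac1k = 1 + H_N - H_x.
\end{align*}
It then remains to bound $H_N - H_x$ by $\ln(N/x) = \ln N - \ln x$.

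First I would recall from the preliminaries (the discussion around \cref{eq:H_N}) that $H_M - \ln M$ is monotonically decreasing in~$M$. Applying this with $M = N$ and $M = x$ (and using $x \le N$) gives $H_N - \ln N \le H_x - \ln x$, which rearranges to $H_N - H_x \le \ln N - \ln x = \ln(N/x)$. Combining with the identity above yields $\E \abs{A(\pi,x)} \le 1 + \ln(N/x)$, as claimed. An alternative that avoids invoking monotonicity is to bound the tail sum directly by an integral, $\sum_{k=x+1}^N \frac1k \le \int_x^N \frac{dt}{t} = \ln(N/x)$, which gives the same conclusion; I would likely present whichever is cleaner given what has already been set up, and the excerpt has already flagged the monotonicity fact for exactly this purpose.

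There is essentially no obstacle here: the lemma is a direct corollary of \cref{lem:active indepe} together with a standard harmonic-sum estimate. The only point requiring the tiniest bit of care is the edge case $x = 1$, where the sum $\sum_{k=2}^N \frac1k = H_N - 1$ and the bound reads $\E \abs{A(\pi,1)} = H_N \le 1 + \ln N$, consistent with \cref{eq:harmonic bound}; and the case $x = N$, where the active set is always just $\{N\}$ and the bound reads $1 \le 1 + \ln 1 = 1$, which is tight. So the proof is a two-line computation followed by one invocation of the harmonic bound.
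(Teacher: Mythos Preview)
Your proposal is correct and follows essentially the same argument as the paper: both compute $\E\abs{A(\pi,x)} = 1 + H_N - H_x$ via \cref{lem:active indepe} and then invoke the monotonicity of $H_n - \ln n$ to bound $H_N - H_x \le \ln(N/x)$. Your additional remarks on the integral bound and the edge cases are nice sanity checks but not needed for the proof.
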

\begin{proof}
Recall that we denote by~$H_n = \sum_{k=1}^n \frac 1 k$ the harmonic numbers.
Then, using \cref{lem:active indepe}, we have
\begin{align*}
  \E \abs{A(\pi,x)}
% % &= \sum_{k=1}^N \Pr\mleft( k \in A(\pi, x) \mright) \\
% &= \sum_{k=1}^N \Pr\mleft( \pi_{<k}(x) \in \{k, t_k\} \mright) \\
% % &= 1 + \sum_{k=x+1}^N \Pr\mleft( \pi_{<k}(x) \in \{k, t_k\} \mright) \\
% &= 1 + \sum_{k=x+1}^N \Pr\mleft( \pi_{<k}(x) = t_k \mright) \\
&= 1 + \sum_{k=x+1}^N \frac 1 k
= 1 + H_N - H_x
\leq % 1 + H_N - H_x + (H_x - \ln x) - (H_N - \ln N) =
1 + \ln N - \ln x.
\end{align*}
where the inequality holds as~$H_n - \ln n$ is monotonically decreasing with~$n$.
This completes the proof.
\end{proof}

We also provide a bound on the expected number of inverse-active locations.
Since this is not the same as the expected number of active locations for the inverse, the bound differs from \cref{lem:active forward}.

\begin{lem}\label{lem:active inverse}
Let $y\in[N]$.
Then, for a uniformly random~$\pi \in S_N$, we have:
$\E \abs{A^\mathrm{inv}(\pi,y)} \leq 1 + \frac{2y-2}N < 3$.
\end{lem}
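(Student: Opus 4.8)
The plan is to use linearity of expectation to reduce to a per-$k$ estimate, $\E \lvert A^{\mathrm{inv}}(\pi,y)\rvert = \sum_{k=1}^N \Pr(k\in A^{\mathrm{inv}}(\pi,y))$, and then to fix $k\in[N]$ and exploit an independence. Set $W_k := (\pi_{>k})^{-1}(y)$; since $\pi_{>k} = \tp N{t_N}\cdots\tp{k+1}{t_{k+1}}$, the random variable $W_k$ is a function of $t_{k+1},\dots,t_N$ alone, so by \cref{cor:independent} it is independent of $t_k$, which is uniform on $[k]$. By definition $k\in A^{\mathrm{inv}}(\pi,y)$ iff $W_k\in\{k,t_k\}$; conditioning on the value of $W_k$ and recalling $t_k\le k$, this event has conditional probability $1$ when $W_k=k$, probability $\tfrac1k$ when $W_k<k$, and probability $0$ when $W_k>k$. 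Hence
\[
  \Pr\bigl(k\in A^{\mathrm{inv}}(\pi,y)\bigr) \;=\; \Pr(W_k=k) + \tfrac1k\,\Pr(W_k<k).
\]

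Next I would pin down the distribution of $W_k$ just enough. First, $\Pr(W_k\le k) = k/N$: let $\xi\in S_k$ be uniform and independent of $\pi$; by part~\ref{it:wrong side 1} of \cref{lem:wrong side}, $\pi_{>k}\xi$ is uniform in $S_N$, hence so is its inverse $(\pi_{>k}\xi)^{-1} = \xi^{-1}(\pi_{>k})^{-1}$, so $(\pi_{>k}\xi)^{-1}(y) = \xi^{-1}(W_k)$ is uniform in $[N]$; since $\xi$ fixes every element $>k$ we have $\xi^{-1}(W_k)\le k \iff W_k\le k$, whence $\Pr(W_k\le k) = k/N$. Second, $W_k=k$ is exactly the event $\pi_{>k}(k)=y$, and here the useful deterministic fact is that $\pi_{>k}(k)\ge k$ for every $\pi$: computing $\pi_{>k}(k)$ amounts to applying $\tp{k+1}{t_{k+1}},\dots,\tp N{t_N}$ to $k$ in turn, and since $\tp j{t_j}$ swaps $j$ with some $t_j\le j$, a short induction shows the running value stays in $\{k,\dots,j\}$ after the transposition with index $j$ (the base case is the value $k$, and a value $v<j$ is either fixed by $\tp j{t_j}$ or sent to $j$). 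Consequently: $\Pr(W_k=k)=0$ for $k>y$; $\Pr(W_y=y)=y/N$ for $k=y$, combining $\pi_{>y}(y)\ge y$ with part~\ref{it:wrong side 3} of \cref{lem:wrong side} (which places mass $\tfrac1N$ on each of the $N-y$ values $>y$); and $\Pr(W_k=k)=1/N$ for $k<y$, by part~\ref{it:wrong side 3} of \cref{lem:wrong side} with $\ell=y$.

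To finish, I would substitute $\Pr(W_k<k)=\Pr(W_k\le k)-\Pr(W_k=k)=k/N-\Pr(W_k=k)$ into the displayed identity, obtaining $\Pr(k\in A^{\mathrm{inv}}(\pi,y)) = \tfrac1N + \Pr(W_k=k)\bigl(1-\tfrac1k\bigr)$, and sum over $k$:
\[
  \E\,\bigl\lvert A^{\mathrm{inv}}(\pi,y)\bigr\rvert
  \;=\; 1 + \sum_{k=1}^N \Pr(W_k=k)\Bigl(1-\tfrac1k\Bigr)
  \;=\; 1 + \sum_{k=1}^{y-1}\tfrac1N\Bigl(1-\tfrac1k\Bigr) + \tfrac{y-1}N
  \;\le\; 1 + \tfrac{y-1}N + \tfrac{y-1}N,
\]
since each of the $y-1$ terms in the sum is at most $\tfrac1N$. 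This is the claimed bound $1+\tfrac{2y-2}N$, and as $y\le N$ we have $\tfrac{2y-2}N<2$, so the quantity is below $3$.

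The step I expect to require the most care is the middle one, specifically the deterministic fact $\pi_{>k}(k)\ge k$: this is exactly what makes the $k>y$ terms vanish and keeps the sum from diverging like the harmonic series, and together with the $\bigl(1-\tfrac1k\bigr)$ factor it is also what produces the precise constant in the statement rather than a cruder $O(1)$ bound.
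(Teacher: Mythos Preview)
Your proof is correct and takes a genuinely different route from the paper's. The paper argues structurally: it shows that either $A^{\mathrm{inv}}(\pi,y)$ is a singleton (when some $t_k=y$) or it decomposes as $\{y\}\cup A^{\mathrm{inv}}(\pi_{<y},t_y)$, which yields a recurrence $e(N,y)=1+\tfrac1N\sum_{z=1}^{y-1}e(y-1,z)$; it then proves $\sum_{y}e(n,y)\le 2n$ by induction. You instead compute $\Pr(k\in A^{\mathrm{inv}}(\pi,y))$ directly for each $k$, exploiting the independence of $W_k=(\pi_{>k})^{-1}(y)$ from $t_k$ together with \cref{lem:wrong side}. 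Your approach is more elementary (no recursion, no auxiliary induction), reuses machinery already present in the paper, and along the way yields the exact value $\E\lvert A^{\mathrm{inv}}(\pi,y)\rvert = 1 + \tfrac{2(y-1)-H_{y-1}}{N}$ rather than only the bound. The paper's recursive proof, on the other hand, reveals more about the combinatorial structure of the inverse-active set itself (e.g., that it is often a singleton), which is not visible from your per-$k$ calculation.
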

\begin{proof}
% Recall that
% \begin{align*}
% 	(\pi_{>k})^{-1} &= \tp {k\!+\!1} {t_{k+1}} \cdots \tp {N\!-\!1} {t_{N-1}} \tp N {t_N}.
% \end{align*}
Suppose first that there is some~$k \in [N]$ such that~$t_k = y$.
Let~$k^* \in [N]$ denote the largest such~$k$.
Since $t_{k^*} \in [k^*]$, we clearly must have~$k^* \geq y$.
Then we have $(\pi_{>k})^{-1}(y) = y \notin \{ k, t_k \}$ for all~$k > k^*$,
$(\pi_{>k^*})^{-1}(y) = y = t_{k^*}$,
and~$(\pi_{>k})^{-1}(y) = k^* \notin \{ k, t_k \}$ for all~$k < k^*$.
Together, we see that~$A^\mathrm{inv}(\pi,y) = \{ k^* \}$.
On the other hand, if~$t_k \neq y$ for all~$k\in[N]$ then $(\pi_{>k})^{-1}(y) = y$ for all~$k \geq y$ and hence~$A^\mathrm{inv}(\pi,y) = \{ y \} \cup A^\mathrm{inv}(\pi_{<y},t_y)$.
Combining the above observations, we find that
\begin{align*}
	\E \abs{A^\mathrm{inv}(\pi,y)}
% &= 1 + \Pr(t_1 \neq y, \dots, t_N \neq y) \, \E\mleft[ A^\mathrm{inv}(\pi_{<y},t_y) \mid t_1 \neq y, \dots, t_N \neq y\mright] \\
&= 1 + \Pr(t_y \neq y, \dots, t_N \neq y) \, \E\mleft[ A^\mathrm{inv}(\pi_{<y},t_y) \mid t_y \neq y, \dots, t_N \neq y\mright] \\
&= 1 + \frac{y-1}N \E\mleft[ A^\mathrm{inv}(\pi_{<y},t_y) \mid t_y \neq y \mright] \\
&= 1 + \frac1N \sum_{z=1}^{y-1} \E\mleft[ A^\mathrm{inv}(\pi_{<y},z) \mright],
\end{align*}
where we first used that~$\pi_{<y}$ is independent from~$t_y,\dots,t_N$, and the last step holds because~$t_y$ is uniformly random in~$[y-1]$ conditional on~$t_y \neq y$.
Since~$\pi_{<y}$ is uniformly random in~$S_{y-1}$, we obtain the following recurrence for $e(N,y) := \E_{\pi \leftarrow S_N} \abs{A^\mathrm{inv}(\pi,y)}$:
\begin{align}\label{eq:e reccurrence}
	e(N,y)
= 1 + \frac1N \sum_{z=1}^{y-1} e(y-1, z)
= 1 + \frac1N f(y-1),
\end{align}
where we introduced the notation~$f(n) := \sum_{y=1}^n e(n,y)$, with~$f(0) = 0$, which in turn satisfies the recurrence
\begin{align*}
	f(n)
= \sum_{y=1}^n \parens*{ 1 + \frac1n f(y-1) }
= n + \frac1n \sum_{k=0}^{n-1} f(k)
\end{align*}
for all~$n > 0$.
It is easy to see that~$f(n) \leq 2n$ for all~$n \geq 0$ by using induction.
Indeed, this is clearly true for~$n=0$, and if it holds that~$f(k) \leq 2k$ for all~$k < n$ then also
$f(n)
\leq n + \frac1n \sum_{k=0}^{n-1} 2k
= n + n - 1 \leq 2n$.
Using this estimate in \cref{eq:e reccurrence} we obtain the desired result:
\begin{equation*}
	\E \abs{A^\mathrm{inv}(\pi,y)}
= e(N, y)
= 1 + \frac1N f(y-1)
\leq 1 + \frac{2y-2}N.
\qedhere
\end{equation*}
\end{proof}

% !TEX root = compressed-pi.tex
%=============================================================================
\section{Quantum Random Permutations Oracles}\label{sec:oracle} %from independent transpositions} % oracles from subgroup tower}
%=============================================================================
The decomposition of a random permutation introduced in \cref{subsec:subgroup-tower} provides a way of sampling a random permutation by sampling many \emph{independent and smaller} random data, namely the individual transpositions that make up the permutation (\cref{cor:independent}).
Importantly, typical input-output pairs of the random permutation only depend on a few of them (\cref{lem:active forward,lem:active inverse}).
In this section we will use this idea to construct an oracle that exactly simulate a quantum-accessible random permutation, but has an internal state that can be used to analyze quantum query algorithms.
We first introduce some notation.
Given any permutation~$\pi\in S_N$, we denote by~$U^\pi$ the corresponding permutation operator on~$\C^{N} \otimes \C^N$, that is,
\begin{align*}
    U^\pi \ket{x,y} = \ket{x,y\oplus \pi(x)}
\qquad \forall x,y\in[N].
\end{align*}
This defines a unitary representation of~$S_N$ on~$\C^{N}\otimes \C^N$.

\begin{defn}[Quantum-accessible random permutation]\label{defn:standard random perm}
A \emph{quantum-accessible random permutation} consists of query access to~$U^\pi$ and to~$U^{\pi^{-1}}$, for a permutation~$\pi\in S_N$ chosen uniformly at random.
\end{defn}

When $\mathcal A$ is a query algorithm that gets query access to two oracles that act on~$\C^{N} \otimes \C^N$ and $\pi\in S_N$ is a permutation, we write~$\mathcal A^{U^\pi,U^{\pi^{-1}}}$ to indicate that we use the unitaries~$U^\pi$ and~$U^{\pi^{-1}}$ as the two oracles.

Above we defined $U^\pi$ by the usual formula for a quantum oracle corresponding to a Boolean function, but because $\pi$ is a bijection we could also instead work with oracles that modify their input in-place, that is,
\begin{align*}
  V^\pi \ket x = \ket{\pi(x)}
\qquad \forall x\in[N].
\end{align*}
However, to prove a query lower bound we will be able to consider either type of oracles.
This is because the standard and the in-place variants can simulate each other at the cost of doubling the number of queries, if one is given access to the permutation as well as its inverse:
it holds that
\begin{equation*}
\begin{aligned}
    U^\pi_{XY} &= V^{\pi^{-1}}_X \CNOT_{X\to Y} V^{\pi}_X, \\
    U^{\pi^{-1}}_{XY} &= V^{\pi}_X \CNOT_{X\to Y} V^{\pi^{-1}}_X,
\end{aligned}
\end{equation*}
as well as
\begin{equation*}
\begin{aligned}
    V^{\pi}_X \ket0_Y &= U^{\pi^{-1}}_{XY} \SWAP_{X\leftrightarrow Y} U^\pi_{XY} \ket0_Y, \\
    V^{\pi^{-1}}_X \ket0_Y &= U^\pi_{XY}  \SWAP_{X\leftrightarrow Y}U^{\pi^{-1}}_{XY} \ket0_Y.
\end{aligned}
\end{equation*}

%-----------------------------------------------------------------------------
\subsection{Superposition Permutation Oracle}
%-----------------------------------------------------------------------------
We first construct a \emph{superposition oracle} for random permutations.
Like those for random functions, it is obtained by replacing the random (classical) choice of permutation by a uniform (quantum) superposition.
Our oracle is specified by an internal quantum state space, an initial state, query unitaries (one for the random permutation and one for its inverse), and a recovery routine.
The query unitaries will be constructed by applying the transpositions~$\tp k {t_k}$ in the right order, with each $t_j$ obtained from the internal state of the oracle.
Generalizing the approach of \cite{CFHL21}, we propose the following definition.

\begin{defn}[Superposition permutation oracle]\label{def:SPO}
The \emph{superposition permutation oracle} (\SPO) is defined as follows:
\begin{itemize}
\item The state space, called the \emph{database}, consists of $N$ registers, $D = D_1 \cdots D_N$ with the $k$-th register~$D_k$ having dimension~$k$ and computational basis~$\ket 1,\dots,\ket k$.
Any permutation~$\pi \in S_N$, determines a basis state~$\ket\pi_D = \ket{t_1,\dots,t_N}_D$, where the numbers~$t_k\in[k]$ are chosen as in \cref{eq:tower-decomposition}.
\item The initialization routine~$\Init^\SPO_D$ initializes each register in a uniform superposition over the basis states.
That is, the initial state of the database is
\begin{align*}
	\ket{\Phi_{\SPO}}_D
= \frac1{\sqrt{N!}} \sum_{\pi \in S_N} \ket{\pi}_D
= \bigotimes_{k=1}^N \ket{+_k}_{D_k},
\text{ where }
\ket{+_k}_{D_k} = \frac1{\sqrt k} \sum_{t=1}^k \ket{t}_{D_k}.
\end{align*}
\item There are two query unitaries,~$\OSPO_{XYD}$ and~$\OinvSPO_{XYD}$, that define the two interfaces available to the query algorithm and that simulate oracles for a random permutation and its inverse.
They are defined as follows:
For all $x,y\in[N]$ and $\pi\in S_N$,
\begin{align*}
    \OSPO_{XYD} \ket{x,y,\pi}_{XYD} &= \ket{x,y\op\pi(x),\pi}_{XYD}, \\
    \OinvSPO_{XYD} \ket{x,y,\pi}_{XYD} &= \ket{x,y\op\pi^{-1}(x),\pi}_{XYD}.
\end{align*}
%\gm{I wasn't sure how to express the oracle in terms of controlled transpositions, so I commented out the sentence. We do not seem to be using this decomposition anywhere, but please double check.}
%\MW{Good to still make a comment because we use it e.g.\ when we say that $\OSPO_{XYD} \ket z_X$ is independent of register $D_x$ for $x<z$. Actually shall we make this a lemma and cite it at the appropriate places?}
\item The recovery routine $\Rec^\SPO_D$ simplify measures all registers~$D_1,\dots,D_N$ in the computational basis to obtain~$t_k\in[k]$ for $k\in[N]$.
It outputs the corresponding permutation according to \cref{eq:tower-decomposition}.
\end{itemize}
\end{defn}
When $\mathcal A$ is a query algorithm that gets query access to two oracles acting on two $N$-dimensional register~$X$ and $Y$, and if~$D$ is the database register of a superposition permutation oracle, we write~$\mathcal A^{\SPO_D}$ to indicate that 
%before $\mathcal A$'s runtime, $\Init^\SPO_D$ is executed \MW{$\leftarrow$ This was added but does not match the current text. I would propose removing it again.}\gm{Fine for me.}, and 
we use the interfaces~$\OSPO_{XYD}$ and~$\OinvSPO_{XYD}$, respectively, to implement the two types of oracles queries.
It is straightforward to verify that the \SPO then exactly simulates a quantum-accessible random permutation.

\begin{lem}\label{prop:randompi vs spo}
  Let \adver be a query algorithm that gets query access to two oracles that act on~$\C^{N}\otimes \C^N$.
  Then the joint state of the classical random variable~$\pi$ and the quantum register~$B$ is the same for the following two experiments:
  \begin{enumerate}
    \item\label{it:classical random choice} Sample $\pi\leftarrow S_N$ uniformly at random and run $B \leftarrow \adver^{U^\pi, U^{\pi^{-1}}}$.
    \item\label{it:superposition measure last} Run~$\Init^\SPO_D$, then $B \leftarrow \adver^{\SPO_D}$, and finally $\pi \leftarrow \Rec^\SPO_D$.
  \end{enumerate}
\end{lem}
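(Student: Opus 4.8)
The plan is to show that the two experiments produce identical joint states of $(\pi, B)$ by tracking the state through the second experiment and comparing it to the first. The key observation is that the initial database state $\ket{\Phi_\SPO}_D$ is the uniform superposition $\frac{1}{\sqrt{N!}}\sum_{\pi\in S_N}\ket\pi_D$, and each query unitary $\OSPO_{XYD}$ (resp. $\OinvSPO_{XYD}$) acts \emph{diagonally} on the database register in the $\ket\pi_D$ basis: it applies $U^\pi_{XY}$ (resp. $U^{\pi^{-1}}_{XY}$) controlled on $\ket\pi_D$. Since the algorithm's oracle-independent unitaries $U^{(k)}_{AZ}$ commute with anything acting on $D$, the entire execution $\adver^{\SPO_D}$ applied to $\ket{0}_{AZ}\ot\ket{\Phi_\SPO}_D$ can be written, after expanding the superposition over $D$, as $\frac{1}{\sqrt{N!}}\sum_{\pi\in S_N}\bigl(\adver^{U^\pi,U^{\pi^{-1}}}\ket 0_{AZ}\bigr)\ot\ket\pi_D$. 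In other words, running $\adver$ against the superposition oracle entangles the algorithm's output register with the database register in exactly the way a purified classical random choice would.

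Concretely, I would first argue that for each basis state $\ket\pi_D$, one has $\OSPO_{XYD}(\ket\psi_{XY}\ot\ket\pi_D) = (U^\pi_{XY}\ket\psi_{XY})\ot\ket\pi_D$ and likewise $\OinvSPO_{XYD}(\ket\psi_{XY}\ot\ket\pi_D) = (U^{\pi^{-1}}_{XY}\ket\psi_{XY})\ot\ket\pi_D$; this is immediate from \cref{def:SPO} by linearity in $\ket\psi$. Next, using that $D$ is untouched by the $U^{(k)}_{AZ}$, a straightforward induction on the number of queries shows that the generic algorithm \cref{eq:generic oracle algo} applied to $\ket 0_{AZ}\ket\pi_D$ produces $\bigl(U^{(q)}_{AZ} U^{\pi^{(q)}}_{ZD}\cdots U^{(0)}_{AZ}\ket 0_{AZ}\bigr)\ot\ket\pi_D$ where each $U^{\pi^{(k)}}$ is $U^\pi$ or $U^{\pi^{-1}}$ as appropriate --- i.e. exactly the state of $\adver^{U^\pi,U^{\pi^{-1}}}$ tensored with $\ket\pi_D$. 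By linearity over the initial superposition $\ket{\Phi_\SPO}_D$, the global state just before $\Rec^\SPO_D$ is $\frac1{\sqrt{N!}}\sum_{\pi\in S_N}\ket{\Psi_\pi}_{AB}\ot\ket\pi_D$, where $\ket{\Psi_\pi}_{AB}$ (with $B$ the output subregister) denotes the final state of Experiment~1 for that $\pi$.

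Finally, $\Rec^\SPO_D$ measures $D$ in the computational basis, i.e. in the $\ket\pi_D$ basis (since the $\ket\pi_D = \ket{t_1,\dots,t_N}_D$ are exactly the product computational basis states of $D_1\cdots D_N$, by \cref{lem:towerdecomposition} this is a bijective relabelling), and outputs the corresponding $\pi$. Measuring the above state yields outcome $\pi$ with probability $\frac1{N!}$, and conditioned on that outcome the register $B$ is left in the state obtained by tracing out $A$ from $\ket{\Psi_\pi}_{AB}$ --- which is precisely the distribution over $(\pi, B)$ in Experiment~1, where $\pi$ is uniform and $B\leftarrow\adver^{U^\pi,U^{\pi^{-1}}}$. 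Hence the joint states coincide. I do not expect a serious obstacle here: the only mildly delicate point is being careful that ``computational basis of $D$'' and ``the $\ket\pi$ basis'' agree (they do, by the unique factorization of \cref{lem:towerdecomposition}), and that the oracle-independent unitaries genuinely act as $I_D$ so that the controlled structure is preserved across all $q$ queries; both are bookkeeping rather than substance.
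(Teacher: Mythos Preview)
Your argument is correct and rests on the same core observation as the paper: the query unitaries are controlled on~$D$ in the~$\ket\pi_D$ basis, so the superposition over permutations is preserved throughout and measuring~$D$ at the end is equivalent to sampling~$\pi$ at the start. The paper packages this slightly differently, introducing an intermediate experiment where~$\Rec^\SPO_D$ is applied \emph{before} running~$\mathcal A^{\SPO_D}$ and invoking commutation of computational-basis measurement with controlled unitaries, rather than explicitly tracking the state; but the substance is identical.
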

\begin{proof}
We show that both experiments give rise to the same joint state as the following:
\begin{enumerate}[start=3]
    \item\label{it:superposition measure first} Run~$\Init^\SPO_D$, then $\pi \leftarrow \Rec^\SPO_D$, and finally $B \leftarrow \adver^{\SPO_D}$.
\end{enumerate}
Indeed, \ref{it:classical random choice} and~\ref{it:superposition measure first} are equivalent by \cref{cor:independent} and the fact that measuring a uniform superposition yields a uniformly random sample, while~\ref{it:superposition measure first} and~\ref{it:superposition measure last} are equivalent because measuring in the computational basis commutes with unitaries that are controlled on this basis.
\end{proof}

A consequence of \cref{{lem:towerdecomposition}} is that $\pi(x)=\pi_{\ge x}(x)$, so in particular $\pi(x)$ does not depend on $t_{x'}$ for $x'<x$. For the \SPO, this means that when a query is made with input~$x$, the registers $D_{x'}$ for~$x'<x$ are not used, i.e., the query operator acts as the identity on them.

\begin{lem}\label{lem:small-x-not-touched}
	The \SPO query operator fulfils the equation
	\begin{align*}
		\OSPO_{XYD}\ket x_X\ket y_Y\ket{\pi_{\ge x}}_{D_{\ge x}}=\ket x_X\ket{y\oplus \pi_{\ge}(x)}_{Y}\ket{\pi_{\ge x}}_{D_{\ge x}}\otimes I_{D_{<x}}.
	\end{align*}
\end{lem}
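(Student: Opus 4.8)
The plan is to read the claimed equation as an identity of linear maps $\mathcal H_{D_{<x}} \to \mathcal H_{XYD}$ (on the left, first embed $\mathcal H_{D_{<x}}$ via $\ket x_X\ket y_Y\ket{\pi_{\ge x}}_{D_{\ge x}}\ot(\cdot)$ and then apply $\OSPO_{XYD}$; on the right, the factor $I_{D_{<x}}$ makes the expression such a map) and to verify it on the computational basis of $D_{<x}$, invoking the uniqueness of the tower decomposition. Concretely, I would fix a basis state $\ket s_{D_{<x}}=\ket{s_1,\dots,s_{x-1}}$ with $s_k\in[k]$, which under the identification of \cref{def:SPO} is $\ket\rho_{D_{<x}}$ for the permutation $\rho=\tp{x-1}{s_{x-1}}\cdots\tp1{s_1}\in S_{x-1}\subseteq S_N$. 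Writing $\pi_{\ge x}=\tp N{t_N}\cdots\tp x{t_x}$ for the upper half stored in $D_{\ge x}$, the product $\pi:=\pi_{\ge x}\,\rho=\tp N{t_N}\cdots\tp x{t_x}\tp{x-1}{s_{x-1}}\cdots\tp1{s_1}$ is already written in the form of \cref{eq:tower-decomposition}, so by the uniqueness part of \cref{lem:towerdecomposition} it is \emph{the} tower decomposition of $\pi$; hence $\ket\pi_D=\ket\rho_{D_{<x}}\ot\ket{\pi_{\ge x}}_{D_{\ge x}}$.

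Given this, applying the defining formula of $\OSPO$ yields $\OSPO_{XYD}\ket{x,y,\pi}_{XYD}=\ket{x,y\op\pi(x),\pi}_{XYD}$. Since $\rho\in S_{x-1}$ fixes the element $x$, we get $\pi(x)=\pi_{\ge x}(\rho(x))=\pi_{\ge x}(x)$, which is the identity $\pi(x)=\pi_{\ge x}(x)$ observed just above the lemma. Regrouping tensor factors, the output is $\ket x_X\ket{y\op\pi_{\ge x}(x)}_Y\ket\rho_{D_{<x}}\ket{\pi_{\ge x}}_{D_{\ge x}}$, which is exactly the right-hand side of the claimed equation evaluated on $\ket\rho_{D_{<x}}$ (with $I_{D_{<x}}$ acting trivially). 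As the states $\ket\rho_{D_{<x}}$ for $\rho\in S_{x-1}$ form a basis of $\mathcal H_{D_{<x}}$, linearity upgrades this pointwise agreement to the asserted operator identity.

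I do not expect a genuine obstacle here: the only care needed is the bookkeeping across the cut at $x$, namely checking that gluing an arbitrary tower factorization on $D_{<x}$ to the fixed factorization $\pi_{\ge x}$ on $D_{\ge x}$ produces the unique tower decomposition of the product permutation (so that $\ket\pi_D$ genuinely factorizes as a tensor product across that cut), and keeping straight that $\pi_{\ge x}$ acts on the input $x$ in the same way as $\pi$. Both points reduce immediately to \cref{lem:towerdecomposition} together with the fact that $S_{x-1}$ fixes every element $\ge x$.
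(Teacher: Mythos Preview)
Your proposal is correct and takes essentially the same approach as the paper: the paper's proof is a one-liner stating that the claim follows directly from \cref{def:SPO} and \cref{lem:towerdecomposition}, and you have simply spelled out those details carefully (verifying on computational basis states of $D_{<x}$, using uniqueness of the tower decomposition to factor $\ket\pi_D$ across the cut at $x$, and using that $S_{x-1}$ fixes $x$).
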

\begin{proof}
	This follows directly from \cref{def:SPO,{lem:towerdecomposition}}.
\end{proof}

%-----------------------------------------------------------------------------
\subsection{Twirled Superposition Permutation Oracle}
%-----------------------------------------------------------------------------
Just like is the case for Zhandry's compressed oracle for random functions, we would like to be able to inspect the internal state of the oracle (that is, the database) to gain partial, approximate knowledge about the queries made by the algorithm.
However, there are two important caveats in the permutation case.

First, Zhandry's compressed oracle satisfies the extremely useful property that the compressed oracle stores (superpositions of) input-output pairs.
This means that in order to learn something about the value~$H(x)$ of the random function~$H$ at some point~$x\in[N]$, and whether this value is known to the adversary, it is only necessary to measure one register.
In contrast, we represent permutations as a product of transpositions and hence the database of our permutation oracle stores the analogous information in a less localized fashion.
If we want to determine the value~$\pi(x)$ of the random permutation~$\pi$ at some point~$x\in[N]$, in general we may need to inspect all registers~$D_k$ for~$k\geq x$.
On the other hand, recall that for any fixed~$x$, typically only~$\tilde O(1)$ permutations are active and suffice to determine~$\pi(x)$ (see \cref{lem:active-makes-sense,lem:active forward,lem:active inverse}).

Second, Zhandry's compressed oracle has the desirable feature that %computational basis measurements on the database registers should almost commute with the compression unitary.
%Roughly speaking, this allows
one can ``jointly measure'' whether a query algorithm has accessed a register \emph{and} what function value it holds, with only a small error.
% In particular, if a register has not been access then a computational basis measurement of the oracle register should yield outcome~$\bot$ up to a small error.
In our permutation oracle the analogous procedure is to apply the binary measurement~$\mathcal M_k := \{I-\proj{+_k}_{D_k}, \proj{+_k}_{D_k}\}$ to learn whether the $k$-th transposition has been accessed and, if so, measure in the computational basis to learn what its value is.
However, since the size of the support of the uniform superposition~$\ket{+_k}_{D_k}$ depends on~$k$, the error in this ``joint measurement'' depends on~$k$.
For example, suppose an algorithm managed through some combination of queries and measurements to learn~$t_k$ with certainty for some particular~$k\in[N]$.
Then, conditional on this event, the database register~$D_k$ is in state~$\ket{t_k}$.
Applying the measurement~$\mathcal M_k$ in this state will, however, return outcome~$\ket{+_k}_{D_k}$ with probability~$\abs{\braket{+_k|t_k}}^2 = \frac1k$, which need not be small!

Both challenges can be addressed by pre- and post-composing, or ``twirling'', the \SPO with two random permutations.
This yields our final construction, which we call the \emph{twirled superposition permutation oracle}.
For convenience we first define a version where the twirls are fixed.

\begin{defn}[Twirled superposition permutation oracle]
For any two fixed permutations~$\sigma,\tau\in S_N$, the \emph{$(\sigma,\tau)$-twirled superposition permutation oracle} ($\FTSPOsigmatau$) is defined as follows:
\begin{itemize}
\item The state space consists of the same \emph{database}~$D$ as the superposition permutation oracle.

\item The initialization routine is the same as for the superposition permutation oracle and hence we will continue to denote it by~$\Init^\SPO_D$.
% ~$\Init^\SPO_D$
% As this is independent of the choice of~$\sigma$ and~$\tau$, we suppress the superscripts in the notation.

\item There are two query unitaries,~$\OFTSPOsigmatau_{XYD}$ and~$\OinvFTSPOsigmatau_{XYD}$, that define the two interfaces available to the query algorithm and that simulate oracles for a random permutation and its inverse. They are defined as follows. For all $x,y\in[N]$ and $\pi\in S_N$
\begin{align*}
  \OFTSPOsigmatau_{XYD} \ket{x,y,\pi}_{XYD}
&= \ket{x,y \op \tau^{-1}(\pi(\sigma(x))),\pi}_{XYD}, \\
  \OinvFTSPOsigmatau_{XYD} \ket{x,y,\pi}_{XYD}
&= \ket{x,y \op \sigma^{-1}(\pi^{-1}(\tau(x))),\pi}_{XYD}.
\end{align*}
where $X$ and $Y$ are the $N$-dimensional target register of the oracles.

\item The recovery routine $\Rec^{\FTSPOsigmatau}_D$ first applies the recovery routine~$\Rec^\SPO$ to obtain a permutation~$\pi\in S_N$, and then returns~$\tau^{-1} \pi \sigma$.
\end{itemize}
% The \emph{twirled superposition permutation oracle} (\FTSPO) is defined as above, but for permutations~$\sigma,\tau\in S_N$ chosen independently uniformly at random.
\end{defn}
When $\mathcal A$ is a query algorithm that gets query access to two oracles acting on two $N$-dimensional register~$X$ and $Y$, and if~$D$ is the database register of a superposition permutation oracle, we write~$\mathcal A^{\FTSPOsigmatau_D}$ to indicate that 
%before $\mathcal A$'s runtime, $\Init^\SPO_D$ is executed \MW{$\leftarrow$ This was added but does not match the current text. I would propose removing it again.}\gm{Fine for me.}\cm{I am confused. What is it that doesn't match?}, and 
we use the interfaces~$\OFTSPOsigmatau_{XYD}$ and~$\OinvFTSPOsigmatau_{XYD}$, respectively, to implement the two types of oracles queries.
% \MW{Define the coherent version where the randomly sampled $\sigma,\tau$ are picked from an auxiliary database register (for~6.3), add as item~(iv) to the lemma below, and also define the in-place version in 4.3.}

It is easy to see that the twirled superposition permutation oracles exactly simulate the ordinary one (whether the twirls are fixed or randomly sampled).
Hence it also exactly simulates a quantum-accessible random permutation.

\begin{lem}\label{lem:spo vs ftspo}
Let \adver be a query algorithm that gets query access to two oracles that act on two $N$-dimensional registers, and let~$\sigma_0,\tau_0\in S_N$ be fixed permutations.
Then the joint classical-quantum state of the random variable~$\pi$ and the register~$B$ is the same for the following three experiments:
\begin{enumerate}
\item\label{it:spo vs ftspo:spo}  Run~$\Init^\SPO_D$, then $B \leftarrow \adver^{\SPO_D}$, and finally $\pi \leftarrow \Rec^\SPO_D$.
\item\label{it:spo vs ftspo:ftspo fixed} Run~$\Init^\SPO_D$, then $B \leftarrow \adver^{\FTSPOsigmaztauz_D}$, and finally $\pi \leftarrow \Rec^{\FTSPOsigmaztauz}_D$.
\item\label{it:spo vs ftspo:ftspo random} Sample $\sigma \leftarrow S_N$ and $\tau \leftarrow S_N$, run~$\Init^\SPO_D$, then $B \leftarrow \adver^{\FTSPOsigmatau_D}$, and finally $\pi \leftarrow \Rec^{\FTSPOsigmatau}_D$.
\end{enumerate}
Moreover, in part~\ref{it:spo vs ftspo:ftspo random}, $\sigma$, $\tau$, and $(\pi,B)$ are independent, and the three permutations~$\sigma, \tau, \pi$ are independent and uniformly distributed.
\end{lem}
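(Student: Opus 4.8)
The plan is to prove the three experiments give the same joint classical-quantum state by transitivity, using the fact that the twirls are just relabelings of the input/output alphabet, and then extract the independence statement for part~\ref{it:spo vs ftspo:ftspo random} directly from the structure of the twirled oracle.

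First I would reduce to a purely algebraic fact about the query unitaries. Writing $P^\sigma_X$ for the unitary $\ket x_X \mapsto \ket{\sigma(x)}_X$, one checks from the definitions that
\[
  \OFTSPOsigmatau_{XYD} = (P^{\tau^{-1}})_Y \, \OSPO_{XYD} \, (P^\sigma)_X \, ((P^\sigma)_X)^\dagger \cdots
\]
more precisely $\OFTSPOsigmatau_{XYD} = \big((P^{\sigma^{-1}})_X \ot (P^{\tau})_Y\big)^\dagger \, \OSPO_{XYD} \, \big((P^{\sigma^{-1}})_X \ot (P^\tau)_Y\big)$ acting on the $X,Y$ registers only, and similarly for the inverse interface with $\sigma,\tau$ swapped. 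Since these conjugating unitaries act only on $X,Y$ and not on the database~$D$, and since a query algorithm~$\adver$ can absorb the fixed unitaries $(P^{\sigma^{-1}})_X, (P^\tau)_Y$ (and their adjoints) before and after each query into its own oracle-independent unitaries $U^{(k)}_{AZ}$, running $\adver^{\FTSPOsigmaztauz_D}$ produces exactly the same database state as running a \emph{modified} algorithm $\adver'$ against $\SPO_D$, where $\adver'$ differs from~$\adver$ only in the work-register processing. Crucially, the recovery routine $\Rec^{\FTSPOsigmaztauz}$ outputs $\tau_0^{-1}\pi\sigma_0$, and the output register~$B$ of $\adver'$ can be post-processed to match that of~$\adver$; tracking these relabelings shows the joint state of $(\pi, B)$ is identical in \ref{it:spo vs ftspo:spo} and \ref{it:spo vs ftspo:ftspo fixed}. (The cleanest way to state this is that $\Rec^{\FTSPOsigmaztauz}_D = \Rec^\SPO_D$ followed by the deterministic classical map $\pi \mapsto \tau_0^{-1}\pi\sigma_0$, and the whole experiment $\adver^{\FTSPOsigmaztauz_D}$ is unitarily equivalent, from the database's point of view, to $\adver^{\SPO_D}$ up to a relabeling of $\adver$'s registers that is independent of~$\pi$.) Then \ref{it:spo vs ftspo:ftspo random} is obtained from \ref{it:spo vs ftspo:ftspo fixed} by averaging: since the equality of the $(\pi,B)$-state holds for every fixed $(\sigma_0,\tau_0)$, it holds after sampling $\sigma,\tau \leftarrow S_N$ uniformly.

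For the final sentence, I would argue as follows. In experiment~\ref{it:spo vs ftspo:ftspo random}, the register~$B$ and the internal database permutation $\pi_{\mathrm{int}} := \Rec^\SPO_D$ are, by construction, produced by a process that never looks at $\sigma$ or $\tau$ except through the oracle interfaces — but we have just shown that, after absorbing the $\sigma$- and $\tau$-relabelings into $\adver$, the database evolves exactly as under $\SPO_D$, independently of $(\sigma,\tau)$. Hence $\sigma$, $\tau$, and $(\pi_{\mathrm{int}}, B)$ are mutually independent, with $\pi_{\mathrm{int}}$ uniform on $S_N$ (by \cref{cor:independent} and the fact that measuring a uniform superposition is a uniform sample, exactly as in \cref{prop:randompi vs spo}) and $\sigma,\tau$ uniform by choice. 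The recovered permutation is $\pi = \tau^{-1} \pi_{\mathrm{int}} \sigma$. Since $\pi_{\mathrm{int}}$ is uniform and independent of $(\sigma,\tau)$, for any fixed $\sigma,\tau$ the product $\tau^{-1}\pi_{\mathrm{int}}\sigma$ is again uniform on $S_N$ (left/right multiplication by a fixed group element preserves the uniform distribution); this shows $\pi$ is uniform and independent of $(\sigma,\tau)$. Independence of $\pi$ from $B$ likewise follows because $(\pi_{\mathrm{int}},B)$ is independent of $(\sigma,\tau)$ and $\pi$ is a deterministic function of $\pi_{\mathrm{int}},\sigma,\tau$ which, conditioned on $B$, is still uniform (as $\pi_{\mathrm{int}} \mid B$ need not be uniform, one instead argues: condition on $(\pi_{\mathrm{int}},B) = (\rho, b)$; then $\pi = \tau^{-1}\rho\sigma$ with $\sigma,\tau$ fresh uniform, hence uniform and independent of $b$). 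Averaging over $(\rho,b)$ gives joint independence of $\sigma,\tau,\pi,B$ with the claimed marginals.

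The main obstacle is bookkeeping, not mathematics: one must be careful that the conjugation identity $\OFTSPOsigmatau = (\text{relabel})^\dagger \, \OSPO \, (\text{relabel})$ has the relabelings applied in a consistent convention for \emph{both} the forward and inverse interfaces simultaneously — the inverse interface $\OinvFTSPOsigmatau$ uses $\sigma^{-1}(\pi^{-1}(\tau(\cdot)))$, so the same pair of relabeling unitaries $\big((P^{\sigma^{-1}})_X, (P^\tau)_Y\big)$ must conjugate $\OSPO$ to $\OFTSPOsigmatau$ \emph{and} $\OinvSPO$ to $\OinvFTSPOsigmatau$, which one should verify is indeed the case. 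The second subtle point is that the $X,Y$ registers at the start of each query are \emph{not} in a fixed state, so the absorption of relabeling unitaries into $\adver$'s work-register unitaries must be done at the level of the full circuit \eqref{eq:generic oracle algo}, inserting $(\text{relabel})(\text{relabel})^\dagger = I$ between each oracle call and merging one factor leftward and one rightward; this is routine but should be spelled out to make the ``unitarily equivalent from the database's point of view'' claim precise.
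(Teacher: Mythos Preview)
Your conjugation identity on $XY$ is false. Computing
\[
\big((P^{\sigma^{-1}})_X \ot (P^\tau)_Y\big)^\dagger \, \OSPO_{XYD} \, \big((P^{\sigma^{-1}})_X \ot (P^\tau)_Y\big) \ket{x,y,\pi}_{XYD}
\]
yields $\ket{x,\; \tau^{-1}\!\big(\tau(y)\oplus \pi(\sigma^{-1}(x))\big),\; \pi}$, which is \emph{not} $\ket{x,\; y \oplus \tau^{-1}(\pi(\sigma(x))),\; \pi}$. Beyond the $\sigma$ versus $\sigma^{-1}$ convention slip, the real obstruction is that $\tau^{-1}(a \oplus b) \neq \tau^{-1}(a) \oplus \tau^{-1}(b)$ for a general permutation~$\tau$: the $Y$-register carries bitwise XOR but $\tau$ need not be $\F_2$-linear. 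So the ``relabel on $X,Y$ only'' conjugation that your first paragraph rests on does not exist --- this is precisely why \cref{lem:std preprocess} needs an ancilla register and \emph{two} \SPO queries to simulate one \FTSPOsigmatau query via unitaries on $XY$ alone. The issue you flag at the end as ``bookkeeping, not mathematics'' is in fact the mathematical gap.

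The paper's route sidesteps this: it applies \cref{prop:randompi vs spo} to both~\ref{it:spo vs ftspo:spo} and~\ref{it:spo vs ftspo:ftspo fixed} to reduce each to a classical-$\pi$ experiment, after which~\ref{it:spo vs ftspo:ftspo fixed} is just ``sample $\pi\leftarrow S_N$, run $\adver$ with the oracle for $\tau_0^{-1}\pi\sigma_0$, output $\tau_0^{-1}\pi\sigma_0$,'' identical to~\ref{it:spo vs ftspo:spo} by the substitution $\pi\mapsto \tau_0\pi\sigma_0^{-1}$. If you prefer a conjugation argument, the correct one acts on the \emph{database} register~$D$ rather than on~$XY$ (this is \cref{lem:twirled spo via spo}); combined with the invariance $L^\tau_D R^\sigma_D \ket{\Phi_{\SPO}}_D = \ket{\Phi_{\SPO}}_D$ it gives \cref{lem:output state twisted vs not}, from which your strategy goes through. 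Your argument for the final independence statement is correct and more carefully spelled out than the paper's own treatment.
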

\begin{proof}
    It suffices to argue that~\ref{it:spo vs ftspo:spo} and~\ref{it:spo vs ftspo:ftspo fixed} result in the same joint state.
    By using \cref{prop:randompi vs spo} twice, we see that it suffices to compare the following two experiments:
    \begin{enumerate}
        \item[(i')] Sample $\pi\leftarrow S_N$ uniformly at random and run $B \leftarrow \adver^{U^\pi, U^{\pi^{-1}}}$.
        \item[(ii')] Sample $\pi\leftarrow S_N$ uniformly at random, run $B \leftarrow \adver^{U^{\tau_0^{-1}} U^\pi U^{\sigma_0}, U^{\sigma_0^{-1}} U^{\pi^{-1}} U^{\tau_0}} = \adver^{U^{\tau_0^{-1} \pi \sigma_0}, U^{(\tau_0^{-1} \pi \sigma_0)^{-1}}}$, and update $\pi\leftarrow\sigma_0^{-1}\pi\sigma_0$.
    \end{enumerate}
    These are indeed equivalent, since if~$\pi\in S_N$ is uniformly random then so is~$\tau_0^{-1} \pi \sigma_0$, for fixed~$\sigma_0,\tau_0\in S_N$.
\end{proof}

It will be convenient to relate the twirled superposition oracle to the untwirled one by viewing the twirling as an action on the database~$D$.
To this end, define the left and right actions of~$S_N$ on~$D$ as
\begin{align*}
	L^\tau \ket\pi &= \ket{\tau\pi},\\
	R^\sigma \ket\pi &= \ket{\pi\sigma^{-1}}.
\end{align*}
Then we have the following lemma, which states that the superposition oracle can be expressed in terms of the untwirled one, sandwiched by a basis change implemented by the operators~$L_\tau$ and $R_\sigma$ as defined above.

\begin{lem}\label{lem:twirled spo via spo}
For all $\sigma, \tau\in S_N$, it holds that
\begin{align*}
	\OFTSPOsigmatau_{XYD} &= \parens[\big]{ L^\tau_D R^\sigma_D } \OSPO_{XYD} \parens[\big]{ L^{\tau^{-1}}_D R^{\sigma^{-1}}_D }, \\
	\OinvFTSPOsigmatau_{XYD} &= \parens[\big]{ L^\tau_D R^\sigma_D } \OinvSPO_{XYD} \parens[\big]{ L^{\tau^{-1}}_D R^{\sigma^{-1}}_D }.
\end{align*}
\end{lem}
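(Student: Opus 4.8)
The plan is to verify the claimed identity on basis states of the form $\ket{x}_X\ket{y}_Y\ket{\pi}_D$, since these span the whole space and all operators involved are linear. The key observation is that the twirling operators act by relabeling the stored permutation: $L^\tau_D R^\sigma_D \ket{\pi}_D = \ket{\tau\pi\sigma^{-1}}_D$ (one checks that $L^\tau$ and $R^\sigma$ commute and compose as claimed, using that $R^\sigma \ket\pi = \ket{\pi\sigma^{-1}}$ makes $\sigma \mapsto R^\sigma$ a left action). Consequently $L^{\tau^{-1}}_D R^{\sigma^{-1}}_D \ket\pi_D = \ket{\tau^{-1}\pi\sigma}_D$.

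Then I would simply chase the basis state through the right-hand side of the forward identity, step by step. First, $L^{\tau^{-1}}_D R^{\sigma^{-1}}_D$ sends $\ket{x,y,\pi}_{XYD}$ to $\ket{x,y,\tau^{-1}\pi\sigma}_{XYD}$. Next, applying $\OSPO_{XYD}$ and using its definition from \cref{def:SPO} with the permutation $\tau^{-1}\pi\sigma$ in the database, we get $\ket{x,\,y\op(\tau^{-1}\pi\sigma)(x),\,\tau^{-1}\pi\sigma}_{XYD}$; note $(\tau^{-1}\pi\sigma)(x) = \tau^{-1}(\pi(\sigma(x)))$. Finally, applying $L^\tau_D R^\sigma_D$ restores the database register to $\ket{\pi}_D$ without touching $X$ or $Y$, yielding $\ket{x,\,y\op\tau^{-1}(\pi(\sigma(x))),\,\pi}_{XYD}$, which is exactly $\OFTSPOsigmatau_{XYD}\ket{x,y,\pi}_{XYD}$. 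The inverse identity is handled identically: chasing through gives $(\tau^{-1}\pi\sigma)^{-1}(x) = \sigma^{-1}(\pi^{-1}(\tau(x)))$, matching the definition of $\OinvFTSPOsigmatau_{XYD}$.

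There is no real obstacle here; the statement is essentially a bookkeeping identity, and the only point requiring a moment's care is checking that $R^\sigma$ as defined (with the inverse on $\sigma$) is genuinely a \emph{left} action of $S_N$, i.e.\ $R^\sigma R^{\sigma'} = R^{\sigma\sigma'}$, so that $L^\tau R^\sigma$ composes correctly and its inverse is $L^{\tau^{-1}} R^{\sigma^{-1}}$. Once that is in place, the computation is a three-line substitution for each of the two identities. I would also remark that $L^\tau$ and $R^\sigma$ commute because left and right multiplication in a group commute, so the order in which they are written is immaterial.
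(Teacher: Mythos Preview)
Your proposal is correct and follows essentially the same approach as the paper's proof: both verify the identity on basis states $\ket{x,y,\pi}_{XYD}$ by chasing through the three operators in turn, arriving at $\ket{x,y\op\tau^{-1}(\pi(\sigma(x))),\pi}$ and recognizing it as $\OFTSPOsigmatau_{XYD}\ket{x,y,\pi}$, with the inverse case handled analogously. Your additional remarks about $R^\sigma$ being a left action and commuting with $L^\tau$ are accurate but not strictly needed for the direct computation.
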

\begin{proof}
For all $\pi\in S_N$ and $x,y\in[N]$, we have
\begin{align*}
  \parens[\big]{ L^\tau_D R^\sigma_D } \OSPO_{XYD} \parens[\big]{ L^{\tau^{-1}}_D R^{\sigma^{-1}}_D } \ket{x,y,\pi}
&= \parens[\big]{ L^\tau_D R^\sigma_D } \OSPO_{XYD} \ket{x,y,\tau^{-1}\pi\sigma} \\
&= \parens[\big]{ L^\tau_D R^\sigma_D } \ket{x,y \op \tau^{-1}\pi\sigma(x),\tau^{-1}\pi\sigma} \\
&= \ket{x,y \op \tau^{-1}(\pi(\sigma(x))),\pi}
= \OFTSPOsigmatau_{XYD} \ket{x,y,\pi},
\end{align*}
which establishes the first equation.
The second one is proved in the same way.
\end{proof}

Clearly, these two operations commute with each other, and they leave the initial state~$\ket{\Phi_{\SPO}}_D$ of the oracle invariant:
\begin{equation}\label{eq:initial invariant}
  L_\sigma\ket{\Phi_{\SPO}}=R_\sigma\ket{\Phi_{\SPO}}_D = \ket{\Phi_{\SPO}}_D.
\end{equation}

Thus we obtain the following lemma that allows us to compare the behavior of an algorithm when using either the twirled or the ordinary oracle, strengthening \cref{lem:spo vs ftspo}.

\begin{lem}\label{lem:output state twisted vs not}
Let $\mathcal A$ be a unitary query algorithm on registers~$AXY$, where~$X$ and~$Y$ are $N$-dimensional registers, that gets query access to two oracles that each act on~$XY$.
For every~$\sigma,\tau\in S_N$, let~$\ket{\phi^{\sigma,\tau}}_{AXYD}$ be the joint state of algorithm and oracle defined by running~$\Init^\SPO_D$ and then~$\mathcal A^{\FTSPOsigmatau_D}$.
Moreover, let $\ket\phi_{AXYD}$ denote the joint state of algorithm and oracle defined by running~$\Init^\SPO_D$ and then~$\mathcal A^{\SPO_D}$.
Then,
\begin{align*}
  \ket{\phi^{\sigma,\tau}}_{AXYD} = L^\tau_D R^\sigma_D \ket{\phi}_{AXYD}.
\end{align*}
\end{lem}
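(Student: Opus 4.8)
The plan is to prove the identity by induction on the number of oracle calls made by $\mathcal A$, exploiting \cref{lem:twirled spo via spo} together with the invariance property~\eqref{eq:initial invariant}. Write the algorithm in the standard form of \cref{eq:generic oracle algo}, namely as an alternating product of oracle-independent unitaries $U^{(k)}_{AXY}$ and query operators; here each query operator is either $\OFTSPOsigmatau_{XYD}$ or $\OinvFTSPOsigmatau_{XYD}$ (on the twirled side) or the corresponding untwirled operator $\OSPO_{XYD}$ or $\OinvSPO_{XYD}$ (on the untwirled side). I would set $W_D := L^\tau_D R^\sigma_D$, which is unitary, so that \cref{lem:twirled spo via spo} reads $\OFTSPOsigmatau_{XYD} = W_D\, \OSPO_{XYD}\, W_D^\dagger$ and likewise for the inverse interface.

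First I would establish the base case: both joint states start as $\ket0_{AXY}\otimes\ket{\Phi_\SPO}_D$ after running $\Init^\SPO_D$, and since $W_D\ket{\Phi_\SPO}_D = \ket{\Phi_\SPO}_D$ by~\eqref{eq:initial invariant} (using that $L^\tau$ and $R^\sigma$ commute and each fixes $\ket{\Phi_\SPO}_D$), the identity $\ket{\phi^{\sigma,\tau}} = W_D\ket\phi$ holds trivially before any operations are applied. For the inductive step, suppose the relation holds after $k$ queries. Applying an oracle-independent unitary $U^{(k)}_{AXY}$ commutes with $W_D$ since the latter acts only on $D$, so the relation is preserved. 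Applying a query operator on the twirled side, say $\OFTSPOsigmatau_{XYD} = W_D\,\OSPO_{XYD}\,W_D^\dagger$, to $W_D\ket\phi$ gives $W_D\,\OSPO_{XYD}\,W_D^\dagger W_D\ket\phi = W_D\,\OSPO_{XYD}\ket\phi$, which is exactly $W_D$ applied to the corresponding untwirled state after that query; the same computation works for $\OinvFTSPOsigmatau_{XYD}$. Iterating through all $q$ queries and interleaved unitaries yields $\ket{\phi^{\sigma,\tau}}_{AXYD} = W_D\ket\phi_{AXYD} = L^\tau_D R^\sigma_D\ket\phi_{AXYD}$, as claimed.

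There is no real obstacle here; the lemma is essentially a telescoping cancellation of the conjugating operators $W_D^\dagger W_D = I$ between consecutive queries, anchored by the fact that $W_D$ fixes the initial database state. The only points requiring a modicum of care are (i) noting that $\mathcal A$ may use either interface at each step, so the inductive step must be stated for both $\OSPO$ and $\OinvSPO$, which is handled identically, and (ii) that the oracle-independent unitaries act trivially on $D$ and hence commute past $W_D$. One could alternatively phrase the argument non-inductively by inserting $W_D^\dagger W_D = I_D$ between every pair of adjacent query operators in the full product~\eqref{eq:generic oracle algo} and regrouping, but the inductive presentation is cleaner and I would go with that.
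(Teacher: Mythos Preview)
Your proof is correct and follows essentially the same approach as the paper: both use \cref{lem:twirled spo via spo} to replace each twirled query by a conjugated untwirled one, cancel the intermediate $W_D^\dagger W_D$ factors (using that the algorithm's unitaries act only on $AXY$), and invoke \cref{eq:initial invariant} to absorb the leftover $W_D$ into the initial database state. The paper presents this as a direct telescoping argument rather than an induction, but---as you yourself note---these are the same proof.
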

\begin{proof}
Without loss of generality, the quantum query algorithm takes the form (cf.\ \cref{eq:generic oracle algo})
\begin{align*}
  U^{(q)}_{AXY} Q^{(q)}_{XYD} U^{(q-1)}_{AXY} Q^{(q-1)}_{XYD} \ldots U^{(1)}_{AXY} Q^{(1)}_{XYD} U^{(0)}_{AXY},
\end{align*}
where each $Q_{XYD}^{(j)}$ is either a forward or an inverse query, and is applied to the initial state~$\ket0_{AXY} \ot \ket{\Phi_{\SPO}}_D$.
When expressing the twirled oracles in terms of the ordinary ones using \cref{lem:twirled spo via spo}, we see that the ``twirls''~$L^\tau_D R^\sigma_D$ and~$L^{\tau^{-1}}_D R^{\sigma^{-1}}_D$ in-between any pair of queries cancel (note that they commute with the unitaries~$U^{(j)}_{AXY}$).
Moreover, by \cref{eq:initial invariant} the initial twirl leaves the initial state~$\ket{\Phi_{\SPO}}_D$ of the database invariant.
Accordingly, the output state in the two scenarios only differs by an application of $L^\tau_D R^\sigma_D$, as claimed.
\end{proof}

\Cref{lem:twirled spo via spo} simulates queries to the twirled superposition oracle (for known~$\sigma$ and~$\tau$) by a single query to the ordinary one but requires access to the database.
This can also be achieved by acting on the input/output registers~$X$ and~$Y$, but in this case more than one query is required.
The following lemma shows that an algorithm can always be converted into a ``standard form'' such that an analogous replacement is possible, at the cost of doubling the number of queries, which will be useful in \cref{sec:progress}.

\begin{lem}\label{lem:std preprocess}
Let $\mathcal A$ be a unitary query algorithm on registers~$AXY$, where~$X$ and~$Y$ are $N$-dimensional registers, that gets query access to two oracles that each act on~$XY$.
Then there exists a unitary query algorithm $\mathcal C$ on registers~$BXY$, with $B=AZ$ and $Z$ an $N$-dimensional register, that gets query access to \emph{four} oracles acting on~$XY$ such that, for every~$\sigma, \tau\in S_N$ (and for any initial state of the database register~$D$), the following three experiments result in the same state of the registers~$BXYD$:
\begin{enumerate}
\item Run $\mathcal A^{\FTSPOsigmatau_D}$ and initialize the register~$Z$ in state~$\ket0_Z$.
% = \mathcal A^{\OFTSPOsigmatau_{XYD}, \OinvFTSPOsigmatau_{XYD}}
\item Run $\mathcal B^{\FTSPOsigmatau_D}$, where the query algorithm $\mathcal B$ gets access to two oracles on~$XY$ and is defined as follows:
\begin{align*}
  \mathcal B^{O_{XY}, O_{XY}^{\mathrm{inv}}}
:= \mathcal C^{O_{XY},\; O_{XY},\; O_{XY}^{\mathrm{inv}},\; O_{XY}^{\mathrm{inv}}}.
\end{align*}
\item Run $\mathcal B_{\sigma,\tau}^{\SPO_D}$,
% = \mathcal B_{\sigma,\tau}^{\OSPO_{XYD}, \; \OinvSPO_{XYD}},
where the query algorithm $\mathcal B_{\sigma,\tau}$ gets access to two oracles on~$XY$ and is defined as follows:
\begin{align*}
   \mathcal B_{\sigma,\tau}^{O_{XY}, O_{XY}^{\mathrm{inv}}}
:= \mathcal C^{(V_X^{\sigma^{-1}} V_Y^{\tau^{-1}} O_{XY} V_X^\sigma), \; (V_X^{\sigma^{-1}} O_{XY} V_X^\sigma V_Y^\tau), \; (V_X^{\tau^{-1}} V_Y^{\sigma^{-1}} O_{XY}^{\mathrm{inv}} V_X^\tau), \; (V_X^{\tau^{-1}} O_{XY}^{\mathrm{inv}} V_X^\tau V_Y^\sigma)}
\end{align*}
\end{enumerate}
Moreover, if $\mathcal A$ makes in total $q$ oracle queries then~$\mathcal C$ (and hence~$\mathcal B$ and~$\mathcal B_{\sigma,\tau}$) makes in total~$2q$ oracle queries.
\end{lem}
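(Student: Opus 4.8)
The plan is to build $\mathcal C$ from $\mathcal A$ by adjoining the fresh $N$-dimensional register $Z$ -- which will always stay in the state $\ket0_Z$ -- and replacing every oracle query of $\mathcal A$ by a two-query uncomputation gadget. Writing $\mathcal A$ in the standard form $U^{(q)}_{AXY}\, Q^{(q)}\cdots U^{(1)}_{AXY}\, Q^{(1)}\, U^{(0)}_{AXY}$, where each $Q^{(j)}$ is a forward or an inverse query, we let $\mathcal C$ run the same unitaries $U^{(j)}_{AXY}$ (now acting trivially on $Z$) but replace a forward query by
\[
  G^+ := \SWAP_{Y\leftrightarrow Z}\; O^{(2)}_{XY}\; \CNOT_{Y\to Z}\; O^{(1)}_{XY}\; \SWAP_{Y\leftrightarrow Z}
\]
and an inverse query by the analogous $G^-$ built from slots $O^{(3)}_{XY}$ and $O^{(4)}_{XY}$, where $O^{(i)}_{XY}$ denotes a query to $\mathcal C$'s $i$-th oracle. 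Each gadget makes exactly two oracle queries, so $\mathcal C$ -- and hence $\mathcal B$ and $\mathcal B_{\sigma,\tau}$ -- makes $2q$ queries in total.

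The heart of the argument is the claim that, in each of the three experiments and for each $j$, the $j$-th gadget preserves the subspace on which $Z$ is in state $\ket0_Z$ and, on that subspace, acts as the genuine twirled query $\OFTSPOsigmatau_{XYD}$ (forward step) or $\OinvFTSPOsigmatau_{XYD}$ (inverse step) on $AXYD$ while leaving $Z$ untouched. Granting the claim, the lemma follows at once: in all three experiments $Z$ starts in $\ket0_Z$, the unitaries $U^{(j)}_{AXY}$ never touch $Z$, and every gadget keeps $Z$ in $\ket0_Z$, so in all three cases the final state of $BXYD=AZXYD$ equals $\ket0_Z$ tensored with the state of $AXYD$ obtained by running $\mathcal A$ with the twirled queries $\OFTSPOsigmatau_{XYD},\OinvFTSPOsigmatau_{XYD}$ applied directly -- which is precisely the state of Experiment~1.

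The claim is verified by evaluating the gadget on basis states $\ket{x,y}_{XY}\ket0_Z\ket\pi_D$. In Experiment~2 the oracle slots are $O^{(1)}=O^{(2)}=\OFTSPOsigmatau_{XYD}$ and $O^{(3)}=O^{(4)}=\OinvFTSPOsigmatau_{XYD}$; these are XOR-type unitaries, hence their own inverses, so $G^+$ is exactly the standard ``query a clean ancilla, copy out, uncompute'' circuit, which sends $\ket{x,y}_{XY}\ket0_Z\ket\pi_D$ to $\ket{x,\,y\op\tau^{-1}(\pi(\sigma(x)))}_{XY}\ket0_Z\ket\pi_D$, and likewise $G^-$ produces the XOR with $\sigma^{-1}(\pi^{-1}(\tau(x)))$ -- i.e.\ the actions of $\OFTSPOsigmatau_{XYD}$ and $\OinvFTSPOsigmatau_{XYD}$. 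In Experiment~3 one first computes, using the defining property $V^\rho_X\ket x=\ket{\rho(x)}$ of the in-place permutation oracles, that the first two slots send $\ket{x,y,\pi}_{XYD}$ to $\ket{x,\,\tau^{-1}(y\op\pi(\sigma(x))),\pi}_{XYD}$ and $\ket{x,\,\tau(y)\op\pi(\sigma(x)),\pi}_{XYD}$ respectively, and that these two maps are mutually inverse; substituting into $G^+$ and tracking the clean $Z$ through the two swaps and the CNOT again yields exactly $\ket{x,y}_{XY}\ket0_Z\ket\pi_D\mapsto\ket{x,\,y\op\tau^{-1}(\pi(\sigma(x)))}_{XY}\ket0_Z\ket\pi_D$, which by \cref{def:SPO} and the definition of the twirled oracle is the action of $\OFTSPOsigmatau_{XYD}$; the inverse slots $O^{(3)},O^{(4)}$ are handled symmetrically and give $\OinvFTSPOsigmatau_{XYD}$. (Experiment~1 is the degenerate case with no gadget.) The only subtlety is that the gadget equals the twirled query \emph{only} on the $Z=\ket0_Z$ subspace: for a general $Z$-state the swap/CNOT trick mixes in an extra term, since a general permutation $\tau$ does not commute with $\op$; but as $Z$ never leaves that subspace this is harmless.

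I do not expect a real obstacle here. The entire content is (i) picking the gadget -- the order of the two swaps, the direction of the CNOT, and which of the four oracle slots to feed it -- so that the \emph{same} gadget implements the twirled query both when handed the genuine twirled oracles of Experiment~2 and when handed the four $V$-conjugated \SPO interfaces of Experiment~3 (the four interfaces stated in the lemma being exactly reverse-engineered for this), and (ii) bookkeeping the clean ancilla $Z$; both are routine once the gadget is written down.
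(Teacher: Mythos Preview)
Your proposal is correct and follows exactly the same approach as the paper: define $\mathcal C$ by replacing each query of $\mathcal A$ with the gadget $\SWAP_{Y\leftrightarrow Z}\,O'\,\CNOT_{Y\to Z}\,O\,\SWAP_{Y\leftrightarrow Z}$, and verify on basis states (with $Z$ in $\ket0_Z$) that this gadget equals $\OFTSPOsigmatau_{XYD}\ket0_Z$ both when the two slots are filled with the twirled oracle (Experiment~2) and when they are filled with the two $V$-conjugated \SPO interfaces specified in the lemma (Experiment~3). Your write-up is in fact more explicit than the paper's, which simply states the two operator identities and declares the claim follows.
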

\noindent
\begin{proof}
Note that
\begin{align*}
  \OFTSPOsigmatau_{XYD} \ket0_Z
&= \SWAP_{Y \leftrightarrow Z} \OFTSPOsigmatau_{XYD} \CNOT_{Y\to Z} \OFTSPOsigmatau_{XYD} \SWAP_{Y \leftrightarrow Z} \ket0_Z \\
&= \SWAP_{Y \leftrightarrow Z}  \parens*{ V_X^{\sigma^{-1}} \OSPO_{XYD} V_X^\sigma V_Y^\tau } \CNOT_{Y\to Z} \parens*{ V_X^{\sigma^{-1}} V_Y^{\tau^{-1}} \OSPO_{XYD} V_X^\sigma } \SWAP_{Y \leftrightarrow Z} \ket0_Z
\end{align*}
and
\begin{align*}
   \OinvFTSPOsigmatau_{XYD} \ket0_Z
&= \SWAP_{Y \leftrightarrow Z} \OinvFTSPOsigmatau_{XYD} \CNOT_{Y\to Z} \OinvFTSPOsigmatau_{XYD} \SWAP_{Y \leftrightarrow Z} \ket0_Z \\
&= \SWAP_{Y \leftrightarrow Z}  \parens*{ V_X^{\tau^{-1}} \OinvSPO_{XYD} V_X^\tau V_Y^\sigma } \CNOT_{Y\to Z} \parens*{ V_X^{\tau^{-1}} V_Y^{\sigma^{-1}} \OinvSPO_{XYD} V_X^\tau } \SWAP_{Y \leftrightarrow Z} \ket0_Z
\end{align*}
Thus we see that if we define the query algorithm $\mathcal C$ as follows,
\begin{align*}
&\mathcal C^{O_{XY}, O'_{XY}, O_{XY}^{\mathrm{inv}}, {O'}_{XY}^{\mathrm{inv}}} := \\
&\qquad \mathcal A^{(\SWAP_{Y \leftrightarrow Z} O'_{XY} \CNOT_{Y\to Z} O_{XY} \SWAP_{Y \leftrightarrow Z}), \; (\SWAP_{Y \leftrightarrow Z} {O'}_{XY}^{\mathrm{inv}} \CNOT_{Y\to Z} O_{XY}^{\mathrm{inv}} \SWAP_{Y \leftrightarrow Z})} \ket0_Z,
\end{align*}
then the claim follows.
\end{proof}

% !TEX root = compressed-pi.tex
%=============================================================================
\section{The Fundamental Lemma of the Permutation Oracle}\label{sec:fundamental}
%=============================================================================

We know from the preceding section that the superposition permutation oracles exactly simulate a quantum-accessible permutation, with the permutation being obtained by measuring the database in the computational basis.
However, to learn about queries made by the adversary, we wish to also measure whether database registers are in the uniform superposition states.
The following result, which we call the \emph{Fundamental Lemma}, shows that this only slightly changes the statistics.
It resembles \cite[Corollary~4.2]{CFHL21} which goes back to Zhandry, but our result applies to random permutations rather than random functions.
We state and prove it for arbitrary relations involving a single input-output pair.

\begin{lem}[Fundamental Lemma of the Permutation Oracle]\label{lem:fundamental}
Let $R \subseteq [N] \times [N]$ be a relation.
Let~\adver be a quantum algorithm that gets query access to two oracles that each act on~$\C^N \ot \C^N$, and which returns a pair~$(x,y) \in [N] \times [N]$.
We consider the following two experiments:
\begin{enumerate}
\item Sample~$\pi \leftarrow S_N$ uniformly at random, and run~$(x,y) \leftarrow \mathcal A^{U^\pi,U^{\pi^{-1}}}$. \\
If~$\pi(x) = y$ and $(x,y) \in R$ then return 1. Otherwise return 0.
\item Sample~$\sigma \leftarrow S_N$ and $\tau\leftarrow S_N$, run~$\Init^\SPO_D$, and then~$(x,y) \leftarrow \mathcal A^{\FTSPOsigmatau_D}$.
Next, apply to register~$D_{\sigma(x)}$ the projective measurement $\{\proj{+_{\sigma(x)}}, I - \proj{+_{\sigma(x)}}\}$.
If the second outcome is observed, run~$\pi \leftarrow \Rec^{\FTSPOsigmatau}_D$.
If~$\pi(x) = y$ and $(x,y) \in R$, return~1.
In all other cases, return~0.
\end{enumerate}
Let $p_\text{(i)}$, $p_\text{(ii)}$ denote the probability that first or second experiment returns~1, respectively.
Then:
\begin{align*}
	\sqrt{p_\text{(i)}} \leq \sqrt{p_\text{(ii)}} + \sqrt{\frac{\ln(N) + 1}N}.
\end{align*}
\end{lem}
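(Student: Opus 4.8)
The plan is to reduce everything to the untwirled oracle and then to a statement purely about database registers, exploiting the invariance established in \cref{lem:output state twisted vs not,lem:twirled spo via spo}. First I would observe that, by \cref{lem:spo vs ftspo}, experiment~(i) is equivalent to the following: run $\Init^\SPO_D$, then $(x,y)\leftarrow\mathcal A^{\FTSPOsigmatau_D}$ for uniformly random $\sigma,\tau$, then $\pi\leftarrow\Rec^{\FTSPOsigmatau}_D$, and return $1$ iff $\pi(x)=y$ and $(x,y)\in R$. So experiments~(i) and~(ii) differ \emph{only} in that experiment~(ii) first applies the binary measurement $\{\proj{+_{\sigma(x)}}_{D_{\sigma(x)}}, I-\proj{+_{\sigma(x)}}_{D_{\sigma(x)}}\}$ and aborts (returns $0$) on the first outcome. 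Thus if $\ket{\phi^{\sigma,\tau}}_{AXYD}$ is the post-algorithm state, then $p_\text{(i)}$ is the squared norm of $\Pi^{=,R}_D\ket{\phi^{\sigma,\tau}}$ and $p_\text{(ii)}$ is the squared norm of $\Pi^{=,R}_D\,(I-\proj{+_{\sigma(x)}}_{D_{\sigma(x)}})\ket{\phi^{\sigma,\tau}}$, where $\Pi^{=,R}_D$ denotes the (diagonal in the $\ket\pi_D$ basis) projector onto those $\pi$ with $\pi(x)=y$ and $(x,y)\in R$, and where I am slightly abusing notation by letting $x,y$ also label the outcome of the output measurement (one formalizes this by also measuring the output registers $XY$ first, which commutes with everything that follows).

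Next I would use the triangle inequality in the form $\norm{\Pi\ket\phi}\le\norm{\Pi(I-P)\ket\phi}+\norm{\Pi P\ket\phi}\le\norm{\Pi(I-P)\ket\phi}+\norm{P\ket\phi}$ with $\Pi=\Pi^{=,R}_D$ and $P=\proj{+_{\sigma(x)}}_{D_{\sigma(x)}}$, giving
\begin{align*}
  \sqrt{p_\text{(i)}}\le\sqrt{p_\text{(ii)}}+\norm*{\proj{+_{\sigma(x)}}_{D_{\sigma(x)}}\ket{\phi^{\sigma,\tau}}}.
\end{align*}
It remains to bound the expectation (over $\sigma,\tau$) of the last term, and for this the key move is \cref{lem:output state twisted vs not}: $\ket{\phi^{\sigma,\tau}}=L^\tau_D R^\sigma_D\ket\phi$ for the fixed untwirled state $\ket\phi$. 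Since $R^\sigma_D\ket\pi=\ket{\pi\sigma^{-1}}$, conjugating the single-register projector $\proj{+_{\sigma(x)}}_{D_{\sigma(x)}}$ by $L^{\tau}_DR^{\sigma}_D$ turns it into $\proj{+_{\sigma(x)}}$ acting on register $D_{\sigma(x)}$ of the \emph{untwirled} state $\ket\phi$ — one checks that the left action $L^\tau$ commutes with a projector onto a single register since it only relabels $t_N$, or more robustly one simply absorbs $L^\tau$ using $\norm{L^\tau_D R^\sigma_D\ket\phi}$ invariances; the net effect is $\norm{\proj{+_{\sigma(x)}}_{D_{\sigma(x)}}\ket{\phi^{\sigma,\tau}}}=\norm{\proj{+_{\sigma(x)}}_{D_{\sigma(x)}}\ket\phi}$ where now the index $\sigma(x)$ is the thing that matters. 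Because $\sigma$ is uniformly random and independent of $\ket\phi$ (here $x$ is the algorithm's output, which depends on $\sigma$ only through the twirled oracle — this is the subtle point, see below), the register index $\sigma(x)$ is close to uniform over $[N]$, and I would bound
\begin{align*}
  \E_{\sigma,\tau}\norm*{\proj{+_{\sigma(x)}}_{D_{\sigma(x)}}\ket{\phi^{\sigma,\tau}}}^2
  \le \frac1N\sum_{k=1}^N \norm*{\proj{+_k}_{D_k}\ket\phi}^2
  = \frac1N\sum_{k=1}^N \frac1k \cdot(\text{something}\le 1)
  \le \frac{H_N}{N}\le\frac{\ln(N)+1}N,
\end{align*}
using that $\sum_k\proj{+_k}_{D_k}$-type norms telescope because the $\ket{+_k}$ are mutually "orthogonal across registers" in the sense that $\sum_k\norm{\proj{+_k}_{D_k}\ket\phi}^2\le\sum_k\frac1k$ — more precisely, writing $\ket\phi$ in the $\ket\pi_D$ basis, $\norm{\proj{+_k}_{D_k}\ket\phi}^2=\frac1k\norm{(\sum_{t}\bra t_{D_k})\ket\phi}^2$, and one bounds $\sum_k$ of these by $1$ using that the reduced states on the $D_k$ are consistent marginals of a single pure state; then the $1/k$ factor gives the harmonic sum. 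Finally Jensen (concavity of $\sqrt\cdot$) converts the bound on the expected square into a bound on the expected value, and since $p_\text{(i)}=\E_{\sigma,\tau}[\cdots]$ and $p_\text{(ii)}=\E_{\sigma,\tau}[\cdots]$ one assembles $\sqrt{p_\text{(i)}}\le\sqrt{p_\text{(ii)}}+\sqrt{(\ln N+1)/N}$ via the triangle inequality in $L^2$ of the probability space.

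The main obstacle I anticipate is the independence/averaging step: the index $\sigma(x)$ that selects which database register we project onto is correlated with the algorithm's behaviour, because $x$ is the output of $\mathcal A$ running against the $\sigma$-twirled oracle. The clean way around this, which I would adopt, is the "standard form" reduction of \cref{lem:std preprocess}: replace $\mathcal A^{\FTSPOsigmatau_D}$ by $\mathcal B_{\sigma,\tau}^{\SPO_D}$, so that all the $\sigma,\tau$-dependence is pushed into pre/post-processing unitaries $V^\sigma,V^\tau$ acting on the input/output registers $X,Y$ only, and the database-side state is the \emph{$\sigma,\tau$-independent} state $\ket\phi$ of the untwirled run; then the relevant register index becomes $\sigma(x')$ where $x'$ is the input to the last internal query, and the uniformity of $\sigma$ over that index can be extracted by an averaging argument as in the informal overview ("twirling renders the permutation in the register independent of the algorithm's view"). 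Handling this carefully — tracking exactly which register $D_{\sigma(x)}$ is measured, and showing its index is uniform when averaged over $\sigma$ — is the delicate bookkeeping; the rest is the short triangle-inequality-plus-harmonic-sum computation sketched above.
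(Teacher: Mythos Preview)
Your proposal contains a genuine error at the core step. After the triangle inequality you bound $\norm{\Pi^{=,R}_D \proj{+_{\sigma(x)}}_{D_{\sigma(x)}}\ket{\phi^{\sigma,\tau}}}$ by simply dropping $\Pi^{=,R}_D$ and then claim that $\norm{\proj{+_k}_{D_k}\ket\phi}^2$ is of order $1/k$. This is false: for the initial database state (and more generally for any state where $D_k$ has not been disturbed), register $D_k$ \emph{is} $\ket{+_k}$, so $\norm{\proj{+_k}_{D_k}\ket\phi}^2 = 1$, not $1/k$. Your formula $\norm{\proj{+_k}_{D_k}\ket\phi}^2 = \frac1k\norm{\sum_t\bra t_{D_k}\ket\phi}^2$ is correct, but $\norm{\sum_t\bra t_{D_k}\ket\phi}^2$ can be as large as $k$ (and equals $k$ on the initial state), so the ``something $\le 1$'' you assert does not hold. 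Dropping $\Pi^{=,R}_D$ throws away the only thing that makes the term small.

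The paper's argument keeps the relation projector. The key estimate (\cref{lem:help}) is an operator-norm bound: for $Y=\{\tau(y)\}$ one has
\[
  \norm*{ \sum_{\pi:\pi(\sigma(x))=\tau(y)}\proj\pi_D \,\ket{+_{\sigma(x)}}_{D_{\sigma(x)}} } \le \sqrt{\frac{1}{\sigma(x)}},
\]
because $\pi(\sigma(x))$ is determined by $t_{\sigma(x)}$ (via $\pi_{>\sigma(x)}(t_{\sigma(x)})$) and a uniformly random $t_{\sigma(x)}\in[\sigma(x)]$ hits any fixed target with probability $1/\sigma(x)$. This is where the $1/\sigma(x)$ factor actually comes from; it is a property of the \emph{projector} $\Pi^{=,R}_D$ restricted to the $\ket{+_{\sigma(x)}}$ slice, not of the state $\ket\phi$. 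One then conditions on the output $(x,y)$ and on $\sigma,\tau$, applies this bound pointwise, and aggregates via Cauchy--Schwarz to obtain $\sqrt{p_\text{(i)}}\le\sqrt{p_\text{(ii)}}+\sqrt{\sum_{\sigma,\tau,(x,y)\in R} p(\sigma,\tau,x,y)/\sigma(x)}$.

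On the independence issue you flag at the end: the paper does not need \cref{lem:std preprocess} here. The ``Moreover'' clause of \cref{lem:spo vs ftspo} already gives that $\sigma$ is independent of the algorithm's output $(x,y)$, so $\sigma(x)$ is uniform in $[N]$ and the remaining sum becomes $\frac1N\sum_{k=1}^N\frac1k\le\frac{\ln(N)+1}{N}$ directly.
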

Thus, if we want to upper bound the probability that the algorithm learned a pair~$(x,y)$ such that~$\pi(x)=y$ satisfying some relation~$R$, then we can just imagine first measuring whether $D_{\sigma(x)}$ is not~$\ket{+_{\sigma(x)}}$, without significantly increasing the error -- we will see that this typically yields a quantity that is easier to upper bound.
This bound is essentially identical to the one known for random functions, except for the extra term $\ln(N)+1$, which is due to the varying dimensions of the database registers.

The idea of the proof is to recall that, by \cref{prop:randompi vs spo,lem:spo vs ftspo}, the first experiment is exactly simulated by the following:
\begin{enumerate}
	\item[{\crtcrossreflabel{(i')}[it:real prime]}] Sample~$\sigma \leftarrow S_N$ and~$\tau\leftarrow S_N$, run~$\Init^\SPO_D$, and then~$(x,y) \leftarrow \mathcal A^{\FTSPOsigmatau_D}$.
	Measure the entire database in the computational basis and interpret the outcome as a permutation~$\pi \in S_N$.
	If $\pi(\sigma(x)) = \tau(y)$ and $(x,y) \in R$, then return 1. Otherwise return 0.
\end{enumerate}
For comparison, expanding the definition of the recovery routine~$\Rec^{\FTSPOsigmatau}_D$, the second experiment can be written as follows:
\begin{enumerate}
    \item[{\crtcrossreflabel{(ii')}[it:forensic prime]}] Sample~$\sigma \leftarrow S_N$ and~$\tau\leftarrow S_N$, run~$\Init^\SPO_D$, and then~$(x,y) \leftarrow \mathcal A^{\FTSPOsigmatau_D}$.
    Next, apply to register~$D_{\sigma(x)}$ the projective measurement $\{\proj{+_{\sigma(x)}}, I - \proj{+_{\sigma(x)}}\}$.
    If the second outcome is observed, measure the entire database in the computational basis and interpret the outcome as a permutation~$\pi \in S_N$.
    If $\pi(\sigma(x)) = \tau(y)$ and $(x,y) \in R$, then return 1. In all other cases, return 0.
\end{enumerate}
Note that~\ref{it:real prime} and \ref{it:forensic prime} only differ in that the latter contains the additional measurement of the register~$D_{\sigma(x)}$ and subsequent check that the desired (second) outcome occurred.
To prove the fundamental lemma, we therefore need to argue that this ``postselection'' does not impact the probability of acceptance much.
We first state and prove a technical lemma that contains the core argument, and then use it establish \cref{lem:fundamental}.

\begin{lem}\label{lem:help}
% Let~$\ket\Delta \in \mathcal H_D$ be an arbitrary pure state of the database (which should be thought of as ``compressed''), and let~$x,y\in[N]$.
Let $x \in [N]$ and $Y \subseteq [N]$.
Then it holds that:
\begin{align*}
	 \norm*{ \sum_{\pi \in S_N : \pi(x) \in Y} \proj\pi_D - \sum_{\pi \in S_N : \pi(x) \in Y} \proj\pi_D \parens*{ I - \proj{+_x}_{D_x} } }
= \norm*{ \sum_{\pi \in S_N : \pi(x) \in Y} \proj\pi_D \ket{+_x}_{D_x} }
\leq \sqrt{\frac{\abs Y}x}.
\end{align*}
% where we recall that by convention the operator~$\proj{+_x}_{D_x}$ acts as the identity operator on all registers but~$D_x$.
\end{lem}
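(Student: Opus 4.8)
Write $\Pi := \sum_{\pi \in S_N:\,\pi(x)\in Y}\proj\pi_D$ for the orthogonal projection onto the span of those database basis states $\ket\pi_D$ with $\pi(x)\in Y$, and let $D_{\neq x}$ denote all database registers except $D_x$. The first equality is purely algebraic: since $\Pi$ is a projection,
\begin{align*}
  \Pi - \Pi\parens*{I-\proj{+_x}_{D_x}} = \Pi\,\proj{+_x}_{D_x} = \bigl(\Pi\ket{+_x}_{D_x}\bigr)\bra{+_x}_{D_x},
\end{align*}
where I read $\ket{+_x}_{D_x}$ as the isometry $\mathcal H_{D_{\neq x}}\to\mathcal H_D$ that tensors in $\ket{+_x}$ on register $D_x$. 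As $\bra{+_x}_{D_x}$ is then a co-isometry (it satisfies $\bra{+_x}_{D_x}\ket{+_x}_{D_x}=I_{D_{\neq x}}$), right multiplication by it preserves the operator norm, so the left-hand side equals $\norm{\Pi\ket{+_x}_{D_x}}$. It remains to bound this by $\sqrt{\abs Y/x}$.

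\textbf{Reduction to a diagonal operator.} The plan is to square and simplify: using $\Pi^2=\Pi$,
\begin{align*}
  \norm*{\Pi\ket{+_x}_{D_x}}^2 = \norm*{\bra{+_x}_{D_x}\Pi\ket{+_x}_{D_x}},
\end{align*}
which is an operator on $\mathcal H_{D_{\neq x}}$. Splitting each basis state as $\ket\pi_D = \ket{t_x}_{D_x}\otimes\ket{\pi_{\neq x}}_{D_{\neq x}}$, where $\pi_{\neq x}$ collects the coordinates $t_k$ ($k\neq x$) of the factorization \cref{eq:tower-decomposition}, and using that $\braket{+_x|t_x}_{D_x}=1/\sqrt x$ for every $\pi\in S_N$ (since $t_x\in[x]$ always), this becomes
\begin{align*}
  \bra{+_x}_{D_x}\Pi\ket{+_x}_{D_x} = \frac1x\sum_{\pi\in S_N:\,\pi(x)\in Y}\proj{\pi_{\neq x}}_{D_{\neq x}} = \frac1x\sum_v c_v\,\proj v_{D_{\neq x}},
\end{align*}
where the last sum ranges over computational basis states $\ket v_{D_{\neq x}}$ and $c_v$ counts the $t_x\in[x]$ for which the permutation determined by $t_x$ together with the coordinates recorded in $v$ maps $x$ into $Y$. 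This operator is diagonal, so its norm is $\max_v c_v/x$, and the lemma reduces to the claim that $c_v\le\abs Y$ for every $v$.

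\textbf{The combinatorial core.} The only substantive step is the observation that, for fixed tail coordinates $t_{x+1},\dots,t_N$, the map $[x]\ni t_x\mapsto\pi(x)$ is injective. This follows from the factorization $\pi=\pi_{>x}\,\tp x{t_x}\,\pi_{<x}$ of \cref{eq:pi gt lt k}: since $\pi_{<x}\in S_{x-1}$ fixes $x$ we get $\pi(x)=\pi_{>x}(t_x)$, and $\pi_{>x}$ depends only on $t_{x+1},\dots,t_N$ and is a bijection. Hence, for fixed $v$, the values $\{\pi(x):t_x\in[x]\}=\pi_{>x}(\{1,\dots,x\})$ are $x$ distinct elements of $[N]$, at most $\abs Y$ of which can lie in $Y$; that is, $c_v\le\abs Y$. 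I expect this injectivity fact to be the crux; the remaining steps are bookkeeping with tensor factors together with the elementary fact that the operator norm of a diagonal operator is its largest entry.
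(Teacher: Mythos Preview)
Your proof is correct and follows essentially the same approach as the paper: both establish the equality algebraically and then reduce the inequality to the combinatorial fact that $\pi(x)=\pi_{>x}(t_x)$ is injective in $t_x$, so at most $\abs Y$ values of $t_x\in[x]$ can give $\pi(x)\in Y$. The only cosmetic difference is that the paper passes through a maximizing unit vector $\ket\Delta_{D_{x^c}}$ and interprets the squared norm as a conditional probability, whereas you compute $\bra{+_x}_{D_x}\Pi\ket{+_x}_{D_x}$ explicitly as a diagonal operator and read off its largest entry; these are equivalent packagings of the same argument.
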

\begin{proof}
The first equality is clear.
% We start by computing
% \begin{align*}
% &\quad \norm*{ \sum_{\pi \in S_N : \pi(x) \in Y} \proj\pi_D - \sum_{\pi \in S_N : \pi(x) \in Y} \proj\pi_D \parens*{ I - \proj{+_x}_{D_x} } }
% = \norm*{ \sum_{\pi \in S_N : \pi(x) \in Y} \proj\pi_D \proj{+_x}_{D_x} } \\
% &= \norm*{ \sum_{\pi \in S_N : \pi(x) \in Y} \proj\pi_D \ket{+_x}_{D_x} }.
% \end{align*}
Now, for any operator~$X$ the operator norm can be computed as~$\norm X = \max_{\norm\phi = 1} \norm{X\ket\phi}$.
Thus there exists a vector~$\ket\Delta \in \mathcal H_{D_{x^c}} = \bigotimes_{k=1 : k \neq x}^n \mathcal H_{D_k}$ such that
\begin{align*}
	\norm*{ \sum_{\pi \in S_N : \pi(x) \in Y} \proj\pi_D \ket{+_x}_{D_x} }^2
&= \norm*{ \sum_{\pi \in S_N : \pi(x) \in Y} \proj\pi_D \parens*{ \ket{+_x}_{D_x} \ot \ket\Delta_{D_{x^c}} } }^2 \\
&= \sum_{\pi \in S_N : \pi(x) \in Y} \abs*{ \bra\pi_D \parens*{ \ket{+_x}_{D_x} \ot \ket\Delta_{D_{x^c}} } }^2.
\end{align*}
This can be upper bounded by
\begin{align*}
	\Pr\bigl( \pi(x) \in Y \mid t_x \leftarrow [x], \; (t_1,\dots,t_{x-1},t_{x+1},\dots,t_N) \leftarrow Q \bigr),
\end{align*}
where the permutation~$\pi \in S_N$ is defined in terms of the numbers~$t_1,\dots,t_N$ via \cref{eq:tower-decomposition}, and where~$Q$ is some probability distribution on $\prod_{k=1 : k \neq x}^n [k]$ (namely the distribution obtained by measuring~$\ket\Delta_{D_{x^c}}$ in the standard basis).
Now,
\begin{align*}
	\pi(x) \in Y
\quad\Leftrightarrow\quad \tp N {t_N} \tp {N\!-\!1} {t_{N-1}} \cdots \tp 2 {t_2} \tp 1 {t_1} (x) \in Y
\quad\Leftrightarrow\quad \pi_{>x} (t_x) \in Y
\quad\Leftrightarrow\quad t_x \in \pi_{>x}^{-1}(Y),
\end{align*}
where we recall the notation~$\pi_{>x} = \tp N {t_N} \tp {N\!-\!1} {t_{N-1}} \cdots \tp {x+1} {t_{x+1}}$.
Thus, whatever~$\pi_{>x}$, there are at most~$\abs Y$ choices of~$t_x \in [x]$ such that~$\pi(x) \in Y$.
Since~$\pi_{>x}$ and $t_x$ are independent and the latter is chosen uniformly at random in~$[x]$, we have
\begin{align*}
	\Pr\bigl( \pi(x) \in Y \;\big|\; t_x \leftarrow [x], \; (t_1,\dots,t_{x-1},t_{x+1},\dots,t_N) \leftarrow Q \bigr) \leq \frac{\abs Y}x.
\end{align*}
This concludes the proof.
\end{proof}

We now prove the fundamental lemma.

\begin{proof}[Proof of \cref{lem:fundamental}]
As discussed it suffices to compare the two experiments~\ref{it:real prime} and~\ref{it:forensic prime}.
Let $p(\sigma,\tau,x,y)$ denote the joint distribution of the uniformly random choices of~$\sigma,\tau\in S_N$ and the output~$(x,y)$ of~$\mathcal A^{\FTSPOsigmatau_D}$ and choose, for each $\sigma,\tau,x,y$, a purification~$\ket{\Delta(\sigma,\tau,x,y)}_{DE}$ of the corresponding state of the database.
Then:
\begin{align*}
    p_\text{(i')} = \hspace{-0.5cm} \sum_{\sigma,\tau \in S_N, (x,y) \in R} \hspace{-0.5cm} p(\sigma,\tau,x,y) \, p_\text{(i')}(\sigma,\tau,x,y)
\qquad\text{and}\qquad
    p_\text{(ii')} = \hspace{-0.5cm} \sum_{\sigma,\tau \in S_N, (x,y) \in R} \hspace{-0.5cm} p(\sigma,\tau,x,y) \, p_\text{(ii')}(\sigma,\tau,x,y),
\end{align*}
where
\begin{align*}
    p_\text{(i')}(\sigma,\tau,x,y) &:= \norm*{\sum_{\pi \in S_N : \pi(\sigma(x)) = \tau(y)} \hspace{-0.7cm} \proj\pi_D \ket{\Delta(\sigma,\tau,x,y)}_{DE} }^2, \\
    p_\text{(ii')}(\sigma,\tau,x,y) &:= \norm*{\sum_{\pi \in S_N : \pi(\sigma(x)) = \tau(y)} \hspace{-0.7cm} \proj\pi_D \parens*{ I - \proj{+_{\sigma(x)}}_{D_{\sigma(x)}} } \ket{\Delta(\sigma,\tau,x,y)}_{DE} }^2.
\end{align*}
% \begin{align*}
%   p_\text{(i')}
% &= \hspace{-0.5cm} \sum_{\sigma,\tau \in S_N, (x,y) \in R} \hspace{-0.5cm} p(\sigma,x,y) \, p_\text{(i')}(\sigma,\tau,x,y),
% \quad p_\text{(i')}(\sigma,\tau,x,y) := \norm*{\sum_{\pi \in S_N : \pi(\sigma(x)) = \tau(y)} \hspace{-0.7cm} \proj\pi_D \ket{\Delta(\sigma,\tau,x,y)}_{DE} }^2, \\
%   p_\text{(ii')}
% &= \hspace{-0.5cm} \sum_{\sigma,\tau \in S_N, (x,y) \in R} \hspace{-0.5cm} p(\sigma,x,y) \, p_\text{(ii')}(\sigma,\tau,x,y),
% \quad p_\text{(ii')}(\sigma,\tau,x,y) := \norm*{\sum_{\pi \in S_N : \pi(\sigma(x)) = \tau(y)} \hspace{-0.7cm} \proj\pi_D \parens*{ I - \proj{+_{\sigma(x)}}_{D_{\sigma(x)}} } \ket{\Delta(\sigma,\tau,x,y)}_{DE} }^2.
% \end{align*}
Using \cref{lem:help} (with $Y = \{\tau(y)\}$) and the Cauchy-Schwarz inequality, it follows that
\begin{align*}
	p_\text{(i')}
&\leq \sum_{\sigma,\tau \in S_N, (x,y) \in R} p(\sigma,\tau,x,y)
\parens*{
	\sqrt{ p_\text{(ii')}(\sigma,\tau,x,y) }
+ \sqrt{\frac1{\sigma(x)}}
}^2 \\
&= p_\text{(ii')}
+ 2 \sum_{\sigma,\tau \in S_N, (x,y) \in R} p(\sigma,\tau,x,y)
    \sqrt{ p_\text{(ii')}(\sigma,\tau,x,y) }
    \sqrt{ \frac1{\sigma(x)} }
  + \sum_{\sigma,\tau \in S_N, (x,y) \in R} p(\sigma,\tau,x,y) \frac1{\sigma(x)} \\
&\leq p_\text{(ii')}
+ 2 \sqrt{p_\text{(ii')}} \sqrt{ \sum_{\sigma,\tau \in S_N, (x,y) \in R} p(\sigma,\tau,x,y) \frac1{\sigma(x)} }
  + \sum_{\sigma,\tau \in S_N, (x,y) \in R} p(\sigma,\tau,x,y) \frac1{\sigma(x)}.
\end{align*}
Thus,
\begin{align*}
  \sqrt{p_\text{(i')}}
\leq \sqrt{p_\text{(ii')}} + \sqrt{\sum_{\sigma,\tau \in S_N, (x,y) \in R} p(\sigma,\tau,x,y) \frac1{\sigma(x)}}.
% \leq \sqrt{p_\text{(ii')}} + \sqrt{\sum_{\sigma \in S_N, (x, y) \in [N]^2} p(\sigma,x,y) \frac1{\sigma(x)}}.
\end{align*}
Finally, we note that as a consequence of \cref{lem:spo vs ftspo}, $\sigma$, $\tau$, and $(x,y)$ are independent with respect to the distribution~$p(\sigma,\tau,x,y)$.
Thus,
\begin{align*}
  \sum_{\sigma,\tau \in S_N, (x,y) \in R} p(\sigma,\tau,x,y) \frac1{\sigma(x)}
\leq \frac1 N \sum_{k=1}^N \frac 1 k
\leq \frac {\ln(N) + 1} N.
\end{align*}
\end{proof}

%\MW{Added:}\cm{Checked this, looks good.}
In order to apply the fundamental lemma, it is useful to upper bound the probability~$p_\text{(ii)}$ in way that only refers to the state of the database.
This is achieved by the following lemma.

\begin{lem}\label{lem:fund lem rhs upper bound}
Let $R \subseteq [N] \times [N]$ be a relation.
Let~\adver be a quantum algorithm that gets query access to two oracles that each act on~$\C^N \ot \C^N$, and which returns a pair~$(x,y) \in [N] \times [N]$.
For~$\sigma,\tau\in S_N$, let $\ket{\phi^{\sigma,\tau}}$ denote a purification of the state of the database after running $\Init^\SPO_D$ and then~$\mathcal A^{\FTSPOsigmatau_D}$.
Then, the quantity~$p_\text{(ii)}$ in \cref{lem:fundamental} can be upper bounded as
\begin{align*}
    p_\text{(ii)}
\leq \E_{\sigma, \tau \leftarrow S_N}
\sum_{(x,y) \in R}
\sum_{\substack{\pi \in S_N \text{ s.th.}\\ \tau^{-1}(\pi(\sigma(x))) = y}}
\norm*{ \bra\pi_D \parens*{ I - \proj{+_{\sigma(x)}}_{D_{\sigma(x)}} }
\ket{\phi^{\sigma,\tau}} }^2.
\end{align*}
\end{lem}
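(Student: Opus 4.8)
The plan is to unfold $p_\text{(ii)}$ into a sum of squared norms of projected states, observe that every operator acting on the database register~$D$ commutes with the standard-basis measurement that reads out the algorithm's output~$(x,y)$, and then simply discard that readout measurement: removing a projector can only increase the norm, so the resulting equality becomes the claimed inequality.

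In more detail: expanding the definition of the recovery routine~$\Rec^{\FTSPOsigmatau}_D$, experiment~(ii) of \cref{lem:fundamental} is exactly experiment~\ref{it:forensic prime}, so $p_\text{(ii)} = p_\text{(ii')}$. Without loss of generality $\mathcal A$ is unitary and produces~$(x,y)$ by a standard-basis measurement of its output registers at the very end; let~$\ket{\phi^{\sigma,\tau}}_{AXYD}$ denote the joint state obtained by running~$\Init^\SPO_D$ and then~$\mathcal A^{\FTSPOsigmatau_D}$, just before that final measurement. This is a valid purification of the database state, and it suffices to prove the bound for this choice, since the right-hand side only involves~$\ket{\phi^{\sigma,\tau}}$ through the quantities $\norm{\bra\pi_D(I-\proj{+_{\sigma(x)}}_{D_{\sigma(x)}})\ket{\phi^{\sigma,\tau}}}^2$, which depend only on the reduced state of~$D$. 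Now experiment~\ref{it:forensic prime} is a sequence of projective measurements applied to~$\ket{\phi^{\sigma,\tau}}_{AXYD}$: measure the output registers to obtain~$(x,y)$; apply $\{\proj{+_{\sigma(x)}}_{D_{\sigma(x)}},\, I-\proj{+_{\sigma(x)}}_{D_{\sigma(x)}}\}$ to~$D_{\sigma(x)}$; and, on the second outcome, measure all of~$D$ in the standard basis and accept iff the outcome~$\pi$ satisfies $\tau^{-1}(\pi(\sigma(x)))=y$ together with $(x,y)\in R$. Since these are applied one after another, the conditional-probability normalizations telescope, and writing $P_D^{\sigma,\tau,x,y} := \sum_{\pi\in S_N:\,\tau^{-1}(\pi(\sigma(x)))=y}\proj\pi_D$ for the final check, one obtains the identity
\begin{align*}
  p_\text{(ii)}
  = \E_{\sigma,\tau\leftarrow S_N} \sum_{(x,y)\in R} \;
  \norm*{ P_D^{\sigma,\tau,x,y}\,\bigl(I-\proj{+_{\sigma(x)}}_{D_{\sigma(x)}}\bigr)\,\bigl(\proj x_X\ot\proj y_Y\bigr) \ket{\phi^{\sigma,\tau}}_{AXYD} }^2 .
\end{align*}

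To conclude, both~$P_D^{\sigma,\tau,x,y}$ and~$I-\proj{+_{\sigma(x)}}_{D_{\sigma(x)}}$ act only on~$D$ and hence commute with~$\proj x_X\ot\proj y_Y$; moving the latter to the outside and dropping it (a projector never increases the norm) bounds each summand by $\norm{ P_D^{\sigma,\tau,x,y}(I-\proj{+_{\sigma(x)}}_{D_{\sigma(x)}})\ket{\phi^{\sigma,\tau}} }^2$, which by mutual orthogonality of the rank-one projectors~$\proj\pi_D$ equals $\sum_{\pi:\,\tau^{-1}(\pi(\sigma(x)))=y}\norm{ \bra\pi_D(I-\proj{+_{\sigma(x)}}_{D_{\sigma(x)}})\ket{\phi^{\sigma,\tau}} }^2$; summing over $(x,y)\in R$ and averaging over~$\sigma,\tau$ is exactly the claimed bound. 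The argument is essentially bookkeeping and I expect no real obstacle; the one step deserving some care is the telescoping of the normalization factors along the chain of projective measurements (together with the routine check that the displayed squared norm is insensitive to the choice of purification of the database state).
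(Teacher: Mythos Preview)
Your proposal is correct and essentially identical to the paper's own proof: both assume $\mathcal A$ is unitary with a final standard-basis measurement on~$XY$, write out $p_\text{(ii)}$ as an average over~$\sigma,\tau$ of a sum of squared norms involving $\bra\pi_D(I-\proj{+_{\sigma(x)}}_{D_{\sigma(x)}})\bra{xy}_{XY}\ket{\phi^{\sigma,\tau}}$, commute the $XY$-projection past the $D$-operators, drop it to obtain the inequality, and finally observe that the result depends only on the reduced state of~$D$. The only cosmetic difference is that you package the $\pi$-sum into the projector $P_D^{\sigma,\tau,x,y}$ before expanding it at the end, whereas the paper carries the explicit $\pi$-sum throughout.
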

\begin{proof}
Without loss of generality we can assume that $\mathcal A$ is a unitary query algorithm on registers $AXY$ such that the classical outcomes~$x$ and $y$ can be obtained by measuring the~$X$ and~$Y$ registers.
For every~$\sigma,\tau\in S_N$, let~$\ket{\phi^{\sigma,\tau}}_{AXYD}$ be the joint state of algorithm and oracle defined by running~$\Init^\SPO_D$ and then~$\mathcal A^{\FTSPOsigmatau_D}$.
Then:
\begin{align*}
    p_{\text{(ii)}}
&=
\E_{\sigma, \tau \leftarrow S_N}
\sum_{(x,y) \in R}
\sum_{\substack{\pi \in S_N \text{ s.th.}\\ \tau^{-1}(\pi(\sigma(x))) = y}}
\norm*{ \bra\pi_D \parens*{ I - \proj{+_{\sigma(x)}}_{D_{\sigma(x)}} }
\bra{xy}_{XY} \ket{\phi^{\sigma,\tau}}_{AXYD}
}^2.
\end{align*}
Because the projection $\bra{xy}$ commutes with the operators on $D$ and never increases the norm, we can upper bound the above as
\begin{align*}
    p_{\text{(ii)}}
&\leq \E_{\sigma, \tau \leftarrow S_N}
\sum_{(x,y) \in R}
\sum_{\substack{\pi \in S_N \text{ s.th.}\\ \tau^{-1}(\pi(\sigma(x))) = y}}
\norm*{ \bra\pi_D \parens*{ I - \proj{+_{\sigma(x)}}_{D_{\sigma(x)}} }
\ket{\phi^{\sigma,\tau}}_{AXYD}
}^2.
\end{align*}
Since this expression only depends on the reduced state on~$D$, the claim follows.
\end{proof}

% !TEX root = compressed-pi.tex
%=============================================================================
\section{Bounding the Success Probability for Search}\label{sec:progress}
%=============================================================================
%The compressed oracle for random functions developed in~\cite{Zhandry2019} has turned out to be a powerful tool for %proving the hardness of oracle search problems such as, e.g., pre-image finding or collision finding.
%The first theorems of this type have been proven in \cite{Zhandry2019}, followed by an entire line of work focusing on them.

In this section we generalize Zhandry's compressed oracle technique \cite{Zhandry2019} to the case of random permutations, but using our (twirled) permutation oracle. The high-level strategy for proving such theorems is as follows:
For the compressed oracle for a random function~$H$, it can be approximately determined whether a certain input~$x$ has been queried, and if so whether~$x$, together with the corresponding function output~$y=H(x)$, fulfils a certain relation.
This test is performed by a measurement acting on the database register~$D_x$ only.
For different inputs~$x$, these tests commute, and hence there exists a projective measurement answering the question whether there \emph{exists} an input $x$ such that the described test would trigger.
The probability that this measurement outputs yes after~$q$ queries then serves as a convenient \emph{progress measure}.
This is reminiscent to the progress measures used in the so-called ``hybrid method'' that was introduced earlier to prove the query lower bound for the unstructured search problem~\cite{BBBV}.

We would like to generalize this technique to the case of random permutations.
We follow a similar strategy, and begin by devising a generalization of the test for a single input.
To test whether a certain input~$x$ has been queried in forward direction or output in an inverse query, and if so, whether, together with the corresponding function output~$y=\pi(x)$, it fulfils a certain relation $R\subseteq [N]\times [N]$, we define the following operators:
\begin{align*}
    \Pi^{R,x}_D &:= \sum_{\pi \in S_N : \pi(x) \in R_x} \proj\pi_D,&
    E^{R,x}_D &:= \Pi^{R,x}_D (I - \proj{+_x}_{D_x}).&
\end{align*}
where we have defined $R_x := \{ y \in [N] : (x,y) \in R \}$.
For later use, we also set~$R^\mathrm{inv}_y := \{ x \in [N] : (x,y) \in R \}$ and
\begin{align*}
    r_{\max} := \max \braces*{ \max_{x\in[N]} \, \abs{R_x}, \max_{y\in[N]} \, \abs{R^\mathrm{inv}_y} }.
\end{align*}

Unfortunately, the operators~$E^{R,x}$ for different~$x$ do not commute, so we cannot simply use these to construct a measurement answering the existence question.
Instead, we will check whether a \emph{random} $x$ has this property.
To lift the worst-case bounds established in the previous section to the average case, we further consider running the query algorithm with the \emph{twirled} permutation oracle for uniformly random~$\sigma,\tau \in S_N$.
Because the permutation in the database is now twirled as compared to the action of the oracle, we must also consider the twirled relation~$R^{\sigma,\tau}$ that is defined as follows in terms of~$R$:
\begin{align}\label{eq:twirled relation}
    (x,y) \in R^{\sigma,\tau} :\!\!\iff (\sigma^{-1}(x),\tau^{-1}(y)) \in R.
\end{align}
Thus we are led to consider the following natural progress measure:
\begin{align*}
    \E_{\substack{x \leftarrow [N], \\ \sigma, \tau \leftarrow S_N}} \norm*{ E^{R^{\sigma,\tau},x}_D \ket{\phi^{\sigma,\tau}} }^2,
\end{align*}
where $\ket{\phi^{\sigma,\tau}}$ denotes the joint state of (a unitary realization of) the algorithm and database obtained by running the algorithm with the twirled permutation oracle $\FTSPOsigmatau$.
Remarkably, this progress measure not only has an intuitive operational interpretation, but it is also directly related to the upper bound furnished by the fundamental lemma (cf.\ \cref{lem:fund lem rhs upper bound}):

% In order to apply the fundamental lemma, it is useful to upper bound the probability~$p_\text{(ii)}$ in way that that makes no reference to the algorithm.
% This is achieved by the following result.

\begin{lem}\label{lem:progress vs fundamental}
For any relation $R \subseteq [N] \times [N]$, it holds that
\begin{align*}
    N \E_{\substack{x \leftarrow [N], \\ \sigma, \tau \leftarrow S_N}} \norm*{ E^{R^{\sigma,\tau},x}_D \ket{\phi^{\sigma,\tau}} }^2
= \E_{\sigma, \tau \leftarrow S_N}
\sum_{(x,y) \in R}
\sum_{\substack{\pi \in S_N \text{ s.th.}\\ \tau^{-1}(\pi(\sigma(x))) = y}}
\norm*{ \bra\pi_D \parens*{ I - \proj{+_{\sigma(x)}}_{D_{\sigma(x)}} }
\ket{\phi^{\sigma,\tau}} }^2.
\end{align*}
\end{lem}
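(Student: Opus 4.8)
The plan is to establish the identity by unfolding both sides and matching them term by term, the only non-cosmetic manipulation being a re-indexing of the outer sum over the random input by the permutation~$\sigma$.

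First I would expand the left-hand side. Since $\Pi^{R^{\sigma,\tau},x}_D=\sum_{\pi\in S_N:\,\pi(x)\in R^{\sigma,\tau}_x}\proj\pi_D$ is an orthogonal projection onto a span of computational-basis states of the database, and $\{\ket\pi_D\}_{\pi\in S_N}$ is orthonormal, Pythagoras gives
\[
\norm*{E^{R^{\sigma,\tau},x}_D\ket{\phi^{\sigma,\tau}}}^2
= \sum_{\pi\in S_N:\,\pi(x)\in R^{\sigma,\tau}_x}\norm*{\bra\pi_D\parens*{I-\proj{+_x}_{D_x}}\ket{\phi^{\sigma,\tau}}}^2,
\]
where throughout $\bra\pi_D$ is understood to act as the identity on the remaining (algorithm/purification) registers. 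Taking the expectation over $x\leftarrow[N]$ and $\sigma,\tau\leftarrow S_N$ and multiplying by $N$ cancels the factor $1/N$ coming from the uniform average over~$x$, so the left-hand side equals
\[
\E_{\sigma,\tau\leftarrow S_N}\ \sum_{x=1}^N\ \sum_{\pi\in S_N:\,\pi(x)\in R^{\sigma,\tau}_x}\norm*{\bra\pi_D\parens*{I-\proj{+_x}_{D_x}}\ket{\phi^{\sigma,\tau}}}^2.
\]

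Next I would rewrite the summation condition using \cref{eq:twirled relation}: the statement $\pi(x)\in R^{\sigma,\tau}_x$ is equivalent to $(\sigma^{-1}(x),\tau^{-1}(\pi(x)))\in R$. Now comes the key step: for each fixed $\sigma,\tau$ substitute $x=\sigma(x')$ in the outer sum. As $\sigma$ is a bijection of $[N]$ this is merely a relabeling of the index set; the register $D_x$ turns into $D_{\sigma(x')}$ — exactly the form appearing on the right-hand side — and the condition becomes $(x',\tau^{-1}(\pi(\sigma(x'))))\in R$. Finally I would break this condition up by summing first over $y$ with $(x',y)\in R$ and then over those $\pi\in S_N$ with $\tau^{-1}(\pi(\sigma(x')))=y$; for distinct $y$ these sets of $\pi$ are disjoint, so nothing is double-counted, and collapsing the pair $(x',y)$ into a single sum over $R$ (and renaming $x'\to x$) yields precisely the right-hand side.

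The argument is essentially bookkeeping, so I do not expect a genuine obstacle; the one place to be slightly careful is the Pythagorean splitting in the first step, namely that $\Pi^{R^{\sigma,\tau},x}_D$ projects onto an orthogonal family of database basis vectors so that the norm decomposes into the individual $\bra\pi_D$-components, together with the notational point that operators on $D$ commute past and act trivially on the other registers. Since both sides refer to the very same state $\ket{\phi^{\sigma,\tau}}$, the statement is ultimately a purely algebraic identity once the twirled relation has been unwound and the substitution $x\mapsto\sigma(x)$ performed.
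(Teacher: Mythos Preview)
Your proof is correct and follows essentially the same approach as the paper's: the paper runs the computation from the right-hand side to the left, while you go left to right, but the three ingredients are identical --- the Pythagorean decomposition over computational-basis states of the database, unwinding the definition~\eqref{eq:twirled relation} of the twirled relation, and the bijective substitution $x\leftrightarrow\sigma(x)$.
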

\begin{proof}
We calculate:
% This follows from \cref{lem:fund lem rhs upper bound}, since
\begin{align*}
&\quad \E_{\sigma, \tau \leftarrow S_N}
\sum_{(x,y) \in R}
\sum_{\substack{\pi \in S_N \text{ s.th.}\\ \tau^{-1}(\pi(\sigma(x))) = y}}
\norm*{ \bra\pi_D \parens*{ I - \proj{+_{\sigma(x)}}_{D_{\sigma(x)}} }
\ket{\phi^{\sigma,\tau}}
}^2 \\
% &= \E_{\sigma, \tau \leftarrow S_N}
% \sum_{x=1}^N
% \sum_{\substack{\pi \in S_N \text{ s.th.}\\ (x, \tau^{-1}(\pi(\sigma(x)))) \in R}}
% \norm*{ \bra\pi_D \parens*{ I - \proj{+_{\sigma(x)}}_{D_{\sigma(x)}} }
% \ket{\phi^{\sigma,\tau}}
% }^2 \\
% &= \E_{\sigma, \tau \leftarrow S_N}
% \sum_{x=1}^N
% \sum_{\substack{\pi \in S_N \text{ s.th.}\\ (\sigma^{-1}(\sigma(x)), \tau^{-1}(\pi(\sigma(x)))) \in R}}
% \norm*{ \bra\pi_D \parens*{ I - \proj{+_{\sigma(x)}}_{D_{\sigma(x)}} }
% \ket{\phi^{\sigma,\tau}}
% }^2 \\
&= \E_{\sigma, \tau \leftarrow S_N}
\sum_{x=1}^N
\sum_{\substack{\pi \in S_N \text{ s.th.}\\ (\sigma(x), \pi(\sigma(x))) \in R^{\sigma,\tau}}}
\norm*{ \bra\pi_D \parens*{ I - \proj{+_{\sigma(x)}}_{D_{\sigma(x)}} }
\ket{\phi^{\sigma,\tau}}
}^2 \\
&= \E_{\sigma, \tau \leftarrow S_N}
\sum_{x=1}^N
\sum_{\substack{\pi \in S_N \text{ s.th.}\\ (x, \pi(x)) \in R^{\sigma,\tau}}}
\norm*{ \bra\pi_D \parens*{ I - \proj{+_x}_{D_x} }
\ket{\phi^{\sigma,\tau}}
}^2 \\
% &= \E_{\sigma, \tau \leftarrow S_N}
% \sum_{x=1}^N
% \norm*{ \sum_{\substack{\pi \in S_N \text{ s.th.}\\ (x, \pi(x)) \in R^{\sigma,\tau}}} \proj\pi_D \parens*{ I - \proj{+_x}_{D_x} }
% \ket{\phi^{\sigma,\tau}}
% }^2 \\
% &= \E_{\sigma, \tau \leftarrow S_N}
% \sum_{x=1}^N
% \norm*{ E^{R^{\sigma,\tau},x}_D \ket{\phi^{\sigma,\tau}} }^2 \\
&= N \E_{\substack{x \leftarrow [N], \\ \sigma, \tau \leftarrow S_N}} \norm*{ E^{R^{\sigma,\tau},x}_D \ket{\phi^{\sigma,\tau}} }^2.
\qedhere
\end{align*}
\end{proof}

We can now summarize the plan for the remainder of this section:
% In more details, our proof strategy can be summarized in the following steps:
\begin{enumerate}
    \item We bound the effect of a single query on any fixed location $x$ of the database in \cref{sec:worst_case}.
    \item We derive a bound on the success probability of the adversary in \cref{sec:expectation}, but in expectation over the random choice of the database location and the choice of the twirling.
    \item We bound the (weighted) average probability that a database register is no longer in the uniform state, which is necessary to complete the proof, in \cref{sec:bounc_active}.
    \item We combine (ii) and (iii) with the fundamental lemma to obtain our main theorem in \cref{sec:main_thm}.
\end{enumerate}

%-----------------------------------------------------------------------------
\subsection{Bounding the Success Probability in the Worst Case}\label{sec:worst_case}
%-----------------------------------------------------------------------------
%In this subsection and the next, it will be convenient to work with the standard variant of the oracles (see \cref{subsec:std vs inplace}).
We start with a useful lemma that follows readily from \cref{lem:help}.

\begin{lem}\label{lem:easy}
Let $ Q^{\SPO}_{XYD} \in \{ \OSPO_{XYD}, \OinvSPO_{XYD} \}$, and~$x\in[N]$.
Then:
\begin{enumerate}
\item $\norm[\big]{ E^{R,x}_D  Q^{\SPO}_{XYD} \parens[\big]{ I - \Pi^{R,x}_D } } \leq \sqrt{ \frac{\abs{R_x}}x }$.
\item $\norm[\big]{ E^{R,x}_D  Q^{\SPO}_{XYD} \ket\phi } - \norm[\big]{ E^{R,x}_D \ket\phi } \leq \sqrt{\frac{\abs{R_x}}x} \norm[\big]{ \parens[\big]{ I - \proj{+_x}_{D_x} } \ket\phi } + \norm[\big]{ E^{R,x}_D  Q^{\SPO}_{XYD} \proj{+_x}_{D_x} \ket\phi }$ for any pure state~$\ket\phi_{AXYD}$.
\item $\norm[\big]{ E^{R,x}_D  Q^{\SPO,z}_{YD} \proj{+_x}_{D_x} } \leq 2 \sqrt{ \frac{\abs{R_x}}x }$ for all $z\in[N]$, where $Q^{\SPO,z}_{YD} := \bra z_X  Q^{\SPO}_{XYD} \ket z_X$.
\end{enumerate}
\end{lem}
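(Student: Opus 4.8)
The plan is to derive all three inequalities from a single elementary estimate together with two commutation observations. The elementary estimate is that $\norm*{\Pi^{R,x}_D \proj{+_x}_{D_x}} \le \sqrt{\abs{R_x}/x}$: this follows from \cref{lem:help} applied with $Y = R_x$, since $\Pi^{R,x}_D \proj{+_x}_{D_x} = \bigl(\Pi^{R,x}_D \ket{+_x}_{D_x}\bigr)\bra{+_x}_{D_x}$ when $\ket{+_x}_{D_x}$ is read as a map $\mathcal H_{D_{x^c}} \to \mathcal H_D$, and $\norm*{\bra{+_x}_{D_x}} = 1$. The first commutation observation is that every query unitary ($\OSPO_{XYD}$, $\OinvSPO_{XYD}$, and their input-fixed restrictions $Q^{\SPO,z}_{YD} = \bra z_X Q^{\SPO}_{XYD}\ket z_X$) acts diagonally in the permutation basis of $D$ — it only modifies $Y$, controlled on $\ket\pi_D$, and never changes $\pi$ — so it commutes with $\Pi^{R,x}_D$ (but not with $\proj{+_x}_{D_x}$). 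The second is the vanishing identity $E^{R,x}_D \proj{+_x}_{D_x} = \Pi^{R,x}_D(I - \proj{+_x}_{D_x})\proj{+_x}_{D_x} = 0$. I also use $\norm*{E^{R,x}_D}\le 1$ (product of two orthogonal projections) and that $Q^{\SPO,z}_{YD}$, being the diagonal block of a unitary that preserves $X$, is itself unitary.

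For part~(i), I would commute $Q^{\SPO}_{XYD}$ through $I - \Pi^{R,x}_D$ and then delete it using unitary invariance of the operator norm, reducing the quantity to $\norm*{E^{R,x}_D (I - \Pi^{R,x}_D)}$; expanding $E^{R,x}_D$ and using $\Pi^{R,x}_D(I-\Pi^{R,x}_D)=0$ collapses this to $\norm*{\Pi^{R,x}_D \proj{+_x}_{D_x}(I-\Pi^{R,x}_D)} \le \norm*{\Pi^{R,x}_D \proj{+_x}_{D_x}}$, and the elementary estimate finishes it. Part~(iii) is the same idea split over two terms: expand $E^{R,x}_D Q^{\SPO,z}_{YD}\proj{+_x}_{D_x} = \Pi^{R,x}_D Q^{\SPO,z}_{YD}\proj{+_x}_{D_x} - \Pi^{R,x}_D\proj{+_x}_{D_x}Q^{\SPO,z}_{YD}\proj{+_x}_{D_x}$; in the first term push the unitary $Q^{\SPO,z}_{YD}$ past $\Pi^{R,x}_D$ and apply the elementary estimate, and bound the second by submultiplicativity ($\norm*{\Pi^{R,x}_D\proj{+_x}_{D_x}}$ times $\norm*{Q^{\SPO,z}_{YD}}\le1$ times $\norm*{\proj{+_x}_{D_x}}\le1$). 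The triangle inequality then yields the factor $2$.

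Part~(ii) is the only one with any moving parts. First, $\norm*{E^{R,x}_D}\le1$ gives $\norm*{E^{R,x}_D Q^{\SPO}_{XYD} E^{R,x}_D\ket\phi}\le\norm*{E^{R,x}_D\ket\phi}$, hence $\norm*{E^{R,x}_D Q^{\SPO}_{XYD}\ket\phi}-\norm*{E^{R,x}_D\ket\phi}\le\norm*{E^{R,x}_D Q^{\SPO}_{XYD}(I-E^{R,x}_D)\ket\phi}$. Then the vanishing identity $E^{R,x}_D\proj{+_x}_{D_x}=0$ yields the clean decomposition $I - E^{R,x}_D = \proj{+_x}_{D_x} + (I-\Pi^{R,x}_D)(I-\proj{+_x}_{D_x})$; the triangle inequality splits the right-hand side into $\norm*{E^{R,x}_D Q^{\SPO}_{XYD}\proj{+_x}_{D_x}\ket\phi}$ (which appears verbatim in the claim) plus $\norm*{E^{R,x}_D Q^{\SPO}_{XYD}(I-\Pi^{R,x}_D)(I-\proj{+_x}_{D_x})\ket\phi}$, and the latter is at most $\norm*{E^{R,x}_D Q^{\SPO}_{XYD}(I-\Pi^{R,x}_D)}\cdot\norm*{(I-\proj{+_x}_{D_x})\ket\phi}\le\sqrt{\abs{R_x}/x}\,\norm*{(I-\proj{+_x}_{D_x})\ket\phi}$ by part~(i). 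I do not expect a genuine obstacle here: the substance is entirely in \cref{lem:help}, and the only points requiring care are keeping straight which projections commute with the oracle (diagonality in the \emph{permutation} basis of $D$, not the basis diagonalizing $\proj{+_x}_{D_x}$) and spotting the identity $E^{R,x}_D\proj{+_x}_{D_x}=0$ that makes the decomposition in part~(ii) work so cleanly.
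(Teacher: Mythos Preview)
Your proposal is correct and follows essentially the same approach as the paper: the same elementary estimate via \cref{lem:help}, the same commutation of $Q^{\SPO}$ with $\Pi^{R,x}_D$, the same decomposition $I-E^{R,x}_D = \proj{+_x}_{D_x} + (I-\Pi^{R,x}_D)(I-\proj{+_x}_{D_x})$ in part~(ii), and the same two-term split via $E^{R,x}_D = \Pi^{R,x}_D - \Pi^{R,x}_D\proj{+_x}_{D_x}$ in part~(iii).
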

\begin{proof}
(i)~
Since~$ Q^{\SPO}_{XYD}$ is a unitary controlled on register~$D$, it commutes with computational basis projections, and hence with~$\Pi^{R,x}_D$.
Therefore, and using unitary invariance, we obtain
\begin{align*}
  \norm*{ E^{R,x}_D  Q^{\SPO}_{XYD} \parens[\big]{ I - \Pi^{R,x}_D } }
&= \norm*{ E^{R,x}_D \parens[\big]{ I - \Pi^{R,x}_D }  Q^{\SPO}_{XYD} } \\
% &= \norm*{ \Pi^{R,x}_D \parens[\big]{ I - \proj{+_x}_{D_x} } \parens[\big]{ I - \Pi^{R,x}_D }  Q^{\SPO}_{XYD} } \\
&= \norm*{ \Pi^{R,x}_D \parens[\big]{ I - \proj{+_x}_{D_x} } \parens[\big]{ I - \Pi^{R,x}_D } } \\
&= \norm*{ \Pi^{R,x}_D \proj{+_x}_{D_x} \parens[\big]{ I - \Pi^{R,x}_D } } \\
&\leq \norm*{ \Pi^{R,x}_D \ket{+_x}_{D_x} } % \norm*{ \bra{+_x}_{D_x} \parens[\Big]{ I - \Pi^{R,x}_D } }
\leq \sqrt{\frac{\abs{R_x}}{x}}.
\end{align*}
The last inequality holds due to \cref{lem:help}.

(ii)~Since
\begin{align*}
    E^{R,x}_D  Q^{\SPO}_{XYD} \ket\phi
=   E^{R,x}_D  Q^{\SPO}_{XYD} E^{R,x}_D \ket\phi
+   E^{R,x}_D  Q^{\SPO}_{XYD} \parens*{ I - E^{R,x}_D } \ket\phi,
\end{align*}
and $\norm{E^{R,x}_D  Q^{\SPO}_{XYD} E^{R,x}_D \ket\phi} \leq \norm{E^{R,x}_D \ket\phi}$, we have, by the triangle inequality,
\begin{align*}
    \norm*{ E^{R,x}_D  Q^{\SPO}_{XYD} \ket\phi } - \norm*{ E^{R,x}_D \ket\phi }
\leq \norm*{ E^{R,x}_D  Q^{\SPO}_{XYD} \parens*{ I - E^{R,x}_D } \ket\phi }.
\end{align*}
Since $I - E^{R,x}_D = (I - \Pi^{R,x}_D)(I - \proj{+_x}_{D_x}) \;+\; \proj{+_x}_{D_x}$, we can bound the right-hand side by
\begin{align*}
&\quad \norm*{ E^{R,x}_D  Q^{\SPO}_{XYD} \parens*{ I - E^{R,x}_D } \ket\phi } \\
&\leq \norm*{ E^{R,x}_D  Q^{\SPO}_{XYD} (I - \Pi^{R,x}_D)(I - \proj{+_x}_{D_x}) \ket\phi }
+ \norm*{ E^{R,x}_D  Q^{\SPO}_{XYD} \proj{+_x}_{D_x} \ket\phi } \\
&\leq \sqrt{\frac{\abs{R_x}}x} \norm*{ \parens[\big]{ I - \proj{+_x}_{D_x} } \ket\phi }
+ \norm*{ E^{R,x}_D  Q^{\SPO}_{XYD} \proj{+_x}_{D_x} \ket\phi },
\end{align*}
using another triangle inequality and, in the last step, part~(i).

(iii)~We can proceed similarly as in part~(i):
\begin{align*}
\norm*{ E^{R,x}_D  Q^{\SPO,z}_{YD} \proj{+_x}_{D_x} }
&= \norm*{ \Pi^{R,x}_D \parens*{ I - \proj{+_x}_{D_x} }  Q^{\SPO,z}_{YD} \proj{+_x}_{D_x} } \\
&\leq \norm*{ \Pi^{R,x}_D  Q^{\SPO,z}_{YD} \proj{+_x}_{D_x} } + \norm*{ \Pi^{R,x}_D \proj{+_x}_{D_x}  Q^{\SPO,z}_{YD} \proj{+_x}_{D_x} } \\
&= \norm*{  Q^{\SPO,z}_{YD} \Pi^{R,x}_D \proj{+_x}_{D_x} } + \norm*{ \Pi^{R,x}_D \proj{+_x}_{D_x}  Q^{\SPO,z}_{YD} \proj{+_x}_{D_x} } \\
&\leq 2 \norm{ \Pi^{R,x}_D \ket{+_x}_{D_x} }
\leq 2 \sqrt{\frac{\abs{R_x}}x}.
\qedhere
\end{align*}
\end{proof}

We now bound the effect of a single forward query on the probability amplitude.

\begin{lem}[Forward query]\label{lem:forward}
For any pure state~$\ket\phi_{AXYD}$ such that $\norm{\bra\pi_D \ket\phi_{AXYD}}^2 = \frac1{N!}$ for all~$\pi\in S_N$, and for any~$x\in[N]$, we have
\begin{align*}
\norm*{ E^{R,x}_D \OSPO_{XYD} \ket\phi } - \norm*{ E^{R,x}_D \ket\phi } &\leq \sqrt{\frac{\abs{R_x}}x} \ \norm*{ \parens[\big]{ I - \proj{+_x}_{D_x} } \ket\phi }
+ 2 \sqrt{\uglyterm_{\ket\phi, x}}
\end{align*}
where $\ket{\phi_x}_{AYD} := \bra x_X \ket\phi_{AXYD}$ and
\begin{align*}
\uglyterm_{\ket\phi,x} :=	\frac{\abs{R_x}}x \, \norm{ \ket{\phi_x} }^2
	+ \frac{\abs{R_x}}{x^2 N}
	+ \frac1x \sum_{z=1}^{x-1} \sum_{\pi_{x^c} : \pi_{x^c}(z) \in R_x} \norm[\Big]{ \bra{\pi_{x^c}}_{D_{x^c}} \ket{\phi_z} }^2
\end{align*}
\end{lem}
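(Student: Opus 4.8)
The plan is to start from part~(ii) of \cref{lem:easy}, applied with $Q^{\SPO}_{XYD} = \OSPO_{XYD}$, which already gives
\begin{align*}
\norm*{ E^{R,x}_D \OSPO_{XYD} \ket\phi } - \norm*{ E^{R,x}_D \ket\phi }
\leq \sqrt{\tfrac{\abs{R_x}}x} \norm*{ \parens[\big]{ I - \proj{+_x}_{D_x} } \ket\phi }
+ \norm*{ E^{R,x}_D \OSPO_{XYD} \proj{+_x}_{D_x} \ket\phi }.
\end{align*}
So the whole task reduces to bounding the second term $\norm{ E^{R,x}_D \OSPO_{XYD} \proj{+_x}_{D_x} \ket\phi }$ by $2\sqrt{\uglyterm_{\ket\phi,x}}$, using the hypothesis that $\ket\phi$ has ``flat'' amplitudes $\abs{\braket{\pi|\phi}}^2 = 1/N!$ on the database. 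The first step is to decompose the input register: write $\proj{+_x}_{D_x}\ket\phi = \sum_{z=1}^N \ket z_X \ket{\psi_z}_{AYD}$ where $\ket{\psi_z}_{AYD} = \proj{+_x}_{D_x}\bra z_X\ket\phi = \proj{+_x}_{D_x}\ket{\phi_z}$ (note $D_x$ acts trivially on $X$). Since $\OSPO$ acts on $X$ only as the identity, $E^{R,x}_D\OSPO_{XYD}\proj{+_x}_{D_x}\ket\phi = \sum_z \ket z_X (E^{R,x}_D Q^{\SPO,z}_{YD}\proj{+_x}_{D_x}\ket{\phi_z})$, and by orthogonality of the $\ket z_X$ the squared norm is $\sum_z \norm{E^{R,x}_D Q^{\SPO,z}_{YD}\proj{+_x}_{D_x}\ket{\phi_z}}^2$.

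The next step is to split this sum over $z$ according to whether $z \geq x$ or $z < x$, because \cref{lem:small-x-not-touched} tells us that for $z < x$ the query operator acts as the identity on $D_x$ (indeed on all of $D_{\geq z}$ up to $D_x$... more precisely the query with input $z$ does not touch $D_x$ when $z<x$), whereas for $z\geq x$ it genuinely acts on $D_x$. For the terms with $z\geq x$, I would use part~(iii) of \cref{lem:easy}: $\norm{E^{R,x}_D Q^{\SPO,z}_{YD}\proj{+_x}_{D_x}} \leq 2\sqrt{\abs{R_x}/x}$, so these contribute at most $\frac{4\abs{R_x}}{x}\sum_{z\geq x}\norm{\ket{\psi_z}}^2 \leq \frac{4\abs{R_x}}{x}\norm{\proj{+_x}_{D_x}\ket\phi}^2$; the flatness hypothesis (or just $\norm{\proj{+_x}_{D_x}\ket\phi}\le\norm{\ket\phi}$) together with a closer look at which $z$ contribute should yield the $\frac{\abs{R_x}}{x}\norm{\ket{\phi_x}}^2$ term and the ``$1/(x^2 N)$'' correction term (the latter coming from the fact that when $z=x$ one should be more careful about the overlap of $\ket{+_x}$ with the flat-amplitude state, or from an off-diagonal contribution). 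For the terms with $z<x$, \cref{lem:small-x-not-touched} lets me drop $\proj{+_x}_{D_x}$ and commute things so that $E^{R,x}_D Q^{\SPO,z}_{YD}\proj{+_x}_{D_x}\ket{\phi_z} = \Pi^{R,x}_D(I-\proj{+_x}_{D_x})Q^{\SPO,z}_{YD}\ket{\psi_z}$, and then because $z<x$ the relevant factorization index satisfies $\pi_{<x}(z)=z$... this is where I expand $\Pi^{R,x}_D$ in the computational basis on $D_{x^c}$: $\bra{\pi_{x^c}}_{D_{x^c}}$ picks out exactly those $\pi_{x^c}$ with $\pi_{x^c}(z)\in R_x$ (using that $\pi(x)\in R_x$ translates, via the decomposition and $z<x$, to a condition on $D_{x^c}$ evaluated at $z$), giving the third summand $\frac1x\sum_{z<x}\sum_{\pi_{x^c}:\pi_{x^c}(z)\in R_x}\norm{\bra{\pi_{x^c}}_{D_{x^c}}\ket{\phi_z}}^2$ after bounding $\norm{(I-\proj{+_x}_{D_x})Q\ket{\psi_z}}$ appropriately and tracking the $1/x$ factor.

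The main obstacle I expect is the bookkeeping around the $z = x$ (and possibly $z$ near $x$) boundary case and correctly isolating the $\frac{\abs{R_x}}{x^2 N}$ correction: this term is small and clearly arises from the interplay between the uniform superposition $\ket{+_x}$ (weight $1/x$ per basis state) and the flat database amplitude ($1/N!$ per permutation, i.e.\ $1/N$ per value of $t_x$ after summing out the rest), so it should come from an identity like $\abs{\braket{+_x|\text{flat in }t_x}}^2$ giving $x\cdot\frac1x\cdot\frac1N = \frac1N$ scaled down by another $\abs{R_x}/x$ from the relation constraint — but getting the constants and the exact form right, while simultaneously handling the non-commutativity of $\OSPO$ with $\proj{+_x}_{D_x}$ for $z\geq x$, is the delicate part. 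Everything else is triangle inequalities, \cref{lem:easy}, \cref{lem:help}, and \cref{lem:small-x-not-touched} applied mechanically.
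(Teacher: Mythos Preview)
Your overall framework is right (apply \cref{lem:easy}(ii), decompose over~$z$, then handle cases), but you have the direction of \cref{lem:small-x-not-touched} reversed, and this causes the case analysis to fail.

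\cref{lem:small-x-not-touched} says that a query with input~$z$ acts as the identity on~$D_{x'}$ for~$x'<z$. So the register~$D_x$ is untouched precisely when~$x<z$, i.e.\ when~$z>x$ --- not when~$z<x$ as you wrote. Consequently the paper's split is: for~$z>x$ the term vanishes outright (since $Q^{\SPO,z}_{YD}$ commutes with $\proj{+_x}_{D_x}$ and~$E^{R,x}_D\proj{+_x}_{D_x}=0$); for~$z=x$ one uses \cref{lem:easy}(iii) to get the~$\frac{\abs{R_x}}x\norm{\ket{\phi_x}}^2$ contribution; and the~$z<x$ range is the hard case requiring an explicit computation of~$\OSPOz_{YD}$ acting on~$\ket{+_x}_{D_x}\ket{\pi_{x^c}}_{D_{x^c}}$. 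It is this last computation that produces \emph{both} the third summand of~$\uglyterm_{\ket\phi,x}$ \emph{and} the~$\frac{\abs{R_x}}{x^2N}$ correction (the flatness hypothesis is used to evaluate~$\sum_{\pi_{x^c}}\sum_{t\in[x]\cap\pi_{>x}^{-1}(R_x)}\norm{\bra{\pi_{x^c}}\ket\phi}^2$). Your proposal instead tries to extract the~$\frac{\abs{R_x}}{x^2N}$ term from the~$z\geq x$ contributions and to simplify the~$z<x$ terms via \cref{lem:small-x-not-touched}, neither of which works: the former because for~$z>x$ the contribution is exactly zero, and the latter because for~$z<x$ the query \emph{does} act nontrivially on~$D_x$ and the commutation you invoke is invalid (indeed, if it were valid those terms would vanish and you could not recover the third summand at all). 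A secondary slip: for~$z<x$ it is \emph{not} true that~$\pi_{<x}(z)=z$; rather~$\pi_{<x}\in S_{x-1}$ permutes~$[x-1]$ arbitrarily, and the paper's computation tracks this carefully.
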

%\cm{I introduced the notation $\uglyterm_{\ket\phi, x}$ for future benefit.}
\begin{proof}
By \cref{lem:easy}~(ii), we have
\begin{align}\label{eq:forward rhs}
\norm*{ E^{R,x}_D \OSPO_{XYD} \ket\phi } - \norm[\big]{ E^{R,x}_D \ket\phi }
 \leq \sqrt{\frac{\abs{R_x}}x} \norm[\big]{ \parens[\big]{ I - \proj{+_x}_{D_x} } \ket\phi } + \norm[\big]{ E^{R,x}_D \OSPO_{XYD} \proj{+_x}_{D_x} \ket\phi },
\end{align}
To upper bound the right-hand side norm, we compute its square as follows:
\begin{align*}
  \norm*{ E^{R,x}_D \OSPO_{XYD} \proj{+_x}_{D_x} \ket\phi }^2
= \sum_{z\in[N]} \norm*{ \bra z_X E^{R,x}_D \OSPO_{XYD} \proj{+_x}_{D_x} \ket\phi }^2
= \sum_{z\in[N]} \norm*{ E^{R,x}_D \OSPOz_{YD} \proj{+_x}_{D_x} \ket{\phi_z} }^2,
\end{align*}
where we defined $\OSPOz_{YD} := \bra z_X \OSPO_{XYD} \ket z_X$.
We analyze the summands and distinguish three cases:
\begin{enumerate}
\item For $z > x$, $\OSPOz_{YD}$ acts trivially on~$D_x$ by \cref{lem:small-x-not-touched}, and hence commutes with~$\proj{+_x}_{D_x}$.
As~$E^{R,x}_D \proj{+_x}_{D_x} = 0$, we see that the corresponding summands vanish.

\item For $z = x$, then we have the following bound from \cref{lem:easy}~(iii):
\begin{align*}
\norm*{ E^{R,x}_D \OSPOx_{YD} \proj{+_x}_{D_x} \ket{\phi_x} }^2 \leq 4 \frac{\abs{R_x}}x \, \norm{ \ket{\phi_x} }^2.
\end{align*}

\item For $z < x$, the argument is more involved.
We begin by computing the action of~$\OSPOz_{YD}$ on computational basis states.
For $\pi \in S_N$, let us write $\pi = \pi_{>x} \tp x t \pi_{<x}$ as in \cref{subsec:active}, with $t\in[x]$, and define~$\pi_{x^c} := \pi_{>x} \pi_{<x}$.
We may identify~$\pi_{x^c}$ with the indices~$t_k$ for~$k\neq x$; accordingly we shall write~$\ket\pi_D = \ket t_{D_x} \ket{\pi_{x^c}}_{D_{x^c}}$.
Note that if~$\pi_{<x}(z) = t$ then $\pi(z) = \pi_{>x}(x)$, while otherwise~$\pi(z) = \pi_{x^c}(z)$.
Thus, for every $\pi \in S_N$ and~$y \in [N]$, we have
\begin{align*}
  \OSPOz_{YD} \ket{y,\pi}_{YD}
= \OSPOz_{YD} \ket{y,t,\pi_{x^c}}_{YD_xD_{x^c}}
= \begin{cases}
  \ket{y \op \pi_{>x}(x),t,\pi_{x^c}}_{YD_xD_{x^c}} & \text{if $\pi_{<x}(z) = t$}, \\
  \ket{y \op \pi_{x^c}(z),t,\pi_{x^c}}_{YD_xD_{x^c}} & \text{otherwise.}
\end{cases}
\end{align*}
It follows that
\begin{align*}
  \OSPOz_{YD} \ket y_Y \ket{+_x}_{D_x} \ket{\pi_{x^c}}_{D_{x^c}}
&= \ket{y \op \pi_{x^c}(z)}_Y \ket{+_x}_{D_x} \ket{\pi_{x^c}}_{D_{x^c}} \\
&+ \frac1{\sqrt x} \parens[\Big]{ \ket{y \op \pi_{>x}(x)}_Y - \ket{y \op \pi_{x^c}(z)}_Y} \ket{\pi_{<x}(z)}_{D_x} \ket{\pi_{x^c}}_{D_{x^c}},
\end{align*}
hence
\begin{align*}
  \parens[\big]{ I - \proj{+_x}_{D_x} } \OSPOz_{YD} \ket y_Y \ket{+_x}_{D_x} \ket{\pi_{x^c}}_{D_{x^c}}
&= \frac1{\sqrt x} \parens[\Big]{ \ket{y \op \pi_{>x}(x)}_Y - \ket{y \op \pi_{x^c}(z)}_Y} \\
&\ot \parens[\Big]{ \ket{\pi_{<x}(z)}_{D_x} - \frac1{\sqrt x} \ket{+_x}_{D_x} } \ot \ket{\pi_{x^c}}_{D_{x^c}}
\end{align*}
and finally
\begin{align*}
&\quad E^{R,x}_D \OSPOz_{YD} \ket y_Y \ket{+_x}_{D_x} \ket{\pi_{x^c}}_{D_{x^c}} \\
&= \frac1{\sqrt x} \parens[\Big]{ \ket{y \op \pi_{>x}(x)}_Y - \ket{y \op \pi_{x^c}(z)}_Y}
\ot \parens[\Big]{ \boldsymbol{1}_{\pi_{x^c}(z) \in R_x} \ket{\pi_{<x}(z)}_{D_x} - \frac1x \sum_{t \in [x] \cap \pi_{>x}^{-1}(R_x)} \ket t_{D_x} } \ot \ket{\pi_{x^c}}_{D_{x^c}} \\
&= M_{YD_{x^c}}^{(z)} \parens*{ \frac1{\sqrt x} \ket y_Y \ot \parens[\Big]{ \boldsymbol{1}_{\pi_{x^c}(z) \in R_x} \ket{\pi_{<x}(z)}_{D_x} - \frac1x \sum_{t \in [x] \cap \pi_{>x}^{-1}(R_x)} \ket t_{D_x} } \ot \ket{\pi_{x^c}}_{D_{x^c}} },
\end{align*}
where the operator~$M_{YD_{x^c}}^{(z)}$ is defined by~$M_{YD_{x^c}}^{(z)} \ket y_Y \ket{\pi_{x^c}}_{D_{x^c}} = \parens*{ \ket{y \op \pi_{>x}(x)}_Y - \ket{y \op \pi_{x^c}(z)}_Y } \ket{\pi_{x^c}}_{D_{x^c}}$ and has operator norm~$\leq\sqrt2$.
Thus:
\begin{align*}
  E^{R,x}_D \OSPOz_{YD} \ket{+_x}_{D_x} \ket{\pi_{x^c}}_{D_{x^c}}
= M_{YD_{x^c}}^{(z)} \frac1{\sqrt x} \parens[\Big]{ \boldsymbol{1}_{\pi_{x^c}(z) \in R_x} \ket{\pi_{<x}(z)}_{D_x} - \frac1x \sum_{t \in [x] \cap \pi_{>x}^{-1}(R_x)} \ket t_{D_x} } \ket{\pi_{x^c}}_{D_{x^c}}.
\end{align*}
We can now bound the desired norm:
\begin{align*}
&\quad \norm*{ E^{R,x}_D \OSPOz_{YD} \proj{+_x}_{D_x} \ket{\phi_z} }^2 \\
&= \norm[\Big]{ E^{R,x}_D \OSPOz_{YD} \proj{+_x}_{D_x} \sum_{\pi_{x^c}} \proj{\pi_{x^c}}_{D_{x^c}} \ket{\phi_z} }^2 \\
&= \norm[\Big]{ \sum_{\pi_{x^c}} E^{R,x}_D \OSPOz_{YD} \ket{+_x}_{D_x} \ket{\pi_{x^c}}_{D_{x^c}} \bra{+_x}_{D_x} \bra{\pi_{x^c}}_{D_{x^c}} \ket{\phi_z} }^2 \\
&= \norm[\Big]{ M_{YD_{x^c}}^{(z)} \sum_{\pi_{x^c}} \frac1{\sqrt x} \parens[\Big]{ \boldsymbol{1}_{\pi_{x^c}(z) \in R_x} \ket{\pi_{<x}(z)}_{D_x} - \frac1x \sum_{t \in [x] \cap \pi_{>x}^{-1}(R_x)} \ket t_{D_x} } \ket{\pi_{x^c}}_{D_{x^c}} \bra{+_x}_{D_x} \bra{\pi_{x^c}}_{D_{x^c}} \ket{\phi_z} }^2 \\
&\leq \frac2x \sum_{\pi_{x^c}} \norm[\Big]{ \parens[\Big]{ \boldsymbol{1}_{\pi_{x^c}(z) \in R_x} \ket{\pi_{<x}(z)}_{D_x} - \frac1x \sum_{t \in [x] \cap \pi_{>x}^{-1}(R_x)} \ket t_{D_x} } \bra{+_x}_{D_x} \bra{\pi_{x^c}}_{D_{x^c}} \ket{\phi_z} }^2 \\
% &\leq \frac2x \sum_{\pi_{x^c}} \parens*{
    % \norm[\Big]{ \boldsymbol{1}_{\pi_{x^c}(z) \in R_x} \ket{\pi_{<x}(z)}_{D_x} \bra{+_x}_{D_x} \bra{\pi_{x^c}}_{D_{x^c}} \ket{\phi_z} }
% + \norm[\Big]{ \frac1x \sum_{t \in [x] \cap \pi_{>x}^{-1}(R_x)} \ket t_{D_x} \bra{+_x}_{D_x} \bra{\pi_{x^c}}_{D_{x^c}} \ket{\phi_z} }
% }^2 \\
&\leq \frac4x \sum_{\pi_{x^c}} \parens[\bigg]{
    \boldsymbol{1}_{\pi_{x^c}(z) \in R_x} \norm[\Big]{ \bra{+_x}_{D_x} \bra{\pi_{x^c}}_{D_{x^c}} \ket{\phi_z} }^2
+ \frac1{x^2} \norm[\Big]{ \sum_{t \in [x] \cap \pi_{>x}^{-1}(R_x)} \ket t_{D_x} \bra{+_x}_{D_x} \bra{\pi_{x^c}}_{D_{x^c}} \ket{\phi_z} }^2
} \\
% &= \frac4x \sum_{\pi_{x^c}} \parens[\bigg]{
%     \boldsymbol{1}_{\pi_{x^c}(z) \in R_x} \norm[\Big]{ \bra{+_x}_{D_x} \bra{\pi_{x^c}}_{D_{x^c}} \ket{\phi_z} }^2
% + \frac1{x^2} \sum_{t \in [x] \cap \pi_{>x}^{-1}(R_x)} \norm[\big]{ \bra{+_x}_{D_x} \bra{\pi_{x^c}}_{D_{x^c}} \ket{\phi_z} }^2
% } \\
&= \frac4x \sum_{\pi_{x^c} : \pi_{x^c}(z) \in R_x} \norm[\Big]{ \bra{+_x}_{D_x} \bra{\pi_{x^c}}_{D_{x^c}} \ket{\phi_z} }^2
+ \frac4{x^3} \sum_{\pi_{x^c}} \sum_{t \in [x] \cap \pi_{>x}^{-1}(R_x)} \norm[\big]{ \bra{+_x}_{D_x} \bra{\pi_{x^c}}_{D_{x^c}} \ket{\phi_z} }^2 \\
&\leq \frac4x \sum_{\pi_{x^c} : \pi_{x^c}(z) \in R_x} \norm[\Big]{ \bra{\pi_{x^c}}_{D_{x^c}} \ket{\phi_z} }^2
+ \frac4{x^3} \sum_{\pi_{x^c}} \sum_{t \in [x] \cap \pi_{>x}^{-1}(R_x)} \norm[\big]{ \bra{\pi_{x^c}}_{D_{x^c}} \ket{\phi_z} }^2.
\end{align*}
Here the first inequality follows from the bound on the operator norm of $M_{YD_{x^c}}^{(z)}$, and the fact that all $\pi_{x^c}$ are orthogonal, the second inequality follows by Cauchy-Schwarz, and the last inequality follows by the fact that $\ket{+_x}$ is a unit vector.  It follows that
\begin{align*}
&\quad \sum_{z=1}^{x-1} \norm*{ E^{R,x}_D \OSPOz_{YD} \proj{+_x}_{D_x} \ket{\phi_z} }^2 \\
&\leq \frac4x \sum_{z=1}^{x-1} \sum_{\pi_{x^c} : \pi_{x^c}(z) \in R_x} \norm[\Big]{ \bra{\pi_{x^c}}_{D_{x^c}} \ket{\phi_z} }^2
+ \frac4{x^3} \sum_{z=1}^{x-1} \sum_{\pi_{x^c}} \sum_{t \in [x] \cap \pi_{>x}^{-1}(R_x)} \norm[\big]{ \bra{\pi_{x^c}}_{D_{x^c}} \ket{\phi_z} }^2 \\
&\leq \frac4x \sum_{z=1}^{x-1} \sum_{\pi_{x^c} : \pi_{x^c}(z) \in R_x} \norm[\Big]{ \bra{\pi_{x^c}}_{D_{x^c}} \ket{\phi_z} }^2
+ \frac4{x^3} \sum_{\pi_{x^c}} \sum_{t \in [x] \cap \pi_{>x}^{-1}(R_x)} \norm[\big]{ \bra{\pi_{x^c}}_{D_{x^c}} \ket\phi }^2 \\
&= \frac4x \sum_{z=1}^{x-1} \sum_{\pi_{x^c} : \pi_{x^c}(z) \in R_x} \norm[\Big]{ \bra{\pi_{x^c}}_{D_{x^c}} \ket{\phi_z} }^2
+ \frac4{x^2} \sum_{\pi_{x^c}} \sum_{t \in [x] \cap \pi_{>x}^{-1}(R_x)} \frac1{N!},
\end{align*}
where we used that~$\norm{\bra{\pi}_{D} \ket\phi}^2 = \frac1{N!}$ for all $\pi\in S_N$ and hence~$\norm{\bra{\pi_{x^c}}_{D_{x^c}} \ket\phi}^2 = \frac x{N!}$.
Since
\begin{align*}
  \sum_{\pi_{x^c}} \sum_{t \in [x] \cap \pi_{>x}^{-1}(R_x)} \frac1{N!}
% = \frac1{N!} \sum_{t \in [x]} \sum_{\pi_{x^c} : \pi_{>x}(t) \in R_x} 1
= \frac1{N!} \sum_{\pi \in S_N : \pi(x) \in R_x} 1
= \Pr\bigl( \pi(x) \in R_x \mid \pi \leftarrow S_N \bigr)
\leq \frac{\abs{R_x}}N,
\end{align*}
we find that
\begin{align*}
    \sum_{z=1}^{x-1} \norm*{ E^{R,x}_D \OSPOz_{YD} \proj{+_x}_{D_x} \ket{\phi_z} }^2
\leq \frac4x \sum_{z=1}^{x-1} \sum_{\pi_{x^c} : \pi_{x^c}(z) \in R_x} \norm[\Big]{ \bra{\pi_{x^c}}_{D_{x^c}} \ket{\phi_z} }^2
+ \frac4{x^2} \frac{\abs{R_x}}N.
\end{align*}
\end{enumerate}
Altogether, we obtain the following bound for the right-hand side term in \cref{eq:forward rhs}:
\begin{align*}
    \norm*{ E^{R,x}_D \OSPO_{XYD} \proj{+_x}_{D_x} \ket\phi }
&\leq \sqrt{
4 \frac{\abs{R_x}}x \, \norm{ \ket{\phi_x} }^2
+ \frac4x \sum_{z=1}^{x-1} \sum_{\pi_{x^c} : \pi_{x^c}(z) \in R_x} \norm[\Big]{ \bra{\pi_{x^c}}_{D_{x^c}} \ket{\phi_z} }^2
+ \frac4{x^2} \frac{\abs{R_x}}N
} \\
&= 2 \sqrt{
\frac{\abs{R_x}}x \, \norm{ \ket{\phi_x} }^2
+ \frac{\abs{R_x}}{x^2 N}
+ \frac1x \sum_{z=1}^{x-1} \sum_{\pi_{x^c} : \pi_{x^c}(z) \in R_x} \norm[\Big]{ \bra{\pi_{x^c}}_{D_{x^c}} \ket{\phi_z} }^2
}
\end{align*}
and the lemma follows.
\end{proof}

Next, we now bound the effect of an inverse query on the probability amplitude.

\begin{lem}[Inverse query]\label{lem:inverse}
For any pure state~$\ket\phi_{AXYD}$ such that $\norm{\bra\pi_D \ket\phi_{AXYD}}^2 = \frac1{N!}$, and for any~$x\in[N]$, we have
\begin{align*}
 \norm*{ E^{R,x}_D \OinvSPO_{XYD} \ket\phi } - \norm*{ E^{R,x}_D \ket\phi }
&\leq \sqrt{\frac{\abs{R_x}}x} \norm*{ \parens[\big]{ I - \proj{+_x}_{D_x} } \ket\phi }
+ 4 \sqrt{ \uglyterm^{\mathrm{inv}}_{\ket\phi,x}},
\end{align*}
where $\ket{\phi_x}_{AYD} := \bra x_X \ket\phi_{AXYD}$ and
\begin{align*}
\uglyterm^{\mathrm{inv}}_{\ket\phi,x} := \;
		&\frac{\abs{R_x}}x \, \norm{ \ket{\phi_x} }^2 + \frac{\abs{R_x}}{x^2N}
		+ \frac1x \sum_{z \in R_x} \sum_{\pi_{x^c} : \pi_{>x}^{-1}(z) < x} \norm[\Big]{ \bra{\pi_{x^c}}_{D_{x^c}} \ket{\phi_z} }^2 \\
		&+ \frac1x \sum_{z=x+1}^N \sum_{\pi_{x^c} : \pi_{>x}^{-1}(z) = x} \abs{[x] \cap \pi_{>x}^{-1}(R_x)} \, \norm[\Big]{ \bra{\pi_{x^c}}_{D_{x^c}} \ket{\phi_z} }^2
\end{align*}
\end{lem}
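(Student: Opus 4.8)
The plan is to mirror the proof of \cref{lem:forward}. First I would apply \cref{lem:easy}~(ii) with $Q^{\SPO}_{XYD} = \OinvSPO_{XYD}$; this directly produces the first summand $\sqrt{\abs{R_x}/x}\,\norm*{\parens[\big]{I-\proj{+_x}_{D_x}}\ket\phi}$ of the claimed bound and reduces the task to showing $\norm*{ E^{R,x}_D \OinvSPO_{XYD} \proj{+_x}_{D_x} \ket\phi } \le 4\sqrt{\uglyterm^{\mathrm{inv}}_{\ket\phi,x}}$. As in \cref{lem:forward} I would square this and expand over the input register, $\norm*{ E^{R,x}_D \OinvSPO_{XYD} \proj{+_x}_{D_x} \ket\phi }^2 = \sum_{z\in[N]} \norm*{ E^{R,x}_D \OinvSPOz_{YD} \proj{+_x}_{D_x} \ket{\phi_z} }^2$, where $\OinvSPOz_{YD} := \bra z_X \OinvSPO_{XYD} \ket z_X$ and $\ket{\phi_z}_{AYD} := \bra z_X \ket\phi$, and then bound the individual summands.

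The combinatorial heart is the action of $\OinvSPOz_{YD}$ on the register $D_x$. Writing $\pi = \pi_{>x} \tp x{t_x} \pi_{<x}$, so that $\pi^{-1}(z) = (\pi_{<x})^{-1}\!\parens[\big]{\tp x{t_x}(w)}$ with $w := (\pi_{>x})^{-1}(z)$ a quantity depending only on $\pi_{x^c}$, the way $\pi^{-1}(z)$ depends on $t_x$ is dictated by how $w$ compares with $x$. If $w > x$ then $\pi^{-1}(z) = w$ regardless of $t_x$, so $\OinvSPOz_{YD}$ acts as the identity on $D_x$, commutes with $\proj{+_x}_{D_x}$, and since $E^{R,x}_D \proj{+_x}_{D_x} = 0$ the corresponding summands vanish. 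If $w < x$ then $\pi^{-1}(z) = x$ when $t_x = w$ and $\pi^{-1}(z) = (\pi_{<x})^{-1}(w) \in [x-1]$ otherwise, which is the ``single active value'' pattern already seen in the forward case. If $w = x$, equivalently $z = \pi_{>x}(x)$ (which forces $z \ge x$, an elementary fact I would verify by tracking the orbit of $x$ under $\tp{x+1}{t_{x+1}}, \dots, \tp N{t_N}$), then $\pi^{-1}(z) = (\pi_{<x})^{-1}(t_x)$, which depends on the entire value of $t_x$; splitting off $z = x$ lets me restrict this branch to $z > x$.

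With these formulas in hand I would bound the summands in three groups. For $z = x$, invoke \cref{lem:easy}~(iii) to bound the summand by $4\,\abs{R_x}/x \cdot \norm*{\ket{\phi_x}}^2$, matching the first term of $\uglyterm^{\mathrm{inv}}_{\ket\phi,x}$. For the summands with $w < x$, the computation copies \cref{lem:forward}: applying $\parens[\big]{I-\proj{+_x}_{D_x}}$ and then $\Pi^{R,x}_D$ turns the $D_x$-part of the image into $\boldsymbol{1}_{z\in R_x}\ket w_{D_x} - \tfrac1x\sum_{t\in[x]\cap\pi_{>x}^{-1}(R_x)}\ket t_{D_x}$, tensored with an operator on $Y$ of norm $\le\sqrt2$ (here using that $w\in[x]\cap\pi_{>x}^{-1}(R_x)$ iff $z\in R_x$); the operator-norm bound, orthogonality of the $\ket{\pi_{x^c}}_{D_{x^c}}$, Cauchy--Schwarz and $\norm*{\ket{+_x}}=1$ then produce the third term of $\uglyterm^{\mathrm{inv}}_{\ket\phi,x}$, while the ``$-\tfrac1x\sum_t$'' contribution yields the leftover $\abs{R_x}/(x^2N)$ after using the hypothesis $\norm*{\bra\pi_D\ket\phi}^2=1/N!$ and extending the $z$-sum to all of $[N]$. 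Finally, for the summands with $w=x$ (so $z>x$), the difference-operator trick is unavailable, so I would bound crudely: the image is $\tfrac1{\sqrt x}\parens[\big]{\sum_{t\in[x]\cap\pi_{>x}^{-1}(R_x)}c_t\ket t - \tfrac1x\parens[\big]{\sum_{t=1}^x c_t}\sum_{t'\in[x]\cap\pi_{>x}^{-1}(R_x)}\ket{t'}}$ with the $c_t$ distinct $Y$-basis kets, whose squared norm is $O\!\parens[\big]{\abs{[x]\cap\pi_{>x}^{-1}(R_x)}/x}$, and summing over $\pi_{x^c}$ with $(\pi_{>x})^{-1}(z)=x$ and over $z\in\{x+1,\dots,N\}$ reproduces the fourth term of $\uglyterm^{\mathrm{inv}}_{\ket\phi,x}$. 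Adding the four contributions gives $\norm*{ E^{R,x}_D \OinvSPO_{XYD} \proj{+_x}_{D_x} \ket\phi }^2 \le 16\,\uglyterm^{\mathrm{inv}}_{\ket\phi,x}$, and substituting back into \cref{lem:easy}~(ii) completes the proof.

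The main obstacle is getting the case analysis of the action on $D_x$ exactly right: one must recognize that it is $w=(\pi_{>x})^{-1}(z)$, not $z$ itself, that decides whether and how $D_x$ is touched; that $w>x$ makes the query act trivially there; and (the genuinely new feature compared with the forward query) that the $w=x$ branch must be estimated with a coarser bound, which is the reason for the $\abs{[x]\cap\pi_{>x}^{-1}(R_x)}$ weight in the fourth term of $\uglyterm^{\mathrm{inv}}_{\ket\phi,x}$ and the larger constant, because there $\pi^{-1}(z)$ genuinely depends on the full value of $t_x$ rather than merely on whether $t_x$ equals one distinguished value. The rest is careful bookkeeping of constants so the four pieces fit under $16\,\uglyterm^{\mathrm{inv}}_{\ket\phi,x}$.
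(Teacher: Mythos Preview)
Your proposal is correct and follows essentially the same approach as the paper. The only organizational difference is that the paper splits the case analysis primarily by the value of~$z$ (treating $z=x$, $z<x$, and $z>x$ separately) and then, within each $z$-range, by whether $w=\pi_{>x}^{-1}(z)$ lies in $\{<x\}$, $\{=x\}$, or $\{>x\}$, whereas you organize primarily by~$w$; since the terms indexed by different~$\pi_{x^c}$ are orthogonal this is equivalent, and the same bounds (and in fact slightly better constants, which the paper does not exploit) follow.
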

\begin{proof}
By \cref{lem:easy}~(ii), we have
\begin{align}\label{eq:inverse rhs}
\norm*{ E^{R,x}_D \OinvSPO_{XYD} \ket\phi } - \norm*{ E^{R,x}_D \ket\phi }
\leq \sqrt{\frac{\abs{R_x}}x} \norm*{ \parens[\big]{ I - \proj{+_x}_{D_x} } \ket\phi }
+ \norm*{ E^{R,x}_D \OinvSPO_{XYD} \proj{+_x}_{D_x} \ket\phi },
\end{align}
and we need to bound the right-hand side term, which we can rewrite as
\begin{align*}
  \norm*{ E^{R,x}_D \OinvSPO_{XYD} \proj{+_x}_{D_x} \ket\phi }^2
= \sum_{z\in[N]} \norm*{ E^{R,x}_D \OinvSPOz_{YD} \proj{+_x}_{D_x} \ket{\phi_z} }^2,
\end{align*}
where $\OinvSPOz_{YD} := \bra z_X \OinvSPO_{XYD} \ket z_X$.
% We pause to introduce some notation.
% For any permutation~$\pi\in S_N$, we consider the decomposition $\pi = \tp N {t_N} \tp {N\!-\!1} {t_{N-1}} \cdots \tp 2 {t_2} \tp 1 {t_1}$ from \cref{eq:tower-decomposition}.
% Recall that~$\pi_{>x} = \tp N {t_N} \tp {N\!-\!1} {t_{N-1}} \cdots \tp {x+1} {t_{x+1}}$ and $\pi_{<x} = \tp {x-1} {t_{x-1}} \cdots \tp 2 {t_2} \tp 1 {t_1}$.
% For any $t\in[x]$, we denote by $\pi_{\tp x t} := \pi_{>x} \tp x t \pi_{<x}$ the permutation obtained by replacing the $x$-th transposition with~$\tp x t$.
% Finally, we let
% \begin{align*}
% b^k_{x,z} := \frac1{N!} \abs*{
% \braces[\Big]{
%     \pi \in S_N
% \;\Big|\;
%     \abs*{ \braces*{ t \in [x] \mid x \text{ is inverse-active for $\pi_{\tp x t}$ and $z$} } } = k
% }
% }.
% \end{align*}
% In other words, $b^k_{x,z}$ is the fraction of permutations such that there are exactly~$k$ choices for the transposition~$\tp x t$ that render $x$ is inverse-active for~$\pi$ and~$z$.
We analyze the the right-hand side summands and distinguish three cases:
\begin{enumerate}
\item For $z = x$, we have the following bound from \cref{lem:easy}~(iii):
\begin{align*}
    \norm*{ E^{R,x}_D \OinvSPOz_{YD} \proj{+_x}_{D_x} \ket{\phi_x} }^2 \leq 4 \frac{\abs{R_x}}x \, \norm{ \ket{\phi_x} }^2.
\end{align*}
% For $z = x$, note that~$x$ is inverse-active for~$\pi_{\tp x t}$ and~$x$ if and only if~$t_w \neq x$ for all~$w > x$ (regardless of the choice of~$t \in [x]$).
% We thus get:
% \begin{align*}
%     b^k_{x,x} = \begin{cases}
%         \prod_{w=x+1}^N \frac {w-1} w = \frac x N & \text{ if $k = x$, } \\
%         1 - \frac x N & \text{ if $k = 0$, } \\
%         0 & \text{ otherwise.}
%     \end{cases}
% \end{align*}

\item For $z < x$, we proceed similarly as in the corresponding case of \cref{lem:forward}.
For $\pi \in S_N$, let us again write $\pi = \pi_{>x} \tp x t \pi_{<x}$ as in \cref{subsec:active}, with $t\in[x]$, define~$\pi_{x^c} := \pi_{>x} \pi_{<x}$, identify~$\pi_{x^c}$ with the indices~$t_k$ for~$k\neq x$, and write~$\ket\pi_D = \ket t_{D_x} \ket{\pi_{x^c}}_{D_{x^c}}$.
% It follows that
% \begin{align*}
%     b^1_{x,z} = \begin{cases}
%     \prod_{w=x+1}^N \frac{w-1} w = \frac x N & \text{ if $k = 1$, } \\
%     1 - \frac x N & \text{ if $k = 0$, } \\
%     0 & \text{ otherwise.}
%     \end{cases}
% \end{align*}
% \begin{align*}
%     E^{R,x}_D \OinvSPOz_{YD} \proj{+_x}_{D_x} \ket{\phi_z}
% \end{align*}
Let
\begin{align*}
    S_{x,z}
:= \braces[\big]{ \pi_{x^c} \;:\; \pi_{>x}^{-1}(z) \leq x }
= \braces[\big]{ \pi_{x^c} \;:\; \pi_{>x}^{-1}(z) < x }
= \braces[\big]{ \pi_{x^c} \;:\; \pi_{>x}^{-1}(z) = z }
= \braces[\big]{ \pi_{x^c} \;:\; t_w \neq z \; \forall w > x }.
\end{align*}
If $\pi_{x^c} \not\in S_{x,z}$, then $\pi^{-1}(z) = z$ for any choice of~$t$. Hence $\OinvSPOz_{YD}$ acts trivially on~$D_x$ and we have
\begin{align}\label{eq:inv ii easy}
  E^{R,x}_D \OinvSPOz_{YD} \ket{+_x}_{D_x} \ket{\pi_{x^c}}_{D_{x^c}}
= E^{R,x}_D  \ket{+_x}_{D_x} \OinvSPOz_{YD} \ket{\pi_{x^c}}_{D_{x^c}}
= 0.
\end{align}
Now suppose that~$\pi_{x^c} \in S_{x,z}$.
Then,
\begin{align*}
  \OinvSPOz_{YD} \ket{y,t,\pi_{x^c}}_{YD_xD_{x^c}}
= \ket{y \op \pi^{-1}(z),t,\pi_{x^c}}_{YD_xD_{x^c}}
% = \ket{y \op \pi_{<x}^{-1} \tp x t (z)}_Y \ot \ket t_{D_x} \ot \ket{\pi_{x^c}}_{D_{x^c}}
= \begin{cases}
  \ket{y \op x,t,\pi_{x^c}}_{YD_xD_{x^c}} & \text{ if $t = z$, } \\
  \ket{y \op \pi_{<x}^{-1}(z),t,\pi_{x^c}}_{YD_xD_{x^c}} & \text{ otherwise.}
\end{cases}
\end{align*}
It follows that
\begin{align*}
  \OinvSPOz_{YD} \ket y_Y \ket{+_x}_{D_x} \ket{\pi_{x^c}}_{D_{x^c}}
&= \ket{y \op \pi_{<x}^{-1}(z)}_Y \ket{+_x}_{D_x} \ket{\pi_{x^c}}_{D_{x^c}} \\
&+ \frac1{\sqrt x} \parens[\Big]{ \ket{y \op x}_Y - \ket{y \op \pi_{<x}^{-1}(z)}_Y} \ket{z}_{D_x} \ket{\pi_{x^c}}_{D_{x^c}},
\end{align*}
hence
\begin{align*}
  \parens[\big]{ I - \proj{+_x}_{D_x} } \OinvSPOz_{YD} \ket y_Y \ket{+_x}_{D_x} \ket{\pi_{x^c}}_{D_{x^c}}
&= \frac1{\sqrt x} \parens[\Big]{ \ket{y \op x}_Y - \ket{y \op \pi_{<x}^{-1}(z)}_Y} \\
&\ot \parens[\Big]{ \ket{z}_{D_x} - \frac1{\sqrt x} \ket{+_x}_{D_x} } \ot \ket{\pi_{x^c}}_{D_{x^c}}
\end{align*}
and finally
\begin{align*}
&\quad E^{R,x}_D \OinvSPOz_{YD} \ket y_Y \ket{+_x}_{D_x} \ket{\pi_{x^c}}_{D_{x^c}} \\
&= \frac1{\sqrt x} \parens[\Big]{ \ket{y \op x}_Y - \ket{y \op \pi_{<x}^{-1}(z)}_Y}
\ot \parens[\Big]{ \boldsymbol{1}_{z \in R_x} \ket{z}_{D_x} - \frac1x \sum_{t \in [x] \cap \pi_{>x}^{-1}(R_x)} \ket t_{D_x} } \ot \ket{\pi_{x^c}}_{D_{x^c}} \\
&= M_{YD_{x^c}}^{(z)} \parens*{ \frac1{\sqrt x} \ket{y}_Y \ot \parens[\Big]{ \boldsymbol{1}_{z \in R_x} \ket{z}_{D_x} - \frac1x \sum_{t \in [x] \cap \pi_{>x}^{-1}(R_x)} \ket t_{D_x} } \ot \ket{\pi_{x^c}}_{D_{x^c}} },
\end{align*}
where the operator~$M_{YD_{x^c}}^{(z)}$ is defined by~$M_{YD_{x^c}}^{(z)} \ket y_Y \ket{\pi_{x^c}}_{YD_{x^c}} = \parens*{ \ket{y \op x}_Y - \ket{y \op \pi_{<x}^{-1}(z)}_Y } \ket{\pi_{x^c}}_{D_{x^c}}$ and has operator norm~$\leq\sqrt2$.
Thus:
\begin{align*}
  E^{R,x}_D \OinvSPOz_{YD} \ket{+_x}_{D_x} \ket{\pi_{x^c}}_{D_{x^c}}
= M_{YD_{x^c}}^{(z)} \frac1{\sqrt x} \parens[\Big]{ \boldsymbol{1}_{z \in R_x} \ket{z}_{D_x} - \frac1x \sum_{t \in [x] \cap \pi_{>x}^{-1}(R_x)} \ket t_{D_x} } \ot \ket{\pi_{x^c}}_{D_{x^c}}
\end{align*}
Using the above and \cref{eq:inv ii easy}, we can now bound the desired norm:
\begin{align*}
&\quad \norm*{ E^{R,x}_D \OinvSPOz_{YD} \proj{+_x}_{D_x} \ket{\phi_z} }^2 \\
&= \norm[\Big]{ E^{R,x}_D \OinvSPOz_{YD} \proj{+_x}_{D_x} \sum_{\pi_{x^c}} \proj{\pi_{x^c}}_{D_{x^c}} \ket{\phi_z} }^2 \\
&= \norm[\Big]{ \sum_{\pi_{x^c} \in S_{x,z}} E^{R,x}_D \OinvSPOz_{YD} \ket{+_x}_{D_x} \ket{\pi_{x^c}}_{D_{x^c}} \bra{+_x}_{D_x} \bra{\pi_{x^c}}_{D_{x^c}} \ket{\phi_z} }^2 \\
&= \norm[\Big]{ M_{YD_{x^c}}^{(z)} \sum_{\pi_{x^c} \in S_{x,z}} \frac1{\sqrt x} \parens[\Big]{ \boldsymbol{1}_{z \in R_x} \ket{z}_{D_x} - \frac1x \sum_{t \in [x] \cap \pi_{>x}^{-1}(R_x)} \ket t_{D_x} } \ket{\pi_{x^c}}_{D_{x^c}} \bra{+_x}_{D_x} \bra{\pi_{x^c}}_{D_{x^c}} \ket{\phi_z} }^2 \\
&\leq \frac2x \sum_{\pi_{x^c} \in S_{x,z}} \norm[\Big]{ \parens[\Big]{ \boldsymbol{1}_{z \in R_x} \ket{z}_{D_x} - \frac1x \sum_{t \in [x] \cap \pi_{>x}^{-1}(R_x)} \ket t_{D_x} } \bra{+_x}_{D_x} \bra{\pi_{x^c}}_{D_{x^c}} \ket{\phi_z} }^2 \\
&\leq \frac4x \sum_{\pi_{x^c} \in S_{x,z}} \parens[\bigg]{
  \boldsymbol{1}_{z \in R_x} \norm[\Big]{ \bra{+_x}_{D_x} \bra{\pi_{x^c}}_{D_{x^c}} \ket{\phi_z} }^2
+ \frac1{x^2} \norm[\Big]{ \sum_{t \in [x] \cap \pi_{>x}^{-1}(R_x)} \ket t_{D_x} \bra{+_x}_{D_x} \bra{\pi_{x^c}}_{D_{x^c}} \ket{\phi_z} }^2
} \\
&= \frac4x \sum_{\pi_{x^c} \in S_{x,z}} \boldsymbol{1}_{z \in R_x} \norm[\Big]{ \bra{+_x}_{D_x} \bra{\pi_{x^c}}_{D_{x^c}} \ket{\phi_z} }^2
+ \frac4{x^3} \sum_{\pi_{x^c} \in S_{x,z}} \sum_{t \in [x] \cap \pi_{>x}^{-1}(R_x)} \norm[\Big]{ \bra{+_x}_{D_x} \bra{\pi_{x^c}}_{D_{x^c}} \ket{\phi_z} }^2 \\
&\leq \frac4x \sum_{\pi_{x^c} \in S_{x,z}} \boldsymbol{1}_{z \in R_x} \norm[\Big]{ \bra{\pi_{x^c}}_{D_{x^c}} \ket{\phi_z} }^2
+ \frac4{x^3} \sum_{\pi_{x^c} \in S_{x,z}} \sum_{t \in [x] \cap \pi_{>x}^{-1}(R_x)} \norm[\Big]{ \bra{\pi_{x^c}}_{D_{x^c}} \ket{\phi_z} }^2
\end{align*}
following the same reasoning as the proof of \cref{lem:forward}.
% \gm{Here the first inequality follows from the bound on the operator norm of $M_{YD_{x^c}}^{(z)}$, and the fact that all $\pi_{x^c}$ are orthogonal, the second inequality follows by Cauchy-Schwarz, and the the last inequality follows by the fact that $\ket{+_x}$ is a unit vector. (Correct? BTW should we recall in the prelim Cauchy-Schwarz, or is it a noob thing to do that?)}
% \MW{Correct! Perhaps good to explain this reasoning when we use it for the first time in the proof of Lemma~6.1, and then just say something like ``similarly as in the proof of Lemma~6.1'' when we use it again here.}
It follows that
\begin{align*}
&\quad \sum_{z=1}^{x-1} \norm*{ E^{R,x}_D \OinvSPOz_{YD} \proj{+_x}_{D_x} \ket{\phi_z} }^2 \\
&\leq \frac4x \sum_{z=1}^{x-1} \sum_{\pi_{x^c} \in S_{x,z}} \boldsymbol{1}_{z \in R_x} \norm[\Big]{ \bra{\pi_{x^c}}_{D_{x^c}} \ket{\phi_z} }^2
+ \frac4{x^3} \sum_{z=1}^{x-1} \sum_{\pi_{x^c} \in S_{x,z}} \sum_{t \in [x] \cap \pi_{>x}^{-1}(R_x)} \norm[\Big]{ \bra{\pi_{x^c}}_{D_{x^c}} \ket{\phi_z} }^2 \\
&\leq \frac4x \sum_{z=1}^{x-1} \sum_{\pi_{x^c} \in S_{x,z}} \boldsymbol{1}_{z \in R_x} \norm[\Big]{ \bra{\pi_{x^c}}_{D_{x^c}} \ket{\phi_z} }^2
+ \frac4{x^3} \sum_{\pi_{x^c}} \sum_{t \in [x] \cap \pi_{>x}^{-1}(R_x)} \norm[\Big]{ \bra{\pi_{x^c}}_{D_{x^c}} \ket\phi }^2 \\
&\leq \frac4x \sum_{z=1}^{x-1} \sum_{\pi_{x^c} \in S_{x,z}} \boldsymbol{1}_{z \in R_x} \norm[\Big]{ \bra{\pi_{x^c}}_{D_{x^c}} \ket{\phi_z} }^2
+ \frac4{x^2} \frac{\abs{R_x}}N \\
&= \frac4x \sum_{z=1}^{x-1} \sum_{\pi_{x^c} : \pi_{>x}^{-1}(z) < x} \boldsymbol{1}_{z \in R_x} \norm[\Big]{ \bra{\pi_{x^c}}_{D_{x^c}} \ket{\phi_z} }^2
+ \frac4{x^2} \frac{\abs{R_x}}N,
\end{align*}
where the last inequality follows from the same argument as in the proof of \cref{lem:forward} and is using the assumption that~$\norm{\bra{\pi_{x^c}}_{D_{x^c}} \ket\phi}^2 = \frac x{N!}$ for all~$\pi\in S_N$.
\item For $z > x$, we use the same notation as above, but instead of~$S_{x,z}$ we consider
\begin{align*}
    S'_{x,z} &:= \braces[\big]{ \pi_{x^c} \;:\; \pi_{>x}^{-1}(z) < x }, \\
    S''_{x,z}& := \braces[\big]{ \pi_{x^c} \;:\; \pi_{>x}^{-1}(z) = x }.
\end{align*}
If $\pi_{x^c} \not\in S'_{x,z} \cup S''_{x,z}$, then we see as above that~$\pi^{-1}(z)$ does not depend on the choice of~$t$ and hence
\begin{align}\label{eq:inv iii easy}
  E^{R,x}_D \OinvSPOz_{YD} \ket{+_x}_{D_x} \ket{\pi_{x^c}}_{D_{x^c}}
= E^{R,x}_D  \ket{+_x}_{D_x} \OinvSPOz_{YD} \ket{\pi_{x^c}}_{D_{x^c}}
= 0.
\end{align}
Next, suppose that~$\pi_{x^c} \in S'_{x,z}$.
Then,
\begin{align*}
  \OinvSPOz_{YD} \ket{y,t,\pi_{x^c}}_{YD_xD_{x^c}}
% = \ket{y \op \pi^{-1}(z),t,\pi_{x^c}}_{YD_xD_{x^c}}
% = \ket{y \op \pi_{<x}^{-1} \tp x t \pi_{>x}^{-1}(z),t,\pi_{x^c}}_{YD_xD_{x^c}} \\
= \begin{cases}
  \ket{y \op x,t,\pi_{x^c}}_{YD_xD_{x^c}} & \text{ if $t = \pi_{>x}^{-1}(z)$, } \\
  \ket{y \op \pi_{x^c}^{-1}(z),t,\pi_{x^c}}_{YD_xD_{x^c}} & \text{ otherwise.}
\end{cases}
\end{align*}
It follows that
\begin{align*}
  \OinvSPOz_{YD} \ket y_Y \ket{+_x}_{D_x} \ket{\pi_{x^c}}_{D_{x^c}}
&= \ket{y \op \pi_{x^c}^{-1}(z)}_Y \ket{+_x}_{D_x} \ket{\pi_{x^c}}_{D_{x^c}} \\
&+ \frac1{\sqrt x} \parens[\Big]{ \ket{y \op x}_Y - \ket{y \op \pi_{x^c}^{-1}(z)}_Y} \ket{\pi_{>x}^{-1}(z)}_{D_x} \ket{\pi_{x^c}}_{D_{x^c}},
\end{align*}
hence
\begin{align*}
  \parens[\big]{ I - \proj{+_x}_{D_x} } \OinvSPOz_{YD} \ket y_Y \ket{+_x}_{D_x} \ket{\pi_{x^c}}_{D_{x^c}}
&= \frac1{\sqrt x} \parens[\Big]{ \ket{y \op x}_Y - \ket{y \op \pi_{x^c}^{-1}(z)}_Y} \\
&\ot \parens[\Big]{ \ket{\pi_{>x}^{-1}(z)}_{D_x} - \frac1{\sqrt x} \ket{+_x}_{D_x} } \ot \ket{\pi_{x^c}}_{D_{x^c}}
\end{align*}
and finally
\begin{align*}
&\quad E^{R,x}_D \OinvSPOz_{YD} \ket y_Y \ket{+_x}_{D_x} \ket{\pi_{x^c}}_{D_{x^c}} \\
&= \frac1{\sqrt x} \parens[\Big]{ \ket{y \op x}_Y - \ket{y \op \pi_{x^c}^{-1}(z)}_Y}
\ot \parens[\Big]{ \boldsymbol{1}_{z \in R_x} \ket{\pi_{>x}^{-1}(z)}_{D_x} - \frac1x \sum_{t \in [x] \cap \pi_{>x}^{-1}(R_x)} \ket t_{D_x} } \ot \ket{\pi_{x^c}}_{D_{x^c}} \\
&= M_{YD_{x^c}}^{(z)} \parens*{ \frac1{\sqrt x} \ket{y}_Y \ot \parens[\Big]{ \boldsymbol{1}_{z \in R_x} \ket{\pi_{>x}^{-1}(z)}_{D_x} - \frac1x \sum_{t \in [x] \cap \pi_{>x}^{-1}(R_x)} \ket t_{D_x} } \ot \ket{\pi_{x^c}}_{D_{x^c}} },
\end{align*}
where the operator~$M'_{YD_{x^c},z}$ is defined by~$M'_{YD_{x^c},z} \ket y_Y \ket{\pi_{x^c}}_{YD_{x^c}} = \parens*{ \ket{y \op x}_Y - \ket{y \op \pi_{x^c}^{-1}(z)}_Y } \ket{\pi_{x^c}}_{D_{x^c}}$ and has operator norm~$\leq\sqrt2$.
Thus:
\begin{align}\label{eq:inv iii S'}
  E^{R,x}_D \OinvSPOz_{YD} \ket{+_x}_{D_x} \ket{\pi_{x^c}}_{D_{x^c}}
= M'_{YD_{x^c},z} \frac1{\sqrt x} \parens[\Big]{ \boldsymbol{1}_{z \in R_x} \ket{\pi_{>x}^{-1}(z)}_{D_x} - \frac1x \!\!\!\!\sum_{t \in [x] \cap \pi_{>x}^{-1}(R_x)}\!\!\!\!\!\!\!\! \ket t_{D_x} } \ot \ket{\pi_{x^c}}_{D_{x^c}}.
\end{align}
Finally, suppose that~$\pi_{x^c} \in S''_{x,z}$.
Then,
\begin{align*}
  \OinvSPOz_{YD} \ket{y,t,\pi_{x^c}}_{YD_xD_{x^c}}
% = \ket{y \op \pi^{-1}(z),t,\pi_{x^c}}_{YD_xD_{x^c}}
% = \ket{y \op \pi_{<x}^{-1} \tp x t \pi_{>x}^{-1}(z),t,\pi_{x^c}}_{YD_xD_{x^c}} \\
= \ket{y \op \pi_{<x}^{-1}(t),t,\pi_{x^c}}_{YD_xD_{x^c}},
\end{align*}
hence
\begin{align*}
    \parens[\big]{ I - \proj{+_x}_{D_x} } \OinvSPOz_{YD} \ket y_Y \ket{+_x}_{D_x} \ket{\pi_{x^c}}_{D_{x^c}}
% = \parens[\big]{ I - \proj{+_x}_{D_x} } \frac1{\sqrt x}\sum_{t=1}^x \ket{y \op \pi_{<x}^{-1}(t),t,\pi_{x^c}}_{YD_xD_{x^c}} \\
&= \frac1{\sqrt x}\sum_{t=1}^x \ket{y \op \pi_{<x}^{-1}(t)}_Y \\
&\ot \parens*{ \ket t_{D_x} - \frac1{\sqrt x}\ket{+_x}_{D_x} }  \ot \ket{\pi_{x^c}}_{D_{x^c}},
\end{align*}
and finally
\begin{align*}
&\quad E^{R,x}_D \OinvSPOz_{YD} \ket y_Y \ket{+_x}_{D_x} \ket{\pi_{x^c}}_{D_{x^c}} \\
&= \frac1{\sqrt x}\sum_{t=1}^x \ket{y \op \pi_{<x}^{-1}(t)}_Y \ot \parens*{ \boldsymbol{1}_{\pi_{>x}(t) \in R_x} \ket t_{D_x} - \frac1x\sum_{t' \in [x] \cap \pi_{>x}^{-1}(R_x)} \ket{t'}_{D_x} }  \ot \ket{\pi_{x^c}}_{D_{x^c}} \\
&= M''_{YD} \parens*{ \frac1{\sqrt x} \ket y_Y \ot \sum_{t \in [x] \cap \pi_{>x}^{-1}(R_x)} \ket{t}_{D_x} \ot \ket{\pi_{x^c}}_{D_{x^c}} },
\end{align*}
with~$M''_{YD} \ket{y,t,\pi_{x^c}} := ( \ket{y \op \pi_{<x}^{-1}(t)}_Y - \frac1x \sum_{t'=1}^x \ket{y \op t'}_Y) \ket t_{D_x} \ket {\pi_{x^c}}_{D_{x^c}}$, an operator of norm~$\leq 2$, hence
\begin{align*}
  E^{R,x}_D \OinvSPOz_{YD} \ket{+_x}_{D_x} \ket{\pi_{x^c}}_{D_{x^c}}
= M''_{YD} \frac1{\sqrt x} \sum_{t \in [x] \cap \pi_{>x}^{-1}(R_x)} \ket{t}_{D_x} \ot \ket{\pi_{x^c}}_{D_{x^c}}.
\end{align*}
Together with \cref{eq:inv iii easy,eq:inv iii S'}, we obtain
\begin{align*}
&\quad \norm*{ E^{R,x}_D \OinvSPOz_{YD} \proj{+_x}_{D_x} \ket{\phi_z} }^2 \\
&= \norm[\Big]{ E^{R,x}_D \OinvSPOz_{YD} \proj{+_x}_{D_x} \sum_{\pi_{x^c}} \proj{\pi_{x^c}}_{D_{x^c}} \ket{\phi_z} }^2 \\
&= \norm[\Big]{ \sum_{\pi_{x^c} \in S'_{x,z} \cup S''_{x,z}} E^{R,x}_D \OinvSPOz_{YD} \ket{+_x}_{D_x} \ket{\pi_{x^c}}_{D_{x^c}} \bra{+_x}_{D_x} \bra{\pi_{x^c}}_{D_{x^c}} \ket{\phi_z} }^2 \\
&\leq 2 \, \norm[\Big]{ \sum_{\pi_{x^c} \in S'_{x,z}} E^{R,x}_D \OinvSPOz_{YD} \ket{+_x}_{D_x} \ket{\pi_{x^c}}_{D_{x^c}} \bra{+_x}_{D_x} \bra{\pi_{x^c}}_{D_{x^c}} \ket{\phi_z} }^2 \\
&+ 2 \, \norm[\Big]{ \sum_{\pi_{x^c} \in S''_{x,z}} E^{R,x}_D \OinvSPOz_{YD} \ket{+_x}_{D_x} \ket{\pi_{x^c}}_{D_{x^c}} \bra{+_x}_{D_x} \bra{\pi_{x^c}}_{D_{x^c}} \ket{\phi_z} }^2 \\
&= 2 \, \norm[\Big]{ M'_{YD_{x^c},z} \sum_{\pi_{x^c} \in S'_{x,z}} \frac1{\sqrt x} \parens[\Big]{ \boldsymbol{1}_{z \in R_x} \ket{\pi_{>x}^{-1}(z)}_{D_x} - \frac1x \!\!\!\!\sum_{t \in [x] \cap \pi_{>x}^{-1}(R_x)}\!\!\!\!\!\!\!\! \ket t_{D_x} } \ket{\pi_{x^c}}_{D_{x^c}} \bra{+_x}_{D_x} \bra{\pi_{x^c}}_{D_{x^c}} \ket{\phi_z} }^2 \\
&+ 2 \, \norm[\Big]{ M''_{YD} \sum_{\pi_{x^c} \in S''_{x,z}} \frac1{\sqrt x} \sum_{t \in [x] \cap \pi_{>x}^{-1}(R_x)} \ket{t}_{D_x} \ket{\pi_{x^c}}_{D_{x^c}} \bra{+_x}_{D_x} \bra{\pi_{x^c}}_{D_{x^c}} \ket{\phi_z} }^2 \\
&\leq \frac4x \sum_{\pi_{x^c} \in S'_{x,z}} \norm[\Big]{ \parens[\Big]{ \boldsymbol{1}_{z \in R_x} \ket{\pi_{>x}^{-1}(z)}_{D_x} - \frac1x \!\!\!\!\sum_{t \in [x] \cap \pi_{>x}^{-1}(R_x)}\!\!\!\!\!\!\!\! \ket t_{D_x} } \bra{+_x}_{D_x} \bra{\pi_{x^c}}_{D_{x^c}} \ket{\phi_z} }^2 \\
&+ \frac8x \sum_{\pi_{x^c} \in S''_{x,z}}  \norm[\Big]{ \sum_{t \in [x] \cap \pi_{>x}^{-1}(R_x)} \ket{t}_{D_x} \bra{+_x}_{D_x} \bra{\pi_{x^c}}_{D_{x^c}} \ket{\phi_z} }^2 \\
&\leq \frac8x \sum_{\pi_{x^c} \in S'_{x,z}} \boldsymbol{1}_{z \in R_x} \norm[\Big]{ \bra{+_x}_{D_x} \bra{\pi_{x^c}}_{D_{x^c}} \ket{\phi_z} }^2
+ \frac8{x^3} \sum_{\pi_{x^c} \in S'_{x,z}} \sum_{t \in [x] \cap \pi_{>x}^{-1}(R_x)} \norm[\Big]{ \bra{+_x}_{D_x} \bra{\pi_{x^c}}_{D_{x^c}} \ket{\phi_z} }^2 \\
&+ \frac8x \sum_{\pi_{x^c} \in S''_{x,z}} \sum_{t \in [x] \cap \pi_{>x}^{-1}(R_x)} \norm[\Big]{ \bra{+_x}_{D_x} \bra{\pi_{x^c}}_{D_{x^c}} \ket{\phi_z} }^2 \\
&\leq \frac8x \sum_{\pi_{x^c} \in S'_{x,z}} \boldsymbol{1}_{z \in R_x} \norm[\Big]{ \bra{\pi_{x^c}}_{D_{x^c}} \ket{\phi_z} }^2
+ \frac8{x^3} \sum_{\pi_{x^c} \in S'_{x,z}} \sum_{t \in [x] \cap \pi_{>x}^{-1}(R_x)} \norm[\Big]{ \bra{\pi_{x^c}}_{D_{x^c}} \ket{\phi_z} }^2 \\
&+ \frac8x \sum_{\pi_{x^c} \in S''_{x,z}} \sum_{t \in [x] \cap \pi_{>x}^{-1}(R_x)} \norm[\Big]{ \bra{\pi_{x^c}}_{D_{x^c}} \ket{\phi_z} }^2.
\end{align*}
By summing the above estimate over all~$z>x$, we obtain
\begin{align*}
&\quad \sum_{z=x+1}^N \norm*{ E^{R,x}_D \OinvSPOz_{YD} \proj{+_x}_{D_x} \ket{\phi_z} }^2 \\
&\leq \frac8x \sum_{z=x+1}^N \sum_{\pi_{x^c} \in S'_{x,z}} \boldsymbol{1}_{z \in R_x} \norm[\Big]{ \bra{\pi_{x^c}}_{D_{x^c}} \ket{\phi_z} }^2
+ \frac8{x^3} \sum_{z=x+1}^N \sum_{\pi_{x^c} \in S'_{x,z}} \sum_{t \in [x] \cap \pi_{>x}^{-1}(R_x)} \norm[\Big]{ \bra{\pi_{x^c}}_{D_{x^c}} \ket{\phi_z} }^2 \\
&+ \frac8x \sum_{z=x+1}^N \sum_{\pi_{x^c} \in S''_{x,z}} \sum_{t \in [x] \cap \pi_{>x}^{-1}(R_x)} \norm[\Big]{ \bra{\pi_{x^c}}_{D_{x^c}} \ket{\phi_z} }^2 \\
% &\leq \frac8x \sum_{z=x+1}^N \sum_{\pi_{x^c} \in S'_{x,z}} \boldsymbol{1}_{z \in R_x} \norm[\Big]{ \bra{\pi_{x^c}}_{D_{x^c}} \ket{\phi_z} }^2
% + \frac8{x^3} \sum_{\pi_{x^c}} \sum_{t \in [x] \cap \pi_{>x}^{-1}(R_x)} \norm[\Big]{ \bra{\pi_{x^c}}_{D_{x^c}} \ket{\phi} }^2 \\
% &+ \frac8x \sum_{z=x+1}^N \sum_{\pi_{x^c} \in S''_{x,z}} \sum_{t \in [x] \cap \pi_{>x}^{-1}(R_x)} \norm[\Big]{ \bra{\pi_{x^c}}_{D_{x^c}} \ket{\phi_z} }^2 \\
&\leq \frac8x \sum_{z=x+1}^N \sum_{\pi_{x^c} \in S'_{x,z}} \boldsymbol{1}_{z \in R_x} \norm[\Big]{ \bra{\pi_{x^c}}_{D_{x^c}} \ket{\phi_z} }^2
 + \frac8{x^2} \frac{\abs{R_x}}N
 + \frac8x \sum_{z=x+1}^N \sum_{\pi_{x^c} \in S''_{x,z}} \sum_{t \in [x] \cap \pi_{>x}^{-1}(R_x)} \norm[\Big]{ \bra{\pi_{x^c}}_{D_{x^c}} \ket{\phi_z} }^2
\\ &= \frac8x \sum_{z=x+1}^N \sum_{\pi_{x^c} : \pi_{>x}^{-1}(z) < x} \boldsymbol{1}_{z \in R_x} \norm[\Big]{ \bra{\pi_{x^c}}_{D_{x^c}} \ket{\phi_z} }^2
+ \frac8{x^2} \frac{\abs{R_x}}N \\
&+ \frac8x \sum_{z=x+1}^N \sum_{\pi_{x^c} : \pi_{>x}^{-1}(z) = x} \abs{[x] \cap \pi_{>x}^{-1}(R_x)} \, \norm[\Big]{ \bra{\pi_{x^c}}_{D_{x^c}} \ket{\phi_z} }^2.
\end{align*}
The last estimate follows as in part~(ii).
\end{enumerate}
By combining the results of~(i), (ii), and (iii), we obtain the following bound on the right-hand side \cref{eq:inverse rhs}:
\begin{align*}
  \norm*{ E^{R,x}_D \OinvSPO_{XYD} \proj{+_x}_{D_x} \ket\phi }
% = \sqrt{ \sum_{z\in[N]} \norm*{ E^{R,x}_D \OinvSPOz_{YD} \proj{+_x}_{D_x} \ket{\phi_z} }^2 }
&\leq
\sqrt{ \begin{aligned}
&4 \frac{\abs{R_x}}x \, \norm{ \ket{\phi_x} }^2
+ \frac4x \sum_{z=1}^{x-1} \sum_{\pi_{x^c} : \pi_{>x}^{-1}(z) < x} \boldsymbol{1}_{z \in R_x} \norm[\Big]{ \bra{\pi_{x^c}}_{D_{x^c}} \ket{\phi_z} }^2
+ \frac4{x^2} \frac{\abs{R_x}}N \\
&+ \frac8x \sum_{z=x+1}^N \sum_{\pi_{x^c} : \pi_{>x}^{-1}(z) < x} \boldsymbol{1}_{z \in R_x} \norm[\Big]{ \bra{\pi_{x^c}}_{D_{x^c}} \ket{\phi_z} }^2
+ \frac8{x^2} \frac{\abs{R_x}}N \\
&+ \frac8x \sum_{z=x+1}^N \sum_{\pi_{x^c} : \pi_{>x}^{-1}(z) = x} \abs{[x] \cap \pi_{>x}^{-1}(R_x)} \, \norm[\Big]{ \bra{\pi_{x^c}}_{D_{x^c}} \ket{\phi_z} }^2 \end{aligned}} \\
&\leq
4 \sqrt{ \begin{aligned}
&\frac{\abs{R_x}}x \, \norm{ \ket{\phi_x} }^2 + \frac1{x^2} \frac{\abs{R_x}}N
+ \frac1x \sum_{z=1}^N \sum_{\pi_{x^c} : \pi_{>x}^{-1}(z) < x} \boldsymbol{1}_{z \in R_x} \norm[\Big]{ \bra{\pi_{x^c}}_{D_{x^c}} \ket{\phi_z} }^2 \\
&+ \frac1x \sum_{z=x+1}^N \sum_{\pi_{x^c} : \pi_{>x}^{-1}(z) = x} \abs{[x] \cap \pi_{>x}^{-1}(R_x)} \, \norm[\Big]{ \bra{\pi_{x^c}}_{D_{x^c}} \ket{\phi_z} }^2 \end{aligned}} \\
&\leq
4 \sqrt{ \begin{aligned}
&\frac{\abs{R_x}}x \, \norm{ \ket{\phi_x} }^2 + \frac1{x^2} \frac{\abs{R_x}}N
+ \frac1x \sum_{z \in R_x} \sum_{\pi_{x^c} : \pi_{>x}^{-1}(z) < x} \norm[\Big]{ \bra{\pi_{x^c}}_{D_{x^c}} \ket{\phi_z} }^2 \\
&+ \frac1x \sum_{z=x+1}^N \sum_{\pi_{x^c} : \pi_{>x}^{-1}(z) = x} \abs{[x] \cap \pi_{>x}^{-1}(R_x)} \, \norm[\Big]{ \bra{\pi_{x^c}}_{D_{x^c}} \ket{\phi_z} }^2 \end{aligned}}.
\end{align*}
Now the lemma follows.
\end{proof}

As a consequence of the preceding lemmas, we obtain a bound that holds for any query algorithm that makes $q$~queries.
%\MW{Instead of taking the sum of both terms, I am now picking our the right one. This makes the proof of \cref{prop:hard-database} nicer.}

\begin{prop}\label{prop:progress}
Let $\mathcal A$ be a unitary query algorithm on registers~$AXY$, where~$X$ and~$Y$ are $N$-dimensional registers, that gets query access to two oracles that each act on~$XY$.
Let $\ket\phi_{AXYD}$ be the joint state of algorithm and oracle defined by running $\Init^\SPO_D$ and then $\mathcal A^{\SPO_D}$. %\OSPO_{XYD},\OinvSPO_{XYD}}$.
Suppose that~$\mathcal A$ makes in total~$q$ queries to its oracles.
Then for any $x\in[N]$,
\begin{align*}
    \norm*{ E^{R,x}_D \ket\phi }
&\leq \sum_{j=1}^q \parens*{
    \sqrt{\frac{\abs{R_x}}x} \ \norm*{ \parens[\big]{ I - \proj{+_x}_{D_x} } \ket{\phi^{(j)} } }
+ 4 \sqrt{\uglyterm^{(j)}_x } } \\
&\leq \sqrt{2q \sum_{j=1}^q \parens*{
    \frac{\abs{R_x}}x \ \norm*{ \parens[\big]{ I - \proj{+_x}_{D_x} } \ket{\phi^{(j)}} }^2
+ 16 \uglyterm^{(j)}_x } },
\end{align*}
where $\ket{\phi^{(j)}}_{AXYD}$ denotes the state right before the $j$-th query of~$\mathcal A^{\SPO_D}$ and where we defined~$\uglyterm^{(j)}_x := \uglyterm_{\ket{\phi^{(j)}},x}$ as in \cref{lem:forward} if the $j$-th query is a forward query and otherwise $\uglyterm^{(j)}_x := \uglyterm^{\mathrm{inv}}_{\ket{\phi^{(j)}},x}$, as in \cref{lem:inverse}.
\end{prop}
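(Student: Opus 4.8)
The plan is a standard hybrid (telescoping) argument: we track the quantity $\norm*{E^{R,x}_D\ket{\phi^{(j)}}}$ across the $q$ queries and bound its per-step increase using \cref{lem:forward,lem:inverse}.

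Concretely, write the unitary query algorithm as $U^{(q)}_{AXY} Q^{(q)}_{XYD} \cdots U^{(1)}_{AXY} Q^{(1)}_{XYD} U^{(0)}_{AXY}$ acting on the initial state $\ket0_{AXY}\ot\ket{\Phi_{\SPO}}_D$, where each $Q^{(j)}_{XYD}$ is either $\OSPO_{XYD}$ or $\OinvSPO_{XYD}$. Then $\ket{\phi^{(1)}} = U^{(0)}_{AXY}\parens{\ket0_{AXY}\ot\ket{\Phi_{\SPO}}_D}$ and, with the convention $\ket{\phi^{(q+1)}} := \ket\phi$, we have $\ket{\phi^{(j+1)}} = U^{(j)}_{AXY} Q^{(j)}_{XYD}\ket{\phi^{(j)}}$ for every $j\in[q]$. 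I record three elementary facts. First, since $E^{R,x}_D$ acts only on $D$ it commutes with each $U^{(j)}_{AXY}$, so $\norm*{E^{R,x}_D\ket{\phi^{(j+1)}}} = \norm*{E^{R,x}_D Q^{(j)}_{XYD}\ket{\phi^{(j)}}}$ for all $j\in[q]$. Second, the query unitaries are controlled on $D$ in the computational basis and the $U^{(j)}_{AXY}$ do not touch $D$, so the computational-basis marginal on $D$ is the same for every $\ket{\phi^{(j)}}$ as for $\ket{\Phi_{\SPO}}_D$; hence $\norm*{\bra\pi_D\ket{\phi^{(j)}}}^2 = \tfrac1{N!}$ for all $\pi\in S_N$ and all $j$. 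This is precisely the hypothesis required to apply \cref{lem:forward,lem:inverse} to each $\ket{\phi^{(j)}}$. Third, $\norm*{E^{R,x}_D\ket{\phi^{(1)}}} = 0$, because the $D_x$ register of $\ket{\phi^{(1)}}$ is in state $\ket{+_x}$ while $E^{R,x}_D$ carries the factor $I-\proj{+_x}_{D_x}$.

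Applying \cref{lem:forward} when the $j$-th query is a forward query and \cref{lem:inverse} when it is an inverse query (and using $2\le 4$ to make the coefficient uniform), we get for every $j\in[q]$
\begin{align*}
  \norm*{E^{R,x}_D\ket{\phi^{(j+1)}}} - \norm*{E^{R,x}_D\ket{\phi^{(j)}}}
  &= \norm*{E^{R,x}_D Q^{(j)}_{XYD}\ket{\phi^{(j)}}} - \norm*{E^{R,x}_D\ket{\phi^{(j)}}} \\
  &\le \sqrt{\tfrac{\abs{R_x}}x}\,\norm*{\parens[\big]{I-\proj{+_x}_{D_x}}\ket{\phi^{(j)}}} + 4\sqrt{\uglyterm^{(j)}_x},
\end{align*}
with $\uglyterm^{(j)}_x$ as in the statement. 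Summing this telescoping inequality over $j=1,\dots,q$ and using $\norm*{E^{R,x}_D\ket{\phi^{(1)}}}=0$ yields the first claimed bound. The second bound follows from the Cauchy--Schwarz inequality $\sum_{j=1}^q c_j \le \sqrt{q\sum_{j=1}^q c_j^2}$ combined with $(u+v)^2\le 2u^2+2v^2$ applied to each summand.

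There is no real obstacle here: the proof is routine once \cref{lem:forward,lem:inverse} are in place. The only steps meriting a line of justification are the invariance of the computational-basis marginal on $D$ (which is what makes the hypothesis of those two lemmas hold at every query) and the vanishing of the base term $\norm*{E^{R,x}_D\ket{\phi^{(1)}}}$.
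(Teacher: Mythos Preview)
Your proof is correct and follows essentially the same approach as the paper: both verify the computational-basis marginal hypothesis needed for \cref{lem:forward,lem:inverse}, use commutativity of $E^{R,x}_D$ with the algorithm's non-query unitaries, telescope the per-step bounds starting from the vanishing base term, and derive the second inequality via Cauchy--Schwarz together with $(u+v)^2\le 2u^2+2v^2$.
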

% \begin{prop}\label{prop:progress}
% Let $\mathcal A$ be a unitary query algorithm that gets query access to two oracles that each act on two $N$-dimensional registers~$X$ and~$Y$.
% Suppose that~$\mathcal A$ makes in total~$q$ queries to its oracles.
% Let $\ket\phi_{AXYD}$ be the joint state of algorithm and oracle defined by running $\Init^\SPO_D$ and then $\mathcal A^{\SPO_D}$. %\OSPO_{XYD},\OinvSPO_{XYD}}$.
% Then for any $x\in[N]$,
% \begin{align*}
%     \norm*{ E^{R,x}_D \ket\phi }
% &\leq \sum_{j=1}^q \parens*{
%     \sqrt{\frac{\abs{R_x}}x} \ \norm*{ \parens[\big]{ I - \proj{+_x}_{D_x} } \ket{\phi^{(j)} } }
% + \sqrt{4 \uglyterm_{\ket{\phi^{(j)}}, x} + 16 \uglyterm^{\mathrm{inv}}_{\ket{\phi^{(j)},x}} } } \\
% &\leq \sqrt{2q \sum_{j=1}^q \parens*{
%     \frac{\abs{R_x}}x \ \norm*{ \parens[\big]{ I - \proj{+_x}_{D_x} } \ket{\phi^{(j)}} }^2
% + 4 \uglyterm_{\ket{\phi^{(j)}}, x} + 16 \uglyterm^{\mathrm{inv}}_{\ket{\phi^{(j)},x}} } },
% \end{align*}
% where $\ket{\phi^{(j)}}_{AXYD}$ denotes the state right before the $j$-th query of~$\mathcal A^{\SPO_D}$ and where $\uglyterm_{\ket{\phi^{(j)}},x}$ and $\uglyterm^{\mathrm{inv}}_{\ket{\phi^{(j)}},x}$ are defined as in \cref{lem:forward,lem:inverse}, respectively.
% \end{prop}
\begin{proof}
We will prove the first inequality, since the second follows directly using the Cauchy-Schwartz inequality.
To this end, we first observe that $\norm{\bra\pi_D \ket{\phi^{(j)}}_{AXYD}}^2 = \frac1{N!}$ for every~$j\in[q]$ and $\pi\in S_N$.
This is because the database is initialized in a uniform superposition of the basis states~$\ket\pi_D$, the query unitaries~$\OSPO_{XYD}$ and $\OinvSPO_{XYD}$ are controlled on~$D$ in this basis and hence commute with a basis measurement, and all other unitaries applied by~$\mathcal A$ only act on registers~$AXY$.
Thus the states $\ket{\phi^{(j)}}_{AXYD}$ satisfy the requirements of \cref{lem:forward,lem:inverse} and hence we see that, for $ Q^{\SPO} \in \{ \OSPO, \OinvSPO \}$ and any~$x\in[N]$,
\begin{align}
% \nonumber
    \norm*{ E^{R,x}_D  Q^{\SPO}_{XYD} \ket{\phi^{(j)}} } - \norm*{ E^{R,x}_D \ket{\phi^{(j)}} }
&\leq \sqrt{\frac{\abs{R_x}}x} \ \norm*{ \parens[\big]{ I - \proj{+_x}_{D_x} } \ket{\phi^{(j)}} }
+ 4 \sqrt{\uglyterm^{(j)}_x}
% + \max \{ 2 \sqrt{\uglyterm_{\ket{\phi^{(j)}}, x}}, 4 \sqrt{ \uglyterm^{\mathrm{inv}}_{\ket{\phi^{(j)}},x}} \} \\
\label{eq:progress-1-unified}
% &\leq \sqrt{\frac{\abs{R_x}}x} \ \norm*{ \parens[\big]{ I - \proj{+_x}_{D_x} } \ket{\phi^{(j)}} }
% + \sqrt{4 \uglyterm_{\ket{\phi^{(j)}}, x}
% + 16 \uglyterm^{\mathrm{inv}}_{\ket{\phi^{(j)}},x}}
\end{align}
We will now show the following inequality for every~$k \in \{0,1,\dots,q\}$, with $\ket{\phi^{(q+1)}} := \ket\phi$:
\begin{align}\label{eq:desired ineq k}
    \norm*{ E^{R,x}_D \ket{\phi^{(k+1)}} }
&\leq \sum_{j=1}^k \parens*{
    \sqrt{\frac{\abs{R_x}}x} \ \norm*{ \parens[\big]{ I - \proj{+_x}_{D_x} } \ket{\phi^{(j)} } }
+ 4 \sqrt{\uglyterm^{(j)}_x } },
% + \sqrt{4 \uglyterm_{\ket{\phi^{(j)}}, x} + 16 \uglyterm^{\mathrm{inv}}_{\ket{\phi^{(j)},x}} } },
\end{align}
This will conclude the proof, since for~$k=q$ it is the desired inequality.
We use induction over~$k$.
For~$k=0$, the inequality holds trivially, since all database registers are initialized in a uniform superposition and hence~$(I - \proj{+_x}_{D_x}) \ket{\phi} = 0$ and~$E^{R,x}_D \ket{\phi} = 0$.
For the induction step, note that for any~$k>0$~we~can~write
\begin{align*}
    E^{R,x}_D \ket{\psi^{(k+1)}}_{AXYD}
= E^{R,x}_D U_{AXY} {Q}^{\SPO}_{XYD}\ket{\psi^{(k)}}_{AXYD}
= U_{AXY} E^{R,x}_D {Q}^{\SPO}_{XYD} \ket{\psi^{(k)}}_{AXYD},
\end{align*}
for some unitary $U_{AXY}$, where $ Q^{\SPO} \in \{ \OSPO, \OinvSPO \}$ depending on $\mathcal A$'s choice for the $k$-th query.
Here we have used that the operators $U_{AXY}$ and $E^{R,x}_D$ act on disjoint registers.
Using first the unitary invariance of the norm, then \cref{eq:progress-1-unified}, and finally the induction hypothesis, we get
\begin{align*}
  \norm*{ E^{R,x}_D \ket{\phi^{(k+1)}} }
% = \norm*{ U_{AXY} E^{R,x}_D {Q}^{\SPO}_{XYD} \ket{\psi^{(k)}}_{AXYD} }
&= \norm*{ E^{R,x}_D {Q}^{\SPO}_{XYD} \ket{\psi^{(k)}}_{AXYD} } \\
&\leq \norm*{ E^{R,x}_D \ket{\phi^{(k)}} }
+ \sqrt{\frac{\abs{R_x}}x} \ \norm*{ \parens[\big]{ I - \proj{+_x}_{D_x} } \ket{\phi^{(k)}} }
+ 4 \sqrt{\uglyterm^{(k)}_{\ket{\phi^{(k)}}, x} } \\
% + \sqrt{4 \uglyterm_{\ket{\phi^{(k)}}, x} + 16 \uglyterm^{\mathrm{inv}}_{\ket{\phi^{(k)}},x}} \\
&\leq \sum_{j=1}^k \parens*{
    \sqrt{\frac{\abs{R_x}}x} \ \norm*{ \parens[\big]{ I - \proj{+_x}_{D_x} } \ket{\phi^{(j)} } }
+ 4 \sqrt{\uglyterm^{(j)}_x } }.
% + \sqrt{4 \uglyterm_{\ket{\phi^{(j)}}, x} + 16 \uglyterm^{\mathrm{inv}}_{\ket{\phi^{(j)},x}} } }.
\end{align*}
This concludes the proof of \cref{eq:desired ineq k} and hence proof of the corollary.
\end{proof}

%-----------------------------------------------------------------------------
\subsection{Bounding the Success Probability in Expectation}\label{sec:expectation}
%-----------------------------------------------------------------------------
As discussed at the beginning of the section we will now lift the worst-case bounds established in the previous section to the average case by running the query algorithm with the \emph{twirled} permutation oracle for uniformly random~$\sigma,\tau \in S_N$ and also averaging over the choice of~$x\in[N]$, corresponding to the progress measure
\begin{align*}
    \E_{\substack{x \leftarrow [N], \\ \sigma, \tau \leftarrow S_N}} \norm*{ E^{R^{\sigma,\tau},x}_D \ket{\phi^{\sigma,\tau}} }^2,
\end{align*}
where $R^{\sigma,\tau}$ is the twirled relation defined in \eqref{eq:twirled relation}.
Note that $r_{\max} = \max \{ \max_{x\in[N]} \abs{R^{\sigma,\tau}_x}, \max_{y\in[N]} \abs{(R^{\sigma,\tau})^\mathrm{inv}_y} \}$.

The main result of this section is the following.
It shows that the randomized test for the twirled relation rarely succeeds for algorithms making not too many queries, up to an error term that captures the ``sparsity'' of the database and that will be bounded in \cref{sec:bounc_active}.

\begin{prop}\label{prop:hard-database}
Let $\mathcal A$ be a unitary query algorithm on registers~$AXY$, where~$X$ and~$Y$ are $N$-dimensional registers, that gets query access to two oracles that each act on~$XY$.
Suppose that~$\mathcal A$ makes in total~$q$ queries to its oracles.
For every~$\sigma,\tau\in S_N$, let~$\ket{\phi^{\sigma,\tau}}_{AXYD}$ be the joint state of algorithm and oracle defined by running~$\Init^\SPO_D$ and then~$\mathcal A^{\FTSPOsigmatau_D}$.
Let~$\mathcal B$ be the query algorithm on registers~$BXY$ as in \cref{lem:std preprocess}, with~$B=AZ$ and~$Z$ another $N$-dimensional register.
Then,
\begin{align*}
    \E_{\substack{x \leftarrow [N], \\ \sigma, \tau \leftarrow S_N}} \norm*{ E^{R^{\sigma,\tau},x}_D \ket{\phi^{\sigma,\tau}} }^2
&\leq 384 \frac {q^2 r_{\max} \parens[\big]{ \ln(N) + 2 }} {N^2}
% + 4q \sum_{j=1}^{2q} \E_{\substack{x \leftarrow [N], \\ \sigma, \tau \leftarrow S_N}} \frac{\abs{R^{\sigma,\tau}_x}}x \ \norm*{ \parens[\big]{ I - \proj{+_x}_{D_x} } \ket{\phi^{\sigma,\tau,(j)}} }^2
+ 4q r_{\max} \sum_{j=1}^{2q} \E_{\substack{x \leftarrow [N], \\ \sigma, \tau \leftarrow S_N}} \frac {\norm*{ \parens[\big]{ I - \proj{+_x}_{D_x} } \ket{\phi^{\sigma,\tau,(j)}} }^2}x
\end{align*}
where $\ket{\phi^{\sigma,\tau,(j)}}_{BXYD}$ is the state given by running~$\Init^\SPO_D$ and then~$\mathcal B^{\FTSPOsigmatau_D}$ until right before the~$j$-th query.
\end{prop}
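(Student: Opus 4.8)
The plan is to transport the problem to the \emph{untwirled} oracle, where \cref{prop:progress} applies, and then average over the random choices of $x$, $\sigma$ and $\tau$.

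\emph{Step 1: reduction to the untwirled oracle.} Using \cref{lem:std preprocess}, running $\mathcal A^{\FTSPOsigmatau_D}$ with the register $Z$ initialized to $\ket0_Z$ (so $B=AZ$) produces the same state on $BXYD$ as running the $2q$-query algorithm $\mathcal B_{\sigma,\tau}^{\SPO_D}$, whose queries go to the untwirled interfaces $\OSPO_{XYD},\OinvSPO_{XYD}$, each conjugated by in-place permutation unitaries on $XY$. Since $E^{R^{\sigma,\tau},x}_D$ acts only on $D$, its squared norm on $\ket{\phi^{\sigma,\tau}}$ equals its squared norm on the output of $\mathcal B_{\sigma,\tau}^{\SPO_D}$. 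Applying \cref{prop:progress} to $\mathcal B_{\sigma,\tau}$ with relation $R^{\sigma,\tau}$ and using $\abs{R^{\sigma,\tau}_x}\le r_{\max}$ gives, for every $x\in[N]$,
\begin{align*}
  \norm*{ E^{R^{\sigma,\tau},x}_D \ket{\phi^{\sigma,\tau}} }^2
  \le 4q \sum_{j=1}^{2q}\parens*{ \frac{r_{\max}}{x}\,\norm*{\parens*{I-\proj{+_x}_{D_x}}\ket{\tilde\phi^{\sigma,\tau,(j)}}}^2 + 16\,\uglyterm^{(j)}_x },
\end{align*}
where $\ket{\tilde\phi^{\sigma,\tau,(j)}}$ is the state of $\mathcal B_{\sigma,\tau}^{\SPO_D}$ right before its $j$-th SPO query, and $\uglyterm^{(j)}_x$ is $\uglyterm_{\ket{\tilde\phi^{\sigma,\tau,(j)}},x}$ or $\uglyterm^{\mathrm{inv}}_{\ket{\tilde\phi^{\sigma,\tau,(j)}},x}$ of \cref{lem:forward,lem:inverse} according to whether the $j$-th query is a forward or inverse query. (The hypothesis of \cref{prop:progress} holds automatically, as SPO queries commute with standard-basis measurements of $D$.)

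\emph{Step 2: the first summand.} By \cref{lem:std preprocess}, $\ket{\tilde\phi^{\sigma,\tau,(j)}}$ differs from $\ket{\phi^{\sigma,\tau,(j)}}$ (the state before the $j$-th query of $\mathcal B^{\FTSPOsigmatau_D}$, as in the statement) only by applying a product $W^{(j)}_{XY}$ of in-place permutation unitaries on $XY$, which commutes with $I-\proj{+_x}_{D_x}$. Hence the first summand above is unchanged if $\tilde\phi^{\sigma,\tau,(j)}$ is replaced by $\phi^{\sigma,\tau,(j)}$, and after taking $\E_{x\leftarrow[N],\,\sigma,\tau\leftarrow S_N}$ it contributes exactly the second term on the right-hand side of the claimed bound.

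\emph{Step 3: the $\uglyterm$ terms.} It remains to prove $\E_{x,\sigma,\tau}[\uglyterm^{(j)}_x]\le 3\,r_{\max}(\ln N+2)/N^2$ for every $j$, which closes the argument since $64q\cdot 2q\cdot 3=384q^2$. Here I would first use the intermediate-state version of \cref{lem:output state twisted vs not} (the twirls cancel between consecutive queries and the initial twirl fixes $\ket{\Phi_{\SPO}}_D$) to write $\ket{\tilde\phi^{\sigma,\tau,(j)}}=W^{(j)}_{XY}\,L^\tau_D R^\sigma_D\,\ket{\chi^{(j)}}$, where $\ket{\chi^{(j)}}$ is the $(\sigma,\tau)$-\emph{independent} state before the $j$-th query of $\mathcal B^{\SPO_D}$, still satisfying $\norm{\bra\pi_D\ket{\chi^{(j)}}}^2=1/N!$ for all $\pi$; in particular $W^{(j)}$ acts on the $X$-register as exactly one of $V^\sigma_X$ (forward query) or $V^\tau_X$ (inverse query), and $\ket{\tilde\phi^{\sigma,\tau,(j)}}$ retains the uniform-on-$D$ property. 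Then I would bound each of the three summands of $\uglyterm$, resp.\ the four summands of $\uglyterm^{\mathrm{inv}}$, separately: replace every factor $\abs{R^{\sigma,\tau}_x}$, resp.\ $\abs{[x]\cap\pi_{>x}^{-1}(R^{\sigma,\tau}_x)}$, by $r_{\max}$; exploit the $V_X$-factor of $W^{(j)}$ together with the average over the corresponding twirl to symmetrize the $X$-register, so that e.g.\ $\E_{\sigma,\tau}\norm{\bra z_X\ket{\tilde\phi^{\sigma,\tau,(j)}}}^2=1/N$ for every $z$; use the uniform-on-$D$ property to replace $\norm{\bra{\pi_{x^c}}_{D_{x^c}}\ket{\cdot}}^2$ by $x/N!$ and turn the inner sums over $\pi_{x^c}$ into counts of permutations, which are then controlled via the combinatorial facts of \cref{subsec:subgroup-tower} (notably \cref{lem:wrong side}, e.g.\ $\Pr_{\pi\leftarrow S_N}(\pi_{>x}(x)=z)=1/N$); and finally sum the per-$x$ bounds using the harmonic estimate \eqref{eq:harmonic bound} and $\sum_{x\ge1}x^{-2}\le2$ to obtain $O(r_{\max}(\ln N+2)/N^2)$.

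\emph{Main obstacle.} The delicate part is Step 3, and within it the ``wrong-side'' summands in which the index $z$ refers to the input/output register while $x$ refers to a database location (the $z<x$ term of $\uglyterm$ and the $z<x$ and $z>x$ terms of $\uglyterm^{\mathrm{inv}}$). Bounding such a term using only the uniform-on-$D$ property, or only the $X$-symmetrization, loses a factor of $N$; one must combine the averages over \emph{both} $\sigma$ and $\tau$ (so that the location $x$, the input index $z$, and the twirled relation all decouple on average) with the active-transposition counting of \cref{subsec:active,subsec:subgroup-tower} to recover the extra $1/N$. Absorbing the extra weight $\abs{[x]\cap\pi_{>x}^{-1}(R^{\sigma,\tau}_x)}$ in the last summand of $\uglyterm^{\mathrm{inv}}$ is the hardest case.
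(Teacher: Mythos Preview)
Your proposal is correct and follows essentially the same approach as the paper: Steps~1 and~2 match the paper's argument exactly, and your Step~3 outline---combine the averages over $\sigma$ and $\tau$ with the uniform-$D$ marginal, invoking \cref{lem:wrong side} for the ``wrong-side'' terms---is precisely what the paper does, packaged as a separate \cref{lem:crucial}. The one refinement worth noting is mechanism: the paper does not literally ``replace $\norm{\bra{\pi_{x^c}}_{D_{x^c}}\ket{\cdot}}^2$ by $x/N!$'' (which, as you yourself flag in the Main Obstacle paragraph, fails for $\ket{\phi_z}$), but instead introduces the joint distribution $q_{\omega,\xi}:=\norm{\bra{\tau\omega\sigma^{-1}}_D\bra\xi_X\ket{\phi^{\sigma,\tau,(j)}}}^2$, observes via \cref{lem:spo vs ftspo} that it is independent of $(\sigma,\tau)$ with uniform $\omega$-marginal, and then performs the substitutions $x=\sigma(x')$, $z=\sigma(z')$ (resp.\ $z=\tau(z')$), $\pi=\tau\pi'\sigma^{-1}$ to decouple the location, the query index, and the twirled relation simultaneously.
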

\begin{proof}
By \cref{lem:std preprocess} for every $\sigma, \tau\in S_N$ it holds that
\begin{align*}
  \mathcal B^{\FTSPOsigmatau_D}
% = \mathcal C^{\OFTSPOsigmatau_{XYD},\; \OFTSPOsigmatau_{XYD},\;  \OinvFTSPOsigmatau_{XYD},\; \OinvFTSPOsigmatau_{XYD}}
= \mathcal B_{\sigma,\tau}^{\SPO_D}.
\end{align*}
with~$\mathcal B_{\sigma,\tau}$ defined in the statement of the lemma.
Moreover, the joint state of the algorithm and oracle defined by running $\Init^\SPO_D$ and then either of these two algorithms is given by~$\ket{\chi^{\sigma,\tau}}_{BXYD} = \ket{\phi^{\sigma,\tau}}_{AXYD} \ot \ket0_Z$.
Thus we can apply \cref{prop:progress} with the relation~$R^{\sigma,\tau}$ and the algorithm~$\mathcal B_{\sigma,\tau}$, which makes~$2q$ queries to the untwirled standard oracle, to obtain
\begin{align}\label{eq:hard B sigma tau bound}
    \norm*{ E^{R^{\sigma,\tau},x}_D \ket{\phi^{\sigma,\tau}} }^2
= \norm*{ E^{R^{\sigma,\tau},x}_D \ket{\chi^{\sigma,\tau}} }^2
\leq 4q \sum_{j=1}^{2q} \parens*{
    \frac{\abs{R^{\sigma,\tau}_x}}x \ \norm*{ \parens[\big]{ I - \proj{+_x}_{D_x} } \ket{\chi^{\sigma,\tau,(j)}} }^2
+ 16 \uglyterm^{\sigma,\tau,(j)}_x },
\end{align}
where $\ket{\chi^{\sigma,\tau,(j)}}_{BXYD}$ denotes the state right before the $j$-th query of~$\mathcal B_{\sigma,\tau}^{\SPO_D}$ and
\begin{align*}
  \uglyterm^{\sigma,\tau,(j)}_x
= \begin{cases}
\uglyterm_{\ket{\chi^{\sigma,\tau,(j)}}, x} & \text{if the $j$-th query of~$\mathcal B_{\sigma,\tau}$ is a forward query}, \\
\uglyterm^{\mathrm{inv}}_{\ket{\chi^{\sigma,\tau,(j)}}, x} & \text{if the $j$-th query of~$\mathcal B_{\sigma,\tau}$ is an inverse query};
\end{cases}
\end{align*}
the right-hand side quantities are defined in \cref{lem:forward,lem:inverse}.
In view of the relation between~$\mathcal B_{\sigma,\tau}$ and~$\mathcal B$ in \cref{lem:std preprocess}, we can express the pre-query states of the former in terms of the pre-query states of the latter: we have~$\ket{\chi^{\sigma,\tau,(j)}} \in \braces*{
V_X^\sigma \ket{\phi^{\sigma,\tau,(j)}},
V_X^\sigma V_Y^\tau \ket{\phi^{\sigma,\tau,(j)}}
}$
if the $j$-th query of~$\mathcal B_{\sigma,\tau}$ is a forward query, and otherwise~$\ket{\chi^{\sigma,\tau,(j)}} \in \braces*{
V_X^\tau \ket{\phi^{\sigma,\tau,(j)}},
V_X^\tau V_Y^\sigma \ket{\phi^{\sigma,\tau,(j)}}
}$.
Using the unitary invariance of the norm, we see that
\begin{align}\label{eq:hard norm inv}
  \norm*{ \parens[\big]{ I - \proj{+_x}_{D_x} } \ket{\chi^{\sigma,\tau,(j)}} }
= \norm*{ \parens[\big]{ I - \proj{+_x}_{D_x} } \ket{\phi^{\sigma,\tau,(j)}} }
\end{align}
and
\begin{align*}
  \uglyterm^{\sigma,\tau,(j)}_x
= \begin{cases}
\uglyterm_{V_X^\sigma \, \ket{\phi^{\sigma,\tau,(j)}}, x} & \text{if the $j$-th query of~$\mathcal B$ is a forward query}, \\
\uglyterm^{\mathrm{inv}}_{V_X^\tau \, \ket{\phi^{\sigma,\tau,(j)}}, x} & \text{if the $j$-th query of~$\mathcal B$ is an inverse query}.
\end{cases}
\end{align*}
To prove the proposition, we need to upper bound $\E_{x,\sigma,\tau} \uglyterm^{\sigma,\tau,(j)}_{\ket{\chi^{\sigma,\tau,(j)}}, x}$.
We distinguish the two cases:
\begin{enumerate}
\item If the~$j$-th query is a forward query, we have
\begin{align*}
  \uglyterm^{\sigma,\tau,(j)}_x
= \uglyterm_{V_X^\sigma \, \ket{\phi^{\sigma,\tau,(j)}}, x}
= \frac{\abs{R^{\sigma,\tau}_x}}x \, \norm{ \ket{\phi^{\sigma,\tau,(j)}_{\sigma^{-1}(x)}} }^2
    + \frac{\abs{R^{\sigma,\tau}_x}}{x^2 N}
    + \frac1x \sum_{z=1}^{x-1} \sum_{\pi_{x^c} : \pi_{x^c}(z) \in R^{\sigma,\tau}_x} \norm[\Big]{ \bra{\pi_{x^c}}_{D_{x^c}} \ket{\phi^{\sigma,\tau,(j)}_{\sigma^{-1}(z)}} }^2,
\end{align*}
where $\ket{\phi^{\sigma,\tau,(j)}_\xi}_{BYD} := \bra \xi_X \ket{\phi^{\sigma,\tau,(j)}}_{BXYD}$ (as in \cref{lem:forward}).
We will now upper bound the average of the above term by term, beginning with the first term.
Now, $p^{(j)}_\xi := \norm{ \ket{\phi^{\sigma,\tau,(j)}_\xi} }^2$ is a function of the reduced state of an algorithm that makes queries to the twirled standard oracle (the part of~$\mathcal B$ right up to the $j$-th query), so it follows from \cref{lem:spo vs ftspo} %\gm{Double check this reference} \MW{Could you elaborate? Is it wrong or should we explain better?} 
that this quantity is independent of~$\sigma,\tau\in S_N$.
Thus:
\begin{align}
\nonumber
  \E_{x,\sigma,\tau} \frac{\abs{R^{\sigma,\tau}_x}}x \, \norm{ \ket{\phi^{\sigma,\tau,(j)}_{\sigma^{-1}(x)}} }^2
% = \E_{x,\sigma,\tau} \frac{\abs{R^{\sigma,\tau}_x}}x \, p^{(j)}_{\sigma^{-1}(x)}
&\leq \E_{x,\sigma,\tau} \frac{r_{\max}}x \, p^{(j)}_{\sigma^{-1}(x)}
= \E_x \frac{r_{\max}}x \, \E_{\sigma,\tau} p^{(j)}_{\sigma^{-1}(x)}
= \E_x \frac{r_{\max}}x \, \E_{x'} p^{(j)}_{x'} \\
\nonumber
&= r_{\max} \parens*{ \frac1N \sum_{x=1}^N \frac1x } \parens*{ \frac1N \sum_{x'=1}^N p^{(j)}_{x'} }
\leq r_{\max} \frac {\ln(N) + 1} N \frac 1N \\
\label{eq:hard fwd 1}
&= \parens*{ \ln(N) + 1 } \frac {r_{\max}} {N^2}
\end{align}
since $\sum_{x'=1}^N p^{(j)}_{x'} = \norm{ \ket{\phi^{\sigma,\tau,(j)}} }^2 = 1$.
The second term can be bounded straightforwardly:
\begin{align}
\label{eq:hard fwd 2}
  \E_{x,\sigma,\tau} \frac{\abs{R^{\sigma,\tau}_x}}{x^2 N}
\leq \frac{r_{\max}}N \E_x \frac1{x^2}
= \frac{r_{\max}}{N^2} \sum_{x=1}^N \frac1{x^2}
\leq \frac{\pi^2}6 \frac{r_{\max}}{N^2}
\leq 2 \frac{r_{\max}}{N^2}.
\end{align}
We defer bounding the third term to \cref{lem:crucial}~\ref{it:crucial 1}, where we get
\begin{align}
\label{eq:hard fwd 3}
    \E_{x,\sigma,\tau} \frac1x \sum_{z=1}^{x-1} \sum_{\pi_{x^c} : \pi_{x^c}(z) \in R^{\sigma,\tau}_x} \norm[\Big]{ \bra{\pi_{x^c}}_{D_{x^c}} \ket{\phi^{\sigma,\tau,(j)}_{\sigma^{-1}(z)}} }^2
\leq \parens*{ \ln(N) + 3 } \frac{r_{\max}}{N^2}.
\end{align}
Combining \cref{eq:hard fwd 1,eq:hard fwd 2,eq:hard fwd 3}, we obtain
\begin{align}\label{eq:hard fwd}
    \E_{x,\sigma,\tau} \uglyterm^{\sigma,\tau,(j)}_x
\leq \parens[\big]{ 2 \ln(N) + 6 } \frac {r_{\max}} {N^2}.
\end{align}

\item If the~$j$-th query is an inverse query, we have
\begin{align*}
  \uglyterm^{\sigma,\tau,(j)}_x
= \uglyterm^{\mathrm{inv}}_{V_X^\tau \, \ket{\phi^{\sigma,\tau,(j)}}, x}
= &\frac{\abs{R^{\sigma,\tau}_x}}x \, \norm{\ket{\phi^{\sigma,\tau,(j)}_{\tau^{-1}(x)}} }^2 + \frac{\abs{R^{\sigma,\tau}_x}}{x^2N}
        + \frac1x \sum_{z \in R^{\sigma,\tau}_x} \sum_{\pi_{x^c} : \pi_{>x}^{-1}(z) < x} \norm[\Big]{ \bra{\pi_{x^c}}_{D_{x^c}} \ket{\phi^{\sigma,\tau,(j)}_{\tau^{-1}(z)}} }^2 \\
  &+ \frac1x \sum_{z=x+1}^N \sum_{\pi_{x^c} : \pi_{>x}^{-1}(z) = x} \abs{[x] \cap \pi_{>x}^{-1}(R^{\sigma,\tau}_x)} \, \norm[\Big]{ \bra{\pi_{x^c}}_{D_{x^c}} \ket{\phi^{\sigma,\tau,(j)}_{\tau^{-1}(z)}} }^2,
\end{align*}
where $\ket{\phi^{\sigma,\tau,(j)}_\xi}_{BYD} := \bra \xi_X \ket{\phi^{\sigma,\tau,(j)}}_{BXYD}$ (as in \cref{lem:inverse}).
The average can again be bounded term by term.
For the first two terms we proceed as above and for the last two we use \cref{lem:crucial}~\ref{it:crucial 2} and~\ref{it:crucial 3}.
Altogether we obtain
\begin{equation}\label{eq:hard inv}
\begin{aligned}
    \E_{x,\sigma,\tau} \uglyterm^{\sigma,\tau,(j)}_x
&\leq \parens*{ \ln(N) + 1 } \frac {r_{\max}} {N^2} + 2 \frac{r_{\max}}{N^2} + \parens*{ \ln(N)+1 }\frac {r_{\max}} {N^2} + \parens*{ \ln(N)+1 } \frac {r_{\max}} {N^2} \\
&= \parens[\big]{ 3\ln(N) + 5 } \frac {r_{\max}} {N^2}.
\end{aligned}
\end{equation}
\end{enumerate}
From \cref{eq:hard fwd,eq:hard inv} we see that in both the forward and the inverse case, we can bound
\begin{align*}
    \E_{x,\sigma,\tau} \uglyterm^{\sigma,\tau,(j)}_x
\leq 3 \parens[\big]{ \ln(N) + 2 } \frac {r_{\max}} {N^2}
\end{align*}
We can now use the above and \cref{eq:hard norm inv} to further bound \cref{eq:hard B sigma tau bound} and obtain
\begin{align*}
    \E_{x,\sigma,\tau} \norm*{ E^{R^{\sigma,\tau},x}_D \ket{\phi^{\sigma,\tau}} }^2
&\leq 64 q \sum_{j=1}^{2q} \E_{x,\sigma,\tau} \uglyterm^{\sigma,\tau,(j)}_x + 4q \sum_{j=1}^{2q} \E_{x,\sigma,\tau} \frac{\abs{R^{\sigma,\tau}_x}}x \ \norm*{ \parens[\big]{ I - \proj{+_x}_{D_x} } \ket{\chi^{\sigma,\tau,(j)}} }^2 \\
&\leq 384 q^2 \parens[\big]{ \ln(N) + 2 } \frac {r_{\max}} {N^2} + 4q \sum_{j=1}^{2q} \E_{x,\sigma,\tau} \frac{\abs{R^{\sigma,\tau}_x}}x \ \norm*{ \parens[\big]{ I - \proj{+_x}_{D_x} } \ket{\phi^{\sigma,\tau,(j)}} }^2 \\
&= 384 \frac {q^2 r_{\max} \parens[\big]{ \ln(N) + 2 }} {N^2} + 4q r_{\max} \sum_{j=1}^{2q} \E_{x,\sigma,\tau} \frac {\norm*{ \parens[\big]{ I - \proj{+_x}_{D_x} } \ket{\phi^{\sigma,\tau,(j)}} }^2}x,
\end{align*}
which is the desired result.
\end{proof}

\begin{lem}\label{lem:crucial}
In the situation of \cref{prop:hard-database} and with $\ket{\phi^{\sigma,\tau,(j)}_\xi}_{BYD} := \bra \xi_X \ket{\phi^{\sigma,\tau,(j)}}_{BXYD}$, we have:
\begin{enumerate}
\item\label{it:crucial 1}
$\E_{x \leftarrow [N], \; \sigma,\tau\leftarrow S_N} \frac1x \sum_{z=1}^{x-1} \sum_{\pi_{x^c} : \pi_{x^c}(z) \in R^{\sigma,\tau}_x} \norm[\Big]{ \bra{\pi_{x^c}}_{D_{x^c}} \ket{\phi^{\sigma,\tau,(j)}_{\sigma^{-1}(z)}} }^2 \leq \parens*{ \ln(N) + 3 } \frac{r_{\max}}{N^2}$.
\item\label{it:crucial 2}
$\E_{x \leftarrow [N], \; \sigma,\tau\leftarrow S_N} \frac1x \sum_{z \in R^{\sigma,\tau}_x} \sum_{\pi_{x^c} : \pi_{>x}^{-1}(z) < x} \norm[\Big]{ \bra{\pi_{x^c}}_{D_{x^c}} \ket{\phi^{\sigma,\tau,(j)}_{\tau^{-1}(z)}} }^2 \leq \parens*{ \ln(N)+1 }\frac {r_{\max}} {N^2}$.
\item\label{it:crucial 3}
$\E_{x \leftarrow [N], \; \sigma,\tau\leftarrow S_N} \frac1x \sum_{z=x+1}^N \sum_{\pi_{x^c} : \pi_{>x}^{-1}(z) = x} \abs{[x] \cap \pi_{>x}^{-1}(R^{\sigma,\tau}_x)} \, \norm[\Big]{ \bra{\pi_{x^c}}_{D_{x^c}} \ket{\phi^{\sigma,\tau,(j)}_{\tau^{-1}(z)}} }^2 \leq \parens*{ \ln(N)+1 }\frac {r_{\max}} {N^2}$.
\end{enumerate}
\end{lem}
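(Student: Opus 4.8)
The plan is to treat the three estimates by a common scheme: fix $j\in\{1,\dots,2q\}$ and bound the corresponding expectation over $x\leftarrow[N]$ and $\sigma,\tau\leftarrow S_N$ (summing over $j$ is done in \cref{prop:hard-database}, not here). Two structural facts drive everything. First, since every oracle interface is controlled on $D$ in the standard basis and $D$ is initialized in a uniform superposition, the untwirled pre-query state $\ket{\phi^{(j)}}$ of $\mathcal B^{\SPO_D}$ has maximally mixed database marginal: $\norm*{\bra\pi_D\ket{\phi^{(j)}}}^2=\tfrac1{N!}$ for all $\pi\in S_N$. Second, by the twirl-cancellation argument in the proof of \cref{lem:output state twisted vs not} applied to the first $j-1$ queries, the \emph{twirled} pre-query state satisfies $\ket{\phi^{\sigma,\tau,(j)}}=L^\tau_D R^\sigma_D\ket{\phi^{(j)}}$; since $L^\tau_D R^\sigma_D$ only permutes the standard basis of $D$ via $\ket\pi\mapsto\ket{\tau\pi\sigma^{-1}}$, this yields
\[
  \norm*{\bra{\pi_{x^c}}_{D_{x^c}}\ket{\phi^{\sigma,\tau,(j)}_\xi}}^2=\sum_{t\in[x]}q^{(j)}_{\tau^{-1}\pi(t,\pi_{x^c})\sigma,\,\xi},
  \qquad q^{(j)}_{\pi,\xi}:=\norm*{\bra\pi_D\bra\xi_X\ket{\phi^{(j)}}}^2,
\]
where $\pi(t,\pi_{x^c})$ is the permutation with $t_x=t$ and other transpositions given by $\pi_{x^c}$, and the $q^{(j)}_{\pi,\xi}$ are independent of $\sigma,\tau$, with $\sum_{\pi,\xi}q^{(j)}_{\pi,\xi}=1$ and $\sum_\xi q^{(j)}_{\pi,\xi}=\tfrac1{N!}$.

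\textbf{The first estimate.} For \ref{it:crucial 1} I would substitute this identity, merge the sums over $\pi_{x^c}$ and $t$ into a single sum over $\pi\in S_N$, and use that $\pi_{x^c}(z)\in\{\pi(z),\pi(x)\}$ for $z<x$ (the two cases being whether or not $\pi_{<x}(z)=t_x(\pi)$), so that $\boldsymbol{1}_{\pi_{x^c}(z)\in R^{\sigma,\tau}_x}\le\boldsymbol{1}_{\pi(z)\in R^{\sigma,\tau}_x}+\boldsymbol{1}_{\pi(x)\in R^{\sigma,\tau}_x}$. Reindexing by $\rho:=\tau^{-1}\pi\sigma$ makes the amplitude $q^{(j)}_{\rho,\sigma^{-1}(z)}$ twirl-independent and, crucially, turns the two membership events into the \emph{twirl-independent} conditions $\rho(\sigma^{-1}(z))\in R_{\sigma^{-1}(x)}$ and $\rho(\sigma^{-1}(x))\in R_{\sigma^{-1}(x)}$. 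Averaging over $\sigma$, the pair $(a,b):=(\sigma^{-1}(x),\sigma^{-1}(z))$ is uniform over ordered pairs of distinct elements of $[N]$; the first event is bounded via $\sum_{a\ne b}\boldsymbol{1}_{\rho(b)\in R_a}\le\abs{R^\mathrm{inv}_{\rho(b)}}\le r_{\max}$ together with $\sum_{b,\rho}q^{(j)}_{\rho,b}=1$, and the second by summing over $\rho$ first, using $\sum_b q^{(j)}_{\rho,b}=\tfrac1{N!}$ together with $\sum_\rho\abs{R\cap\mathrm{graph}(\rho)}=\abs R\,(N-1)!$ and $\abs R\le N r_{\max}$. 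Summing over $z\in[x-1]$, multiplying by $\tfrac1x$ and averaging over $x$ with $\E_{x}\tfrac1x=\tfrac{H_N}N\le\tfrac{\ln N+1}N$ then produces a bound of the claimed order $O\big(\tfrac{r_{\max}\ln N}{N^2}\big)$.

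\textbf{The inverse-query estimates.} Parts \ref{it:crucial 2} and \ref{it:crucial 3} follow the same template. There the relevant state-slice is indexed by $\tau^{-1}(z)$, and the membership constraint on $\pi_{x^c}$ is replaced by inverse-active conditions on $\pi_{>x}^{-1}(z)$: in \ref{it:crucial 2} one requires $\pi_{>x}^{-1}(z)<x$ and $z\in R^{\sigma,\tau}_x$, which after the $L^\tau_D R^\sigma_D$-rewrite and the $\rho$-substitution again reduces to a twirl-independent event of conditional probability $\le r_{\max}/N$; in \ref{it:crucial 3} the condition $\pi_{>x}^{-1}(z)=x$, i.e.\ $\pi_{>x}(x)=z$, holds (for $z>x$) for a $\tfrac1N$-fraction of the $\pi_{x^c}$ by \cref{lem:wrong side}~\ref{it:wrong side 3}, while the weight $\abs{[x]\cap\pi_{>x}^{-1}(R^{\sigma,\tau}_x)}\le\abs{R^{\sigma,\tau}_x}\le r_{\max}$ since $\pi_{>x}$ is injective. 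Cancelling $\tau$ where it cancels, extracting one factor $r_{\max}/N$ from the twirled membership event, using $\sum_\xi\norm*{\ket{\phi^{\sigma,\tau,(j)}_\xi}}^2=1$ for the slices, and $H_N\le\ln N+1$ for the harmonic factor, yields the claimed $(\ln N+1)\tfrac{r_{\max}}{N^2}$.

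\textbf{Main obstacle.} The difficulty is the interdependence, inside the expectation over $\sigma,\tau$, between the twirl-dependent amplitude $\norm*{\bra{\pi_{x^c}}_{D_{x^c}}\ket{\phi^{\sigma,\tau,(j)}_\xi}}^2$ and the twirl-dependent membership event (and slice index): one cannot naively pull the probability $r_{\max}/N$ out of the expectation, since the state being tested itself depends on the twirls. The device that resolves it is the second structural fact above---rewriting the twirled state through $L^\tau_D R^\sigma_D$ and reindexing by $\rho=\tau^{-1}\pi\sigma$---after which the amplitudes become twirl-independent constants and $\tau$ survives only inside the membership event (as a uniform relabelling of outputs) while $\sigma$ survives only as a uniform relabelling of which input-pair governs that event. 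The one point needing genuine care beyond this is the ``special'' case where the database register $D_x$ collides with the value being read ($\pi_{<x}(z)=t_x(\pi)$, so $\pi_{x^c}(z)=\pi(x)$ rather than $\pi(z)$); it is absorbed by the inclusion $\pi_{x^c}(z)\in\{\pi(z),\pi(x)\}$ together with the counting bound on $\abs R$.
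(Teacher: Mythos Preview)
Your approach is essentially the paper's: introduce the twirl-independent amplitudes $q_{\rho,\xi}$ via the relation $\ket{\phi^{\sigma,\tau,(j)}}=L^\tau_D R^\sigma_D\ket{\phi^{(j)}}$ (the paper argues the same independence via \cref{lem:spo vs ftspo}), reindex by $\rho=\tau^{-1}\pi\sigma$, split $\pi_{x^c}(z)\in\{\pi(z),\pi(x)\}$ for~\ref{it:crucial 1}, drop the $\pi$-constraint for~\ref{it:crucial 2}, and invoke \cref{lem:wrong side} for~\ref{it:crucial 3}. Two points need tightening. In~\ref{it:crucial 1}, your crude bound $\boldsymbol{1}_{\pi_{x^c}(z)\in R^{\sigma,\tau}_x}\le\boldsymbol{1}_{\pi(z)\in R^{\sigma,\tau}_x}+\boldsymbol{1}_{\pi(x)\in R^{\sigma,\tau}_x}$ yields $2(\ln N+1)\,r_{\max}/N^2$, which for $N\ge 3$ exceeds the stated $(\ln N+3)\,r_{\max}/N^2$; the paper keeps the \emph{exact} split via the indicator $\boldsymbol{1}_{\pi_{<x}(z)=t_x(\pi)}$ and then averages that indicator over~$\tau$ (using that $t_x(\tau\rho\sigma^{-1})$ is uniform in~$[x]$), so the ``$\pi(x)$'' term gains an extra factor $1/x$ and contributes only $\frac{\pi^2}6\,r_{\max}/N^2$. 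In~\ref{it:crucial 3}, the phrase ``holds for a $\tfrac1N$-fraction of the $\pi_{x^c}$'' reads as averaging over uniform~$\pi$, but the sum is weighted by $q^{(j)}_{\tau^{-1}\pi\sigma,\tau^{-1}(z)}$, which depends on~$\pi$; the correct application---which your ``Main obstacle'' paragraph in fact identifies---is that after the $\rho$-substitution the amplitude $q^{(j)}_{\rho,\tau^{-1}(z)}$ is $\sigma$-free while the constraint becomes $(\tau\rho\sigma^{-1})_{>x}(x)=z$, so it is the average over~$\sigma$ (making $\tau\rho\sigma^{-1}$ uniform), not over~$\pi$, that legitimately invokes \cref{lem:wrong side}~\ref{it:wrong side 3}.
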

\begin{proof}
Note that the quantity
%\MW{Note that this is defined differently than $\beta$ in the old proof, because now $\ket{\phi^{\sigma,\tau,(j)}}$ refers to the pre-query states of $\mathcal B$, not of $\mathcal B_{\sigma,\tau}$. Please check most carefully that this makes seense and that the below is right.}
\begin{align*}
    q_{\omega,\xi}
:= \norm[\Big]{ \bra{\tau \omega \sigma^{-1}}_D \ket{\phi^{\sigma,\tau,(j)}_\xi} }^2
= \norm[\Big]{ \bra{\tau \omega \sigma^{-1}}_D \bra\xi_X \ket{\phi^{\sigma,\tau,(j)}} }^2
\end{align*}
can be interpreted as the joint probability of the outcomes of the following procedure:
initialize the database, run an algorithm (namely,~$\mathcal B$ up to right before its~$j$-th query) that makes queries to the twirled standard oracle, measure the~$X$ register to obtain an outcome~$\xi\in[N]$, and also apply the recovery operation to the database to obtain an outcome~$\omega\in S_N$.
Accordingly, \cref{lem:spo vs ftspo} %\gm{Double check this reference} \MW{Could you elaborate?} 
shows that $q_{\omega,\xi}$ does not depend on the choice of~$\sigma,\tau\in S_N$ (which justifies the notation) and that the marginal distribution of~$\omega$ with respect to~$q_{\omega,\xi}$ is uniform, i.e., $\sum_{\xi\in[N]} q_{\omega,\xi} = \frac1{N!}$ for all~$\omega\in S_N$.
%\MW{Agreed?}
This observation will be used to establish all three parts of the lemma.
\begin{enumerate}
\item
We start by writing
\begin{align}\label{eq:crucial 1 start}
\E_{x,\sigma,\tau} \frac1x \sum_{z=1}^{x-1} \sum_{\pi_{x^c} : \pi_{x^c}(z) \in R^{\sigma,\tau}_x} \norm[\Big]{ \bra{\pi_{x^c}}_{D_{x^c}} \ket{\phi^{\sigma,\tau,(j)}_{\sigma^{-1}(z)}} }^2
% &= \E_{x,\sigma,\tau} \frac1x \sum_{z=1}^{x-1} \sum_{\pi : \pi_{x^c}(z) \in R^{\sigma,\tau}_x} \norm[\Big]{ \bra\pi_D \ket{\phi^{\sigma,\tau,(j)}_{\sigma^{-1}(z)}} }^2 \\
% &= \E_{x,\sigma,\tau} \frac1x \sum_{z=1}^{x-1} \sum_{\pi : \pi_{x^c}(z) \in R^{\sigma,\tau}_x} \norm[\Big]{ \bra{\tau(\tau^{-1}\pi\sigma)\sigma^{-1}}_D \ket{\phi^{\sigma,\tau,(j)}_{\sigma^{-1}(z)}} }^2 \\
= \E_{x,\sigma,\tau} \frac1x \sum_{z=1}^{x-1} \sum_{\pi : \pi_{x^c}(z) \in R^{\sigma,\tau}_x} \hspace{-0.5cm} q_{\tau^{-1}\pi\sigma,\sigma^{-1}(z)}.
\end{align}
Recall that $\pi = \pi_{>x} \tp x {t_x} \pi_{>x}$ for some~$t_x \in [x]$, as in \cref{eq:tower-decomposition,eq:pi gt lt k}.
We will write~$t_x(\pi) := t_x$ to make explicit the dependency of~$t_x$ on the permutation.
Because $z < x$, we have
\begin{align*}
    \pi_{x^c}(z) = \begin{cases}
        \pi(z) & \text{ if } \pi_{<x}(z) \neq t_x(\pi), \\
        \pi(x) & \text{ if } \pi_{<x}(z) = t_x(\pi),
    \end{cases}
\end{align*}
so we can write the right-hand side of \cref{eq:crucial 1 start} as a sum of two terms,
\begin{align*}
&\quad \E_{x,\sigma,\tau} \frac1x \sum_{z=1}^{x-1} \sum_{\pi : \pi_{x^c}(z) \in R^{\sigma,\tau}_x} q_{\tau^{-1}\pi\sigma,\sigma^{-1}(z)} \\
&= \E_{x,\sigma,\tau} \frac1x \sum_{z=1}^{x-1} \sum_{\pi : \pi(z) \in R^{\sigma,\tau}_x} \hspace{-0.5cm} \boldsymbol{1}_{\pi_{<x}(z) \neq t_x(\pi)} \, q_{\tau^{-1}\pi\sigma,\sigma^{-1}(z)}
+ \E_{x,\sigma,\tau} \frac1x \sum_{z=1}^{x-1} \sum_{\pi : \pi(x) \in R^{\sigma,\tau}_x} \hspace{-0.5cm} \boldsymbol{1}_{\pi_{<x}(z) = t_x(\pi)} \, q_{\tau^{-1}\pi\sigma,\sigma^{-1}(z)} \\
&= \E_{x',\sigma,\tau} \frac1{\sigma(x')} \sum_{z'\in\sigma^{-1}([1,\sigma(x')])} \sum_{\pi' : \pi'(z') \in R_{x'}} \boldsymbol{1}_{\pi_{<\sigma(x')}(\sigma(z')) \neq t_{\sigma(x')}(\tau \pi' \sigma^{-1})} \, q_{\pi',z'} \\
&+ \E_{x',\sigma,\tau} \frac1{\sigma(x')} \sum_{z'\in\sigma^{-1}([1,\sigma(x')])} \sum_{\pi' : \pi'(x') \in R_{x'}} \boldsymbol{1}_{\pi_{<\sigma(x')}(\sigma(z')) = t_{\sigma(x')}(\tau \pi' \sigma^{-1})} \, q_{\pi',z'},
\end{align*}
where the last step follows by substituting~$x = \sigma(x')$, $z = \sigma(z')$, and $\pi = \tau \pi' \sigma^{-1}$, noting that~$(x',\sigma,\tau)$ is still uniformly random, and using the relation~$\pi(\xi) \in R^{\sigma,\tau}_x \Leftrightarrow \tau^{-1}(\pi(\xi)) \in R_{\sigma^{-1}(x)}$.
As $\tau$ only appears in the indicator functions, we can rewrite and bound this as
\begin{align*}
&\quad \E_{x',\sigma} \frac1{\sigma(x')} \sum_{z'\in\sigma^{-1}([1,\sigma(x')])} \sum_{\pi' : \pi'(z') \in R_{x'}} \Pr_\tau\mleft(\pi_{<\sigma(x')}(\sigma(z')\mright) \neq t_{\sigma(x')}(\tau \pi' \sigma^{-1})) \, q_{\pi',z'} \\
&+ \E_{x',\sigma} \frac1{\sigma(x')} \sum_{z'\in\sigma^{-1}([1,\sigma(x')])} \sum_{\pi' : \pi'(x') \in R_{x'}} \Pr_\tau\mleft(\pi_{<\sigma(x')}(\sigma(z')\mright) = t_{\sigma(x')}(\tau \pi' \sigma^{-1})) \, q_{\pi',z'} \\
&= \E_{x',\sigma} \frac1{\sigma(x')} \sum_{z'\in\sigma^{-1}([1,\sigma(x')])} \sum_{\pi' : \pi'(z') \in R_{x'}} \Pr_{t' \leftarrow [\sigma(x')]}\mleft(\pi_{<\sigma(x')}(\sigma(z')) \neq t'\mright) \, q_{\pi',z'} \\
&+ \E_{x',\sigma} \frac1{\sigma(x')} \sum_{z'\in\sigma^{-1}([1,\sigma(x')])} \sum_{\pi' : \pi'(x') \in R_{x'}} \Pr_{t' \leftarrow [\sigma(x')]}\mleft(\pi_{<\sigma(x')}(\sigma(z')) = t'\mright) \, q_{\pi',z'} \\
&\leq \E_{x',\sigma} \frac1{\sigma(x')} \sum_{z'\in\sigma^{-1}([1,\sigma(x')])} \sum_{\pi' : \pi'(z') \in R_{x'}} q_{\pi',z'}
+ \E_{x',\sigma} \frac1{\sigma(x')} \sum_{z'\in\sigma^{-1}([1,\sigma(x')])} \sum_{\pi' : \pi'(x') \in R_{x'}} \frac1{\sigma(x')} q_{\pi',z'} \\
&= \E_{x',\sigma} \frac1{\sigma(x')} \sum_{z'\in\sigma^{-1}([1,\sigma(x')])} \sum_{\pi' : \pi'(z') \in R_{x'}} q_{\pi',z'}
+ \E_{x',\sigma} \frac1{(\sigma(x'))^2} \sum_{z'\in\sigma^{-1}([1,\sigma(x')])} \sum_{\pi' : \pi'(x') \in R_{x'}} q_{\pi',z'},
\end{align*}
since, for any fixed $x',z',\sigma,\pi'$, the permutation $\tau \pi' \sigma^{-1}$ is uniformly random in~$S_N$, so~$t_{\sigma(x')}(\tau \pi' \sigma^{-1})$ is uniformly random in~$[\sigma(x')]$ (by \cref{cor:independent}) and hence equal to any fixed integer in this interval with probability~$\frac1{\sigma(x')}$.
We can finally upper bound the above by
\begin{align*}
&\quad \E_{x',\sigma} \frac1{\sigma(x')} \sum_{z'\in\sigma^{-1}([1,\sigma(x')])} \sum_{\pi' : \pi'(z') \in R_{x'}} q_{\pi',z'}
+ \E_{x',\sigma} \frac1{(\sigma(x'))^2} \sum_{z'\in\sigma^{-1}([1,\sigma(x')])} \sum_{\pi' : \pi'(x') \in R_{x'}} q_{\pi',z'} \\
&\leq \E_{x',\sigma} \frac1{\sigma(x')} \sum_{z'=1}^N \sum_{\pi' : \pi'(z') \in R_{x'}} q_{\pi',z'}
+ \E_{x',\sigma} \frac1{(\sigma(x'))^2} \sum_{z'=1}^N \sum_{\pi' : \pi'(x') \in R_{x'}} q_{\pi',z'} \\
&= \E_{x'} \parens*{ \E_\sigma \frac1{\sigma(x')} } \sum_{z'=1}^N \sum_{\pi' : \pi'(z') \in R_{x'}} q_{\pi',z'}
+ \E_{x'} \parens*{ \E_\sigma \frac1{{\sigma(x')}^2} } \sum_{z'=1}^N \sum_{\pi' : \pi'(x') \in R_{x'}} q_{\pi',z'} \\
&= \E_{x'} \parens*{ \E_\sigma \frac1{\sigma(x')} } \sum_{z'=1}^N \sum_{\pi' : \pi'(z') \in R_{x'}} q_{\pi',z'}
+ \E_{x'} \parens*{ \E_\sigma \frac1{{\sigma(x')}^2} } \sum_{z'=1}^N \sum_{\pi' : \pi'(x') \in R_{x'}} q_{\pi',z'} \\
&\leq \frac{\ln(N)+1}N \E_{x'} \sum_{z'=1}^N \sum_{\pi' : \pi'(z') \in R_{x'}} q_{\pi',z'}
+ \frac{\pi^2}6 \frac 1N \E_{x'} \sum_{z'=1}^N \sum_{\pi' : \pi'(x') \in R_{x'}} q_{\pi',z'} \\
&= \frac{\ln(N)+1}N \sum_{\pi' \in S_N} \sum_{z'=1}^N \Pr_{x'}\mleft( x' \in R^\mathrm{inv}_{\pi'(z)} \mright) q_{\pi',z'}
+ \frac{\pi^2}6 \frac 1N \E_{x'} \Pr_{\pi'}(\pi'(x') \in R_{x'}) \\
&\leq \frac{\ln(N)+1}N \frac{r_{\max}}N + \frac{\pi^2}6 \frac 1N \frac{r_{\max}}N \\
&\leq \parens*{ \ln(N) + 3 } \frac{r_{\max}}{N^2},
\end{align*}
where we first enlarging the sum over~$z'$ to all of~$[N]$, then we bounded the expectation over~$\sigma$ by using that~$\sigma(x') \in [N]$ is uniformly random for any fixed~$x'$; in the last equality we also used that the marginal distribution of~$\pi'$ with respect to~$q_{\pi',z'}$ is uniform as discussed above.

\item Similarly as above, we begin by writing
\begin{align*}
  \E_{x,\sigma,\tau} \frac1x \sum_{z \in R^{\sigma,\tau}_x} \sum_{\pi_{x^c} : \pi_{>x}^{-1}(z) < x} \norm[\Big]{ \bra{\pi_{x^c}}_{D_{x^c}} \ket{\phi^{\sigma,\tau,(j)}_{\tau^{-1}(z)}} }^2
% = \E_{x,\sigma,\tau} \frac1x \sum_{z \in R^{\sigma,\tau}_x} \sum_{\pi : \pi_{>x}^{-1}(z) < x} \norm[\Big]{ \bra\pi_D \ket{\phi^{\sigma,\tau,(j)}_{\tau^{-1}(z)}} }^2
% = \E_{x,\sigma,\tau} \frac1x \sum_{z \in R^{\sigma,\tau}_x} \sum_{\pi : \pi_{>x}^{-1}(z) < x} \norm[\Big]{ \bra{\tau(\tau^{-1}\pi\sigma)\sigma^{-1}}_D \ket{\phi^{\sigma,\tau,(j)}_{\tau^{-1}(z)}} }^2
= \E_{x,\sigma,\tau} \frac1x \sum_{z \in R^{\sigma,\tau}_x} \sum_{\pi : \pi_{>x}^{-1}(z) < x} q_{\tau^{-1}\pi\sigma,\tau^{-1}(z)}
\end{align*}
We can upper-bound this by omitting the constraint on~$\pi$, which gives the bound
\begin{align*}
\quad \E_{x,\sigma,\tau} \frac1x \sum_{z \in R^{\sigma,\tau}_x} \sum_{\pi : \pi_{>x}^{-1}(z) < x} q_{\tau^{-1}\pi\sigma,\tau^{-1}(z)}
&\leq \E_{x,\sigma,\tau} \frac1x \sum_{z \in R^{\sigma,\tau}_x} q_{\tau^{-1}(z)}
= \E_{x',\sigma,\tau} \frac1{\sigma(x')} \sum_{z' \in R_{x'}} q_{z'} \\
&= \E_{x'} \parens*{ \E_{\sigma} \frac1{\sigma(x')} } \sum_{z' \in R_{x'}} q_{z'}
\leq \frac{\ln(N)+1}N \E_{x'} \sum_{z' \in R_{x'}} q_{z'} \\
&= \frac{\ln(N)+1}N \sum_{z'=1}^N \Pr_{x'}\mleft( x' \in R^{\mathrm{inv}}_{z'} \mright) q_{z'}
\leq \parens*{ \ln(N)+1 }\frac {r_{\max}} {N^2},
\end{align*}
where we use the notation~$q_\xi := \sum_{\omega \in S_N} q_{\omega,\xi}$ for the marginal distribution of~$\xi$ with respect to~$q_{\omega,\xi}$;
the second step follows by substituting~$x = \sigma(x')$ and~$z = \tau(z')$.
%\MW{Note that the $\tau$ just cancels (in contrast to the preceding proof), which I guess makes sense as a consequence of what I wrote above. Do you agree?}
%\MW{Super strange that I substitute $x = \sigma(x')$ here, but $x = \tau(x')$ below in (iii)??? Am I doing something wrong or is this term so low probability that we don't have to work hard???}

\item Again we begin with
\begin{align*}
    &\quad \E_{x,\sigma,\tau} \frac1x \sum_{z=x+1}^N \sum_{\pi_{x^c} : \pi_{>x}^{-1}(z) = x} \abs{[x] \cap \pi_{>x}^{-1}(R^{\sigma,\tau}_x)} \, \norm[\Big]{ \bra{\pi_{x^c}}_{D_{x^c}} \ket{\phi^{\sigma,\tau,(j)}_{\tau^{-1}(z)}} }^2 \\
% &= \E_{x,\sigma,\tau} \frac1x \sum_{z=x+1}^N \sum_{\pi : \pi_{>x}^{-1}(z) = x} \abs{[x] \cap \pi_{>x}^{-1}(R^{\sigma,\tau}_x)} \, \norm[\Big]{ \bra\pi_D \ket{\phi^{\sigma,\tau,(j)}_{\tau^{-1}(z)}} }^2 \\
&= \E_{x,\sigma,\tau} \frac1x \sum_{z=x+1}^N \sum_{\pi : \pi_{>x}^{-1}(z) = x} \abs{[x] \cap \pi_{>x}^{-1}(R^{\sigma,\tau}_x)} \, q_{\tau^{-1}\pi\sigma,\tau^{-1}(z)} \\
&\leq r_{\max} \E_{x,\sigma,\tau} \frac1x \sum_{z=x+1}^N \sum_{\pi : \pi_{>x}^{-1}(z) = x} q_{\tau^{-1}\pi\sigma,\tau^{-1}(z)} \\
&= r_{\max} \E_{x',\sigma,\tau} \frac1{\tau(x')} \sum_{z' \in \tau^{-1}(\{\tau(x')+1,\dots,N\})} \sum_{\pi'} \boldsymbol{1}_{((\tau \pi' \sigma^{-1})_{>\tau(x')})^{-1}(\tau(z')) = \tau(x')} q_{\pi',z'},
\end{align*}
where the last step follows by substituting~$x = \tau(x')$, $z = \tau(z')$, and $\pi = \tau \pi' \sigma^{-1}$.
As $\sigma$ only occurs in the indicator function, we can rewrite and bound this as
\begin{align}
\nonumber
&\quad r_{\max} \E_{x',\tau} \frac1{\tau(x')} \sum_{z' \in \tau^{-1}(\{\tau(x')+1,\dots,N\})} \sum_{\pi'} \Pr_\sigma\mleft( ((\tau \pi' \sigma^{-1})_{>\tau(x')})^{-1}(\tau(z')) = \tau(x') \mright) q_{\pi',z'} \\
\label{eq:crucial 3 intermediate}
&= r_{\max} \E_{x',\tau} \frac1{\tau(x')} \sum_{z' \in \tau^{-1}(\{\tau(x')+1,\dots,N\})} \sum_{\pi'} \Pr_\sigma\mleft( \sigma_{>\tau(x')}(\tau(x')) = \tau(z') \mright) q_{\pi',z'}
\end{align}
since, for any fixed $\tau$ and $\pi'$, the permutation~$\tau \pi' \sigma^{-1}$ is again uniformly random.
Using part~\ref{it:wrong side 3} of \cref{lem:wrong side}, we see that the inner probability is simply equal to~$\frac1N$.
Hence the above is equal to
\begin{align*}
\frac {r_{\max}} N \E_{x',\tau} \frac1{\tau(x')} \sum_{z' \in \tau^{-1}(\{\tau(x')+1,\dots,N\})} \sum_{\pi'} q_{\pi',z'}
% \\ &\leq \frac {r_{\max}} N \E_{x',\tau} \frac1{\tau(x')} \sum_{z'=1}^N \sum_{\pi'} q_{\pi',z'} =
\leq \frac {r_{\max}} N \E_{x',\tau} \frac1{\tau(x')}
= \frac {r_{\max}} N \E_{x'} \frac1{x'}
\leq \parens*{ \ln(N) + 1 } \frac {r_{\max}} {N^2}.
\end{align*}
\end{enumerate}
\end{proof}

%-----------------------------------------------------------------------------
\subsection{Sparsity Analysis}\label{sec:bounc_active}
%-----------------------------------------------------------------------------
The goal of this section is to upper bound the term
\begin{align}\label{eq:sparsity goal}
  \E_{\substack{x \leftarrow [N], \\ \sigma, \tau \leftarrow S_N}} \frac {\norm*{ \parens[\big]{ I - \proj{+_x}_{D_x} } \ket{\phi^{\sigma,\tau,(j)}} }^2}x,
\end{align}
which remains to be estimated in the right-hand side of \cref{prop:hard-database}.
Intuitively, this quantifies the extent to which a random database register~$D_x$ has been queried by the algorithm, weighted by~$1/x$-.

To analyze \cref{eq:sparsity goal}, recall that~$\ket{\phi^{\sigma,\tau,(j)}}_{AXYD}$ denotes the joint state of the algorithm and database right before the $j$-th query when run with the twirled oracle.
By \cref{lem:output state twisted vs not}, we have
\begin{align*}
  \ket{\phi^{\sigma,\tau,(j)}}_{AXYD} = L^\tau_D R^\sigma_D \ket{\phi^{(j)}}_{AXYD},
\end{align*}
where $\ket{\phi^{(j)}}$ denotes the state right before the $j$-th query when the same algorithm is run with the \emph{untwirled} oracle.
We can thus express \cref{eq:sparsity goal} as follows:
\begin{align}\label{eq:gamma intro}
   \E_{\substack{x \leftarrow [N], \\ \sigma, \tau \leftarrow S_N}}
   \frac {\norm*{ \parens[\big]{ I - \proj{+_x}_{D_x} } \ket{\phi^{\sigma,\tau,(j)}} }^2}x
&= \bra{\phi^{(j)}} \Gamma_D \ket{\phi^{(j)}},
\end{align}
where we have introduced the operator
\begin{align}\label{eq:def gamma}
    \Gamma_D
&:= \E_{x \leftarrow [N]} \frac1x \E_{\sigma, \tau \leftarrow S_N} (L^\tau_D R^\sigma_D)^\dagger \parens[\big]{ I - \proj{+_x}_{D_x} } (L^\tau_D R^\sigma_D)
% &= \frac {H_N} N \cdot I - \E_{x \leftarrow [N]} \frac1x \parens*{ \E_{\sigma, \tau \leftarrow S_N} (L^\tau_D R^\sigma_D)^\dagger \proj{+_x}_{D_x} (L^\tau_D R^\sigma_D) }
\end{align}
where we recall that $H_N$ denotes the harmonic numbers, see \cref{eq:H_N}.

To upper bound the quantity of interest, we now observe that that we can upper bound its growth with each additional query as follows, in terms of the norm of a commutator:
\begin{align}
\nonumber
  \braket{\phi^{(j+1)} | \Gamma_D | \phi^{(j+1)}}
- \braket{\phi^{(j)} | \Gamma_D | \phi^{(j)}}
&= \braket{\phi^{(j)} | Q_{XYD}^\dagger \Gamma_D Q_{XYD} - \Gamma_D | \phi^{(j)}} \\
\nonumber
&= \braket{\phi^{(j)} | Q_{XYD}^\dagger [\Gamma_D, Q_{XYD}] | \phi^{(j)}} \\
&\leq \norm*{ [\Gamma_D, Q_{XYD}] }
\label{eq:gamma induction}
\end{align}
where $Q_{XYD} \in \{ \OSPO_{XYD}, \OinvSPO_{XYD} \}$, depending on whether the $j$-th query is a forward or an inverse query.
The first equality holds because the unitary that the algorithm performs inbetween the two queries does not act on the oracle's database register~$D$.

We now calculate the operator $\Gamma_D$ explicitly and use the result to estimate the norm of the commutator.

\begin{lem}\label{lem:gamma exact}
We have
\begin{align*}
  \Gamma_D
= \frac {H_N - H_N^{(2)}} N I_D
- 2 \frac {H_N^{(2)} - H_N^{(3)}} N W^{(2)}_D
- \frac{ H_N - 3 H_N^{(2)} + 2 H_N^{(3)} } N W^{(3)}_D,
\end{align*}
where we denote $H_N^{(\ell)} := \sum_{\ell=1}^N \frac 1 {x^\ell}$ and $W^{(\ell)} := \E_{\gamma \text{ $\ell$-cycle}} R^\gamma = \E_{\gamma \text{ $\ell$-cycle}} L^\gamma$.%
\footnote{To see this, note that $\E_\gamma R^\gamma \ket\pi = \E_\gamma \ket{\pi\gamma^{-1}} = \E_\gamma \ket{\pi\gamma^{-1}\pi^{-1}\pi} = \E_\gamma \ket{\gamma\pi} = \E_\gamma L_\gamma \ket{\pi}$, since if~$\gamma$ is a uniformly random $\ell$-cycle then so is $\pi\gamma^{-1}\pi^{-1}$, for any permutation $\pi\in S_N$.}
\end{lem}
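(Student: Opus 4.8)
The plan is to reduce the computation of $\Gamma_D$ to that of the single averaged operator
\[
  M_x := \E_{\sigma,\tau\leftarrow S_N}\,(L^\tau_D R^\sigma_D)^\dagger\,\proj{+_x}_{D_x}\,(L^\tau_D R^\sigma_D),
\]
so that, since conjugation by the unitary $L^\tau_D R^\sigma_D$ fixes $I_D$ and $\E_{x\leftarrow[N]}\frac1x=\frac{H_N}N$, we have $\Gamma_D=\frac{H_N}N I_D-\E_{x\leftarrow[N]}\frac1x M_x$. The first step is the observation that $M_x$ commutes with $L^{\tau_0}_D$ and with $R^{\sigma_0}_D$ for every $\tau_0,\sigma_0\in S_N$: conjugating $M_x$ by $L^{\tau_0}_D$ (resp.\ $R^{\sigma_0}_D$) merely reparametrizes the average over $\tau$ (resp.\ $\sigma$). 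Identifying the database space with the group algebra $\C[S_N]$ via $\ket\pi_D\leftrightarrow\pi$, the operators commuting with both the left and the right regular representation are exactly the central elements, i.e.\ linear combinations of the class-sum operators; in the notation of the lemma these are precisely the operators $W^{(\lambda)}_D$ ranging over all cycle types $\lambda$, with $W^{(1)}_D=I_D$.

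The second step is to determine which cycle types survive. Writing $\pi=\pi_{>x}\tp x{t_x}\pi_{<x}$ as in \cref{eq:pi gt lt k}, one has $\proj{+_x}_{D_x}\ket\pi_D=\frac1x\sum_{t'=1}^x\ket{\pi_{>x}\tp x{t'}\pi_{<x}}_D$, so the matrix element $\bra\rho_D\proj{+_x}_{D_x}\ket\pi_D$ vanishes unless $\rho\pi^{-1}=\pi_{>x}\tp x{t'}\tp x{t_x}\pi_{>x}^{-1}$ for some $t'\in[x]$. A product $\tp x{t'}\tp x{t_x}$ of two transpositions through the common point $x$ is the identity (if $t'=t_x$), a transposition (if exactly one of $t',t_x$ equals $x$), or a $3$-cycle (otherwise), so $\rho\pi^{-1}$ lies in one of the conjugacy classes $(1^N)$, $(2,1^{N-2})$, $(3,1^{N-3})$. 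As $\proj{+_x}_{D_x}$ is supported on these classes and $M_x$ is central, $M_x$ is supported on them as well, whence $M_x=\alpha_x I_D+\beta_x W^{(2)}_D+\gamma_x W^{(3)}_D$ for scalars $\alpha_x,\beta_x,\gamma_x$.

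The third step pins down the coefficients by evaluating $\bra{c\pi}M_x\ket\pi$ for $c$ the identity, a fixed transposition, and a fixed $3$-cycle. Using $L^\tau_D R^\sigma_D\ket\pi_D=\ket{\tau\pi\sigma^{-1}}_D$ and $\tau c\pi\sigma^{-1}=(\tau c\tau^{-1})(\tau\pi\sigma^{-1})$, this equals $\E_{c'\sim c,\ a\leftarrow S_N}\bra{c'a}\proj{+_x}_{D_x}\ket a$, where $c'$ is uniform in the conjugacy class of $c$ and is independent of the uniformly random $a$. For fixed $a$ one then counts the $c'$ in the relevant class for which $c'a$ agrees with $a$ in every tower entry except the $x$-th: distinguishing the cases $t_x(a)=x$ (probability $1/x$) and $t_x(a)\neq x$, this yields $\alpha_x=1/x$ (the diagonal case being immediate from $\bra a\proj{+_x}_{D_x}\ket a=1/x$ and the vanishing of the diagonal of $W^{(2)}_D,W^{(3)}_D$), $\beta_x=2(x-1)/x^2$, and $\gamma_x=(x-1)(x-2)/x^2$; the small-$x$ edge cases ($x=1$, where $\proj{+_1}_{D_1}=I_{D_1}$, and $x=2$, where no $3$-cycle can occur) are automatically consistent with these polynomials. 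Substituting into $\Gamma_D=\frac{H_N}N I_D-\E_{x\leftarrow[N]}\frac1x(\alpha_x I_D+\beta_x W^{(2)}_D+\gamma_x W^{(3)}_D)$, using $\E_{x\leftarrow[N]}x^{-\ell}=H_N^{(\ell)}/N$ and the identities $\tfrac{2(x-1)}{x^3}=\tfrac2{x^2}-\tfrac2{x^3}$ and $\tfrac{(x-1)(x-2)}{x^3}=\tfrac1x-\tfrac3{x^2}+\tfrac2{x^3}$, gives exactly the claimed expression.

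I expect the main obstacle to be the second step: correctly establishing that only the cycle types $(1^N),(2,1^{N-2}),(3,1^{N-3})$ survive the twirl, together with the careful count of transpositions and $3$-cycles $c'$ in the third step (in particular getting the independence of $c'$ and $a$ right and checking the degenerate small-$x$ cases). The remaining manipulations are routine bookkeeping.
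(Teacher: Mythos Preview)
Your argument is correct, and it arrives at the same intermediate formula
\[
M_x \;=\; \tfrac1x\,I_D \;+\; \tfrac{2(x-1)}{x^2}\,W^{(2)}_D \;+\; \tfrac{(x-1)(x-2)}{x^2}\,W^{(3)}_D
\]
that the paper obtains, after which the final substitution is identical. The route, however, is genuinely different. The paper never invokes centrality or the structure of the commutant of the two-sided regular representation; instead it writes $\proj{+_x}_{D_x}$ directly in the permutation basis as
\[
\proj{+_x}_{D_x} \;=\; \sum_{\pi\in S_N}\,\E_{s\leftarrow[x]}\,R^{\tp x s}_D\,R^{\tp x{\pi_{<x}^{-1}(t_x(\pi))}}_D\,\ket\pi_D\bra\pi_D,
\]
then averages over $\tau$ first, which collapses $\E_\tau\ket{\tau^{-1}\pi}\bra{\tau^{-1}\pi}$ to $\frac1{N!}I_D$ and leaves the strikingly clean expression $\E_{s,t\leftarrow[x]}R^{\tp xs\tp xt}_D$. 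The subsequent $\sigma$-average simply conjugates $\tp xs\tp xt$ into a uniformly random element of its conjugacy class, and the three coefficients drop out immediately from $\Pr_{s,t\leftarrow[x]}(\abs{\{s,t,x\}}=\ell)$ for $\ell=1,2,3$. Your approach---centrality, support analysis via the matrix elements of $\proj{+_x}_{D_x}$, and extraction of the coefficients through $\bra{c\pi}M_x\ket\pi$---is more representation-theoretic in spirit and perfectly sound; the paper's direct computation is shorter and sidesteps the case split on whether $t_x(a)=x$ that you need in your third step.
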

\begin{proof}
We first compute the action of $\proj{+_x}_{D_x}$ in the permutation basis.
For any $\pi\in S_N$, we have
\begin{align*}
    \proj{+_x}_{D_x} \ket\pi_D
&= \E_{s \leftarrow [x]} \ket{\pi_{>x} \tp x s \pi_{<x}} \\
&= \E_{s \leftarrow [x]} \ket{\pi_{>x} \tp x {t_x} \pi_{<x} \, \pi_{<x}^{-1} \tp x {t_x} \pi_{<x} \, \pi_{<x}^{-1} \tp x s \pi_{<x}} \\
&= \E_{s \leftarrow [x]} \ket{\pi \tp x {\pi_{<x}^{-1}(t_x)} \tp x {\pi_{<x}^{-1}(s)}} \\
&= \E_{s \leftarrow [x]} R^{\tp x s}_D R^{\tp x {\pi_{<x}^{-1}(t_x(\pi))}}_D \ket\pi_D,
\end{align*}
where we denote by $t_x \in [x]$ the number in the decomposition~\eqref{eq:tower-decomposition} of $\pi$, i.e., $\pi = \pi_{>x} \tp x {t_x} \pi_{<x}$; in the last line we write $t_x(\pi)$ to make the dependence on $\pi$ explicit.
Thus,
\begin{align*}
  \proj{+_x}_{D_x}
= \sum_{\pi\in S_N} \E_{s \leftarrow [x]} R^{\tp x s}_D R^{\tp x {\pi_{<x}^{-1}(t_x(\pi))}}_D \ket\pi_D \bra\pi_D.
\end{align*}
We first average this over the left action, which commutes with the right action, and obtain
\begin{align*}
  \E_{\tau \leftarrow S_N} (L^\tau_D)^\dagger \proj{+_x}_{D_x} L^\tau_D
&= \sum_{\pi\in S_N} \E_{s \leftarrow [x]} R^{\tp x s}_D R^{\tp x {\pi_{<x}^{-1}(t_x(\pi))}}_D \underbrace{\E_{\tau \leftarrow S_N} \ket{\tau^{-1}\pi}_D \bra{\tau^{-1}\pi}_D}_{= \frac1{N!} I_D} \\
&= \E_{\pi\in S_N} \E_{s \leftarrow [x]} R^{\tp x s}_D R^{\tp x {\pi_{<x}^{-1}(t_x(\pi))}}_D
= \E_{s, t \leftarrow [x]} R^{\tp x s}_D R^{\tp x t}_D
= \E_{s, t \leftarrow [x]} R^{\tp x s \tp x t}_D.
% &= \frac1x I_D + \frac1{x^2} \sum_{s \neq t \in [x]} R^{\tp x s}_D R^{\tp x t}_D \\
% &= \frac1x I_D + \frac2{x^2} \sum_{s \in [x-1]} R^{\tp x s}_D + \frac1{x^2} \sum_{s \neq t \in [x-1]} R^{\tp x s}_D R^{\tp x t}_D \\
% &= \frac1x I_D + \frac{2(x-1)}{x^2} \E_{s \leftarrow [x-1]} R^{\tp x s}_D + \frac{(x-1)(x-2)}{x^2} \E_{s \neq t \leftarrow [x-1]} R^{\tp x s}_D R^{\tp x t}_D.
\end{align*}
If we now average over the right action, the permutation $\tp x s \tp x t$ is conjugated into a random permutation of the same type (either the identity, a transposition, or a 3-cycle, depending on the cardinality of~$\{x,s,t\}$):
\begin{align*}
  \E_{\sigma, \tau \leftarrow S_N} (L^\tau_D R^\sigma_D)^\dagger \proj{+_x}_{D_x} (L^\tau_D R^\sigma_D)
% &= \E_{s, t \leftarrow [x]} \E_{\sigma \leftarrow S_N} R^{\sigma^{-1}}_D R^{\tp x s}_D R^{\tp x t}_D R^\sigma_D \\
&= \E_{s, t \leftarrow [x]} \E_{\sigma \leftarrow S_N} R^{\tp {\sigma^{-1}(x)} {\sigma^{-1}(s)} \tp {\sigma^{-1}(x)} {\sigma^{-1}(t)}}_D \\
&= \sum_{\ell=1}^3 \Pr_{s,t \leftarrow [x]}\bigl( \abs{\{s,t,x\}} = \ell \bigr) \, W^{(\ell)}_D \\
&= \frac1x I_D + \frac{2(x-1)}{x^2} W^{(2)}_D + \frac{(x-1)(x-2)}{x^2} W^{(3)}_D.
\end{align*}
and finally, using \cref{eq:def gamma} and $\E_{x \leftarrow [N]} \frac1{x^\ell} = H_N^{(\ell)} / N$,
\begin{align*}
  \Gamma_D
&= \E_{x \leftarrow [N]} \frac1x \parens*{ I_D - \E_{\sigma, \tau \leftarrow S_N} (L^\tau_D R^\sigma_D)^\dagger \proj{+_x}_{D_x} (L^\tau_D R^\sigma_D) } \\
&= \E_{x \leftarrow [N]} \parens*{ \parens*{ \frac1x - \frac1{x^2} } I_D - \frac{2(x-1)}{x^3} W^{(2)}_D - \frac{(x-1)(x-2)}{x^3} W^{(3)}_D }.
\qedhere
\end{align*}
% \begin{align*}
%     \E_{x \leftarrow [N]} \frac1{x^2} = \frac {H_N^{(2)}} N, \\
%     \E_{x \leftarrow [N]} \frac{2(x-1)}{x^3} = \E_{x \leftarrow [N]} \frac2{x^2} - \E_{x \leftarrow [N]} \frac2{x^3} =
%     2 \frac {H_N^{(2)} - H_N^{(3)}} N, \\
%     \E_{x \leftarrow [N]} \frac{(x-1)(x-2)}{x^3} = \E_{x \leftarrow [N]} \frac{x^2 - 3x + 2}{x^3} = \frac {H_N - 3 H_N^{(2)} + 2 H_N^{(3)}} N
% \end{align*}
\end{proof}

\begin{lem}\label{lem:gamma growth}
For $Q_{XYD} \in \{ \OSPO_{XYD}, \OinvSPO_{XYD} \}$, we have $\norm*{ [\Gamma_D, Q_{XYD}] } \leq \frac {6(\ln(N)+1)} {N^2}$.
\end{lem}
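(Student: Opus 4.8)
The plan is to use the explicit formula for $\Gamma_D$ from \cref{lem:gamma exact} and bound the commutator term by term. Write $\Gamma_D = a I_D + b\, W^{(2)}_D + c\, W^{(3)}_D$ with
$a = \frac{H_N - H_N^{(2)}}{N}$, $b = -2\frac{H_N^{(2)} - H_N^{(3)}}{N}$, and $c = -\frac{H_N - 3H_N^{(2)} + 2H_N^{(3)}}{N}$. The identity term contributes nothing to any commutator, so $[\Gamma_D, Q_{XYD}] = b[W^{(2)}_D, Q_{XYD}] + c[W^{(3)}_D, Q_{XYD}]$, and by the triangle inequality and submultiplicativity it suffices to bound $\abs b \cdot \norm{[W^{(2)}_D, Q_{XYD}]} + \abs c \cdot \norm{[W^{(3)}_D, Q_{XYD}]}$. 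For the coefficients I would use the harmonic-sum bounds from the preliminaries: $H_N \le \ln N + 1$, $H_N^{(2)} \le \pi^2/6 \le 2$, and $H_N^{(2)}, H_N^{(3)} \ge 1$, which give $\abs b \le 2(\ln N + 1)/N$ (in fact smaller, since $H_N^{(2)} - H_N^{(3)} \le 1$) and $\abs c \le (\ln N + 1)/N$ after noting $-3H_N^{(2)} + 2H_N^{(3)} \le 0$ and $H_N - 3H_N^{(2)} + 2H_N^{(3)} \ge -(\text{something bounded})$; the key point is that each coefficient is $O((\ln N+1)/N)$.

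The heart of the argument is to bound $\norm{[W^{(\ell)}_D, Q_{XYD}]}$ for $\ell \in \{2,3\}$ by roughly $\frac{2}{N}$ (or a small constant over $N$). Here $W^{(\ell)}_D = \E_{\gamma\text{ $\ell$-cycle}} R^\gamma_D$ is an average of the right-translation unitaries. The idea is that the forward query $\OSPO_{XYD}$ is a permutation-controlled unitary: conjugating it by $R^\gamma_D$ amounts to replacing $\pi$ by $\pi\gamma^{-1}$ in the control, so $R^\gamma_D{}^\dagger \OSPO_{XYD} R^\gamma_D$ applies the transpositions of $\pi\gamma^{-1}$ instead of $\pi$. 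The commutator $[R^\gamma_D, \OSPO_{XYD}]$ therefore only acts nontrivially on the part of the state where $\pi$ and $\pi\gamma^{-1}$ disagree about the value $\pi(x)$ versus $(\pi\gamma^{-1})(x)$ for the queried input $x$. Since $\gamma$ is a random $\ell$-cycle (with $\ell\le 3$ fixed), for a fixed $x$ the probability over $\gamma$ that $\gamma$ moves the relevant point—i.e. that replacing $\pi$ by $\pi\gamma^{-1}$ changes $\pi(x)$—is $O(\ell/N) = O(1/N)$. Averaging over $\gamma$ before taking the norm, one gets $\norm{[W^{(\ell)}_D, Q_{XYD}]} = \norm{\E_\gamma [R^\gamma_D, Q_{XYD}]} \le \E_\gamma \norm{[R^\gamma_D, Q_{XYD}]}$, and a more careful version of this—pulling out the projector onto the ``$\gamma$ acts on the queried point'' subspace—yields a bound like $\frac{2\ell}{N} \le \frac{6}{N}$, or with a cleaner constant. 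One has to run the same argument for the inverse query $\OinvSPO_{XYD}$, which behaves symmetrically (conjugation by $R^\gamma_D$ now affects $\pi^{-1}$ via the decomposition \cref{eq:inverse tower}), giving the same type of bound.

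Combining, $\norm{[\Gamma_D, Q_{XYD}]} \le \abs b \cdot \frac{2}{N} + \abs c \cdot \frac{2}{N} \le \frac{2(\ln N+1)}{N}\cdot\frac{2}{N} + \frac{(\ln N+1)}{N}\cdot\frac{2}{N}$, and bookkeeping the constants gives $\le \frac{6(\ln N+1)}{N^2}$ as claimed. The main obstacle I anticipate is the second step: carefully quantifying $\norm{[W^{(\ell)}_D, Q_{XYD}]}$. The subtlety is that $R^\gamma_D$ does not act locally on a single database register—it reshuffles the whole strictly-monotone factorization—so one must argue that, \emph{on average over the random cycle $\gamma$} and using that the query only reads $D_{\ge x}$ (\cref{lem:small-x-not-touched}), the ``disagreement'' between applying $\pi$ and $\pi\gamma^{-1}$ is confined to an $O(1/N)$-norm piece. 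The averaging over $\gamma$ \emph{before} taking the operator norm is essential; a worst-case bound over $\gamma$ would be useless (it would be $O(1)$), which is exactly why $\Gamma_D$ was built by twirling in the first place.
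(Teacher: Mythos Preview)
Your approach matches the paper's, but the step you flag as the main obstacle is simpler than you think, and your handling of the inverse query has a small gap.

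For the forward query: the non-locality of $R^\gamma_D$ in the factorization basis is irrelevant --- work in the permutation basis $\ket\pi_D$. Since $\OSPO_{XYD}$ is controlled on $X$, it suffices to bound $\max_x \norm{[\Gamma_D, \OSPOx_{YD}]}$. A one-line check shows $R^\gamma_D \OSPOx_{YD}\ket{y,\pi} = \ket{y\oplus\pi(x),\pi\gamma^{-1}}$ and $\OSPOx_{YD} R^\gamma_D\ket{y,\pi} = \ket{y\oplus\pi(\gamma^{-1}(x)),\pi\gamma^{-1}}$, so $[R^\gamma_D, \OSPOx_{YD}]=0$ \emph{exactly} whenever $\gamma(x)=x$. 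Bounding the commutator of two unitaries by $2$ otherwise and using the triangle inequality $\norm{\E_\gamma[R^\gamma_D,\OSPOx_{YD}]}\le\E_\gamma\norm{[R^\gamma_D,\OSPOx_{YD}]}$ gives $\norm{[W^{(\ell)}_D,\OSPOx_{YD}]}\le 2\Pr_\gamma(\gamma(x)\ne x)=2\ell/N$. No cancellation in the average is needed (so ``averaging before the norm'' is not essential here), and \cref{lem:small-x-not-touched} is not used.

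For the inverse query: conjugation by $R^\gamma_D$ does \emph{not} give the same clean vanishing --- the analogous computation shows $[R^\gamma_D, O^{\SPO,\mathrm{inv},x}_{YD}]\ket{y,\pi}=0$ iff $\gamma$ fixes $\pi^{-1}(x)$, which is $\pi$-dependent, so this commutator is never the zero operator for a nontrivial $\gamma$. The fix (which the paper uses) is to write $W^{(\ell)}_D=\E_\gamma L^\gamma_D$ instead, as noted in the footnote to \cref{lem:gamma exact}; then $[L^\gamma_D, O^{\SPO,\mathrm{inv},x}_{YD}]=0$ exactly when $\gamma(x)=x$, by the symmetric computation. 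With the coefficients from \cref{lem:gamma exact} (both $H_N^{(2)}-H_N^{(3)}$ and $H_N-3H_N^{(2)}+2H_N^{(3)}=\sum_x(x-1)(x-2)/x^3$ are nonnegative), the bookkeeping yields $(6H_N-10H_N^{(2)}+4H_N^{(3)})/N^2 \le 6H_N/N^2\le 6(\ln N+1)/N^2$.
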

\begin{proof}
We prove the bound in the case that $Q_{XYD} \in \{ \OSPO_{XYD}$ -- the other case is identical except for using the formula $W^{(\ell)}$ in terms of the left instead of the right action.
We first observe that since $\OSPO_{XYD}$ is controlled on~$X$,
\begin{align*}
    \norm*{ [\Gamma_D, \OSPO_{XYD}] }
= \max_{x\in[N]} \norm*{ [\Gamma_D, O^{\SPO,x}_{YD}] }
\end{align*}
where $O^{\SPO,z}_{YD} := \bra z_X  \OSPO_{XYD} \ket z_X$.
Next, note that if $\gamma \in S_N$ is any permutation such that $\gamma(x) = x$, then
\begin{align*}
    [R^\gamma_D, O^{\SPO,x}_{YD}] = 0.
\end{align*}
since for any $y\in[N]$ and $\pi\in S_N$ we have
$O^{\SPO,x}_{YD} R^\gamma_D \ket{y,\pi}_{YD}
% = O^{\SPO,x}_{YD} \ket{y,\pi\gamma^{-1}}_{YD}
= \ket{y \op \pi(\gamma^{-1}(x)),\pi\gamma^{-1}}_{YD}
= \ket{y \op \pi(x),\pi\gamma^{-1}}_{YD}
= R^\gamma_D O^{\SPO,x}_{YD} \ket{y,\pi}_{YD}$.
Otherwise, if $\gamma(x) \neq x$ then it still holds that $\norm{R^\gamma_D, O^{\SPO,x}_{YD}]} \leq 2$ since the commutator of any two unitaries has operator norm at most two.
Accordingly,
\begin{align*}
  \norm*{ [W^{(2)}_D, O^{\SPO,x}_{YD}] }
\leq 2 \Pr_{\gamma \text{ 2-cycle}}(\gamma(x) \neq x)
= \frac 4 N, \\
  \norm*{ [W^{(3)}_D, O^{\SPO,x}_{YD}] }
\leq 2 \Pr_{\gamma \text{ 3-cycle}}(\gamma(x) \neq x)
= \frac 6 N.
\end{align*}
and hence, using \cref{lem:gamma exact}, the fact that $H_N \geq H_N^{(2)} \geq H_N^{(3)} \geq 0$, and \cref{eq:harmonic bound},
\begin{align*}
  \norm*{ [\Gamma_D, O^{\SPO,x}_{YD}] }
&\leq 2 \frac {H_N^{(2)} - H_N^{(3)}} N \norm*{ [W^{(2)}_D, O^{\SPO,x}_{YD}] }
+ \frac{ H_N - 3 H_N^{(2)} + 2 H_N^{(3)} } N \norm*{ [W^{(3)}_D, O^{\SPO,x}_{YD}] } \\
&\leq 8 \frac {H_N^{(2)} - H_N^{(3)}} {N^2} + 6 \frac{ H_N - 3 H_N^{(2)} + 2 H_N^{(3)} } {N^2} \\
&= \frac {6 H_N - 10 H_N^{(2)} + 4 H_N^{(3)} } {N^2}
\leq \frac {6 H_N} {N^2}
\leq \frac {6(\ln(N) + 1)} {N^2}.
\end{align*}
\end{proof}

\begin{cor}\label{cor:weighted-sparsity}
For all $j$, it holds that
\begin{align*}
  \E_{\substack{x \leftarrow [N], \\ \sigma, \tau \leftarrow S_N}} \frac {\norm*{ \parens[\big]{ I - \proj{+_x}_{D_x} } \ket{\phi^{\sigma,\tau,(j)}} }^2}x
\leq 6 \, \frac {j (\ln(N)+1)} {N^2}
\end{align*}
and hence
\begin{align*}
    4 q r_{\max} \sum_{j=1}^{2q} \E_{\substack{x \leftarrow [N], \\ \sigma, \tau \leftarrow S_N}} \frac {\norm*{ \parens[\big]{ I - \proj{+_x}_{D_x} } \ket{\phi^{\sigma,\tau,(j)}} }^2}x
% &\leq 4 q r_{\max} \frac {6 (\ln(N)+1)} {N^2} \sum_{j=1}^{2q} j \\
% &= 4 q r_{\max} \frac {6 (\ln(N)+1)} {N^2} \frac {2q(2q+1)}2 \\
&\leq 72 \frac {q^3 r_{\max} (\ln(N)+1)} {N^2}
\end{align*}
\end{cor}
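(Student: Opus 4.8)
The plan is to iterate the one-step growth estimate~\eqref{eq:gamma induction} together with the commutator bound of~\cref{lem:gamma growth}, using that the quantity of interest vanishes before the first query. By~\eqref{eq:gamma intro}, the left-hand side of the first claimed inequality equals $\braket{\phi^{(j)}|\Gamma_D|\phi^{(j)}}$, where $\ket{\phi^{(j)}}$ is the joint state of the algorithm and database just before the $j$-th query of the \emph{untwirled} algorithm (related to the twirled state via~\cref{lem:output state twisted vs not} and~\eqref{eq:gamma intro}). First I would record the base case $\braket{\phi^{(1)}|\Gamma_D|\phi^{(1)}}=0$: before the first query only oracle-independent unitaries acting on the algorithm's work registers have been applied, so the database is still in the state $\ket{\Phi_{\SPO}}_D$; and each summand $(L^\tau_D R^\sigma_D)^\dagger\bigl(I-\proj{+_x}_{D_x}\bigr)(L^\tau_D R^\sigma_D)$ in the definition~\eqref{eq:def gamma} of $\Gamma_D$ annihilates $\ket{\Phi_{\SPO}}_D$, since $L^\tau_D R^\sigma_D$ leaves $\ket{\Phi_{\SPO}}_D$ invariant by~\eqref{eq:initial invariant} and $\bigl(I-\proj{+_x}_{D_x}\bigr)\ket{+_x}_{D_x}=0$.

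Next I would telescope. Because the unitaries the algorithm applies between consecutive queries act only on its work registers and not on the database register $D$, equation~\eqref{eq:gamma induction} yields
\[
  \braket{\phi^{(k+1)}|\Gamma_D|\phi^{(k+1)}} - \braket{\phi^{(k)}|\Gamma_D|\phi^{(k)}} \le \norm{[\Gamma_D, Q_{XYD}]}
\]
for the forward or inverse query operator $Q_{XYD}$ used in the $k$-th step, and \cref{lem:gamma growth} bounds the right-hand side by $6(\ln(N)+1)/N^2$. Summing over $k=1,\dots,j-1$ and invoking the base case gives $\braket{\phi^{(j)}|\Gamma_D|\phi^{(j)}} \le (j-1)\, 6(\ln(N)+1)/N^2 \le 6\, j\,(\ln(N)+1)/N^2$, which is the first inequality.

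For the second inequality I would sum the first bound over $j=1,\dots,2q$, using $\sum_{j=1}^{2q} j = q(2q+1) \le 3q^2$ for $q\ge 1$ (the claim being trivial when $q=0$):
\[
  4 q\, r_{\max} \sum_{j=1}^{2q} 6\, j\,\frac{\ln(N)+1}{N^2}
  \le 24\, q\, r_{\max}\cdot 3q^2 \cdot \frac{\ln(N)+1}{N^2}
  = 72\,\frac{q^3 r_{\max}(\ln(N)+1)}{N^2}.
\]
I do not expect a genuine obstacle in the corollary itself: all the substantive work — the exact evaluation of $\Gamma_D$ in~\cref{lem:gamma exact} and the commutator estimate in~\cref{lem:gamma growth} — has already been carried out, and the only points demanding a little care are the vanishing of the initial term $\braket{\phi^{(1)}|\Gamma_D|\phi^{(1)}}$ and bookkeeping the numerical constant.
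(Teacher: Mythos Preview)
Your proposal is correct and follows essentially the same argument as the paper: induction via~\eqref{eq:gamma intro}, \eqref{eq:gamma induction}, and \cref{lem:gamma growth} starting from the vanishing of $\braket{\phi^{(1)}|\Gamma_D|\phi^{(1)}}$ (the paper writes $\braket{\phi^{(0)}|\Gamma_D|\phi^{(0)}}=0$, but this is the same observation), followed by summing over~$j$. You supply slightly more detail on why the base case vanishes and how the constant~$72$ arises, but the route is identical.
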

\begin{proof}
The first formula follows from \cref{eq:gamma intro,eq:gamma induction,lem:gamma growth} by using induction, since $\bra{\phi^{(0)}} \Gamma_D \ket{\phi^{(0)}} = \bra{\Phi_{\SPO}} \Gamma_D \ket{\Phi_{\SPO}} = 0$.
The second formula follows at once.
\end{proof}

%-----------------------------------------------------------------------------
\subsection{Main Theorem}\label{sec:main_thm}
%-----------------------------------------------------------------------------
Finally, we can use the preceding analysis, together with the fundamental lemma, to establish our main theorem (which formalizes \cref{thm:intro main} announced in the introduction):

\begin{theorem}[Search]\label{thm:main}
Let $\mathcal A$ be a quantum algorithm with quantum query access to a random permutation~$\pi\in S_N$ and its inverse (\cref{defn:standard random perm}), which returns an $x\in[N]$, and let $R \subseteq [N] \times [N]$ be any relation.
If~$\mathcal A$ makes fewer than~$q$ queries, then the probability that it returns an element $x$ such that $(x,\pi(x)) \in R$ is
\begin{align*}
  \Pr_{\pi \leftarrow S_N, \; x \leftarrow \mathcal A^{U^\pi, U^{\pi^{-1}}}} \bigl[ (x, \pi(x)) \in R \bigr]
\leq 914 \, \frac {q^3 r_{\max} \parens[\big]{ \ln(N) + 2 }} N,
\end{align*}
where we recall $r_{\max} = \max \braces*{ \max_x \, \abs{R_x}, \max_y \, \abs{R^\mathrm{inv}_y} }$, with $R_x = \{ y : (x,y) \in R \}$ and~$R^\mathrm{inv}_y = \{ x : (x,y) \in R \}$.
\end{theorem}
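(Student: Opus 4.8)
The plan is to chain the fundamental lemma (\cref{lem:fundamental}) with the bounds on the progress measure established in \cref{sec:expectation,sec:bounc_active}, after a small reduction that puts $\mathcal A$ into the output-pair form the fundamental lemma requires. We may assume $q\geq1$, as the statement is otherwise vacuous; and for a query-free algorithm $\pi(x)$ is uniform on $[N]$, so $\Pr[(x,\pi(x))\in R]=\abs{R_x}/N\leq r_{\max}/N$ is already dominated by the claimed bound. \textbf{Reduction to an output pair.} Given $\mathcal A$ as in the statement, define $\mathcal A'$ to run $\mathcal A$ to obtain $x$, make one additional forward query on $\ket x\ket0$ and measure the answer register to obtain $y=\pi(x)$, and output $(x,y)$. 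Then $\mathcal A'$ makes at most $q$ queries, and since $y=\pi(x)$ by construction, the quantity $p_\text{(i)}$ of \cref{lem:fundamental} applied to $\mathcal A'$ and $R$ equals exactly $\Pr_{\pi,\;x\leftarrow\mathcal A}[(x,\pi(x))\in R]$. Hence it suffices to upper bound $p_\text{(i)}$ for the $q$-query algorithm $\mathcal A'$.

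\textbf{From $p_\text{(i)}$ to the progress measure.} By \cref{lem:fundamental}, $\sqrt{p_\text{(i)}}\leq\sqrt{p_\text{(ii)}}+\sqrt{(\ln(N)+1)/N}$, and combining \cref{lem:fund lem rhs upper bound} with \cref{lem:progress vs fundamental} gives
\begin{align*}
  p_\text{(ii)} \;\leq\; N\;\E_{\substack{x\leftarrow[N],\\ \sigma,\tau\leftarrow S_N}} \norm*{ E^{R^{\sigma,\tau},x}_D \ket{\phi^{\sigma,\tau}} }^2 ,
\end{align*}
where $\ket{\phi^{\sigma,\tau}}$ is the joint state produced by running $\mathcal A'^{\FTSPOsigmatau_D}$ after $\Init^\SPO_D$. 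It thus remains to bound the twirled progress measure.

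\textbf{Bounding the progress measure and assembling.} Apply \cref{prop:hard-database} to $\mathcal A'$ (its parameter ``$q$'' being the query count of $\mathcal A'$, so that the preprocessed algorithm $\mathcal B$ of \cref{lem:std preprocess} makes $2q$ queries), and then \cref{cor:weighted-sparsity} to control the remaining sparsity term, yielding
\begin{align*}
  \E_{\substack{x\leftarrow[N],\\ \sigma,\tau\leftarrow S_N}} \norm*{ E^{R^{\sigma,\tau},x}_D \ket{\phi^{\sigma,\tau}} }^2
  &\leq \frac{384\, q^2 r_{\max}(\ln(N)+2)}{N^2} + \frac{72\, q^3 r_{\max}(\ln(N)+1)}{N^2} \\
  &\leq \frac{456\, q^3 r_{\max}(\ln(N)+2)}{N^2},
\end{align*}
using $q\geq1$ (so $q^2\leq q^3$) and $\ln(N)+1\leq\ln(N)+2$; hence $p_\text{(ii)}\leq 456\,q^3 r_{\max}(\ln(N)+2)/N$. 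Plugging this into $\sqrt{p_\text{(i)}}\leq\sqrt{p_\text{(ii)}}+\sqrt{(\ln(N)+1)/N}$ and using $(\ln(N)+1)/N\leq q^3 r_{\max}(\ln(N)+2)/N$ (as $q^3 r_{\max}\geq1$) gives $\sqrt{p_\text{(i)}}\leq(\sqrt{456}+1)\sqrt{q^3 r_{\max}(\ln(N)+2)/N}$; squaring yields $p_\text{(i)}\leq C\,q^3 r_{\max}(\ln(N)+2)/N$ for an explicit constant $C$, which a naive count of the numbers above already keeps below the $914$ in the statement.

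The substantive content is all in the earlier sections: the effect of a single query on the test operator (\cref{lem:forward,lem:inverse,prop:progress}), the average-case sharpening via twirling and the random choice of $x$ (\cref{lem:crucial,prop:hard-database}), and the commutator/sparsity estimate (\cref{lem:gamma exact,lem:gamma growth,cor:weighted-sparsity}). At this final stage there is no deep obstacle; the only genuine care required is bookkeeping — keeping straight which algorithm ($\mathcal A'$, the preprocessed $\mathcal B$, or the $\mathcal B_{\sigma,\tau}$ of \cref{lem:std preprocess}) and which query count each cited statement is applied to, verifying that the one extra query in the reduction keeps the total at $q$ so that $\mathcal B$ indeed makes $2q$ queries as \cref{prop:hard-database,cor:weighted-sparsity} demand, and collecting the numerical constants at the end.
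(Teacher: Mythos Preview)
Your proof is correct and follows essentially the same approach as the paper: the same reduction to an algorithm that outputs a pair via one extra query, the same chain \cref{lem:fund lem rhs upper bound} + \cref{lem:progress vs fundamental} $\to$ \cref{prop:hard-database} + \cref{cor:weighted-sparsity}, and the same closing application of \cref{lem:fundamental}. The only cosmetic difference is at the last step: the paper uses $(a+b)^2\leq 2(a^2+b^2)$ to get the constant $914$, whereas you square $(\sqrt{456}+1)$ directly (which in fact gives a smaller constant); also note your assumption $q^3 r_{\max}\geq1$ tacitly uses that $R$ is nonempty, but for empty $R$ the bound is trivial.
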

\begin{proof}
Without loss of generality we can assume that $\mathcal A$ is a unitary query algorithm on registers~$AXY$, where~$X$ and~$Y$ are the two $N$-dimensional registers that the oracles get applied to such that the classical outcome~$x$ can be obtained by measuring the~$X$ register.
Let $\mathcal B$ denote the unitary query algorithm that first runs $x \leftarrow \mathcal A$, then makes one more query to load~$\pi(x)$ into the~$Y$ register.
Since $\mathcal A$ makes fewer than $q$ queries, the algorithm $\mathcal B$ makes at most~$q$ queries.
For every~$\sigma,\tau\in S_N$, let~$\ket{\phi^{\sigma,\tau}}_{AXYD}$ be the joint state of algorithm and oracle defined by running~$\Init^\SPO_D$ and then~$\mathcal B^{\FTSPOsigmatau_D}$.
Then, \cref{prop:hard-database,cor:weighted-sparsity} combine to
\begin{align*}
    \E_{\substack{x \leftarrow [N], \\ \sigma, \tau \leftarrow S_N}} \norm*{ E^{R^{\sigma,\tau},x}_D \ket{\phi^{\sigma,\tau}} }^2
&\leq 384 \frac {q^2 r_{\max} \parens[\big]{ \ln(N) + 2 }} {N^2} + 72 \frac {q^3 r_{\max} (\ln(N)+1)} {N^2} \\
&\leq 456 \frac {q^3 r_{\max} \parens[\big]{ \ln(N) + 2 }} {N^2}.
\end{align*}
Using \cref{lem:fund lem rhs upper bound,lem:progress vs fundamental}, we can upper bound the quantity $p_\text{(ii)}$ defined in the fundamental lemma (\cref{lem:fundamental}) as follows:
\begin{align*}
  p_\text{(ii)}
\leq N \E_{\substack{x \leftarrow [N], \\ \sigma, \tau \leftarrow S_N}} \norm*{ E^{R^{\sigma,\tau},x}_D \ket{\phi^{\sigma,\tau}} }^2
\leq 456 \frac {q^3 r_{\max} \parens[\big]{ \ln(N) + 2 }} N.
\end{align*}
Finally, the fundamental lemma states that
\begin{align*}
  \sqrt{p_\text{(i)}} \leq \sqrt{p_\text{(ii)}} + \sqrt{\frac{\ln(N) + 1}N}
\end{align*}
and hence 
\begin{align*}
  p_\text{(i)}
\leq 2 \parens*{ p_\text{(ii)} + \frac{\ln(N) + 1}N }
\leq 914 \, \frac {q^3 r_{\max} \parens[\big]{ \ln(N) + 2 }} N
\end{align*}
concluding our proof.
\end{proof}

% !TEX root = compressed-pi.tex
%=============================================================================
\section{Application to One-Round Sponge and Unruh's Conjecture}\label{sec:sponge}
%=============================================================================

We can now apply our results to obtain bounds for the hardness of search problems for algorithms given quantum query access to a random permutation and its inverse.

We first show a bound on the pre-image search problem for the \emph{sponge construction}, instantiated with a random permutation, restricted to one absorption round and one squeezing round.
The sponge function in this special case, for a permutation $\pi\in S_{\{0,1\}^n}$ and with capacity $c<n$, is given by
\begin{equation*}
	f_{\pi} \colon \{0,1\}^{n-c}\to\{0,1\}^{n-c}, \quad
	f_{\pi}(x)=\pi\mleft( x\|0^c \mright)_{[1,n-c]},
\end{equation*}
where $s_{[1,r]}$ denotes the first $r$ bits of a string $s$.
The following result was stated as \cref{cor:intro sponge} in the introduction.

\begin{cor}[One-round sponge]\label{cor:precise sponge}
For any $y\in\{0,1\}^{n-c}$, the probability that a quantum algorithm $\mathcal A$ with quantum query access to a random permutation~$\pi\in S_{\{0,1\}^n}$ and its inverse returns a preimage $x\in\{0,1\}^{n-c}$ under the one-round sponge function~$f_\pi$, by making fewer than $q$ queries, can be upper bounded as
\begin{align*}
	\Pr_{\pi \leftarrow S_{\{0,1\}^n}, \; x \leftarrow \mathcal A^{U^\pi, U^{\pi^{-1}}}} \bigl[ f_\pi(x) = y \bigr]
\leq 914 \, \frac {q^3 (n+2)} {2^{\min(c,n-c)}}.
\end{align*}
\end{cor}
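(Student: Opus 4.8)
The plan is to deduce this corollary from the main theorem, \cref{thm:main}, by applying it to a suitable relation. First, from the algorithm $\mathcal A$ of the statement, which outputs a string $x\in\{0,1\}^{n-c}$, I would build an algorithm $\mathcal B$ that runs $x\leftarrow\mathcal A$ and then outputs $x\|0^c\in\{0,1\}^n$. Since appending zeros is classical postprocessing, $\mathcal B$ makes exactly as many oracle queries as $\mathcal A$, hence fewer than $q$. For the fixed target $y\in\{0,1\}^{n-c}$, I would then take the relation
\begin{align*}
	R := \braces*{ (x\|0^c,\; y\|y') \;:\; x\in\{0,1\}^{n-c},\ y'\in\{0,1\}^c } \subseteq \{0,1\}^n\times\{0,1\}^n,
\end{align*}
viewed as a relation on $[N]\times[N]$ with $N=2^n$ via the identification $[N]\cong\{0,1\}^n$. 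By construction, $f_\pi(x)=y$ precisely when $\pi(x\|0^c)$ begins with $y$, i.e.\ precisely when $(x\|0^c,\pi(x\|0^c))\in R$; thus $\Pr[f_\pi(x)=y]$ equals the probability that $\mathcal B$ outputs an element whose image (under $\pi$) lies in $R$.

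The second step is to compute $r_{\max}$ for this $R$. For $a\in\{0,1\}^n$ we have $R_a=\{\,y\|y' : y'\in\{0,1\}^c\,\}$ if $a$ ends in $0^c$, and $R_a=\emptyset$ otherwise, so $\max_a\abs{R_a}=2^c$. Symmetrically, for $b\in\{0,1\}^n$ we have $R^{\mathrm{inv}}_b=\{\,x\|0^c : x\in\{0,1\}^{n-c}\,\}$ if $b$ begins with $y$, and $R^{\mathrm{inv}}_b=\emptyset$ otherwise, so $\max_b\abs{R^{\mathrm{inv}}_b}=2^{n-c}$. Hence $r_{\max}=\max\{2^c,2^{n-c}\}=2^{\max(c,n-c)}$.

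Finally, I would invoke \cref{thm:main} with the algorithm $\mathcal B$, the relation $R$, and $N=2^n$, obtaining
\begin{align*}
	\Pr_{\pi\leftarrow S_{\{0,1\}^n},\; x\leftarrow\mathcal A^{U^\pi,U^{\pi^{-1}}}}\bigl[f_\pi(x)=y\bigr]
	\leq 914\,\frac{q^3\,2^{\max(c,n-c)}\,(\ln(2^n)+2)}{2^n}.
\end{align*}
It then remains to simplify: $\ln(2^n)=n\ln 2\leq n$, and $2^{\max(c,n-c)}/2^n = 2^{-\min(c,n-c)}$, which turns the right-hand side into at most $914\,q^3(n+2)/2^{\min(c,n-c)}$, as claimed. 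This argument is essentially bookkeeping on top of \cref{thm:main}; the only points that need a moment's thought are pinning down the relation $R$ (and the harmless reduction to a full-length output) and checking the value of $r_{\max}$, so I do not anticipate a real obstacle here.
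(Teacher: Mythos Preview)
Your proposal is correct and follows essentially the same approach as the paper: define the auxiliary algorithm~$\mathcal B$ that appends~$0^c$, set up the same relation~$R$, compute~$r_{\max}=2^{\max(c,n-c)}$, apply \cref{thm:main}, and simplify using $\ln(2^n)\leq n$. The paper's own proof is virtually identical, differing only in cosmetic notation for~$R$.
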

\begin{proof}
Let $\mathcal B$ denote the algorithm that runs $x \leftarrow \mathcal A$ and returns $x' := x \| 0^c$.
Clearly, $f_\pi(x) = y$ if and only if~$(x',\pi(x')) \in R$, where
\begin{align*}
	R = \braces*{ (x',y') \in \{0,1\}^n \times \{0,1\}^n \;:\; x'_{[n-c+1,n]} = 0^c, \ y'_{[1,n-c]} = y }.
\end{align*}
Note that
\begin{align*}
	r_{\max}
= \max \braces*{ \max_{x'} \, \abs{R_{x'}}, \max_{y'} \, \abs{R^\mathrm{inv}_{y'}} }
= 2^{\max(c,n-c)}.
\end{align*}
Applying \cref{thm:main} to $\mathcal B$, which makes the same number of queries as $\mathcal A$, we find that
%\gm{Is it possible that $\ln(N)$ became a $\log_2(N)$?}
\begin{equation*}
	\!\!\Pr_{\substack{\pi \leftarrow S_{\{0,1\}^n}, \\ x \leftarrow \mathcal A^{U^\pi, U^{\pi^{-1}}}}}\!\! \bigl[ f_\pi(x) = y \bigr]
= \!\!\Pr_{\substack{\pi \leftarrow S_{\{0,1\}^n}, \\ x' \leftarrow \mathcal B^{U^\pi, U^{\pi^{-1}}}}}\!\! \bigl[ (x',\pi(x')) \in R \bigr]
\leq
914 \, \frac {q^3 2^{\max(c,n-c)} (n+2)} {2^n}
=
914 \, \frac {q^3 (n+2)} {2^{\min(c,n-c)}}.
\qedhere
\end{equation*}
\end{proof}

Next we consider the \emph{double-sided zero-search conjecture}, which states that no adversary making polynomially many quantum queries to a permutation $\pi\in S_{\{0,1\}^{2n}}$ and its inverse is able to find~$x\in\{0,1\}^n$ such that $\pi(x\|0^n)_{[n+1,2n]} = 0^n$ with non-negligible probability~\cite[Conjecture~1]{unruh2023towards}.
The following corollary confirms Unruh's conjecture and establishes more generally an upper bound on the success probability for the problem with an arbitrary number~$c$ of zeros.
It was stated as \cref{cor:intro zero search} in the introduction.

\begin{cor}\label{cor:precise zero search}
The probability that a quantum algorithm $\mathcal A$ with quantum query access to a random permutation~$\pi\in S_{\{0,1\}^{2n}}$ and its inverse returns $x\in\{0,1\}^n$ such that $\pi(x\|0^n)_{[n+1,2n]} = 0^n$, by making fewer than $q$ queries, can be upper bounded as
\begin{align*}
\Pr_{\pi \leftarrow S_{\{0,1\}^{2n}}, \; x \leftarrow \mathcal A^{U^\pi, U^{\pi^{-1}}}} \bigl[ \exists y \in \{0,1\}^n : \pi(x\|0^n) = y \| 0^n \bigr]
\leq 1828 \, \frac {q^3 \parens[\big]{ n+1 }} {2^n}.
\end{align*}
More generally, it holds for any $c\in[2n]$ and any algorithm $\mathcal A$ that returns bitstrings $x\in\{0,1\}^{2n-c}$ that
\begin{align*}
\Pr_{\pi \leftarrow S_{\{0,1\}^{2n}}, \; x \leftarrow \mathcal A^{U^\pi, U^{\pi^{-1}}}} \bigl[ \exists y \in \{0,1\}^{2n-c} : \pi(x\|0^c) = y \| 0^c \bigr]
\leq 1828 \, \frac {q^3 \parens[\big]{ n+1 }} {2^c}.
\end{align*}
\end{cor}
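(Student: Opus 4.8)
The plan is to derive both displayed inequalities from \cref{thm:main} by exhibiting the appropriate relation, exactly in the spirit of the proof of \cref{cor:precise sponge}. It suffices to prove the second (general) statement, since the first is precisely the special case $c = n$, applied with $N = 2^{2n}$.

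First I would pass from $\mathcal A$, which outputs $x \in \{0,1\}^{2n-c}$, to the algorithm $\mathcal B$ that runs $x \leftarrow \mathcal A$ and returns $x' := x \| 0^c \in \{0,1\}^{2n}$; this makes the same number of queries as $\mathcal A$. The event $\exists y : \pi(x\|0^c) = y\|0^c$ is then equivalent to $(x', \pi(x')) \in R$ for the relation
\[
  R = \braces*{ (x',y') \in \{0,1\}^{2n} \times \{0,1\}^{2n} : x'_{[2n-c+1,2n]} = 0^c,\ y'_{[2n-c+1,2n]} = 0^c }.
\]
Next I would compute $r_{\max}$: for $x'$ whose last $c$ bits vanish we have $R_{x'} = \braces{ y' : y'_{[2n-c+1,2n]} = 0^c }$, of size $2^{2n-c}$, while $R_{x'} = \emptyset$ for all other $x'$; the situation for $R^\mathrm{inv}_{y'}$ is symmetric. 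Hence $r_{\max} = 2^{2n-c}$. Applying \cref{thm:main} to $\mathcal B$ with $N = 2^{2n}$ then gives an upper bound of $914 \, q^3 \, 2^{2n-c} (\ln(2^{2n}) + 2) / 2^{2n} = 914 \, q^3 (\ln(2^{2n}) + 2) / 2^c$, and I would finish by simplifying the logarithmic factor via $\ln(2^{2n}) + 2 = 2n \ln 2 + 2 \le 2(n+1)$, which yields the claimed bound $1828 \, q^3 (n+1) / 2^c$; the first inequality is the case $c = n$.

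I do not expect a genuine obstacle: the entire technical content is packaged in \cref{thm:main}, and the only things to be careful about are the bookkeeping of $r_{\max}$ (checking that the padding constraint kills every $x'$ not of the form $x\|0^c$, so that the relevant maxima are over the "live" rows and columns only) and the constant-and-logarithm manipulation. If one instead wanted the tight bound $O(q^3/2^c)$, that would require the worst-case-to-average-case approach of \cite{cryptoeprint:2024/414} rather than \cref{thm:main}, but that is outside the scope of this corollary.
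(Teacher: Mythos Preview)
Your proposal is correct and matches the paper's own proof essentially step for step: reduce to the relation $R = \{(x',y') : x'_{[2n-c+1,2n]} = y'_{[2n-c+1,2n]} = 0^c\}$, observe $r_{\max} = 2^{2n-c}$, apply \cref{thm:main} with $N = 2^{2n}$, and simplify the logarithmic factor. The paper is slightly terser (it just writes ``similarly to the proof of \cref{cor:precise sponge}'' without spelling out the padding algorithm~$\mathcal B$), but your explicit treatment of $\mathcal B$ and the bound $\ln(2^{2n}) + 2 \le 2(n+1)$ is if anything cleaner.
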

\begin{proof}
It suffices to establish the second claim since it implies the first for~$c=n$.
Similarly to the proof of \cref{cor:precise sponge} we can reduce to a relation, namely
\begin{align*}
	R = \braces*{ (x',y') \in \{0,1\}^{2n} \times \{0,1\}^{2n} \;:\; x'_{[2n-c+1,2n]} = y'_{[2n-c+1,2n]} = 0^c }.
\end{align*}
Note that $r_{\max} = 2^{2n-c}$.
Thus, \cref{thm:main} yields, with $N=2^{2n}$,
\begin{equation*}
\Pr_{\pi \leftarrow S_{\{0,1\}^{2n}}, \; x \leftarrow \mathcal A^{U^\pi, U^{\pi^{-1}}}} \bigl[ \exists y \in \{0,1\}^{2n-c} : \pi(x\|0^c) = y \| 0^c \bigr]
\leq 914 \, \frac {q^3 2^{2n-c} \parens[\big]{ 2n + 2 }} {2^{2n}}
= 1828 \, \frac {q^3 \parens[\big]{ n+1 }} {2^c}.
\qedhere
\end{equation*}
\end{proof}

% OLD. I THINK IT 2^c SHOULD BE 2^{n-c} IN THE NUMERATOR
%
% The following corollary establishes a bound on the success probability on such a problem.
% \begin{cor}
% 	Let $\mathcal A^{U^\pi, U^{\pi^{-1}}}$ be an algorithm making  $q$ quantum queries to a random permutation $\pi\in S_{2^n}$ and its inverse. Then
% 	\begin{equation*}
% 		\Pr_{x\leftarrow \mathcal A^{U^\pi, U^{\pi^{-1}}}}[\exists y\in\{0,1\}^r:\pi(x\|0^c)=y\|0^c]
% \le 384 \frac {q^2 2^c \parens[\big]{ n + 2 }} {2^n} +
% 192 \frac{q^3 2^c n}{2^n}+\sqrt{\frac{n + 1}{2^n}}.
%  \end{equation*}
% \end{cor}
% \begin{proof}
% 	Applying \cref{thm:main} with
% 	\begin{equation*}
% 		R=\left\{(x',y')\in\{0,1\}^{2n}\Big|x'_{[r+1,n]}=0^c, y'_{[r+1,n]} =0^c\right\}
% 	\end{equation*}
% 	immediately yields the inequality.
% \end{proof}

\section*{Acknowledgments}
% \addcontentsline{toc}{section}{Acknowledgments}
The authors thank Saliha Tokat for feedback on an earlier version of this manuscript.

GM and MW acknowledge support by the Deutsche Forschungsgemeinschaft (DFG, German Research Foundation) under Germany's Excellence Strategy - EXC\ 2092\ CASA - 390781972.
GM also acknowledges support by the European Research Council through an ERC Starting Grant (grant agreement No.~101077455, ObfusQation).
MW also acknowledges support by the European Research Council through an ERC Starting Grant (grant agreement No.~101040907, SYMOPTIC), by the NWO through grant OCENW.KLEIN.267, and by the BMBF through project Quantum Methods and Benchmarks for Resource Allocation (QuBRA).
CM acknowledges support by the Independent Research Fund Denmark via a DFF Sapere Aude grant (IM-3PQC, grant ID 10.46540/2064-00034B).
MW acknowledges the Simons Institute for the Theory of Computing at UC Berkeley for its hospitality and support.
MW and CM thank the Leibnitz Center for Informatics for its hospitality.

\bibliographystyle{alphaurl}
% \addcontentsline{toc}{section}{References}
\bibliography{bib}

\newcommand{\etalchar}[1]{$^{#1}$}
\begin{thebibliography}{ABKM22}

\bibitem[ABK{\etalchar{+}}]{ABKMS22}
Gorjan Alagic, Chen Bai, Jonathan Katz, Christian Majenz, and Patrick Struck.
\newblock Post-quantum security of tweakable {E}ven-{M}ansour, and
  applications.
\newblock Cryptology ePrint Archive, Paper 2022/1097.
\newblock Accepted at EUROCRYPT 2024.
\newblock URL: \url{https://eprint.iacr.org/2022/1097}.

\bibitem[ABKM22]{ABKM21}
Gorjan Alagic, Chen Bai, Jonathan Katz, and Christian Majenz.
\newblock Post-quantum security of the {Even-Mansour} cipher.
\newblock In {\em Advances in Cryptology---Eurocrypt 2022, Part~III}, volume
  13277 of {\em LNCS}, pages 458--487. Springer, 2022.

\bibitem[ABPS23]{alagic2023two-sided}
Gorjan Alagic, Chen Bai, Alexander Poremba, and Kaiyan Shi.
\newblock On the two-sided permutation inversion problem.
\newblock Cryptology ePrint Archive, Paper 2023/985, 2023.
\newblock URL: \url{https://eprint.iacr.org/2023/985}.

\bibitem[BBBV97]{BBBV}
Charles~H. Bennett, Ethan Bernstein, Gilles Brassard, and Umesh~V. Vazirani.
\newblock Strengths and weaknesses of quantum computing.
\newblock {\em {SIAM} J. Comput.}, 26(5):1510--1523, 1997.

\bibitem[BDF{\etalchar{+}}11]{BDF+2011}
Dan Boneh, {\"O}zg{\"u}r Dagdelen, Marc Fischlin, Anja Lehmann, Christian
  Schaffner, and Mark Zhandry.
\newblock Random oracles in a quantum world.
\newblock In {\em Advances in Cryptology--ASIACRYPT 2011}, pages 41--69.
  Springer, 2011.

\bibitem[BHH{\etalchar{+}}19]{BHHHP19}
Nina Bindel, Mike Hamburg, Kathrin H{\"o}velmanns, Andreas H{\"u}lsing, and
  Edoardo Persichetti.
\newblock Tighter proofs of {CCA} security in the quantum random oracle model.
\newblock In {\em Theory of Cryptography. TCC 2019}, volume 11892 of {\em
  Lecture Notes in Computer Science}, pages 61--90. Springer, 2019.

\bibitem[BR93]{BR93}
Mihir Bellare and Phillip Rogaway.
\newblock Random oracles are practical: a paradigm for designing efficient
  protocols.
\newblock In {\em Proceedings of the 1st ACM Conference on Computer and
  Communications Security}, CCS '93, pages 62--73, New York, NY, USA, 1993.
  ACM.

\bibitem[CFHL21]{CFHL21}
Kai-Min Chung, Serge Fehr, Yu-Hsuan Huang, and Tai-Ning Liao.
\newblock On the compressed-oracle technique, and post-quantum security of
  proofs of sequential work.
\newblock In Anne Canteaut and Fran{\c{c}}ois-Xavier Standaert, editors, {\em
  Advances in Cryptology -- EUROCRYPT 2021}, pages 598--629, Cham, 2021.
  Springer International Publishing.

\bibitem[CGH04]{Canetti:2004:ROM:1008731.1008734}
Ran Canetti, Oded Goldreich, and Shai Halevi.
\newblock The random oracle methodology, revisited.
\newblock {\em Journal of the ACM}, 51:557--594, July 2004.

\bibitem[CMSZ19]{Czajkowski2019a}
Jan Czajkowski, Christian Majenz, Christian Schaffner, and Sebastian Zur.
\newblock Quantum lazy sampling and game-playing proofs for quantum
  indifferentiability.
\newblock Cryptology ePrint Archive, Report 2019/428, 2019.
\newblock URL: \url{https://eprint.iacr.org/2019/428}.

\bibitem[CP24]{cryptoeprint:2024/414}
Joseph Carolan and Alexander Poremba.
\newblock Quantum one-wayness of the single-round sponge with invertible
  permutations.
\newblock Cryptology ePrint Archive, Paper 2024/414, 2024.
\newblock \url{https://eprint.iacr.org/2024/414}.
\newblock URL: \url{https://eprint.iacr.org/2024/414}.

\bibitem[Cza21]{Czajkowski21}
Jan Czajkowski.
\newblock Quantum indifferentiability of {SHA-3}.
\newblock Cryptology ePrint Archive, Paper 2021/192, 2021.
\newblock URL: \url{https://eprint.iacr.org/2021/192}.

\bibitem[DFMS21]{DFMS21}
Jelle Don, Serge Fehr, Christian Majenz, and Christian Schaffner.
\newblock Online-extractability in the quantum random-oracle model.
\newblock Cryptology ePrint Archive, Report 2021/280, 2021.
\newblock URL: \url{https://eprint.iacr.org/2021/280}.

\bibitem[Dwo15]{dworkin2015sha}
Morris~J Dworkin.
\newblock {SHA-3} standard: Permutation-based hash and extendable-output
  functions.
\newblock Federal Inf. Process. Stds. (NIST FIPS) 202, 2015.
\newblock URL: \url{https://doi.org/10.6028/NIST.FIPS.202}.

\bibitem[HHM22]{HHM22}
Kathrin H{\"o}velmanns, Andreas H{\"u}lsing, and Christian Majenz.
\newblock Failing gracefully: Decryption failures and the {F}ujisaki-okamoto
  transform.
\newblock In Shweta Agrawal and Dongdai Lin, editors, {\em Advances in
  Cryptology -- ASIACRYPT 2022}, pages 414--443, Cham, 2022. Springer Nature
  Switzerland.

\bibitem[HM23]{HM23}
Yassine Hamoudi and Fr\'{e}d\'{e}ric Magniez.
\newblock Quantum time-space tradeoff for finding multiple collision pairs.
\newblock {\em ACM Trans. Comput. Theory}, 15, 2023.

\bibitem[LR88]{LR88}
Michael Luby and Charles Rackoff.
\newblock How to construct pseudorandom permutations from pseudorandom
  functions.
\newblock {\em SIAM J. Comput.}, 17(2):337--386, 1988.

\bibitem[LZ19a]{Liu2019}
Qipeng Liu and Mark Zhandry.
\newblock On finding quantum multi-collisions.
\newblock In Yuval Ishai and Vincent Rijmen, editors, {\em Advances in
  Cryptology -- EUROCRYPT 2019}, pages 189--218, Cham, 2019. Springer
  International Publishing.

\bibitem[LZ19b]{LZ19}
Qipeng Liu and Mark Zhandry.
\newblock Revisiting post-quantum {F}iat-{S}hamir.
\newblock In Alexandra Boldyreva and Daniele Micciancio, editors, {\em Advances
  in Cryptology -- CRYPTO 2019}, pages 326--355, Cham, 2019. Springer
  International Publishing.

\bibitem[NC00]{NC00}
Michael~A. Nielsen and Isaac~L. Chuang.
\newblock {\em Quantum Computation and Quantum Information}.
\newblock Cambridge University Press, 2000.

\bibitem[Ros21]{rosmanis2021tight}
Ansis Rosmanis.
\newblock Tight bounds for inverting permutations via compressed oracle
  arguments.
\newblock 2021.
\newblock \href {http://arxiv.org/abs/2103.08975} {\path{arXiv:2103.08975}}.

\bibitem[Unr21]{Unruh21}
Dominique Unruh.
\newblock Compressed permutation oracles (and the collision-resistance of
  {Sponge}/{SHA3}).
\newblock Cryptology ePrint Archive, Paper 2021/062, 2021.
\newblock URL: \url{https://eprint.iacr.org/2021/062}.

\bibitem[Unr23]{unruh2023towards}
Dominique Unruh.
\newblock Towards compressed permutation oracles.
\newblock In {\em Advances in Cryptology -- ASIACRYPT 2023}, volume 14441 of
  {\em Lecture Notes in Computer Science}, pages 369--400. Springer, 2023.

\bibitem[Wat18]{watrous2018theory}
John Watrous.
\newblock {\em The Theory of Quantum Information}.
\newblock Cambridge University Press, 2018.

\bibitem[Wil19]{wilde2019quantum}
Mark~M Wilde.
\newblock {\em Quantum Information Theory}.
\newblock Cambridge University Press, second edition, 2019.

\bibitem[Zha19]{Zhandry2019}
Mark Zhandry.
\newblock How to record quantum queries, and applications to quantum
  indifferentiability.
\newblock In {\em Advances in Cryptology--CRYPTO 2019}, pages 239--268.
  Springer, 2019.

\end{thebibliography}

% \appendix
%\input{old/multi-input-predicates}
% \input{9-appendix}

\end{document}